\title{Structural Parameterization of Steiner Tree Packing}
\author{Niko Hastrich}{Saarland University and Max-Planck-Institute for Informatics, Saarbrücken, Germany}{niko.hastrich@cs.uni-saarland.de}{https://orcid.org/0000-0002-1825-0097}{This work is part of the project TIPEA that has received funding from the European Research Council (ERC) under the European Union’s Horizon 2020 research and innovation programme (grant agreement No 850979).}
\author{Kirill Simonov}{Department of Informatics, University of Bergen, Norway}{k.simonov@uib.no}{https://orcid.org/0000-0001-9436-7310}{}
\authorrunning{N. Hastrich and K. Simonov} 
\keywords{Steiner tree packing, structural parameters, fixed-parameter tractability} 
\theoremstyle{plain}
\newtheorem{reductionrule}[theorem]{Reduction Rule}
\crefname{reductionrule}{Reduction Rule}{Reduction Rules}
\newcommand\openbigstar[1][0.8]{%
  \scalerel*{%
    \stackinset{c}{-.125pt}{c}{}{\scalebox{#1}{\color{white}{$\bigstar$}}}{%
      $\bigstar$}%
  }{\bigstar}
}
\DeclareMathOperator{\ethOp}{\mathsf{ETH}}
\renewcommand{\eth}{\ethOp}
\DeclareMathOperator{\ilp}{\mathsf{ILP}}
\DeclareMathOperator{\xmod}{mod}
\DeclareMathOperator{\gstp}{\mathsf{GSTP}}
\DeclareMathOperator{\stp}{\mathsf{STP}}
\DeclareMathOperator{\edp}{\mathsf{EDP}}
\DeclareMathOperator{\fpt}{\mathsf{FPT}}
\DeclareMathOperator{\wone}{\mathsf{W}[1]}
\DeclareMathOperator{\wonehard}{\mathsf{W}[1]\text{-hard}}
\DeclareMathOperator{\p}{\mathsf{P}}
\DeclareMathOperator{\np}{\mathsf{NP}}
\DeclareMathOperator{\nphard}{\mathsf{NP}\text{-hard}}
\DeclareMathOperator{\pnphard}{\mathsf{paraNP}\text{-hard}}
\DeclareMathOperator{\pnph}{\mathsf{paraNP}\text{-h}}
\DeclareMathOperator{\w1h}{\mathsf{W}[1]\text{-h}}
\newcommand{\restrictFun}[2]{{\mathchoice{{#1}\big\vert_{#2}}{{#1}\big\vert_{#2}}{{#1}\big\vert_{#2}}{{#1}\big\vert_{#2}}}}
\newcommand{\natint}[1]{{\mathchoice{\left[#1\right]}{[#1]}{[#1]}{[#1]}}}
\newcommand{\natintZ}[1]{{\mathchoice{\left[#1\right]_0}{[#1]_0}{[#1]_0}{[#1]_0}}}
\newcommand{\abs}[1]{{\mathchoice{\left|#1\right|}{{|#1|}}{{|#1|}}{{|#1|}}}}
\renewcommand{\vec}[1]{{\bm{#1}}}
\else\DeclareMathOperator{\N}{\mathbb{N}}\fi
\else\DeclareMathOperator{\R}{\mathbb{R}}\fi
\DeclareMathOperator{\twOp}{tw}
\newcommand{\tw}[1]{{\mathchoice{\twOp\left(#1\right)}{\twOp(#1)}{\twOp(#1)}{\twOp(#1)}}}
\DeclareMathOperator{\fracNumOp}{fn}
\newcommand{\fracNum}[1]{{\mathchoice{\fracNumOp\left(#1\right)}{\fracNumOp(#1)}{\fracNumOp(#1)}{\fracNumOp(#1)}}}
\DeclareMathOperator{\fvsOp}{fvs}
\newcommand{\fvs}[1]{{\mathchoice{\fvsOp\left(#1\right)}{\fvsOp(#1)}{\fvsOp(#1)}{\fvsOp(#1)}}}
\DeclareMathOperator{\vcOp}{vc}
\newcommand{\vc}[1]{{\mathchoice{\vcOp\left(#1\right)}{\vcOp(#1)}{\vcOp(#1)}{\vcOp(#1)}}}
\DeclareMathOperator{\tcwOp}{tcw}
\newcommand{\tcw}[1]{{\mathchoice{\tcwOp\left(#1\right)}{\tcwOp(#1)}{\tcwOp(#1)}{\tcwOp(#1)}}}
\DeclareMathOperator{\stcwOp}{stcw}
\newcommand{\stcw}[1]{{\mathchoice{\stcwOp\left(#1\right)}{\stcwOp(#1)}{\stcwOp(#1)}{\stcwOp(#1)}}}
\DeclareMathOperator{\fenOp}{fen}
\newcommand{\fen}[1]{{\mathchoice{\fenOp\left(#1\right)}{\fenOp(#1)}{\fenOp(#1)}{\fenOp(#1)}}}
\DeclareMathOperator{\maxDegOp}{\max\!\text{-}\!\degOp}
\newcommand{\maxDeg}[1]{{\mathchoice{\maxDegOp\left(#1\right)}{\maxDegOp(#1)}{\maxDegOp(#1)}{\maxDegOp(#1)}}}
\newcommand{\inv}[1]{#1^{-1}}
\DeclareMathOperator{\degOp}{deg}
\renewcommand{\deg}[2][]{{\mathchoice{\degOp_{#1}\left(#2\right)}{\degOp_{#1}(#2)}{\degOp_{#1}(#2)}{\degOp_{#1}(#2)}}}
\DeclareMathOperator{\OOp}{\mathcal{O}}
\renewcommand{\O}[1]{{\mathchoice{\OOp\left(#1\right)}{\OOp(#1)}{\OOp(#1)}{\OOp(#1)}}}
\newcommand{\Ostar}[1]{{\mathchoice{\OOp^\ast\left(#1\right)}{\OOp^\ast(#1)}{\OOp^\ast(#1)}{\OOp^\ast(#1)}}}
\newcommand{\OstarLR}[1]{\OOp^\ast\left(#1\right)}
\DeclareMathOperator{\imageOp}{img}
\newcommand{\image}[1]{{\mathchoice{\imageOp\left(#1\right)}{\imageOp(#1)}{\imageOp(#1)}{\imageOp(#1)}}}
\DeclareMathOperator{\codomainOp}{cod}
\newcommand{\codomain}[1]{{\mathchoice{\codomainOp\left(#1\right)}{\codomainOp(#1)}{\codomainOp(#1)}{\codomainOp(#1)}}}
\DeclareMathOperator{\identityOp}{id}
\newcommand{\identity}[1]{{\mathchoice{\identityOp\left(#1\right)}{\identityOp(#1)}{\identityOp(#1)}{\identityOp(#1)}}}
\DeclareMathOperator{\compOp}{comp}
\newcommand{\comp}[1]{{\mathchoice{\compOp\left(#1\right)}{\compOp(#1)}{\compOp(#1)}{\compOp(#1)}}}
\DeclareMathOperator{\cutEdgesOp}{\delta}
\newcommand{\cutEdges}[2][]{{\mathchoice{\cutEdgesOp_{#1}\left(#2\right)}{\cutEdgesOp_{#1}(#2)}{\cutEdgesOp_{#1}(#2)}{\cutEdgesOp_{#1}(#2)}}}
\DeclareMathOperator{\adhesionOp}{adh}
\newcommand{\adhesion}[2][]{{\mathchoice{\adhesionOp_{#1}\left(#2\right)}{\adhesionOp_{#1}(#2)}{\adhesionOp_{#1}(#2)}{\adhesionOp_{#1}(#2)}}}
\DeclareMathOperator{\childrenOp}{chil}
\newcommand{\children}[2][]{{\mathchoice{\childrenOp_{#1}\left(#2\right)}{\childrenOp_{#1}(#2)}{\childrenOp_{#1}(#2)}{\childrenOp_{#1}(#2)}}}
\DeclareMathOperator{\thinChildrenOp}{t\text{-}chil}
\newcommand{\thinChildren}[2][]{{\mathchoice{\thinChildrenOp_{#1}\left(#2\right)}{\thinChildrenOp_{#1}(#2)}{\thinChildrenOp_{#1}(#2)}{\thinChildrenOp_{#1}(#2)}}}
\DeclareMathOperator{\boldChildrenOp}{b\text{-}chil}
\newcommand{\boldChildren}[2][]{{\mathchoice{\boldChildrenOp_{#1}\left(#2\right)}{\boldChildrenOp_{#1}(#2)}{\boldChildrenOp_{#1}(#2)}{\boldChildrenOp_{#1}(#2)}}}
\newcommand{\augment}[2]{{#1}^{#2}}
\newcommand{\augmentClique}[2]{#1^{\mathrm{cliq}(#2)}}
\newcommand{\augmentVert}[2]{#1^{\mathrm{vert}(#2)}}
\DeclareMathOperator{\augOp}{aug}
\newcommand{\aug}[2][{}]{{\mathchoice{\augOp_{#1}\left(#2\right)}{\augOp_{#1}(#2)}{\augOp_{#1}(#2)}{\augOp_{#1}(#2)}}}
\newcommand{\invAug}[2][{}]{{\mathchoice{\augOp_{#1}^{-1}\left(#2\right)}{\augOp_{#1}^{-1}(#2)}{\augOp_{#1}^{-1}(#2)}{\augOp_{#1}^{-1}(#2)}}}
\DeclareMathOperator{\demandOp}{dem}
\newcommand{\demand}[2][]{{\mathchoice{\demandOp_{#1}\left(#2\right)}{\demandOp_{#1}(#2)}{\demandOp_{#1}(#2)}{\demandOp_{#1}(#2)}}}
\DeclareMathOperator{\assignOp}{assign}
\newcommand{\assign}[4][]{{\mathchoice{\assignOp_{#1}\left(#2,#3,#4\right)}{\assignOp_{#1}(#2,#3,#4)}{\assignOp_{#1}(#2,#3,#4)}{\assignOp_{#1}(#2,#3,#4)}}}
\DeclareMathOperator{\assignSetsOp}{assignSets}
\newcommand{\assignSets}[4][]{{\mathchoice{\assignSetsOp_{#1}\left(#2,#3,#4\right)}{\assignSetsOp_{#1}(#2,#3,#4)}{\assignSetsOp_{#1}(#2,#3,#4)}{\assignSetsOp_{#1}(#2,#3,#4)}}}
\DeclareMathOperator{\supplyOp}{supl}
\newcommand{\supply}[2][]{{\mathchoice{\supplyOp_{#1}\left(#2\right)}{\supplyOp_{#1}(#2)}{\supplyOp_{#1}(#2)}{\supplyOp_{#1}(#2)}}}
\DeclareMathOperator{\confInstOp}{confInst}
\newcommand{\confInst}[4][]{{\mathchoice{\confInstOp_{#1}\left(#2,#3,#4\right)}{\confInstOp_{#1}(#2,#3,#4)}{\confInstOp_{#1}(#2,#3,#4)}{\confInstOp_{#1}(#2,#3#4)}}}
\DeclareMathOperator{\signatureOp}{sig}
\newcommand{\signature}[1]{{\mathchoice{\signatureOp\left(#1\right)}{\signatureOp(#1)}{\signatureOp(#1)}{\signatureOp(#1)}}}
\newcommand{\terminalsContainedS}{\mathcal{T}^\star}
\newcommand{\confInstSupplySetDefinition}{\{X \subseteq S\cap V(G)\mid \supply{X} > 0\}}
\DeclareMathOperator{\allViable}{\mathcal{V}}
\newcommand{\rhoLinRep}[1]{\bm{\rho\text{-}linRep}(#1)}
\newcommand{\assignLinRep}[1]{\bm{assign\text{-}linRep}(#1)}
\newcommand{\selectorLinRep}{\bm{selector\text{-}linRep}}
\DeclareMathOperator{\crossLinkOp}{cross}
\newcommand{\crossLink}[2][]{{\mathchoice{\crossLinkOp_{#1}\left(#2\right)}{\crossLinkOp_{#1}(#2)}{\crossLinkOp_{#1}(#2)}{\crossLinkOp_{#1}(#2)}}}
\DeclareMathOperator{\demandCrossLinkOp}{d-cross}
\newcommand{\demandCrossLink}[1]{{\mathchoice{\demandCrossLinkOp\left(#1\right)}{\demandCrossLinkOp(#1)}{\demandCrossLinkOp(#1)}{\demandCrossLinkOp(#1)}}}
\DeclareMathOperator{\enumerateCrossLinkBaseOp}{\eta}
\newcommand{\enumerateCrossLinkOp}[1]{\enumerateCrossLinkBaseOp_{#1}}
\newcommand{\enumerateCrossLink}[2]{{\mathchoice{\enumerateCrossLinkOp{#1}\left(#2\right)}{\enumerateCrossLinkOp{#1}(#2)}{\enumerateCrossLinkOp{#1}(#2)}{\enumerateCrossLinkOp{#1}(#2)}}}
\newcommand{\invEnumerateCrossLink}[2]{{\mathchoice{\inv{\enumerateCrossLinkOp{#1}}\left(#2\right)}{\inv{\enumerateCrossLinkOp{#1}}(#2)}{\inv{\enumerateCrossLinkOp{#1}}(#2)}{\inv{\enumerateCrossLinkOp{#1}}(#2)}}}
\DeclareMathOperator{\boundaryOp}{\partial}
\newcommand{\boundary}[1]{{\mathchoice{\boundaryOp_{#1}}{\boundaryOp_{#1}}{\boundaryOp_{#1}}{\boundaryOp_{#1}}}}
\DeclareMathOperator{\pastAssignBaseOp}{pastAssign}
\newcommand{\pastAssignOp}[1]{\pastAssignBaseOp_{#1}}
\newcommand{\pastAssign}[2]{{\mathchoice{\pastAssignOp{#1}\left(#2\right)}{\pastAssignOp{#1}(#2)}{\pastAssignOp{#1}(#2)}{\pastAssignOp{#1}(#2)}}}
\newcommand{\invPastAssign}[2]{{\mathchoice{\inv{\pastAssignOp{#1}}\left(#2\right)}{\inv{\pastAssignOp{#1}}(#2)}{\inv{\pastAssignOp{#1}}(#2)}{\inv{\pastAssignOp{#1}}(#2)}}}
\DeclareMathOperator{\pastPartBaseOp}{pastPart}
\newcommand{\pastPart}[1]{\pastPartBaseOp_{#1}}
\DeclareMathOperator{\futurePartBaseOp}{futPart}
\newcommand{\futurePart}[1]{\futurePartBaseOp_{#1}}
\DeclareMathOperator{\localSharedMapBaseOp}{\mu}
\newcommand{\localSharedMapOp}[1]{\localSharedMapBaseOp_{#1}}
\newcommand{\localSharedMap}[2]{{\mathchoice{\localSharedMapOp{#1}\left(#2\right)}{\localSharedMapOp{#1}(#2)}{\localSharedMapOp{#1}(#2)}{\localSharedMapOp{#1}(#2)}}}
\newcommand{\invLocalSharedMap}[2]{{\mathchoice{\inv{\localSharedMapOp{#1}}\left(#2\right)}{\inv{\localSharedMapOp{#1}}(#2)}{\inv{\localSharedMapOp{#1}}(#2)}{\inv{\localSharedMapOp{#1}}(#2)}}}
\newcommand{\terminalsContainedNode}[1]{\terminalsContainedS_{#1}}
\DeclareMathOperator{\enumerateSharedOp}{\xi}
\newcommand{\enumerateShared}[1]{{\mathchoice{\enumerateSharedOp\left(#1\right)}{\enumerateSharedOp(#1)}{\enumerateSharedOp(#1)}{\enumerateSharedOp(#1)}}}
\newcommand{\invEnumerateShared}[1]{{\mathchoice{\inv\enumerateSharedOp\left(#1\right)}{\inv\enumerateSharedOp(#1)}{\inv\enumerateSharedOp(#1)}{\inv\enumerateSharedOp(#1)}}}
\DeclareMathOperator{\unknownTerminalInOp}{\bot}
\newcommand{\unknownTerminalIn}[1]{\unknownTerminalInOp_{#1}}
\DeclareMathOperator{\blockedIndex}{\#}
\newcommand{\dpTransitionGraph}[1][s]{\widetilde{J_{#1}}}
\DeclareMathOperator{\subDivOp}{\mu}
\newcommand{\subDivA}[1]{\subDivOp_1(#1)}
\newcommand{\subDivB}[1]{\subDivOp_2(#1)}
\newcommand{\subDivE}[1]{\subDivOp_e(#1)}
\newcommand{\subDivP}[1]{\subDivOp_P(#1)}
\newcommand{\subDivV}[1]{\subDivOp_V(#1)}
\DeclareMathOperator{\enumerateTerminalOp}{\enumerateCrossLinkBaseOp}
\newcommand{\enumerateTerminal}[1]{{\mathchoice{\enumerateTerminalOp\left(#1\right)}{\enumerateTerminalOp(#1)}{\enumerateTerminalOp(#1)}{\enumerateTerminalOp(#1)}}}
\newcommand{\invEnumerateTerminal}[1]{{\mathchoice{\inv{\enumerateTerminalOp}\left(#1\right)}{\inv{\enumerateTerminalOp}(#1)}{\inv{\enumerateTerminalOp}(#1)}{\inv{\enumerateTerminalOp}(#1)}}}
\DeclareMathOperator{\extendOp}{ext}
\newcommand{\extend}[2]{{\mathchoice{\extendOp\left(#1,#2\right)}{\extendOp(#1,#2)}{\extendOp(#1,#2)}{\extendOp(#1,#2)}}}
\DeclareMathOperator{\mergeOp}{merge}
\newcommand{\merge}[2]{{\mathchoice{\mergeOp\left(#1,#2\right)}{\mergeOp(#1,#2)}{\mergeOp(#1,#2)}{\mergeOp(#1,#2)}}}
\DeclareMathOperator{\distributeOp}{distr}
\newcommand{\distribute}[2]{{\mathchoice{\distributeOp\left(#1,#2\right)}{\distributeOp(#1,#2)}{\distributeOp(#1,#2)}{\distributeOp(#1,#2)}}}
\begin{document}

\maketitle

\begin{abstract}
  \textsc{Steiner Tree Packing} \((\stp)\) is a notoriously hard problem in classical complexity theory, which is of practical relevance to VLSI circuit design.
  Previous research has approached this problem by providing heuristic or approximate algorithms.
  In this paper, we show the first \(\fpt\) algorithms for \(\stp\) parameterized by structural parameters of the input graph.
  In particular, we show that \(\stp\) is fixed-parameter tractable by the tree-cut width as well as the fracture number of the input graph.

  To achieve our results, we generalize techniques from \textsc{Edge-Disjoint Paths} (\(\edp\)) to \textsc{Generalized Steiner Tree Packing} (\(\gstp\)), which generalizes both \(\stp\) and \(\edp\).
  First, we derive the notion of the augmented graph for \(\gstp\) analogous to \(\edp\).
  We then show that \(\gstp\) is \(\fpt\) by
  \begin{itemize}
      \item the tree-cut width of the augmented graph,
      \item the fracture number  of the augmented graph,
      \item the slim tree-cut width of the input graph.
  \end{itemize}
  The latter two results were previously known for \(\edp\); our results generalize these to \(\gstp\) and improve the running time for the parameter fracture number.
  On the other hand, it was open whether \(\edp\) is \(\fpt\) parameterized by the tree-cut width of the augmented graph, despite extensive research on the structural complexity of the problem.
  We settle this question affirmatively.
\end{abstract}

\section{Introduction \label{sec:intro}}
\label{sec:org721bd70}
In \textsc{Steiner tree packing} (\(\stp\)), we are given a triple \((G, T, d)\), where \(G\) is a graph, \(T \subseteq V(G)\) is the \emph{terminal set}, and \(d \in \N^+\) is the \emph{demand}.
The goal is to decide whether there are \(d\) edge-disjoint trees \(F_1, F_2,\dots, F_d\) in \(G\) which all include the vertices \(T\).
We can see that this problem is polynomial time solvable if \(\abs{T} \leq 2\) or \(d = 1\).
However, if \(\abs{T} \geq 3\) or \(d \geq 2\), this problem becomes \(\nphard\)~\cite{Kaski04,AazamiCJ12}.
So, unless \(\p = \np\), there is no algorithm deciding each instance correctly in polynomial time.
Therefore, in order to give a polynomial time algorithm, we have to resort to heuristic solutions or abandon the goal of solving each instance.

\looseness=-1
Despite this complexity, \(\stp\)—and the related problems \(\edp\) and \(\gstp\), which we introduce shortly—has applications in practical fields like VLSI circuit design~\cite{Luk85,BursteinP83,CohoonH88,MartinW93,Pulleyblank95,GrotschelMW97}.
Here, the edges represent wires and the terminals represent endpoints that need to be connected.
Additionally, this problem finds applications in designing computer networks for multicasting~\cite{OfekY97,ChenGY98,GargKKP03}, which in turn has applications for video-conferencing~\cite{Zhao+14}.
Given this problem's significance, efficient algorithms are of high interest.
These are typically based on heuristics~\cite{Luk85,BursteinP83,CohoonH88,OfekY97,ChenGY98} or integer programming~\cite{GrotschelMW97,ChenGY98} as opposed to exploiting structural properties of specific instances.
These approaches typically provide no guarantees on the quality of their output compared to an optimal solution.

For \(\stp\), there are also algorithms known, which provide such guarantees.
If there are \(d\) edge-disjoint subgraphs connecting \(T\), we know that \(T\) is \(d\)-edge-connected (i.e.,~we need to remove at least \(d\) edges to disconnect \(T\)).
Kriesell~\cite{Kriesell03} proved that for each \(n \in \N\) there is a function \(f_n \colon \N \to \N\) such that if \(\abs{T} \leq n\) and \(T\) is \(f_n(d)\)-edge-connected, then there are \(d\) edge-disjoint subgraphs connecting \(T\) in \(G\).
In the same paper, Kriesell conjectured that these \(f_n\) are bounded by a single function.
Concretely, he conjectured that if \(T\) is \(2d\)-edge-connected, then \(G\) contains \(d\) edge-disjoint trees connecting \(T\).

This conjecture is known to be true for many special cases.
As an example, this conjecture has long been known to be true for \(T = V(G)\)~\cite{NashWilliams61,Tutte61}.
This means, if \(G\) is \(2d\)-edge-connected, it has at least \(d\) edge-disjoint spanning trees.
On the other end of the size of \(T\), we know that this conjecture is true for \(\abs{T} \leq 5\)~\cite{JainMS03}.
Additionally, the conjecture is known to be correct if \(G\) is Eulerian~\cite{Kriesell03}.

The first result proving, without additional assumptions on \(G, d\) and \(T\), that there is a constant \(c \in \R\) such that if \(T\) is \(cd\)-edge-connected, there are \(d\) edge-disjoint Steiner trees in \(G\), is due to Lau~\cite{Lau04}, who proved that \(c = 26\) is a valid choice.
This was later improved to \(c=6.5\)~\cite{WestW12}.
The state-of-the-art result is that there are \(d\) edge-disjoint Steiner trees, if \(T\) is \((5d+4)\)-edge-connected~\cite{DeVosMP16}.
These results are all obtained by providing polynomial time algorithms that given a \(cd\)-edge-connected graph, output \(d\) edge-disjoint Steiner trees.
So, the last result can be seen as a \(\frac{1}{5}-\O{1/\mathrm{OPT}}\) approximation algorithm.

Still, none of these algorithms provide exact solutions in polynomial time.
In this paper, we approach this problem from the other direction.
We significantly broaden the class of instances for which there is a known polynomial time algorithm.
We achieve this by exploiting inherent structure that exists in some instances.
More concretely, we study \(\stp\) parameterized by structural properties and develop algorithms running in \(\fpt\)-time.

In this paradigm of algorithm design, we consider instances where the underlying graphs exhibit some additional structure.
This structure is captured by an additional number \(k\in\N\)—the parameter—that is provided to algorithms as input.
Now, we try to find algorithms that are able to exploit this structure.
Ideally, there is a constant \(c \in \N\) such that the algorithm can decide each instance of size \(n\) in time \(\O{n^c}\).
Roughly speaking, if such an algorithm exists, we call a problem fixed-parameter tractable (\(\fpt\)) by the considered parameter.

An easy example of a structural parameter is the size of the smallest vertex cover.
Assume we know that there is a vertex cover \(S\) of size \(k \in \N\) in our graph.
An example of a problem, where we can exploit this structure is \textsc{Clique}.
In this problem, we are given a graph \(G\), an integer \(\ell \in \N\), and we need to decide whether there is a clique of size \(\ell\) in \(G\).
Now, consider any clique \(C\) in \(G\).
Notice that it contains at most one vertex of \(V(G) \setminus S\).
We now iterate through all \(X\subseteq S\) and search for the largest clique that contains \(X\) and at most one additional vertex of \(V(G) \setminus S\).
For each of the \(2^k\) different choices for \(X\), this can be done in \(\O{\abs{V(G)} + \abs{E(G)} + k^2}\leq \O{k\abs{V(G)}}\).
Thus, given \(S\), we can compute the largest clique in \(G\) in time \(\O{2^kk\abs{V(G)}}\).
Using the fact that we can compute \(S\) in time \(\O{2^kk\abs{V(G)}}\) given \(k\)~\cite{Mehlhorn84a}, we can solve \textsc{Clique} in linear time for every \(k\).
Thus, \textsc{Clique} is \(\fpt\) by the size of the smallest vertex cover of \(G\).

Before this paper, there was no known \(\fpt\) algorithm parameterized by structural parameters for \(\stp\).
In fact for one of the most widely used structural parameter—treewidth—\allowbreak{}this is unlikely.
This can be seen, as \(\stp\) has \textsc{Integer 2-Commodity Flow} as a special case~\cite{AazamiCJ12} and this problem has recently been shown to be \(\wonehard\) parameterized by treewidth~\cite{BodlaenderMOPL23}.
Therefore, \(\stp\) is \(\wonehard\) parameterized by treewidth, and it is believed~\cite[Chapter 13]{CyganFKLMPPS15} that there is no \(\fpt\) algorithm for a parameterized problem that is \(\wonehard\).
To this date, the only known \(\fpt\) algorithm for \(\stp\) is due to Robertson and Seymour~\cite{RobertsonS95b}, showing that \(\stp\) is fixed-parameter tractable parameterized by \(\abs{T} + d\).

To develop \(\fpt\) algorithms for \(\stp\) parameterized by structural parameters, we consider the closely related \textsc{Edge-Disjoint Paths} (\(\edp\)) problem.
An \(\edp\) instance is a tuple \((G, \mathcal{T})\), where \(G\) is a graph and \(\mathcal{T} \subseteq \binom{V(G)}{2}\) is a set of terminal pairs in \(V(G)\).
The task now is to decide if for each \(\{u,v\} \in \mathcal{T}\) there is an \(uv\)-path \(P_{uv}\) in \(G\) such that all \(\{P_{uv}\}_{\{u,v\} \in \mathcal{T}}\) are pairwise edge-disjoint.
This problem is famously \(\fpt\) with respect to \(\abs{\mathcal{T}}\)~\cite{RobertsonS95b}.
However, it is notoriously hard with respect to classical structural parameters like treewidth.
In fact, it is even \(\nphard\) on complete bipartite graphs where one bipartition contains 3 vertices~\cite{Fleszar2018Sep}.
These graphs have vertex cover number at most 3, which rules out an \(\fpt\) algorithm for \(\edp\) parameterized by many classical structural parameters—like treewidth, fracture number, size of the smallest feedback vertex set, size of the smallest vertex cover—unless \(\p = \np\).

However, there are some \(\fpt\) algorithms known for \(\edp\).
They fall roughly in two categories.
The first category are \(\fpt\) algorithms with respect to structural parameters that are based on edge-cuts like slim tree-cut width, rather than vertex-cuts like treewidth~\cite{GanianOR20,GanianK22,BrandCGHK22,GanianOR21}.
The second category of \(\fpt\) algorithms based on structural parameters, do not consider the parameter with respect to the host graph \(G\), but rather the augmented graph \(G + \mathcal{T}\), which is the graph \(G\) with an edge inserted among every terminal pair~\cite{GanianOR21}.
This helps to take into account where in the graph the terminal pairs are located and how much they ``influence'' each other.
The problem with transferring these results directly to \(\stp\) is that although the \(\edp\) and the \(\stp\) problem are closely related, there is no simple reduction known between instances of one problem to the other.

However, both are special cases of \textsc{Generalized Steiner Tree Packing} (\(\gstp\)), which is just \(\stp\) with multiple different terminal sets.
Formally, we are given a triple \((G, \mathcal{T}, d)\), where \(G\) is the underlying graph, \(\mathcal{T} \subseteq 2^{V(G)}\) is the set of terminal sets, and \(d \colon \mathcal{T} \to \N^+\) gives the demand for each terminal set.
Our task is to decide whether there is a set of pairwise edge-disjoint, connected subgraphs \(\mathcal{F}\) of \(G\) and an assignment function \(\pi \colon \mathcal{F} \to \mathcal{T}\).
This assignment function needs to satisfy that every solution subgraph \(F \in \mathcal{F}\) is assigned to a terminal set \(\pi(F)\), which is contained in \(F\) (i.e.,~\(\pi(F) \subseteq V(F)\)).
Additionally, for every terminal set \(T \in \mathcal{T}\), the assignment function needs to assign \(d(T)\) many solution subgraphs to \(T\).

In this paper, we generalize all known \(\fpt\) algorithms for \(\edp\) parameterized by structural parameters to \(\gstp\), which makes them much more widely applicable.
Firstly, this allows us to apply them to \(\stp\).
Secondly, in doing so, we also discover new \(\fpt\) algorithms for \(\edp\), which positively settles the open question, whether \(\edp\) is \(\fpt\) with respect to the tree-cut width of the augmented graph in the affirmative.
Finally, we also improve the running time of many known \(\fpt\) algorithms for \(\edp\).
See an overview of our results in comparison to known results in \Cref{tbl:overview-results}.

\begin{table}
\centering
\Crefname{theorem}{Thm.}{Thm.}
\Crefname{corollary}{Cor.}{Cor.}
\begin{tabular}{lp{1.75cm}<{\centering}p{1.75cm}<{\centering}p{1.75cm}<{\centering}p{1.75cm}<{\centering}p{1.75cm}<{\centering}p{1.75cm}<{\centering}}
\toprule
& \multicolumn{2}{c}{\(\bm{\stp}\)} & \multicolumn{2}{c}{\(\bm{\edp}\)} & \multicolumn{2}{c}{\(\bm{\gstp}\)}\\
 & Host & Aug. & Host & Aug. & Host & Aug. \\
\midrule

\(\twOp\) & \multicolumn{2}{c}{\(\w1h\)} & \(\pnph\) & \(\w1h\) & \(\pnph\) & \(\w1h\) \\
& \multicolumn{2}{c}{\cite{AazamiCJ12,BodlaenderMOPL23}} & \cite{Fleszar2018Sep} & \cite{GanianOR21} & \cite{Fleszar2018Sep} & \cite{GanianOR21} \\

\(\twOp + D\) & \multicolumn{2}{c}{\(\fpt\)~\(\bigstar\)} & \multicolumn{2}{c}{\(\fpt\)~\(\openbigstar\)} & \multicolumn{2}{c}{\(\fpt\)~\(\bigstar\)}\\
& \multicolumn{2}{c}{\Cref{stmt:gstp-fpt-tw-D}} & \multicolumn{2}{c}{\Cref{stmt:gstp-fpt-tw-D}} & \multicolumn{2}{c}{\Cref{stmt:gstp-fpt-tw-D}}\\

\(\fracNumOp\) & \multicolumn{2}{c}{\(\fpt\)~\(\bigstar\)} & \(\pnph\) & \(\fpt\)~\(\openbigstar\) & \(\pnph\) & \(\fpt\)~\(\bigstar\) \\
& \multicolumn{2}{c}{\Cref{stmt:augmented-frag-fpt}} & \cite{Fleszar2018Sep} & \Cref{stmt:augmented-frag-runtime} & \cite{Fleszar2018Sep} & \Cref{stmt:augmented-frag-fpt} \\

\(\tcwOp\) & \multicolumn{2}{c}{\(\fpt\)~\(\bigstar\)} & \(\w1h\) & \(\fpt\)~\(\bigstar\) & \(\w1h\) & \(\fpt\)~\(\bigstar\) \\
& \multicolumn{2}{c}{\Cref{stmt:tcw-stp-fpt}} & \cite{GanianOR20} & \Cref{stmt:gstp-fpt-tcw} & \cite{GanianOR20} & \Cref{stmt:gstp-fpt-tcw} \\

\(\stcwOp\) & \multicolumn{2}{c}{\(\fpt\)~\(\bigstar\)} & \multicolumn{2}{c}{\(\fpt\)} & \multicolumn{2}{c}{\(\fpt\)~\(\bigstar\)} \\
& \multicolumn{2}{c}{\Cref{stmt:gstp-fpt-stcw}} & \multicolumn{2}{c}{\cite{GanianK22,BrandCGHK22}} &\multicolumn{2}{c}{\Cref{stmt:gstp-fpt-stcw}} \\

\bottomrule
\end{tabular}
\Crefname{theorem}{Theorem}{Theorems}
\Crefname{corollary}{Corollary}{Corollaries}
\caption{\label{tbl:overview-results} The parameterized complexity of \(\stp\), \(\edp\), and \(\gstp\) with respect to structural parameters.
The structural parameters are taken with respect to the host graph or the augmented graph.
The value of \(D\) is the sum of demands.
New results are highlighted with the symbol \(\bigstar\), results where we improve the running time are highlighted with the symbol \(\openbigstar\).
For terminology of the parameters we refer to \Cref{sec:prelims}.
We abbreviate -hard with -h.
}
\end{table}

This paper is structured as follows.
We start in \Cref{sec:prelims} by introducing relevant definitions and notation.
In \Cref{sec:augmentation} the notion of augmentation from \(\edp\) to \(\gstp\).
To the best of our knowledge, there are no known results regarding augmentation of problems, where the terminals are arbitrary sets and not pairs like in \(\edp\).
For this, we compare two different approaches and finally settle on one of them which will be used for the remainder of this paper.

Building on this definition, we show in \Cref{sec:fn*} that \(\gstp\) is \(\fpt\) by the fracture number of the augmented graph; due to the way we define augmentation for \(\gstp\), this result directly applies to \(\stp\) as well.
This is in stark contrast to the fact, that \(\edp\), and therefore \(\gstp\), is \(\pnphard\) by the fracture number of the host graph~\cite{Fleszar2018Sep}.
The running time we obtain is doubly exponential in the parameter.
This improves upon the triply exponential running time obtained by Ganian et~al.~\cite{GanianOR21} for \(\edp\) parameterized by the fracture number of the augmented graph.

Following this result, we focus on results using tree-cut decompositions in \Cref{sec:gstp-by-tcw*-and-stcw}.
First, we define an additional property for tree-cut decompositions, which we call being \emph{simple}.
Then, we show how to decide an instance of \(\gstp\) given a simple tree-cut decomposition.
For tree-cut decompositions, we are interested in two measures of the decomposition, its width and slim width, where the former measure is bounded by a function of the latter.
We then show how the last result can be applied to obtain an \(\fpt\) algorithm parameterized by the slim tree-cut width of the host graph and the tree-cut width of the augmented graph.
This directly implies that \(\edp\) is \(\fpt\) by the tree-cut width of the augmented graph, which was not known before.
As \(\edp\) is \(\wonehard\) parameterized by the tree-cut width of the host graph, this result can not be extended to find an \(\fpt\) algorithm parameterized by the tree-cut width of the host graph, unless \(\fpt = \wone\), which is believed to be false~\cite[Chapter 13]{CyganFKLMPPS15}.

As augmentation—even of a single terminal set—can increase the tree-cut width arbitrarily, the previous result does not directly translate to an \(\fpt\) algorithm for \(\stp\), in contrast to the case with fracture number.
In \Cref{sec:stp-by-tcw}, we examine this further.
We show that we can avoid this hurdle and provide an \(\fpt\) algorithm for \(\stp\) parameterized by tree-cut width of the host graph.
This is the central result of this paper and combines most of the other results obtained.
To achieve this, our \(\fpt\) algorithm distinguishes two cases.
If the tree-cut width does not increase by too much due to augmentation, we apply the result obtained in \Cref{sec:gstp-by-tcw*-and-stcw}.
Otherwise, we prove that we can discard instances where the demand is not small.
As the tree-cut width can be used to put an upper bound on the treewidth and treewidth is a more general parameter, we then develop an \(\fpt\) algorithm for \(\gstp\) parameterized by the sum of treewidth and demand.
This \(\fpt\) algorithm can be used to solve the case remaining to solve \(\stp\) parameterized by tree-cut width.
Additionally, this algorithm directly translates to an \(\fpt\) algorithm for \(\edp\) parameterized by the sum of treewidth and number of terminal pairs.
This was known before, but nevertheless we improve upon the known running time~\cite{ZhouTN00}.

We conclude this paper in \Cref{sec:conclusion} by summarizing our work and giving open research questions.


\section{Preliminaries}
\label{sec:prelims}
We denote the natural numbers with \(\N\) and the positive natural numbers with \(\N^+\).
Analogously we define the real numbers as \(\R\) and the positive real numbers as \(\R^+\).
Let \(f,g \colon \N \to \R^+\) be functions.
We write \(f = \O{g}\) if \(\limsup_{n \in \N} \frac{f(n)}{g(n)} \neq \infty\) and \(f = \Omega(g)\) if \(\liminf_{n \in \N} \frac{f(n)}{g(n)} > 0\).
We also use this notation as part of terms—like \(f(n) = 2^\O{g(n)}\), this means that there is a \(g' = \O{g}\) such that \(f(n) = \O{2^{g'(n)}}\).

For all \(i \in \N\), we define \(\natintZ{i} \coloneqq \{j \in \N \mid j \leq i\}\) and \(\natint{i} \coloneqq \natintZ{i} \setminus \{0\}\).
Note that \(\natint{0} = \emptyset\).
Let \(A\) be a finite set of size \(n \in \N^+\).
For all \(k \in \N\), we denote the set of subsets of \(A\) of size \(k\) with \(\binom{A}{k} \coloneqq \{S \subseteq A \mid \abs{S} = k\}\).
Additionally, we denote the number of such subsets of \(A\) with \(\binom{n}{k} \leq n^\O{k}\), as well.
The number of subsets of \(A\) with size at most \(k\) is \(\sum_{i \leq k} \binom{n}{i} \leq \binom{n + k}{k}\).
For all \(k \in \N^+\), the set of functions \(f\colon A\to \N\) with \(\sum_{a \in A} f(a) = k\) has size \(\binom{n + k -1}{k - 1}\) and can be enumerated in time \(\O{n\binom{n + k - 1}{k-1}}\).
The set of functions \(f\colon A\to \N\) with \(\sum_{a \in A} f(a) \leq k\) has size \(\binom{n + k}{k}\) and can be enumerated in time \(\O{n\binom{n + k}{k}}\).
We denote the identify function on \(A\) with \(\identity{A}\) and the powerset of \(A\) with \(2^A \coloneqq \{S\}_{S \subseteq A}\).

Let \(A, B\) be sets and \(f \colon A \to B\) a function.
We abbreviate the codomain \(B\) with \(\codomain{B}\).
For every \(X \subseteq A\), we refer to \(f\) restricted to \(X\) with \(\restrictFun{f}{X}\), the image of \(X\) under \(f\) as \(f(X) \coloneqq \{f(x)\}_{x \in X}\) and the image of \(f\) as \(\image{f} \coloneqq f(A)\).
For \(Y \subseteq B\), we refer to the inverse of \(Y\) under \(f\) by \(\inv f (Y) \coloneqq \{a \in A \mid f(a) \in Y\}\).
For all \(b \in B\), we abbreviate \(\inv f(\{b\})\) with \(\inv f(b)\) and if \(\abs{\inv f(b)} = 1\), we use this as a function where the result is a single value of \(A\) and not a singleton set of \(A\).

Let \(G = (V, E)\) be a graph.
We refer to its vertices by \(V(G)\) and its edges by \(E(G)\).
Unless stated otherwise, we only consider simple undirected graphs, that is graphs without loops or multiple edges.
We define the size of \(G\) as \(\abs{G} \coloneqq \abs{V(G)} + \abs{E(G)}\).
All of these definitions apply to subgraphs as well.
We denote the set of connected components of \(G\) as \(\comp{G}\), where the elements of this set are the maximally connected subgraphs of \(G\).
Wherever possible, we refer to vertices or edges by lower case letters, (sub/hyper)graphs by upper case letters, and sets of vertices, edges, or graphs by calligraphic letters.

Let \(S\subseteq V(G)\) be a vertex set.
We denote the graph vertex induced on \(G\) by \(S\) with \(G[S] \coloneqq (S, \{uv \in E \mid u,v \in S\})\).
Now, let \(S' \subseteq E(G)\) be an edge set.
We denote the graph edge induced on \(G\) by \(S'\) with \(G[S'] \coloneqq (\bigcup_{uv \in S'}\{u,v\}, S')\) as all edges in \(S'\) with their incident vertices.
We also use this notation, if \(S\) or \(S'\) are not necessarily contained in \(V(G)\) or \(E(G)\), respectively.
In these cases, we implicitly refer to \(G[S \cap V(G)]\) and \(G[S' \cap E(G)]\), respectively.
Whether we refer to a vertex or edge set follows from context.
We set the difference graphs \(G-S \coloneqq G[V(G) \setminus S]\) and \(G - S' \coloneqq G[E(G) \setminus S']\).
Let \(H\) be another graph.
We denote the union of \(G\) and \(H\) with \(G \cup H \coloneqq (V(G) \cup V(H), E(G) \cup E(H))\).
For a set of vertices \(S\), we write \(G + S \coloneqq (V(G) \cup S, E(G))\) as the smallest graph that contains \(G\) and all vertices in \(S\); and for a set of edges \(S'\), we write \(G + S' \coloneqq \{V(G) \cup \bigcup_{uv \in S'}\{u,v\}, E(G) \cup S'\}\) as the smallest graph that contains \(G\) and all edges in \(S'\).

Let \(uv \in E(G)\).
Subdividing \(uv\) introduces a new vertex \(s_{uv}\) into \(G\), and connects it to \(u\) and \(v\).
Formally, we obtain the graph \(G - uv + \{s_{uv}u, s_{uv}v\}\).
Contracting the edge \(uv\), merges \(u\) and \(v\) into a single vertex, while keeping vertices adjacent to \(u\) or \(v\) adjacent to the combined vertex.
We denote the graph after contracting \(uv\) by \(G / uv \coloneqq G - v + \{ux\}_{vx \in E(G)}\).
Suppressing a vertex \(v \in V\) is the operation of directly connecting all neighbors of \(v\) and removing \(v\) from \(G\).
Formally, the graph after suppressing \(v\) is \(G - v + \{xy\mid \{x,y\} \in \binom{N(v)}{2}\}\).

For \(v \in V(G)\), we denote with the open neighborhood with \(N_G(v) \coloneqq \{u \mid uv \in E(G)\}\) and the closed neighborhood with \(N_G[v] \coloneqq N_G(v) \cup \{v\}\).
The degree of a vertex is \(\deg[G]{v} = \abs{N_G(V)}\).
For any \(S \subseteq V(G)\), we denote the edges not completely contained in \(S\) or \(V(G) \setminus S\) by \(\cutEdges[G]{S}\).
We leave off the index, if the graph is clear from context.

We call \(G\) edgeless, if \(E(G) = \emptyset\).
A walk \(W\) in \(G\) is a sequence of edges, such that neighboring edges in \(W\) share a common vertex.
A path \(P\) is a walk where each vertex appears at most once.
A cycle \(C\) is a walk \(W\coloneqq w_1w_2\dots w_\ell\), such that \(w_1 = w_\ell\).
We call a cycle simple, if \(\ell \geq 3\) and \(w_1w_2\dots w_{\ell - 1}\) is a path.
We call \(G\) cyclefree, if there is no simple cycle contained in \(G\).
There is a path between two distinct vertices \(u, v \in V(G)\), if and only if for each \(S \subseteq V(G)\) with \(u \in S\) and \(v \notin S\), we have \(\cutEdges{S} \neq \emptyset\).
Similarly, \(G\) is connected if and only if for all \(\emptyset \subset S \subset V(G)\), the set \(\cutEdges{S} \neq \emptyset\).

We call \(\phi \colon V(G) \to V(H)\) a graph homomorphism, is for all \(uv \in E(G)\), we have \(\phi(u)\phi(v) \in E(H)\).
To capture this fact, we also write \(\phi \colon G \to H\).
If \(\phi\) is a bijection, we call it an isomorphism.
We denote with \(\phi_e \colon E(G) \to E(H)\) the function \(uv \mapsto \phi(u)\phi(v)\).

Let \(\phi_v \colon V(G) \to V(H)\) be an injection and \(\phi_e\) be a function from edges \(xy \in E(G)\) to \(\phi_v(x)\phi_v(y)\)-paths in \(H\), such that all \(\image{\phi_e}\) are edge-disjoint, we call \((\phi_v, \phi_e)\) an immersion from \(G\) into \(H\).

Now, let \(G = (V, \mathcal{E})\) be a hypergraph.
We call \(G\) minimally-connected if for every \(E \in \mathcal{E}\), the hypergraph \(G - E\) is no longer connected.

An instance of \textsc{Generalized Steiner Tree Packing} is a triple \(\mathscr{P} \coloneqq (G, \mathcal{T}, d)\), where \(G\) is a graph, \(\mathcal{T}\subseteq 2^{V(G)}\) is the set of terminal sets, and \(d \colon \mathcal{T}\to \N^+\) is the demand for each terminal set.
This instance is positive, if there is a tuple \((\mathcal{F}, \pi)\), where \(\mathcal{F}\) is a set of edge-disjoint, connected subgraphs of \(G\) and \(\pi \colon \mathcal{F}\to \mathcal{T}\) is an assignment function such that for all \(F \in \mathcal{F}\), we have \(\pi(F) \subseteq V(F)\), and for all \(T \in \mathcal{T}\), we have \(\abs{\inv\pi(T)} = d(T)\).
Note that, if an instance is positive, we can always obtain a solution where every \(F \in \mathcal{F}\) is a tree.
We define the size of an instance \(\mathscr{P}\) as \(\abs{\mathscr{P}} \coloneqq \abs{G} + \sum_{T \in \mathcal{T}} \abs{T}\).

An instance of \textsc{Edge-Disjoint Paths} is a triple \((G, \mathcal{T})\), where \(G\) is a graph, \(\mathcal{T}\subseteq \binom{V(G)}{2}\) is the set of terminal pairs.
This instance is positive, if the instance \((G, \mathcal{T}, T \mapsto 1)\) of \textsc{Generalized Steiner Tree Packing} is positive.
An instance of \textsc{Steiner Tree Packing} is a triple \((G, T, d)\), where \(G\) is a graph, \(T \subseteq V(G)\) are the terminals, and \(d \in \N\) is the demand.
This instance is positive, if the instance \((G, \{T\}, T \mapsto d)\) of \textsc{Generalized Steiner Tree Packing} is positive.
Since the assignment function of any solution is trivial, we may also refer to a solution by simply \(\mathcal{F}\).

In this paper, we use the paradigm of integer linear programs.
Here, we consider a vector of variables \(\vec{v} \in \N^n\) where \(n\in\N^+\) is the length of the vector.
We want to maximize or minimize a linear function of \(\vec{v}\), where \(\vec{v}\) needs to satisfy some linear constraints.
Each constraint can enforce equality, less-than, and greater-than constraints between two affine functions of \(\vec{v}\).
To check, whether there is a feasible \(\vec{v}\), we maximize the linear function \(\vec{x} \mapsto 0\).
It is known that this problem can be solved in time \(n^{\O{n}}\)~\cite{Kannan87}.

\subsection{Parameterized Complexity}
\label{sec:orgd494e74}
Let \(\Sigma\) be a finite alphabet.
We call \(L \subseteq \Sigma^* \times \N\) a parameterized problem.
For an instance \((x, k) \in \Sigma^* \times \N\), the value \(k\) is called the parameter.
We call \(L\) \emph{fixed-parameter tractable} if there is an algorithm, a constant \(c \in \R^+\), and a computable function \(f\), such that for all \((x, k) \in \Sigma^* \times \N\) the algorithm decides \((x,k) \in L\) in time at most \(f(k)\abs{x}^c\)~\cite[Chapter 1]{CyganFKLMPPS15}.

Consider \textsc{Independent Set} parameterized by the solution size.
That is, for a graph \(G\) and \(k \in \N\), we decide whether there is a \(S \subseteq V(G)\) with \(\abs{S} = k\) such that \(G[S]\) is edge-less.
A parameterized problem \(L\) is called \(\wonehard\), if there is are computable functions \(f,g\), a constant \(c \in \R^+\), and an algorithm \(\mathcal{A}\) that outputs for every instance \((G, k)\) of independent set parameterized by the solution size in time \(f(k)\abs{G}^c\) an equivalent instance \(\mathcal{A}(G,k)\) of \(L\), such that the parameter of \(\mathcal{A}(G, k)\) is at most \(g(k)\).
It is commonly believer that no \(\wonehard\) problem is fixed-parameter tractable~\cite[Chapter 13]{CyganFKLMPPS15}.
We call a parameterized problem \(\pnphard\), if it is \(\nphard\) to the language restricted to instances where the parameter is bounded by some constant.

The \emph{exponential time hypothesis} is the claim that there is a \(\delta \in \R^+\) such that any algorithm that decides \textsc{3-Sat} takes on instances with \(n\) variables at least time \(\Omega(2^{\delta n})\)~\cite{ImpagliazzoP99}.

Consider a parameterized problem \((I, k)\).
Let \(r \colon \N \times \N \to \N^+\) be a runtime function.
For a \(f \colon \N \to \N\), we write \(r(\abs{I}, k) = \Ostar{f(k)}\), if there is a polynomial function \(p\) such that \(r(\abs{I}, k) = \O{f(k)p(\abs{I})}\).
That is, the \(\OOp^\ast\) notation hides a polynomial factor in the size of the instance.

\subsubsection{Structural Parameters}
\label{sec:org1632479}
\begin{figure}[tp]
\centering
\includegraphics[width=\textwidth]{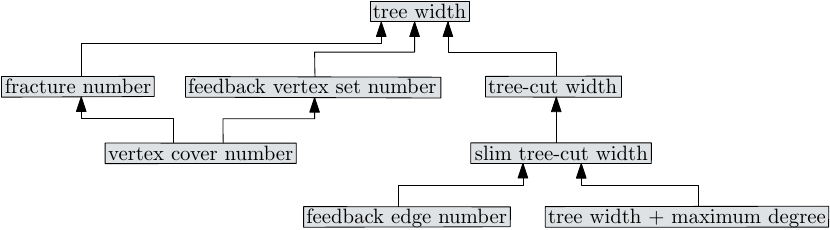}
\caption{\label{fig:relation-between-parameters} An overview of common structural parameters and their relation. We draw an edge from a parameter \(\alpha\) to a parameter \(\beta\), if given \(\alpha\) we can compute an upper bound on \(\beta\). A Hardness result for a parameter gives a similar results for all ancestors; an \(\fpt\)-algorithm for a parameter gives an \(\fpt\)-algorithm for all descendants.}
\end{figure}

In this paper, we are particularly interested in structural parameters.
We list the formal definition of all parameters considered in this paper here.
For an overview of their relation see \Cref{fig:relation-between-parameters}.
For all these definitions, let \(G\) be the considered graph.

Some of the parameters are based on a \emph{decomposition}, which contains a graph \(H\) on vertex-sets of \(G\).
We refer to the vertices of \(H\) as nodes or bags and to the edges of \(H\) as links.

\paragraph{Structural Parameters Based on Vertex Cuts}
The following parameters are based on vertex separators.
If they are bounded there is a small set of vertices such if we remove these vertices, the considered graph gets disconnected into considerably smaller parts.
What this means exactly varies from parameter to parameter.
However, all of them have in common, that the number of removed edges is not bounded by a function of the parameter.

\subparagraph{Vertex Cover Number}
Let \(S \subseteq V(G)\).
We call \(S\) a \emph{vertex-cover} if \(G - S\) is edgeless.
The size of the smallest vertex cover is called the vertex cover number \(\vc{G}\) of \(G\).
Finding the smallest vertex cover is \(\fpt\) by its size.

\subparagraph{Feedback Vertex Set Number}
Let \(S \subseteq V(G)\).
We call \(S\) a \emph{feedback vertex set} if \(G - S\) is cyclefree.
The size of the smallest feedback vertex set is called the feedback vertex set number \(\fvs{G}\) of \(G\).
Finding the smallest feedback vertex set is \(\fpt\) by its size.

\subparagraph{Fracture Number}
Let \(S \subseteq V(G)\).
We call \(S\) a \emph{fracture modulator} if every component in \(G - S\) contains at most \(\abs{S}\) vertices.
The size of the smallest fracture modulator is called the feedback vertex set number \(\fvs{G}\) of \(G\).
According to \Cref{stmt:fracture-modulator-algo}, we can find a smallest fracture modulator in \(\O{(2k - 1)^k\abs{G}}\).

\subparagraph{Treewidth}
Treewidth builds upon the notion of a tree decomposition of \(G\).

\begin{definition}
A \emph{tree decomposition} is a pair \((T, \mathcal{X})\), where \(T\) is a tree and \(\{X_t \subseteq V(G)\mid t \in V(T)\} \coloneqq \mathcal{X}\) is such that
\begin{itemize}
\item \(\bigcup_{t \in V(T)} X_t = V(G)\),
\item for every \(uv \in E(G)\), there is a \(t \in V(T)\) with \(\{u,v\} \subseteq X_t\),
\item for every \(u \in V(G)\), the subgraph induced on \(T\) by the bags containing \(u\) (i.e.,~\(T[\{t \in V(T) \mid u \in X_t\}]\)) is connected.\qedhere
\end{itemize}
\end{definition}

We define the width of a tree decomposition \((T, \mathcal{X})\) as \(\max_{t \in V(T)} \abs{X_t} - 1\) and the treewidth \(\tw{G}\) of \(G\) as the minimum width of any tree decomposition of \(G\).
There is an algorithm that for all \(w \in \N\) either outputs a tree decomposition of width \(5w+ 4\), or certifies that \(\tw{G} > w\).
This algorithm runs in time \(2^\O{w}\abs{V(G)}\)~\cite{BodlaenderDDFLP16}.

To better facilitate dynamic programming with tree decomposition, we introduce the notion of a \emph{nice} tree decomposition.
Contrary to normal tree decomposition, in a nice tree decomposition, the tree \(T\) is rooted at a vertex \(r\).
Additionally, for all leaves \(\ell \in V(T)\), we have \(X_\ell = X_r = \emptyset\).
Each inner node \(t \in V(T)\) has one of three types.
\begin{itemize}
\item \emph{Introduce Node} This node has exactly one child \(c \in V(T)\) and there is a \(v \in V(G) \setminus X_c\) such that \(X_t = X_c \cup \{v\}\).
\item \emph{Forget Node} This node has exactly one child \(c \in V(T)\) and there is a \(v \in X_c\) such that \(X_t = X_c \setminus \{v\}\).
\item \emph{Join Node} This node has exactly two children \(c, c' \in V(T)\) and \(X_c = X_{c'} = X_t\).
\end{itemize}

Given a tree decomposition of width \(w\), we can in time \(\O{w^2\max(\abs{V(G)},\abs{V(T)})}\) compute a nice tree decomposition of width \(w\) with at most \(w \abs{V(G)}\) nodes~\cite{Kloks94}.
As a shorthand notation, for each \(t \in V(T)\) denote with \(T_t\) the subgraph rooted at \(t\) and let \(Y_t \coloneqq \bigcup_{t' \in T_t} X_{t'}\) be all the vertices introduces at or below \(t\).

A \(\mathrm{MSO}_2\) formula for a graph is a logic formula, that has access to the set of vertices and edges and can check, whether they are incident to each other.
Additionally, quantification over vertices and edges and sets thereof are allowed.
There is an algorithm and a computable function \(f\colon \N \times \N \to \N\), such that all \(\mathrm{MSO}_2\) formulas of length \(\ell\) can be decided in time \(\O{f(\ell, \tw{G})\abs{V(G)}}\).
This fact is known as Courcelle's theorem~\cite{BoriePT92}.

\paragraph{Structural Parameters Based on Edge Cuts}
The following parameters are not only based on vertex cuts.
If any of the parameters presented here is small, we can remove a small number of edges to disconnect the graph into substantially smaller parts.
The exact meaning of this depends on the parameter.

\subparagraph{Tree-Cut Width}
The parameter \emph{tree-cut width} was introduced by Wollan~\cite{Wollan15}.
We orient our notation at the work of Ganian and Korchemna~\cite{GanianK22}.
Let \(X\) be a set and \(\mathcal{X} \coloneqq \{X_1, X_2, \dots, X_\ell\}\) be a family of subsets of \(X\) such that every two elements of \(\mathcal{X}\) are disjoint and \(X = \bigcup_{X_t \in \mathcal{X}} X_t\).
We call \(\mathcal{X}\) a \emph{near-partition} of \(X\).
Note that \(X_i = \emptyset\) is allowed.

\begin{definition}
A \emph{tree-cut decomposition} of a graph \(G\) is a tuple \(\mathcal{D} \coloneqq (T, \mathcal{X})\), where \(T\) is a rooted tree and \(\{X_t \subseteq V(G) \mid t \in V(T)\} \coloneqq \mathcal{X}\) a near-partition of \(V(G)\).
A set in \(\mathcal{X}\) is called a \emph{bag} of the tree-cut decomposition.
\end{definition}

For a node \(t \in V(T)\), we denote \(T_t^{\mathcal{D}}\) the sub-tree rooted at \(t\) and we set \(Y_t^{\mathcal{D}} \coloneqq \bigcup_{t \in T_t} X_t\) to be the vertices contained in the bags of \(T_t^\mathcal{D}\).
The \emph{adhesion} \(\adhesion[\mathcal{D}]{t}\) of \(t\) is defined as \(\abs{\cutEdges{Y_t^\mathcal{D}}}\) and the adhesion of the whole tree-cut decomposition is \(\max_{t \in V(T)}\adhesion[\mathcal{D}]{t}\).

The \emph{torso} of a tree-cut decomposition at node \(t\), denoted as \(H_t^\mathcal{D}\), is the graph obtained from \(G\) as follows.
Consider the connected components of \(T - t\) and for each \(C \in \comp{T - t}\), denote with \(Z_C \coloneqq \bigcup_{t \in C} X_t\).
The torso at \(t\) is obtained from \(G\), by contracting for all \(C \in \comp{T - t}\) the sets \(Z_C\) into a single vertex \(z_C\).
Note that this may create parallel edges and for \(\abs{V(T)} = 1\), the torso of the root is \(G\).
The vertices \(\{z_C\}_{C \in \comp{T - t}}\) are called \emph{peripheral} and the vertices in \(X_t\) are called \emph{core} vertices of \(H_t^\mathcal{D}\).
Consider the unique graph \(\tilde{H}_t^\mathcal{D}\), obtained from \(H_t^\mathcal{D}\), by repeatedly suppressing vertices of degree at most \(2\) from \(V(H_t^\mathcal{D}) \setminus X_t\) and removing loops.
This graph is called the \emph{3-center} of \(H_t^\mathcal{D}\) with respect to \(X_t\).

\begin{definition}
Let \(\alpha\) denote the adhesion of the tree-cut decomposition.
The width of the tree-cut decomposition is \[
\max\{\alpha\}\cup\left\{\abs{V(\tilde{H}_t^\mathcal{D})}\right\}_{t \in V(t)}
.\]
The tree-cut width of \(G\), written as \(\tcw{G}\), is the smallest width of a tree-cut decomposition.
\end{definition}

To better work with tree-cut decompositions, we need some additional notation and prior results.
For a node \(t \in V(T)\), we call \(t\) empty if \(X_t = \emptyset\) and we denote with \(\children{t}\) the set of children of \(t\) in \(T\).
Additionally, we call \(t \in V(T)\) \emph{thin} if \(\adhesion[\mathcal{D}]{t} \leq 2\) and \emph{bold} otherwise.
The set of thin children is denoted by \(\thinChildren[\mathcal{D}]{t}\) and the set of bold children as \(\boldChildren[\mathcal{D}]{s}\).
A tree-cut decomposition is \emph{nice}, if for all thin nodes \(t\) the sets \(N_G(Y_t)\) and \(\bigcup_{s \text{ is a sibling of } t} Y_{s}^\mathcal{D}\) are disjoint.
Any tree-cut decomposition can be transformed into a nice tree-cut decomposition in cubic time~\cite{GanianKS22}.
In all notation we leave off the tree-cut decomposition, if it is clear from context.

Let \(G'\) be such that there is an immersion from \(G'\) into \(G\), then \(\tcw{G'}\leq \tcw{G}\)~\cite{Wollan15}.
For all \(n, k \in \N^+\), consider the graph with a center vertex \(c\) and outer \(n\) vertices, each connected to \(c\) with \(k\) parallel edges.
We call this graph, \(S_{n,k}\).
Any graph with at most \(n\) vertices and maximum degree at most \(k\) has an immersion into \(S_{n,k}\)~\cite{Wollan15}.

The wall graph \(H_n\) is obtained from the \(n\times n\) grid graph by removing in each even row every even vertical edge and in each odd row, each odd edge vertical edge.
An illustration is given by Wollan~\cite{Wollan15} and they prove that for all \(k \in \N^+\), we have \(k \leq \tcw{H_{2k^2}}\).
In particular, \(H_n\) has \(2n^2\) vertices and maximum degree 3.

Kim et~al.~\cite{KimOPST18} provide an algorithm, that given a number \(w\in \N\) either provides a tree-cut decomposition of width at most \(2w\), or certifies that \(\tcw{G} > w\).
Its running time is \(2^\O{w^2\log 2}\abs{V(G)}^2\).

\subparagraph{Slim Tree-Cut Width}
The \emph{slim tree-cut width} is defined very similarly to tree-cut width.
Consider a tree-cut decomposition \(\mathcal{D} = (T, \mathcal{X})\) and a node \(t \in V(T)\).
Define the 2-center \(\tilde H^{2;\mathcal{D}}_t\) of \(t\) in \(\mathcal{D}\) to be the graph obtained from \(H^\mathcal{D}_t\) after exhaustively suppressing all vertices of degree at most \(1\).

\begin{definition}
Let \(\alpha\) be the adhesion of \(\mathcal{D}\).
The \emph{slim width} of \(\mathcal{D}\) is defined to be \[\max\{\alpha\}\cup \left\{\abs{\tilde V(H^{2,\mathcal{D}}_t)}\right\}_{t \in V(T)}.\qedhere\]
\end{definition}

The slim tree-cut width \(\stcw{G}\) is the smallest slim tree-cut width of any tree-cut width for \(G\).
Ganian et~al.~\cite{GanianK22} provide an algorithm, that given a number \(w\in \N\) either provides a nice tree-cut decomposition of slim width at most \(6(w+1)^3\), or certifies that \(\stcw{G} > w\).
Its running time is \(2^\O{w^2\log w}\abs{V(G)}^4\).

Let \(G'\) be such that \(G'\) has an immersion into \(G\), then \(\stcw{G'} \leq \stcw{G}\)~\cite{GanianK22}.
Let \(n \in \N^+\) and consider the graph of \(n\) cycles of length three that all contain the same vertex \(c\), but are otherwise vertex disjoint.
We call this graph \(W_n\) and the collection of all such graphs windmills.
For all \(r \in \N \setminus \natintZ{1}\), we have \(\stcw{W_{r^2}} \geq r\), while \(\tcw{W_r} = 2\).
An illustration of windmill graphs is given by Ganian et~al.~\cite{GanianK22}.

\subparagraph{Feedback Edge Set}
Let \(S \subseteq E(G)\).
If \(G - E\) is cyclefree, we call \(S\) a \emph{feedback edge set}.
The size of the smallest feedback edge set is called the \emph{feedback edge set number} \(\fen{G}\) of \(G\).
Note that \(\fen{G} = \abs{E(G)} - \abs{V(G)} + \abs{\comp{G}}\).

\subparagraph{Treewidth + Maximum Degree} This parameter is just \(\tw{G} + \max_{v \in V(G)} \deg{v}\) and gives a simple approach to turning \(\tw{G}\) into a parameter which ensures that there are small edge cuts that if removed disconnect the graph into substantially smaller parts.


\section{Augmentation for \texorpdfstring{$\gstp$}{GSTP}}%
\label{sec:augS}%
\label{sec:augmentation}
For the structural parameterization of \(\edp\) the tool of the augmented graph has found a wide range of applications.
For this consider an instance \((G, P)\) of \(\edp\), where \(G = (V,E)\) represents the underlying graph and \(P \subseteq \binom{V}{2}\) the pairs of vertices that ought to be connected.
The augmented graph of this instance is denoted by \(\augment{G}{P}\) and is defined to be \(G + P\), the graph obtained from \(G\) where we additionally connect all pairs of \(P\).
This concept was first introduced for \textsc{Multicut} by Gottlob and Lie~\cite{GottlobL07}.

For \(\edp\), structural parameterization by a parameter derived from the augmented graph, opens many doors.
As an example, consider the vertex cover number as a parameter.
For graphs with vertex cover number at least \(3\) this problem is \textsc{NP}-hard \cite{Fleszar2018Sep}.
If we restrict our view to graph with vertex cover number at most \(2\), this problem is solvable in polynomial time \cite{GanianOR21}.
This strict dichotomy is not necessarily desired and it does not help to fully understand which instances are actually hard.
However, taking the vertex cover number of the augmented graph as the parameter, \(\edp\) is fixed-parameter tractable.
This even holds for the weaker parameter of the fracture number of the augmented graph \cite{GanianOR21}.

We want to transfer this approach to \(\gstp\).
It is not directly clear how this should be done.
Consider an instance \((G, \mathcal{T}, d)\) of \(\gstp\).
There are at least two viable options on how to approach this.
First, for each \(T_{i} \in \mathcal{T}\) and distinct \(u,v \in T_{i}\) we could add an edge between \(u\) and \(v\), creating parallel edges if this edge exist already.
As this ensures that all \(T_{i} \in \mathcal{T}\) form a clique, we call this the clique-augmented graph \(\augmentClique{G}{\mathcal{T}}\).
There is a simple reduction for \(\edp\) that ensures that all terminals are non-adjacent and have degree one~\cite{Fleszar2018Sep}.
Assuming, each \(v \in V\) is part of at most \(\deg{v}\) terminal pairs, which can be ensured by applying \Cref{rr:degree-negative-instance}, all considered parameters, but vertex-cover number, only grow slightly as a function of the parameter.
Namely, fracture number grows at most quadratically, while all other considered parameters grow by at most a factor of two.
Therefore, the clique-augmented graph matches the augmented graph for \(\edp\) in the cases considered in previous research.

On the other hand, we can introduce a new vertex \(\aug{T_i}\) for each \(T_i\in \mathcal{T}\) and connect it to all \(v \in T_i\).
We call this the vertex-augmented graph \(\augmentVert{G}{\mathcal{T}}\).
This version differs from the augmented graph for \(\edp\) by subdividing the inserted edges.

We now study, how parameterized complexity results transfer between the two versions with respect to the structural parameters presented in \Cref{fig:relation-between-parameters,sec:prelims}.
To this end, we check for each such parameter \(\kappa\), whether or not there is a function \(f \colon \N \to \N\) such that for all instances \((G, \mathcal{T}, d)\) of \(\gstp\) we have \(\kappa(\augmentVert{G}{\mathcal{T}}) \leq f(\kappa(\augmentClique{G}{\mathcal{T}}))\) and vice versa.
In this case, we write \(\kappa(\augmentVert{G}{\mathcal{T}}) \leq_f \kappa(\augmentClique{G}{\mathcal{T}})\).
Note that \(d\) does not play any role in this definition.
If such a function exists, any result showing that \(\gstp\)—or \(\edp\), or \(\stp\)—is \(\fpt\) with respect to \(\kappa\) of the vertex-augmented graph, yields directly that \(\gstp\) is \(\fpt\) with respect to \(\kappa\) of the clique augmented graph.
Similarly, if \(\gstp\) is \(\wonehard\) with respect to \(\kappa\) of the clique-augmented graph, it is \(\wonehard\) with respect to \(\kappa\) of the vertex-augmented graph as well.
That is, algorithmic results would be stronger with respect to the vertex-augmented graph and hardness results would be stronger with respect to the clique augmented graph.

We now show, that for \(\kappa \in \{\twOp, \fvsOp, \fracNumOp, \twOp + \maxDegOp, \tcwOp, \stcwOp, \fenOp\}\) such a function in deed exists. For \(\kappa = \twOp + \maxDegOp\), the reverse holds as well.

\begin{lemma}
\label{stmt:vert-augment-bounded}
For all \(\kappa \in \{\twOp, \fvsOp, \fracNumOp, \tcwOp, \stcwOp, \fenOp, \twOp + \maxDegOp\}\), \(\kappa(\augmentVert{G}{\mathcal{T}}) \leq_f \kappa(\augmentClique{G}{\mathcal{T}})\).
If \(\kappa = \twOp + \maxDegOp\), we additionally have \(\kappa(\augmentClique{G}{\mathcal{T}}) \leq_f \kappa(\augmentVert{G}{\mathcal{T}})\).
\end{lemma}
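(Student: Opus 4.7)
The plan is to handle each parameter separately by exploiting the single structural fact that in the multi-graph $\augmentClique{G}{\mathcal{T}}$ each terminal set $T_i$ is a (multi-)clique. For decomposition-based parameters this forces $T_i$ into a single bag; for modulator-based parameters it bounds $|T_i \setminus S|$ for any suitable modulator $S$ of $\augmentClique{G}{\mathcal{T}}$, which in turn limits what $\augmentVert{G}{\mathcal{T}}$ can add on top of $\augmentClique{G}{\mathcal{T}}$.

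For $\twOp$, $\tcwOp$, and $\stcwOp$ I would start from an optimal decomposition of $\augmentClique{G}{\mathcal{T}}$; by the Helly property for subtrees, every $T_i$ lies in a common bag $B_i$, so attaching $B_i \cup \{\aug{T_i}\}$ as a new leaf for every $i$ yields a valid decomposition of a supergraph of $\augmentVert{G}{\mathcal{T}}$ whose width exceeds the original by at most a constant (and whose new-leaf adhesion $|T_i|+1$ is already bounded by a function of the width at $B_i$). For $\fvsOp$ I would take an fvs $S$ of the multi-graph $\augmentClique{G}{\mathcal{T}}$ and argue that $S$ is also an fvs of $\augmentVert{G}{\mathcal{T}}$: since $\augmentClique{G}{\mathcal{T}} - S$ is a forest, every $T_i \setminus S$ has size at most two (no multi-edges, no triangles survive), so each $\aug{T_i}$ has degree at most two in $\augmentVert{G}{\mathcal{T}} - S$, and exhaustively suppressing these vertices reconstructs precisely $\augmentClique{G}{\mathcal{T}} - S$. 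For $\fenOp$ a direct circuit-rank count suffices, as replacing the $\binom{|T_i|}{2}$ clique edges by only $|T_i|$ star edges can only shrink the number of independent cycles. For $\fracNumOp$ I would again take a modulator $S$ of $\augmentClique{G}{\mathcal{T}}$, observe that $T_i \setminus S$ lies in a single component of $\augmentClique{G}{\mathcal{T}} - S$ (hence $\aug{T_i}$ attaches to exactly one component in $\augmentVert{G}{\mathcal{T}} - S$), and move the surplus $\aug{T_i}$'s into $S'$ so that no component grows beyond $|S'|$.

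For $\twOp + \maxDegOp$ the treewidth side is inherited from above, and the max-degree side uses $\deg[\augmentVert{G}{\mathcal{T}}]{v} \leq \deg[\augmentClique{G}{\mathcal{T}}]{v}$ for $v \in V(G)$ together with $\deg[\augmentVert{G}{\mathcal{T}}]{\aug{T_i}} = |T_i| \leq \tw{\augmentClique{G}{\mathcal{T}}} + 1$. For the reverse inequality the Helly property applied in a tree decomposition of $\augmentVert{G}{\mathcal{T}}$ to the pairwise-intersecting subtrees of $\aug{T_i}$ and each $t \in T_i$ delivers a common bag into which all clique edges of $T_i$ can be inserted freely, and a straightforward estimate controls the max-degree blowup. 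The main obstacle is the $\fracNumOp$ case: unlike for $\fvsOp$, the naive choice $S' = S$ need not yield a modulator of $\augmentVert{G}{\mathcal{T}}$, and bounding the number of ``extra'' $\aug{T_i}$'s that must be added to $S'$ as a function of $|S|$ only, rather than of $|\mathcal{T}|$, is the step that will require the most care.
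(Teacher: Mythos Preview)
Your overall strategy is sound and matches the paper for $\twOp$, $\fvsOp$, and $\fenOp$ (your circuit-rank argument for $\fenOp$ is in fact cleaner than the paper's edge-counting). The $\fracNumOp$ case is handled in the paper exactly along the lines you sketch, with the missing bound being that any component $C$ of $\augmentVert{G}{\mathcal{T}}-S$ contains at most $2^{2|S|}$ augmented vertices, since each $\aug{T}\in V(C)$ forces $T\subseteq S\cup(V(C)\cap V(G))$, a set of size at most $2|S|$.

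However, there are two genuine gaps. First, your treatment of $\tcwOp$ and $\stcwOp$ transplants the tree-decomposition Helly argument to tree-cut decompositions, where it does not apply: the bags of a tree-cut decomposition form a \emph{near-partition} of $V(G)$, so there is no bag $B_i$ containing all of $T_i$, and attaching a leaf with bag $T_i\cup\{\aug{T_i}\}$ is not even legal. The paper instead places $\aug{T}$ into the bag $X_{s_T}$ of the \emph{deepest} node $s_T$ whose bag meets $T$; one then checks that adhesions do not increase (every cut already separated $T$ in $\augmentClique{G}{\mathcal{T}}$ and hence carried at least as many clique edges as the new star edges), and bounds the number of terminal sets mapped to the same $s_T$ by $w+2^w$ to control the torso growth. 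This step needs real work and is not a leaf-attachment.

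Second, two of your $\twOp+\maxDegOp$ claims fail. For the forward max-degree bound, $\deg_{\augmentVert{G}{\mathcal{T}}}(v)\le\deg_{\augmentClique{G}{\mathcal{T}}}(v)$ is false for $v\in V(G)$ under the simple-degree convention used here: if $v$ lies in many terminal sets whose union is small, $|\mathcal{T}_v|$ can be exponentially larger than $|\bigcup\mathcal{T}_v|-1$. The paper's bound is accordingly $\deg_{\augmentVert{G}{\mathcal{T}}}(v)\le\deg_{\augmentClique{G}{\mathcal{T}}}(v)+2^{\deg_{\augmentClique{G}{\mathcal{T}}}(v)}$. For the reverse direction, your Helly step assumes the subtrees of the $t\in T_i$ in a tree decomposition of $\augmentVert{G}{\mathcal{T}}$ are \emph{pairwise} intersecting, but there is no edge $tt'$ in $\augmentVert{G}{\mathcal{T}}$, so this need not hold; a star on $\aug{T}$ with a path decomposition $\{\aug{T},t_1\},\dots,\{\aug{T},t_k\}$ is a counterexample. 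The paper instead replaces each occurrence of $\aug{T}$ in every bag by all of $T$, which yields the weaker but correct bound $\tw{\augmentClique{G}{\mathcal{T}}}\le(\tw{\augmentVert{G}{\mathcal{T}}}+1)(\maxDeg{\augmentVert{G}{\mathcal{T}}}+1)$.
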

\begin{proof}
\leavevmode
\begin{itemize}
\item[] \(\bm{\twOp\!.}\)\quad Let \((S, \mathcal{X})\) be a tree decomposition of \(\augmentClique{G}{\mathcal{T}}\).
We create a tree decomposition based on \((S, \mathcal{X})\) for \(\augmentVert{G}{\mathcal{T}}\), by choosing for all \(T \in \mathcal{T}\) a bag \(X_s \in \mathcal{X}\) with \(T \subseteq X_s\), which exists as \(T\) is a clique in \(\augmentClique{G}{\mathcal{T}}\)~\cite{BodlaenderM90}.
We add a node \(s'\) associated with the bag \(T \cup \{\aug{T}\}\) as a child of \(s\).
Note that this is a tree decomposition for \(\augmentVert{G}{\mathcal{T}}\) with width at most one larger than the treewidth of \((S, \mathcal{X})\).
\item[] \(\bm{\fvsOp\!.}\)\quad Let \(S\) be a feedback vertex set for \(\augmentClique{G}{\mathcal{T}}\).
We claim that \(S\) is also a feedback vertex set for \(\augmentVert{G}{\mathcal{T}}\).
Assume there is a cycle \(C\) in \(\augmentVert{G}{\mathcal{T}} - S\) and set \(C'\) to be \(C\) with all vertices in \(\aug{\mathcal{T}}\) removed.
As for every \(T \in \mathcal{T}\) the vertices in \(T\) are adjacent in \(\augmentClique{G}{\mathcal{T}}\), \(C'\) is a cycle in \(\augmentClique{G}{\mathcal{T}} - S\) as well, violating the fact that \(S\) is a feedback vertex set.
\item[] \(\bm{\fracNumOp\!.}\)\quad Let \(S\) be a fracture modulator of \(\augmentClique{G}{\mathcal{T}}\).
We now add arbitrary vertices to \(S\) until it is a fracture modulator for \(\augmentVert{G}{\mathcal{T}}\).
Note that the obtained set is a fracture modulator and its size is bounded by \(\abs{S} + \max_{C \in \comp{\augmentVert{G}{\mathcal{T}} - S}} \abs{V(C)}\).
Let \(C \in \comp{\augmentVert{G}{\mathcal{T}} - S}\).
All that remains is to bound \(\abs{V(C)}\) in terms of \(S\).
As no pair of augmented vertices is adjacent, if \(C\) contains only augmented vertices, we have \(\abs{C} = 1\).
Otherwise, there is a vertex \(v \in V(C)\) that is not augmented.
Let \(D \in \comp{\augmentClique{G}{\mathcal{T}} - S}\) be such that \(v \in V(D)\).
We now show, that \(V(C) \cap V(G) \subseteq V(D)\).
Let \(u \in V(C) \cap V(G)\).
There is a \(vu\)-path \(P\) in \(\augmentVert{G}{\mathcal{T}} - S\).
Note that \(P\) with all augmented vertices removed is a path in \(\augmentClique{G}{\mathcal{T}} - S\).
Thus, \(u \in V(D)\).

Now, consider any \(T \in \mathcal{T}\) with \(\aug{T} \in V(C)\).
As there is a path from \(v\) to \(\aug{T}\) in \(C\), we have \(T \cap V(C) \neq \emptyset\) and, in particular, \(T \subseteq S \cup V(C)\).
Thus, there are at most \(2^{\abs{S} + \abs{V(C)}} \leq 2^{2\abs{S}}\) many choices for \(T\) and at most that many augmented vertices contained in \(V(C)\); yielding that the obtained fracture modulator has size at most \(2\abs{S} + 2^{2\abs{S}}\).
  \item[] \(\bm{\tcwOp\!.}\)\quad Let \((S, \mathcal{X})\) be a tree-cut decomposition of \(\augmentClique{G}{\mathcal{T}}\) and let its tree cut width be \(w\).
  To obtain a tree-cut decomposition for \(\augmentVert{G}{\mathcal{T}}\), consider any \(T \in \mathcal{T}\) and let \(s_{T} \in V(S)\) be a node of \(S\) containing a vertex of \(T\) such that no vertex of \(T\) is contained in a descendant of \(s_{T}\) in \(S\).
  We add the vertex \(\aug{T}\) to the bag associated with the node \(s_{T}\).
  Doing this for all \(T\) does not change the adhesion of any node.
  So, we need to bound the increase in torso size.
  In the torso at \(s_{T}\), the vertex \(\aug{T}\) is only connected to vertices in \(X_{s_{T}}\) and the vertex \(z_{\mathrm{top}}\) to which \(V(G) \setminus Y_{s_{T}}\) was contracted.
  However, the degrees of all vertices in the torso at \(s_{T}\) with respect to \(\augmentVert{G}{\mathcal{T}}\) are bounded by their degrees with respect to \(\augmentClique{G}{\mathcal{T}}\) and the graph induced by all pendant vertices did not change.
  This means, that any contraction sequence of pendant vertices with respect to \(\augmentClique{G}{\mathcal{T}}\) is also viable with respect to \(\augmentVert{G}{\mathcal{T}}\).
  So, we only need to bound the number of new core vertices in \(s_{T}\), this means, bound the number of \(T' \in \mathcal{T}\) with \(s_{T} = s_{T'}\).
  All such \(T'\) have \(X_{s_{T}} \cap T' \neq \emptyset\).
  There are at most \(2^{\abs{X_{s_{T}}}}\) such \(T'\) with \(T' \subseteq X_{s_{T}}\). So, assume \(T' \not\subseteq X_{s_{T}}\).
  As all descendants of \(s_{T}\) in \(S\) are disjoint from \(T'\), we have \(T' \setminus Y_{s_{T}} \neq \emptyset\), meaning that \(T'\) contributes at least one edge to the adhesion of \(s_{T}\).
  Thus, there are at most \(w\) with \(T'\not\subseteq X_{s_{T}}\) and \(w + 2^{w}\) overall.
\item[] \(\bm{\stcwOp\!.}\)\quad The proof is analogous to the one for tree-cut width.
  \item[] \(\bm{\fenOp\!.}\)\quad Let \(S\subseteq E(\augmentClique{G}{\mathcal{T}})\) be a feedback edge set of \(\augmentVert{G}{\mathcal{T}}\).
  To obtain a feedback edge set for \(\augmentVert{G}{\mathcal{T}}\) let \(\mathcal{T}_{S} \subseteq \mathcal{T}\) be the terminal sets which added an edge to \(\augmentClique{G}{\mathcal{T}}\) that is contained in \(S\).
  Now, consider the edge set \(S'\) that contains \(S\cap E(G)\) and for each \(T\in \mathcal{T}_{S}\) all but one edge of \(\{\aug{T}t\}_{t \in T}\).
  We notice that \(\augmentVert{G}{\mathcal{T}} - S'\) is a subgraph of  \(G - S\) with some additional one degree vertices. Therefore, \(S'\) is a feedback edge set of \(\augmentClique{G}{\mathcal{T}}\).

        Now, we show that \(|S'| \leq 2|S|\).
        As all the edges contributed by each \(T \in \mathcal{T}\) are distinct, it is sufficient to show that each \(T \in \mathcal{T}\) contributes at most twice as many edge to \(S'\) compared to \(S\).
        If \(|T| \leq 2\), the amount of contributed edges is equal.
        So, let \(|T| \geq 3\).
        As the edges contributed by \(T\) to \(\augmentClique{G}{\mathcal{T}}\) form a clique, at least \(\binom{|T|}{2} - (\abs{T} - 1) = \frac{(\abs{T} - 1)(\abs{T} - 2)}{2}\) of them are contained in \(S\).
        On the other hand, at most \(\abs{T} - 1\) are contributed to \(S'\) by \(T\).
        Therefore, \(T\) contributes at most \(\frac{2(\abs{T} - 1)}{(\abs{T} - 1)(\abs{T} - 2)} \leq 2\) times as many edges to \(S'\) as it does to \(S\).
\item[] \(\bm{\twOp + \maxDegOp\!.}\)\quad We know that \(\tw{\augmentVert{G}{\mathcal{T}}} \leq 1 + \tw{\augmentClique{G}{\mathcal{T}}}\).
So, we only need to bound \(\maxDeg{\augmentVert{G}{\mathcal{T}}}\). To bound \(\maxDeg{\augmentVert{G}{\mathcal{T}}}\) in terms of \(\maxDeg{\augmentClique{G}{\mathcal{T}}}\), let \(v \in V(G)\) and set \(\mathcal{T}_v \coloneqq \{T \in \mathcal{T} \mid v \in T\}\) to be the terminal sets containing \(v\).
Then, \(v\) is adjacent in \(\augmentClique{G}{\mathcal{T}}\) to all its neighbors in \(G\) and all vertices for which there is a terminal set containing both, that is \(\deg[\augmentClique{G}{\mathcal{T}}]{v} = \abs{N_G(v) \cup ((\bigcup \mathcal{T}_v) \setminus \{v\})} \geq \max(\deg[G]{v}, \abs{\bigcup \mathcal{T}_v} - 1))\).
Additionally, in \(\augmentVert{G}{\mathcal{T}}\) the vertex \(v\) is adjacent to all its neighbors in \(G\) and the augmented vertices \(\aug{\mathcal{T}_s}\), that is \(\deg[\augmentVert{G}{\mathcal{T}}]{v} = \deg[G]{v} + \abs{\mathcal{T}_v}\).
For all \(T \in \mathcal{T}_s\), we have \(\{v\} \subseteq T \subseteq \bigcup \mathcal{T}_v\).
Thus, \(\abs{\mathcal{T}_v} \leq 2^{\abs{\bigcup \mathcal{T}_v} - 1}\) and we get \(\deg[\augmentVert{G}{\mathcal{T}}]{v} \leq \deg[\augmentClique{G}{\mathcal{T}}]{v} + 2^{\deg[\augmentClique{G}{\mathcal{T}}]{v}}\).
Now, consider an augmented vertex \(u \in V(\augmentVert{G}{\mathcal{T}})\).
Choose, \(v \in \invAug{u}\).
Then, \(\deg[\augmentVert{G}{\mathcal{T}}]{v} = \abs{T} \leq \abs{\bigcup \mathcal{T}_v} \leq \deg[\augmentClique{G}{\mathcal{T}}]{v} + 1\).
So, the parameter \(\twOp + \maxDegOp\) of \(\augmentVert{G}{\mathcal{T}}\) is bounded by a function of its value on \(\augmentClique{G}{\mathcal{T}}\).

Now, let \((S, \mathcal{X})\) be a tree decomposition of \(\augmentVert{G}{\mathcal{T}}\).
To create a tree decomposition for \(\augmentClique{G}{\mathcal{T}}\), replace for each \(T \in \mathcal{T}\) every occurrence of \(\aug{T}\) in \(\mathcal{X}\) with \(T\).
Notice that this is a tree decomposition.
Consider \(T \in \mathcal{T}\).
We have that \(\abs{T} = \deg[\augmentVert{G}{\mathcal{T}}]{\aug{T}} \leq \maxDeg{\augmentVert{G}{\mathcal{T}}}\).
Therefore, the treewidth of \(\augmentClique{G}{\mathcal{T}}\) is bounded by \((\tw{\augmentVert{G}{\mathcal{T}}} + 1)(\maxDeg{\augmentVert{G}{\mathcal{T}}} + 1)\).
Consider any \(v \in V(G)\).
We have \(\deg[\augmentClique{G}{\mathcal{T}}]{v} \leq \deg[G]{v} + \abs{\bigcup \mathcal{T}_v}\).
Observe that \(\abs{\bigcup \mathcal{T}_v} \leq \abs{\mathcal{T}_v}\max_{T \in \mathcal{T}_v} \abs{T}\) and that \(\abs{\mathcal{T}_v} \leq \deg[\augmentVert{G}{\mathcal{T}}]{v}\) and that \(\max_{T \in \mathcal{T}_v} \abs{T}\leq \maxDeg{\augmentVert{G}{\mathcal{T}}}\).
Thus, \(\abs{\bigcup \mathcal{T}_v} \leq ( \maxDeg{\augmentVert{G}{\mathcal{T}}})^2\), yielding that the maximum degree of \(\augmentClique{G}{\mathcal{T}}\) is also bounded by a function of the parameter with respect to \(\augmentVert{G}{\mathcal{T}}\).\qedhere
\end{itemize}
\end{proof}

Next, we show that for the only other considered parameter, namely vertex cover number, such a function bounding \(\vc{\augmentVert{G}{\mathcal{T}}}\) in terms of \(\vc{\augmentClique{G}{\mathcal{T}}}\) can not exist.

\begin{lemma}
\label{stmt:vert-augment-unbounded}
There exists a family of instances \((G_i, \mathcal{T}_i, d_i)_{i\in\N}\) of \(\gstp\) such that \(\{\vc{\augmentClique{G_i}{\mathcal{T}_i}}\}_{i\in \N}\) is bounded and \(\{\vc{\augmentVert{G_i}{\mathcal{T}_i}}\}_{i\in \N}\) is unbounded.
\end{lemma}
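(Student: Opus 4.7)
The plan is to exhibit a family where the underlying graph is extremely simple---so that even collapsing each terminal set into a clique leaves a bounded vertex cover---while vertex augmentation introduces \(i\) new augmented vertices each of which owns a \emph{private} leaf in \(V(G)\), thereby producing a matching of size \(i\) that forces a large vertex cover.

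Concretely, I would let \(V(G_i) := \{u, v, w_1, \ldots, w_i\}\) and \(E(G_i) := \{uv\}\), let \(\mathcal{T}_i := \{T_1, \ldots, T_i\}\) with \(T_j := \{u, v, w_j\}\), and take any \(d_i \colon \mathcal{T}_i \to \N^+\) (for concreteness, \(d_i \equiv 1\)), since the demand plays no role in either augmented graph.

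For the upper bound on \(\vc{\augmentClique{G_i}{\mathcal{T}_i}}\) I would observe that every edge of \(\augmentClique{G_i}{\mathcal{T}_i}\) is incident to \(\{u, v\}\): the sole original edge is \(uv\), and the augmentation of \(T_j\) contributes only the (possibly parallel) edges \(uv\), \(uw_j\), and \(vw_j\). Hence \(\{u, v\}\) is a vertex cover and \(\vc{\augmentClique{G_i}{\mathcal{T}_i}} \leq 2\) uniformly in \(i\).

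For the lower bound on \(\vc{\augmentVert{G_i}{\mathcal{T}_i}}\), I would use that \(\aug{T_j}\) is adjacent in the vertex-augmented graph precisely to \(u\), \(v\), and \(w_j\). The \(i\) edges \(\bigl\{w_j\,\aug{T_j}\bigr\}_{j=1}^{i}\) are pairwise vertex-disjoint, since both the \(w_j\)'s and the \(\aug{T_j}\)'s are distinct, so they form a matching of size \(i\). Any vertex cover must contain at least one endpoint from each edge of the matching, yielding \(\vc{\augmentVert{G_i}{\mathcal{T}_i}} \geq i\), which is unbounded as \(i \to \infty\). There is no real obstacle here; the only design consideration is ensuring that every augmented vertex is equipped with a dedicated private neighbor in \(V(G)\), for which the \(w_j\)'s suffice.
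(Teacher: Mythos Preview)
Your argument is correct and follows the same strategy as the paper: exhibit a small vertex cover for the clique-augmented graph and a large matching in the vertex-augmented graph. The paper's witness family is slightly different---it takes the star $S_i$ with center $c$ and uses the two-element terminal sets $T_v=\{c,v\}$ for each leaf $v$, so that clique augmentation adds only parallel copies of existing edges (giving $\vc=1$) and vertex augmentation produces the windmill $W_i$---but the mechanism is identical to yours.
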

\begin{proof}
  Let \(i \in \N\) and consider the star graph \(S_{i}\) with \(i\) leaves and center vertex \(c\).
For each \(v \in V(S_{i}) \setminus \{c\}\), set \(T_v \coloneqq \{c, v\}\) to be a terminal set and let \(\mathcal{T}_i \coloneqq \{T_v\}_{v \in V(S_{i}) \setminus \{c\}}\).
As all edges that clique-augmentation adds are present in \(S_{i}\) and vertex covers are not affected by parallel edges, any vertex cover of \(S_{i}\) is a vertex cover of \(\augmentClique{S_{i}}{\mathcal{T}}\).
In particular, \(\{c\}\) is a vertex cover of \(\augmentClique{S_{i}}{\mathcal{T}}\).
On the other hand \(\augmentVert{S_{i}}{\mathcal{T}_i}\) is the windmill \(W_{i}\), which has a matching of size \(i\).
Thus, its vertex cover number is \(i\).
\end{proof}

Finally, we show that for all considered \(\kappa\), except for \(\twOp + \maxDegOp\), there also is no function bounding \(\kappa\) of the clique-augmented graph in terms of \(\kappa\) with respect to the vertex augmented graph.

\begin{lemma}
\label{stmt:clique-augment-unbounded}
For all \(\kappa \in \{\twOp, \fvsOp, \fracNumOp, \vcOp, \tcwOp, \stcwOp, \fenOp\}\), there exists a family of instances \((G_i, \mathcal{T}_i, d_i)_{i\in\N}\) of \(\gstp\) such that \(\{\kappa(\augmentVert{G_i}{\mathcal{T}_i})\}_{i\in \N}\) is bounded and \(\{\kappa(\augmentClique{G_i}{\mathcal{T}_i})\}_{i\in \N}\) is unbounded.
\end{lemma}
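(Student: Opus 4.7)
The plan is to exhibit a single family of instances that simultaneously witnesses the statement for every parameter in the list. For each $i \in \N^+$, I take $G_i$ to be the edgeless graph on $V_i \coloneqq \{v_1, \dots, v_i\}$, set $\mathcal{T}_i \coloneqq \{V_i\}$ to consist of a single terminal set containing all vertices, and let $d_i$ be arbitrary. The decisive feature of this choice is that $\augmentClique{G_i}{\mathcal{T}_i}$ is the complete graph $K_i$ (since every pair of vertices in $V_i$ is joined by a clique-augmentation edge), while $\augmentVert{G_i}{\mathcal{T}_i}$ is the star $K_{1,i}$ with new apex $\aug{V_i}$ adjacent to each $v_j$.

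The first step is to check that every parameter $\kappa$ in the list is bounded by a small absolute constant on $K_{1,i}$. Because $K_{1,i}$ is a tree I immediately get $\fen{K_{1,i}} = 0 = \fvs{K_{1,i}}$. The singleton $\{\aug{V_i}\}$ is both a vertex cover and a fracture modulator, since the components of $K_{1,i} - \aug{V_i}$ are isolated vertices. A tree decomposition placing the apex in every bag has width $1$, and a tree-cut decomposition with the apex at the root and each leaf in its own child bag witnesses $\tcw{K_{1,i}}, \stcw{K_{1,i}} \in \O{1}$ (the $3$- and $2$-centers collapse every peripheral leaf).

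The second step is to verify that each $\kappa$-value on $K_i$ grows without bound. Five of the parameters are classical: $\tw{K_i} = \vc{K_i} = i - 1$, $\fvs{K_i} = i - 2$, $\fen{K_i} = \binom{i}{2} - i + 1$, and $\fracNum{K_i} = \lceil i/2 \rceil$ (a fracture modulator $S$ of $K_i$ must satisfy $i - \abs{S} \le \abs{S}$). For $\tcw{K_i}$ and $\stcw{K_i}$ I argue directly from the definitions in \Cref{sec:prelims}: a single-node tree-cut decomposition of $K_i$ has torso equal to $K_i$ itself, which after the 3-center operation still contains all $i$ core vertices; and any decomposition with at least two nodes contains some node $t$ with $\emptyset \subsetneq Y_t \subsetneq V(K_i)$, for which $\adhesion{t} = \abs{Y_t}(i - \abs{Y_t}) \ge i - 1$. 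The same calculation applies verbatim to $\stcw{K_i}$.

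I anticipate no substantial obstacle: the entire argument uses only the definitions in \Cref{sec:prelims} and elementary properties of complete graphs. The only mildly delicate point is the lower bound on $\tcw$ and $\stcw$, which requires the two-case split on whether the decomposition tree has a single node or at least two, but both cases are settled immediately by, respectively, the size of the torso and the adhesion formula for bipartitions of $K_i$.
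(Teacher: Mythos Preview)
Your approach is correct and more economical than the paper's. You exhibit a single family—the edgeless graph on $i$ vertices with the full vertex set as the sole terminal set—that handles all seven parameters at once, whereas the paper splits into three separate constructions: a star with all leaves as one terminal set for $\{\twOp, \fvsOp, \fracNumOp, \vcOp\}$, an elaborate family of $4i^4$ length-$3$ paths attached to a common centre for $\{\tcwOp, \stcwOp\}$, and disjoint triples of isolated vertices for $\fenOp$. What lets your single family cover even $\stcwOp$ is that $\augmentVert{G_i}{\mathcal{T}_i} = K_{1,i}$ has every non-apex vertex of degree~$1$, so they are suppressed already in the $2$-center; the paper's star construction produces $K_{2,i}$ as the vertex-augmented graph, whose degree-$2$ leaves survive the $2$-center, which presumably motivated the second construction.

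One small slip: the claim that every tree-cut decomposition of $K_i$ with at least two nodes has a node $t$ with $\emptyset \subsetneq Y_t \subsetneq V(K_i)$ is not literally true, since near-partitions permit empty bags (take a root with empty bag and a single child holding all of $V(K_i)$). The repair is immediate—either some bag contains all of $V(K_i)$, and then its torso has $i$ core vertices that cannot be suppressed, or two distinct non-empty bags exist and the edge of $T$ separating them yields the proper $Y_t$ you need—so the argument goes through.
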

\begin{proof}
First, we prove this result for \(\kappa \in \{\twOp, \fvsOp, \fracNumOp, \vcOp\}\).
For this let \(i \in \N\) and consider the star graph \(S_i\) with \(i\) leaves around the center vertex \(c\).
Set the single terminal set \(T\) to be equal to all leaves.
The graph \(\augmentClique{S_i}{\mathcal{T}}\) has a clique with \(i\) vertices as a subgraph; so, its treewidth is at least \(i - 1\).
Whereas in the graph \(\augmentVert{S_i}{\mathcal{T}}\) the set \(\{c, \aug{T}\}\) is a vertex cover of size 2.
We have that \(\kappa(\augmentVert{S_i}{\mathcal{T}})\) is bounded by a function of the vertex cover number, yielding the result.

Now, let \(\kappa \in \{\tcwOp, \stcwOp\}\) and \(i \in \N\).
Consider the graph \(G_i\) which consists of a vertex \(c\) and \(4i^4\) isolated paths of length \(3\).
Denote with \(\{P_j\}_{j\in\natint{4i^4}}\) these isolated paths and set \(\mathcal{T} \coloneqq \{\{c\} \cup V(P_j)\}_{j \in \natint{4i^4}}\).
See \(G_i\), \(\augmentClique{G_i}{\mathcal{T}}\), and \(\augmentVert{G_i}{\mathcal{T}}\) illustrated in \Cref{fig:tcw-augmentation}.
Notice that the graph \(S_{4i^4, 3}\), which is the star graph on \(i^2\) vertices where each leave has \(3\) parallel edges to the center, immerses into \(\augmentClique{G_i}{\mathcal{T}}\).
As the wall \(H_{2i^2}\) has an immersion into \(S_{4i^4, 3}\)~\cite{Wollan15}, it also has an immersion into \(\augmentClique{G_i}{\mathcal{T}}\); so, \(\augmentClique{G_i}{\mathcal{T}}\) has tree-cut width at least \(i\)~\cite{Wollan15}.
For \(\augmentVert{G_i}{\mathcal{T}}\), we obtain a tree-cut decomposition as follows.
We set the root to be a bag containing only \(c\).
For each \(j \in \natint{4i^4}\), we create a bag as a child of the root node containing \(\{\aug{\{c\} \cup V(P_j)} \}\cup V(P_j)\).
This tree-cut decomposition is drawn in \Cref{fig:tcw-augmentation-vertex}.
We see that this tree-cut decomposition has width and slim width at most \(4\).

\begin{figure}[tp]
\centering
\begin{subfigure}[t]{0.32\textwidth}
\centering
\includegraphics[width=0.9\textwidth]{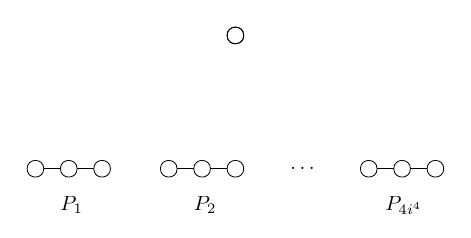}
\caption{\label{fig:tcw-augmentation-base} The graph \(G_i\)\vphantom{\(\augmentClique{G_i}{\mathcal{T}}\)}.}
\end{subfigure}
\begin{subfigure}[t]{0.32\textwidth}
\centering
\includegraphics[width=0.9\textwidth]{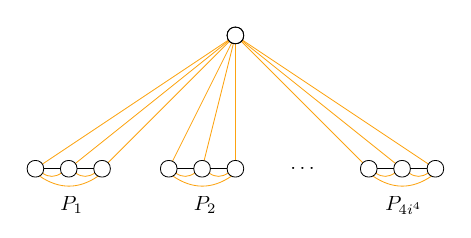}
\caption{\label{fig:tcw-augmentation-clique} The graph \(\augmentClique{G_i}{\mathcal{T}}\)\!.}
\end{subfigure}
\begin{subfigure}[t]{0.32\textwidth}
\centering
\includegraphics[width=0.9\textwidth]{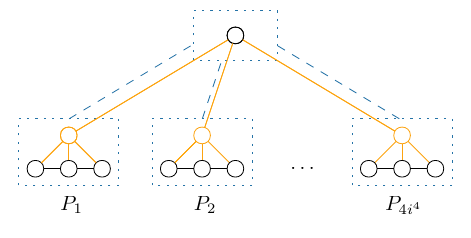}
\caption{\label{fig:tcw-augmentation-vertex} The graph \(\augmentVert{G_i}{\mathcal{T}}\)\!.}
\end{subfigure}
\caption{\label{fig:tcw-augmentation} The graphs \(G_i\), \(\augmentClique{G_i}{\mathcal{T}}\)\!, and \(\augmentVert{G_i}{\mathcal{T}}\)\!. We draw augmented vertices and edges in orange and the tree-cut decomposition in blue.}
\end{figure}

Finally, for \(\kappa = \fenOp\) and \(i \in \N\), we consider the graph \(G'_i\) on \(3i\) isolated vertices.
Partition \(V(G_i)\) into sets \(T_1, T_2, \dots, T_i\) of three vertices each and set \(\mathcal{T}\coloneqq \{T_j\}_{j \in \natint{i}}\).
We notice that \(\augmentClique{G_i'}{\mathcal{T}}\) is a union of \(i\) triangle graphs.
Thus, it has feedback edge number \(i\).
Additionally, we see that \(\augmentVert{G_i'}{\mathcal{T}}\) is a forest.
Consequently, it has feedback edge number \(0\).
\end{proof}

\begin{table}[tp]
\centering
\begin{tabular}{lcc}
\toprule
Parameter & \(\kappa(\augmentVert{G}{\mathcal{T}})\leq_{f}\kappa(\augmentClique{G}{\mathcal{T}})\) & \(\kappa(\augmentClique{G}{\mathcal{T}})\leq_{f}\kappa(\augmentVert{G}{\mathcal{T}})\) \\
\midrule
\(\twOp\) & \checkmark & \(\times\) \\
\(\fvsOp\) & \checkmark & \(\times\) \\
\(\fracNumOp\) & \checkmark & \(\times\) \\
\(\vcOp\) & \(\times\) & \(\times\) \\
\(\tcwOp\) & \(\checkmark\) & \(\times\) \\
\(\stcwOp\) & \(\checkmark\) & \(\times\) \\
\(\fenOp\) & \(\checkmark\) & \(\times\) \\
\(\twOp + \maxDegOp\) & \checkmark & \checkmark \\
\bottomrule
\end{tabular}
\caption{\label{tbl:summary-augmentation} Summary of the results obtained in \Cref{stmt:vert-augment-bounded,stmt:vert-augment-unbounded,stmt:clique-augment-unbounded}. For each parameter and statement, we write \checkmark if the statement holds with respect to this parameter and \(\times\) if it does not.}
\end{table}

The results obtained by \Cref{stmt:vert-augment-bounded,stmt:vert-augment-unbounded,stmt:clique-augment-unbounded} are summarized in \Cref{tbl:summary-augmentation}.
We see that the structural parameter of the vertex-augmented graph is strictly stronger with respect almost all considered parameters.
Only for vertex cover number they are incomparable and functionally equivalent for sum of tree width and maximum degree.
In light of these results and noticing that we mainly provide \(\fpt\)-algorithms, we decide to focus our attention on the vertex-augmented graph.
From now on, we simply call the vertex-augmented graph the augmented graph for \(\gstp\) and denote it with \(\augment{G}{\mathcal{T}}\).


\section{\texorpdfstring{$\gstp$}{GSTP} is \texorpdfstring{$\fpt$}{FPT} by the Augmented Fracture Number}
\label{sec:fn*}
\label{sec:gstp-fnS}
In this Chapter we show that \(\gstp\) is fixed-parameter tractable by the fracture number of the augmented graph.
This is an important stepping stone towards more intricate results, like showing that this problem is fixed-parameter tractable by the tree-cut width of the augmented graph.
Which in turn we can make use of to show that \(\stp\) is fixed-parameter tractable by the tree-cut width of the input graph.

To achieve the result of this Chapter, we represent the instance as an integer linear program with the number of variables bounded by a computable function of the parameter (i.e., the fracture number of the augmented graph).
Since checking whether an interger linear program has a feasible solution is \(\fpt\) in the number of variabel~\cite{Kannan87}, this yields an \(\fpt\)-algorithm, given that we can find a fracture modulator in \(\fpt\)-time as well.
This was claimed by Dvorák et~al.~\cite{DvorakEGKO21}, but their proof suffers from a minor inaccuracy.
We examine this and a corrected version of their algorithm in \Cref{sec:find-frac-mod-in-fpt}.

Our approach to represent the instance as an integer linear program uses similar ideas as Ganian et~al.~\cite{GanianOR21} use to show that \(\edp\) is \(\fpt\) by the fracture number of the augmented graph.
We first consider all the ways a component of the graph without the fracture modulator can interact with the remaining components.
Then, we group the components and we, finally, provide an integer linear program to represent the whole instance.
In doing so, we not only generalize there algorithm significantly, but we also reduce the running time from triply exponential in the parameter to doubly exponential in the parameter.
The exponential-time-hypothesis implies that there is no algorithm solving \(\ilp\) with \(n\) variables in \(2^{o(n)}\) time for all such instances.
Assuming \(\eth\), to decrease the running time to sub-doubly-exponential, either we need to use less than exponentially many variables to represent the instance, utilize some structure in the obtained \(\ilp\) instances, or switch to a new approach avoiding integer linear programming altogether.
All of which seem quite challenging.

Before we start into the three main parts of the proof, we provide common definitions and reduction rules.
Let \((G, \mathcal{T}, d)\) be an instance of \(\gstp\), let \(X\subseteq V(\augment{G}{\mathcal{T}})\) be a fracture modulator of \(\augment{G}{\mathcal{T}}\).
For all \(U \subseteq V(\augment{G}{\mathcal{T}})\), we call the set of all \(T \in \mathcal{T}\) with \(\aug{T}\in U\) by \(\mathcal{T}_U\).
Let \(C \in \comp{\augment{G}{\mathcal{T}} - X}\) and denote with \(C^+ \coloneqq \augment{G}{\mathcal{T}}[V(C) \cup X]\) the subgraph induced on the component and the fracture modulator.

For this Chapter, we need the following reduction rules.
They allow us to show, that in all relevant instances, most terminal sets only require few subgraphs to be assigned to them.

\begin{reductionrule}
\label{rr:sensible-terminal-sets}
If there is a \(T \in \mathcal{T}\) with \(|T| < 2\), remove \(T\) from \(\mathcal{T}\).
\end{reductionrule}
\begin{proof}
As we just remove constraints, whenever the original instance is a positive instance, so is the reduced instance.
Now assume that the reduced instance is a positive instance and let \(\mathcal{X}\) be a solution.
The set \(T\) can be interpreted as either an empty or a single-vertex tree.
In either case, it does not contain any edges and connects all vertices in \(T\).
Thus, we can add \(T\) to \(\mathcal{X}\) \(d(T)\)-times and obtain a solution for the original instance.
\end{proof}
\begin{reductionrule}
\label{rr:degree-negative-instance}
After applying \Cref{rr:sensible-terminal-sets} exhaustively, if there is a \(v \in V\) such that \(\sum_{T \in \mathcal{T} \colon v \in T}d(T) > \deg{v}\), output a trivial negative instance.
\end{reductionrule}
\begin{proof}
We show that no positive instance satisfies the premise.
Consider any solution, let \(v\in V\) and denote with \(\mathcal{F}_v\) the trees in the solution that contain \(v\).
Since for all \(T \in \mathcal{T}\) we have \(|T| \geq 2\), any tree in the solution contains an edge adjacent to every one of its vertices.
As all different trees are edge-disjoint, \(|\mathcal{F}_v| \leq \deg{v}\) and by definition of \(\mathcal{F}_v\) we have \(\sum_{T \in \mathcal{T} \colon v \in T}d(T)\leq |\mathcal{F}_v|\).
\end{proof}

To better assess, which terminal sets might need special attention, and to simplify our proofs, we only consider fracture modulators with a particular structure.

\begin{definition}
\label{def:nice-fracture-modulator}
Let \(X \subseteq V(\augment{G}{\mathcal{T}})\) be a fracture modulator of \(\augment{G}{\mathcal{T}}\).
We call \(X\) \emph{nice}, if
\begin{enumerate}
\item \label{item:nice-fracture-modulator-GS-edgeless}\(G[X]\) is edgeless,
\item \label{item:nice-fracture-modulator-TS} for all \(T \in \mathcal{T}_X\), we have that there are two distinct components \(C, C' \in \comp{\augment{G}{\mathcal{T}} - S}\) with \(V(C) \cap T \neq \emptyset\) and \(V(C') \cap T \neq \emptyset\).\qedhere
\end{enumerate}
\end{definition}

This is not a strong requirement.
In fact, we can always turn a fracture modulator into a nice fracture modulator of similar size.

\begin{lemma}
\label{stmt:nice-fracture-modulator-exists}
Let \(X\) be a fracture modulator of \(\augment{G}{\mathcal{T}}\).
There is an equivalent instance \(\mathscr{P}'=(G', \mathcal{T}, d)\) and a nice fracture modulator \(S\) of \(\augment{G'}{\mathcal{T}}\) with \(\abs{S} \leq 2\abs{X}\) and \(\abs{V(G')}\leq \abs{V(G)} + \binom{\abs{X}}{2} + 2\abs{X}\).
We can construct \(\mathscr{P}'\) and \(S\) in linear time.
\end{lemma}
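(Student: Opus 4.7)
The plan is to construct $\mathscr{P}'=(G',\mathcal{T},d)$ and $S$ from $(G,\mathcal{T},d)$ and $X$ in two independent phases, one for each condition in the definition of a nice fracture modulator, and then verify the size budgets and the equivalence of the two instances.

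For condition~(1), subdivide every edge of $G[X]$, producing at most $\binom{|X|}{2}$ new vertices of degree $2$ whose both neighbors lie in $X$. Call the intermediate graph $G_1$. Then $G_1[X]$ is edgeless, and the fresh subdivision vertices can be kept out of the eventual modulator: in $\augment{G_1}{\mathcal{T}} - X$ they sit as isolated singletons, which does not violate the fracture property because $|X| \geq 1$. Hence $X$ is still a fracture modulator of $\augment{G_1}{\mathcal{T}}$, and condition~(1) will be satisfied as long as the final $S \supseteq X$ avoids these new subdivision vertices.

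For condition~(2), first observe that $|\mathcal{T}_X| \leq |X|$, since distinct $T \in \mathcal{T}_X$ contribute distinct augmentation vertices $\aug{T} \in X$. Iterate over each problematic $T \in \mathcal{T}_X$---one whose elements fail to already meet two components of $\augment{G_1}{\mathcal{T}} - X$---and install a local gadget. If two elements of $T$ lie outside $X$ in a common component of $\augment{G_1}{\mathcal{T}} - X$, subdivide one edge on a path between them inside that component and add the new subdivision vertex to $S$. If instead $|T \setminus X| < 2$, introduce at most two new vertices in $V(G')$ attached by single edges to one or two carefully chosen elements of $T \cap X$, shift those chosen elements out of $S$, and insert the new vertices into $S$ in their stead. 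Across the at most $|X|$ installations we add at most $2|X|$ new vertices to $V(G')$ beyond the Phase-1 subdivisions and swap at most $|X|$ elements into $S$, giving $|V(G')| \leq |V(G)| + \binom{|X|}{2} + 2|X|$ and $|S| \leq |X| + |X| = 2|X|$. Instance equivalence follows from the routine facts that subdivision preserves edge-disjoint packings of connected subgraphs---both halves of a subdivided edge must be used together in any solution---and pendant-style additions can always be attached to or pruned from a solution without impact.

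The main obstacle lies in the hard case of Phase~2: moving terminals out of $S$ risks merging previously separated components of $\augment{G'}{\mathcal{T}} - S$ and thereby violating the fracture-size bound. The key is to exploit the slack between $|X|$ (the old modulator size) and $2|X|$ (the new allowance): a removed terminal $t \in T \cap X$ only exposes its previously adjacent components, each of size at most $|X|$, while the freshly inserted modulator vertex can be placed so as to separate the bulk of these neighbors from $t$, leaving $t$ in a component of size at most $2|X|$, as required. Linear-time constructibility is immediate from the locality of every modification.
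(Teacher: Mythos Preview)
Your Phase~2 has two genuine gaps.

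In Case~A (two terminals $t_1,t_2\in T$ in the same component $C$ of $\augment{G_1}{\mathcal{T}}-X$), subdividing a single edge on a $t_1$--$t_2$ path and adding the new vertex to $S$ amounts to deleting one edge from $C$. If $C$ is $2$-edge-connected---which nothing prevents, since $|C|\le |X|$ still allows cycles---this does not disconnect $t_1$ from $t_2$, so $T$ still meets only one component and condition~(2) fails for $T$.

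In Case~B, removing a terminal $t\in T\cap X$ from $S$ merges every component of $\augment{G_1}{\mathcal{T}}-X$ adjacent to $t$, and there can be unboundedly many of these (the degree of $t$ is not controlled). Your claim that ``the freshly inserted modulator vertex can be placed so as to separate the bulk of these neighbors from $t$'' is not realized by the construction you describe: a pendant attached to $t$ has only $t$ as a neighbor, so adding it to $S$ separates nothing. The resulting component containing $t$ can have size far exceeding $2|X|$, so $S$ is not a fracture modulator. The slack between $|X|$ and $2|X|$ that you invoke only buys you one extra component of size $|X|$, not arbitrarily many.

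The paper sidesteps both issues by a different and much simpler idea: rather than trying to force condition~(2) to hold for each bad $T\in\mathcal{T}_X$, it \emph{removes} $\aug{T}$ from the modulator, so that $T\notin\mathcal{T}_S$ and condition~(2) no longer applies to $T$. Concretely, set $Y=\{\aug{T}:T\in\mathcal{T}_X,\ T\setminus X\text{ lies in one component}\}$ and $Z=X\setminus Y$. Because each $\aug{T}\in Y$ has all its non-$X$ neighbors in a single original component, putting $Y$ back into the graph never merges two distinct components of $\augment{H}{\mathcal{T}}-X$; each resulting component has size at most $|X|+|Y|\le 2|X|$. One then pads $Z$ (and $H$) with isolated vertices until $|S|$ reaches that maximum component size, which is at most $2|X|$.
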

\begin{proof}
First, we subdivide each edge of \(G[X]\), which adds at most \(\binom{\abs{X}}{2}\) vertices.
Call this graph \(H\).
Let \(Y\coloneqq \{\aug{T} \mid T \in \mathcal{T}_X; \exists C \in \comp{\augment{G}{\mathcal{T}} - S}\colon T \subseteq V(C^+)\}\) be the vertices of \(X\), that violate the condition of \Cref{item:nice-fracture-modulator-TS} and set \(Z \coloneqq X \setminus Y\).
The graph \(\augment{H}{\mathcal{T}} - Z\) can have connected components of size larger than \(Z\).
In order to obtain a fracture modulator, we new isolated vertices to \(H\) and \(Z\), obtaining \(G'\) and \(S\), until \(S\) is fracture modulator of \(\augment{G'}{\mathcal{T}}\).
We notice that \(S\) is a nice.
It remains to bound the size of \(S\).

Consider a \(C \in \comp{\augment{H}{\mathcal{T}} - Z}\) with \(\abs{C} > 2\).
As the vertices created by subdividing the edges of \(G[X]\) combined with \(Y\) form an independent set in \(\augment{H}{\mathcal{T}}\), there is a \(D \in \comp{\augment{G}{\mathcal{T}} - X}\) with \(V(C) \cap V(D) \neq \emptyset\) and, in particular, \(V(C) \supseteq V(D)\).
Let \(Y^* \coloneqq \{y \in Y \mid N(y) \cap V(D) \neq \emptyset\}\).
By definition of \(Y\), we have for all \(y \in Y^*\) that \(N_{\augment{G}{\mathcal{T}}}[y] \subseteq V(D) \cup Z \cup \{y\}\).
Additionally, for all \(d \in V(D)\), we have \(N_{\augment{G}{\mathcal{T}}}[d] \subseteq V(D) \cup Z \cup Y^*\).
So, \(N_{\augment{G}{\mathcal{T}} - Z}[V(D) \cup Y^*] = V(D) \cup Y^*\).
As \(V(C)\) contains all \(V(D) \cup Y^*\), there are no other vertices in \(V(C)\) since the vertices in \(V(D) \cup Y^*\) would be disconnected from them in \(\augment{G}{\mathcal{T}} - Z\) as well as \(\augment{H}{\mathcal{T}} - Z\).
Thus, \(V(C) = V(D) \cup Y^*\) and \(\abs{S} = \max(1, \abs{Z}, \max \{\abs{V(C)}\}_{C \in \comp{\augment{H}{\mathcal{T}} - S}}) \leq \abs{Z} + \max_{D \in \comp{\augment{G}{\mathcal{T}} - X}} \abs{V(D)} + \abs{Y} \leq 2\abs{X}\), meaning that we add at most \(2 \abs{X}\) isolated vertices to \(H\).
\end{proof}

From now on, we assume that \(S\) is a nice fracture modulator for \((G, \mathcal{T}, d)\) and that \Cref{rr:degree-negative-instance} has been applied.
Note that even though \(G[S]\) is edgeless \(\augment{G}{\mathcal{T}}[S]\) may contain edges.
Denote with \(\terminalsContainedS\) the set of all \(T \in \mathcal{T}\) satisfying \(T \subseteq S\).
We show, that in all relevant instances and for all \(T \in \mathcal{T}\setminus \terminalsContainedS\), the number of required subsets \(d(T)\) is bounded by \(2\abs{S}\).

\begin{lemma}
\label{stmt:terminal-sets-few-requirements}
After applying \Cref{rr:degree-negative-instance}, for all \(T \in \mathcal{T}\setminus\terminalsContainedS\) we have \(d(T) \leq 2\abs{S}\).
\end{lemma}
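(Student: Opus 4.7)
The plan is to pick a witness vertex $v \in T \setminus S$, combine the degree bound from \Cref{rr:degree-negative-instance} with the fact that the component of $\augment{G}{\mathcal{T}} - S$ containing $v$ has at most $|S|$ vertices, and conclude.

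Fix any $T \in \mathcal{T} \setminus \terminalsContainedS$. Since $T \not\subseteq S$, there exists $v \in T \setminus S$. The vertex $v$ lies in some component $C \in \comp{\augment{G}{\mathcal{T}} - S}$, and because $S$ is a fracture modulator of $\augment{G}{\mathcal{T}}$ we have $|V(C)| \leq |S|$. Every neighbor of $v$ in $\augment{G}{\mathcal{T}}$ is either in $V(C) \setminus \{v\}$ or, since $C$ is a component of $\augment{G}{\mathcal{T}} - S$, in $S$. Restricting to $V(G) \subseteq V(\augment{G}{\mathcal{T}})$ gives
\[
\deg_G(v) \;\leq\; |V(C) \cap V(G)| - 1 + |S \cap V(G)| \;\leq\; |V(C)| - 1 + |S| \;\leq\; 2|S| - 1.
\]

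Since \Cref{rr:degree-negative-instance} has been applied without producing a trivial negative instance, we have $\sum_{T' \in \mathcal{T}\colon v \in T'} d(T') \leq \deg_G(v)$, and in particular $d(T) \leq \deg_G(v) \leq 2|S| - 1 \leq 2|S|$, as required.

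The argument is essentially immediate once the right vertex is chosen; no subtlety arises because the assumption $T \not\subseteq S$ guarantees the existence of some $v \in T$ lying in a small component of the augmented graph minus $S$, and the degree reduction rule then transfers the small cut into a bound on $d(T)$.
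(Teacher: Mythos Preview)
Your proof is correct and follows essentially the same approach as the paper: pick $v \in T \setminus S$, bound its neighborhood via the component of $\augment{G}{\mathcal{T}} - S$ containing it together with $S$, and invoke \Cref{rr:degree-negative-instance}. You are slightly more explicit than the paper in distinguishing $\deg_G(v)$ from the degree in the augmented graph, and you obtain the marginally tighter bound $2\abs{S}-1$, but the argument is the same.
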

\begin{proof}
Consider a \(T \in \mathcal{T}\setminus \terminalsContainedS\).
Then, there is a \(v \in T \setminus S\) and let \(C \in \comp{\augment{G}{\mathcal{T}} - S}\) be such that \(v \in V(C)\).
Since \(N(v) \subseteq S \cup C\), we have \(\deg{v} \leq 2\abs{S}\) and since we applied \Cref{rr:degree-negative-instance}, this shows that \(d(T) \leq 2\abs{S}\).
\end{proof}

\subsection{Component Configurations}
\label{sec:org25bea82}
To fulfill the connectivity requirements, each component might need to use edges from other components, or other components might need to use edges from this component.
We now characterize how a component \(C \in \comp{\augment{G}{\mathcal{T}} - S}\) can interact with the remaining instance.
To this end, we introduce the concept of a \emph{component-configuration}.
The goal of this Section is to categorize positive and negative instances solely based on the configurations of the components.

As \(\abs{V(C^+)} \leq 2\abs{S}\), denote with \(u \coloneqq \binom{2\abs{S}}{2}\) the maximum number of edges in edges in \(C^+\) and define a configuration of \(C\) as a tuple \((\demandOp, \supplyOp, \assignOp)\) with \(\demandOp, \supplyOp \colon 2^{S\cap V(G)} \to \natintZ{u}\) and \(\assignOp\colon \mathcal{T}_S \times \natint{2\abs{S}} \times \natint{\abs{S}} \to 2^{S\cap V(G)}\).
For convenience, let \(\gamma\) be a component-configuration, then we may write \(\demandOp_\gamma\), \(\supplyOp_\gamma\), and \(\assignOp_\gamma\) for its first, second, and third component respectively.

The first part (i.e., \(\demandOp\)) signifies how often each subset of the fracture modulator gets connected by other components and used for the connection requirements of \(\mathcal{T}_C\)~--~what is the additional demand for each subset?
The second component (i.e., \(\supplyOp\)) signifies how often each subset of the fracture modulator gets connected inside this component, but these connections are \emph{not} used to satisfy connection requirements of terminal sets in \(\mathcal{T}_C\)~--~what is the additional supply for each subset?

Finally, consider a terminal set \(T \in \mathcal{T}_S\).
We have \(T \notin \terminalsContainedS\) and we have applied \Cref{rr:degree-negative-instance}; so, by \Cref{stmt:terminal-sets-few-requirements} we have \(d(T) \leq 2|S|\).
This allows us to explicitly store the contribution of \(C\) to every tree assigned to such a terminal set which is necessary since \(T\) can span an arbitrary number of components in \(\augment{G}{\mathcal{T}} - S\).
For each \(i \in \natint{d(T)}\), the information required for the \(i\)-th tree assigned to \(T\) is stored in \(\assign{T}{i}{\cdot}\).
Call this tree \(F^i\).
Let \(\sigma \colon \natint{\abs{S}}\to V(C)\) be a surjection.
Then, we can imagine for all \(j \in \natint{\abs{S}}\) that \(\assign{T}ij\) is a set of vertices in \(S\) which \(\sigma(j)\) can reach via \(F^i[V(C^+)]\).
Note that it might not be all such vertices.
So, we can think of the third argument like an index of a vertex in \(C\).
We do not use the actual vertices here to be able to easily say that two components are in the same configuration.

Not every component can be in any configuration in a valid solution.
For example, a component with \(k\) edges can not supply more than \(k\) additional connections to the other components.
To capture this concept, we say a component \emph{admits} a configuration if it can locally satisfy all the requirements of the configuration and of the terminal sets in \(\mathcal{T}_C\).

To make this rigorous, let \(\gamma=(\demandOp,\supplyOp,\assignOp)\) be a component-configuration.
We now provide an instances of \(\gstp\) to help us define when \(C\) admits a configuration \(\gamma\).
To do so, choose \(\sigma \colon \natint{\abs{S}} \to V(C)\) to be a surjection, which we use to encode \(\assignOp\) as explained above.
This instance is called \(\confInst{C}{\gamma}{\sigma}\).

First, we define the host-graph of the instance \(\confInst{C}{\gamma}{\sigma}\).
For this, we start with the graph \(C^+\) and for each \(Q \subseteq S \cap V(G)\), we add \(\demand{Q}\)-many vertices \(v\) to the graph with \(N(v) = Q\).
Denote this graph with \(H\).

Now, we define the terminal sets, that need to be connected.
Denote for all \(T \in \mathcal{T}_S\), \(i\in \natint{d(T)}\) and \(U\subseteq S\cap V(G)\) the set \(A(T, i, U, \sigma)\coloneqq \{\sigma(j)\mid j \in \natint{\abs{S}}; \assign{T}{i}{j} = U\}\) the set of vertices in \(C\) that is assigned to \(U\) to satisfy connections for the \(i\)-th tree of \(T\).
Now let \(\assignSets{T}{i}{\sigma}\coloneqq \bigcup_{\emptyset\neq U\subseteq S\cap V(G)\colon A(T, i, U,\sigma)\neq \emptyset}\{U \cup A(T, i, U, \sigma)\}\) be the subsets of \(S \cap V(G)\) that get assigned some vertex unioned with the assigned vertices.
Let
\begin{align*}
\mathcal{Q} &\coloneqq \{T \in \mathcal{T}_C \mid T \cap C\neq \emptyset\},\\
\mathcal{S} &\coloneqq \confInstSupplySetDefinition,\\
\mathcal{A} &\coloneqq
\bigcup_{T\in \mathcal{T}_S, i\in \natint{d(T)}}\assignSets{T}{i}{\sigma}.
\end{align*}
Note that if \(C\) consists of exactly one augmented vertex, \(\mathcal{Q}\) is empty; otherwise \(\mathcal{Q}=\mathcal{T}_C\).
Further, the set \(\mathcal{A}\) might intersect \(Q\), but one can verify that \(\mathcal{S}\) is disjoint from \(\mathcal{A}\) and \(Q\).
The complete set of terminal sets, for which we need connections, is \(\mathcal{U}\coloneqq Q\cup \mathcal{S} \cup \mathcal{A}\).

Finally, we need to specify the required number of connections \(d' \colon \mathcal{U} \to \N^+\).
For this extend \(d, \supplyOp\) and \(\assignOp\) canonically to yield 0 or \(\emptyset\) on arguments not in their original domain.
For all \(U \in \mathcal{U}\) let \[ d'(U) \coloneqq d(U) + \supply{U} + \sum_{T\in \mathcal{T}_S}\abs{\left\{i \in \natint{d(T)} \mid U \in \assignSets{T}{i}{\sigma} \right\}} ,\] and define \(\confInst{C}{\gamma}{\sigma} = (H, \mathcal{U}, d')\).

\begin{definition}
\label{def:admit-configuration}
We say a component \(C\in \comp{\augment{G}{\mathcal{T}} - S}\) admits a configuration \(\gamma\), if there is a surjection \(\sigma \colon \natint{\abs{S}} \to V(C)\) such that
\begin{enumerate}
\item \label{item:admit-terminal-set-assigned} for all \(T \in \mathcal{T_S}\), \(i \in \natint{d(T)}\), and \(j \in \sigma^{-1}(T \cap V(C))\), we have \(\assign{T}{i}{j} \neq \emptyset\),
\item there is a solution \((\mathcal{F}, \pi)\) to \(\confInst{C}{\gamma}{\sigma}\) such that
\begin{enumerate}
\item \label{item:admit-supply-assign-no-demand} for all \(F \in \pi^{-1}(\mathcal{S})\), we have that \(V(F) \subseteq V(C^+)\),
\item \label{item:admit-demand-used-sensibly} for all \(v \in V(H)\setminus V(C^+)\) where \(H\) is the host-graph of \(\confInst{C}{\gamma}{\sigma}\), there is exactly one \(F \in \mathcal{F}\) with \(v \in V(F)\) and for this \(F\), we have \(\deg[F]{v} \geq 2\),
\item \label{item:admit-every-F-uses-component-sensibly} for all \(F \in \mathcal{F}\), we have \(E(C^+) \cap E(F)\neq \emptyset\) and \(F\) is cycle-free.\qedhere
\end{enumerate}
\end{enumerate}
\end{definition}

We say that \((\sigma, \mathcal{F}, \pi)\) \emph{gives rise to} \(\gamma\) \emph{on} \(C\).
With \Cref{item:admit-terminal-set-assigned} we ensure for all \(T \in \mathcal{T}_{S}\) that every \(v\in T \cap V(C)\) is connected to a vertex in \(S \cap V(G)\).
\Cref{item:admit-supply-assign-no-demand} ensures that the supply claimed by this configuration is satisfied completely inside \(C^{+}\) while \Cref{item:admit-demand-used-sensibly} ensures that connections required from the outside are only used for a single solution tree.
\Cref{item:admit-every-F-uses-component-sensibly} is not necessary to ensure correctness, but to ensure that the number of admitted configurations stays singly exponential in \(\abs{S}\).
This is achieved by forbidding redundant configurations.

Later, we want to be able to only work with component-configurations and be able to ignore the underlying edge-disjoint trees that give rise to these configurations.
To this end, we need to know the set of configurations a component \(C\) admits.
We call this its signature \(\signature{C}\).

We now characterize, whether an instance is solvable solely based on the signatures of the available components.
For this, let \(\Gamma\) be a function that assigns each component of \(C \in \comp{\augment{G}{\mathcal{T}} - S}\) a configuration in \(\signature{C}\).
We call \(\Gamma\) a \emph{configuration selector}.
Not all configuration selectors are sensible, we only care about those that are \emph{valid}.

\begin{definition}
\label{def:valid-selector}
We call \(\Gamma\) \emph{valid}, if there is a function \(\rho \colon 2^{S \cap V(G)}\times \N \to 2^{2^{S\cap V(G)}}\) such that
\begin{enumerate}
\item \label{item:valid-selector-enough-supply} for all \(U \subseteq S\cap V(G)\), with \(r_U \coloneqq\abs{\{(T, i) \in \terminalsContainedS \times \N\mid U \in \rho(T,i)\}}\) we have \[r_U + \sum_{C\in \comp{\augment{G}{\mathcal{T}} - S}} \demand[\Gamma(C)]{U} \leq \sum_{C \in \comp{\augment{G}{\mathcal{T}} - S}} \supply[\Gamma(C)]{U},\]
\item \label{item:valid-selector-rho-connected}for all \(T \in \terminalsContainedS\) and \(i \in \natint{d(T)}\), the hypergraph \((T \cup \bigcup \rho(T, i), \rho(T,i))\) is minimally connected,
\item \label{item:valid-selector-assign-connected} for all \(T \in \mathcal{T}_S\) and \(i \in [d(T)]\), denote with \[\mathcal{H}\coloneqq\{\assign[\Gamma(C)]{T}{i}{j} \mid C \in \comp{\augment{G}{\mathcal{T}} - S},j\in \natint{\abs{S}}\}\] all subsets of \(S\) that are connected for this terminal set.
Then, the hypergraph \(\left((T\cap S)\cup\bigcup \mathcal{H}, \mathcal{H}\right)\) is connected.\qedhere
\end{enumerate}
\end{definition}

In the definition above, the function \(\rho\) is used to capture how the requirements of the terminal sets in \(\terminalsContainedS\) are fulfilled.
That is, for a \(T \in \terminalsContainedS\) and \(i \in \natint{d(T)}\), \(\rho(T, i)\) gives all connections that are needed to satisfy the \(i\)-th tree assigned to \(T\) and \Cref{item:valid-selector-rho-connected} ensures that these connections actually connect \(T\).
With \Cref{item:valid-selector-enough-supply} we ensure that there is enough supply to meet the demand.
Finally, \Cref{item:valid-selector-assign-connected} ensures that the explicitly stored solutions for the terminal sets \(\mathcal{T}_{S}\) are actually connected.

We require minimal connectivity in \Cref{item:valid-selector-rho-connected} to limit the number of needed variables in our ILP.
Note that we cannot require minimal connectivity in \Cref{item:valid-selector-assign-connected} as this would prevent for any \(T \in \mathcal{T}_S\), \(i \in d(T)\), and \(j \in \natint{\abs{S}}\) that \(\abs{\assign{T}{i}{j}} = 1\), which is required in some instances.

We now show, that a valid configuration selector exists for an instance if and only if it is a positive instance.

\begin{lemma}
\label{stmt:valid-selector-if-positive}
Let \(S\) be a nice fracture modulator of \(\augment{G}{\mathcal{T}}\).
Assume that \Cref{rr:degree-negative-instance} is applied.
Then, the instance is positive if and only if there exists a valid configuration selector with respect to \(S\).
\end{lemma}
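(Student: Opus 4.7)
The plan is to establish the biconditional by showing that a global solution $(\mathcal{F},\pi)$ and a valid configuration selector encode the same information, with the configuration of each component summarising its local interaction with the rest of the instance.

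For the forward direction, assume $(\mathcal{F},\pi)$ is a solution. For every $C \in \comp{\augment{G}{\mathcal{T}} - S}$ I define $\gamma_C = (\demandOp,\supplyOp,\assignOp)$ by inspecting $F \cap C^+$ for each $F \in \mathcal{F}$. Every connected piece of $F \cap C^+$ touches some $U \subseteq S \cap V(G)$; if $F$ lies entirely in $C^+$ and serves a terminal set outside $\mathcal{T}_C$ then the piece contributes to $\supply[\gamma_C]{U}$, otherwise an external connection of $U$ is needed from the rest of the solution and the piece contributes to $\demand[\gamma_C]{U}$. I fix a surjection $\sigma_C \colon \natint{\abs{S}} \to V(C)$ and, for $T \in \mathcal{T}_S$ and $i \in \natint{d(T)}$, set $\assign[\gamma_C]{T}{i}{j}$ to the set of $S$-vertices reached from $\sigma_C(j)$ inside the restriction to $C^+$ of the $i$-th tree assigned to $T$. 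For $T \in \terminalsContainedS$ and each $i$, I let $\rho(T,i)$ be a minimally-connected subcollection of the subsets of $S$ realised by pieces of the $i$-th tree assigned to $T$ inside the components. A direct check shows that the restriction of $(\mathcal{F},\pi)$ to $C^+$, after attaching the $\demandOp$-vertices, is a solution to $\confInst{C}{\gamma_C}{\sigma_C}$ satisfying \Cref{def:admit-configuration}, so $\gamma_C \in \signature{C}$; summing the local contributions across components then yields the three items of \Cref{def:valid-selector}.

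For the backward direction, given a valid $(\Gamma,\rho)$, fix a witness $(\sigma_C,\mathcal{F}_C,\pi_C)$ for each $C$. I assemble the global solution in three stages. First, for each $T \in \mathcal{T}_S$ and $i \in \natint{d(T)}$, I glue the local subtrees encoded by $\{\assign[\Gamma(C)]{T}{i}{\cdot}\}_C$; item~(3) of \Cref{def:valid-selector} guarantees that the result is connected and contains $T$. Second, for each $T \in \terminalsContainedS$ and each $i \in \natint{d(T)}$, I build the $i$-th tree by selecting, for every hyperedge $U \in \rho(T,i)$, one yet-unused supply subtree for $U$ from some component; item~(2) ensures that the glued object is a tree containing $T$. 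Third, I pair each remaining $\demand[\gamma_C]{U}$ slot with a supply slot from a different component and merge the paired local subtrees into a single global tree. Edge-disjointness is automatic, since witnesses are edge-disjoint inside each $C^+$ and different components share no edges.

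The main obstacle is the demand-supply matching in the third stage of the backward direction: while the counting inequality of item~(1) of \Cref{def:valid-selector} guarantees that supplies suffice in aggregate, turning this into a coherent global matching requires either a Hall-type argument or an explicit round-robin scheme that respects the local rigidity imposed by each witness in \Cref{def:admit-configuration} (in particular, the condition that each demand-vertex is covered by exactly one tree of the witness). The minimal-connectivity requirement in item~(2) of \Cref{def:valid-selector} is precisely what makes the second-stage assembly produce a single tree rather than a disconnected collection, which is why the definition insists on it for $\rho$ but not for $\assignOp$.
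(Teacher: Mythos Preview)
Your overall architecture matches the paper's: derive a configuration for each component from a global solution, verify validity, and conversely glue local witnesses back together. Two points need correction, though.

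\textbf{Forward direction: the supply/demand bookkeeping is inverted.} In the paper, the roles are as follows. For a tree \(F\) with \(\pi(F)\in\mathcal{T}_C\), the pieces of \(F\) lying \emph{outside} \(C^+\) (contracted to single vertices attached to \(S\)) contribute to \(\demandOp_{\Gamma(C)}\): these are the external connections that \(C\) needs. For a tree \(F\) with \(\pi(F)\notin\mathcal{T}_C\cup\mathcal{T}_S\), the pieces of \(F\) lying \emph{inside} \(C\) contribute to \(\supplyOp_{\Gamma(C)}\): these are connections through \(S\) that \(C\) provides to others. Your condition ``\(F\) lies entirely in \(C^+\) and serves a terminal set outside \(\mathcal{T}_C\)'' for supply is essentially never satisfied (since \(S\) is nice, any such \(F\) must reach another component), so with your definition \(\supplyOp\) is identically zero and item~(1) of \Cref{def:valid-selector} fails. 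The paper's counting argument for item~(1) hinges on an exact equality \(r_U+\sum_C\demand[\Gamma(C)]{U}=\sum_C\supply[\Gamma(C)]{U}\), proved by tracking each tree's contribution across components; this only works with the correct split.

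\textbf{Backward direction: the matching is not an obstacle.} Item~(1) of \Cref{def:valid-selector} is stated \emph{per subset} \(U\subseteq S\cap V(G)\), not in aggregate. Hence for each fixed \(U\) there are at least as many supply slots as demand slots plus \(r_U\), and one can simply pick injections \(\eta\) (for demand vertices) and \(\mu\) (for the hyperedges of \(\rho\)) greedily, subset by subset; no Hall-type argument is needed. This is exactly what the paper does. The condition in \Cref{def:admit-configuration} that each demand vertex is used by exactly one local tree is what makes the replacement \(v\mapsto\eta(v)\) yield edge-disjoint global trees, but it poses no difficulty for the matching itself.
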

\begin{proof}
Assume that the instance is positive and let \((\mathcal{F}, \pi)\) be a solution.
Assume without loss of generality that any \(F \in \mathcal{F}\) is a tree such that its leaves are contained in \(\pi(F)\).
We aim to define a valid configuration selector \(\Gamma\).
Consider a component \(C\in \comp{\augment{G}{\mathcal{T}} - S}\).
We first define \(\demandOp_{\Gamma(C)}\), then \(\supplyOp_{\Gamma(C)}\), and finally \(\assignOp_{\Gamma(C)}\).

Let \(\mathcal{F}_C\subseteq \mathcal{F}\) be the trees that use edges of \(C^+\).
We partition \(\mathcal{F}_C\) by which terminal set a tree is associated with.
Let
\begin{itemize}
\item \(\mathcal{X}_C \coloneqq \{F\in \mathcal{F}_C \mid \pi(F) \in \mathcal{T}_C\}\) be the trees that only have terminals in \(C^+\),
\item \(\mathcal{Y}_C \coloneqq \{F\in \mathcal{F}_C \mid \pi(F) \in \mathcal{T}_S\}\) be the trees that have terminals in multiple components of \(\augment{G}{\mathcal{T}} - S\),
\item \(\mathcal{Z}_C \coloneqq \mathcal{F}_C \setminus (\mathcal{X}_C \cup \mathcal{Y}_C)\) be the trees that do not have any terminal in \(C\),
\end{itemize}
where we omit the component index if the component is clear from context.

Consider the graph \(C^+\) and add for all \(F \in \mathcal{X}\) a copy of \(F[V(G) \setminus V(C^+)]\) connected to the same vertices of \(S\) as in \(F\) to \(C^+\).
Call the obtained graph \(G_C\).
Contract all edges in \(G_C\) that are not incident to \(C^+\) to obtain \(H_C\) (i.e., \(H_C\coloneqq G_C / E(G_C - V(C^+))\)).
This is exactly the graph we obtain when contracting all connected components in \(G_C - V(C^+)\) to a single vertex.
Since \(S\) is a fracture modulator, the vertices in \(V(H_C) \setminus V(C^+)\) are only incident to vertices in \(S\).
For all \(U \subseteq S \cap V(G)\), we set \[\demand[\Gamma(C)]{U} \coloneqq \abs{\{v \in V(H_C) \setminus V(C^+)\mid N_{H_C}(v) = U\}}.\]

Consider any \(F \in \mathcal{Z}\).
If we restrict our view to \(F[V(C)]\), we notice that this graph might be disconnected.
So, we supply the sets of \(S\) that are connected by each of these connected components to the other components, for which it is actually irrelevant these sets are connected via the same component.
Thus, for all \(U \subseteq S \cap V(G)\), we set \[\supply[\Gamma(C)]{U} \coloneqq \sum_{F \in \mathcal{Z}}\abs{\{K \in \comp{F[V(C)]}\mid N_F(V(K)) = U\}}.\]

To define \(\assignOp_{\Gamma(C)}\) and later \(\rho\), we need a global numbering on the trees.
So, let for all \(T \in \mathcal{T}\) the trees associated with this terminal let \(\{F^T_i\}_{i \in \natint{d(T)}} \coloneqq \pi^{-1}(T)\) be numbered globally.
Additionally, choose a surjection \(\sigma_C \colon \natint{\abs{S}} \to C\) globally.
Let \(T \in \mathcal{T}_S\), \(i \in \natint{2\abs{S}}\), and \(v\in C\) be given.
If \(v \notin V(F_i^T)\) or \(i > d(T)\), for all \(j \in \sigma^{-1}(v)\) we set \(\assign{T}{i}{j} \coloneqq \emptyset\).
Otherwise, let \(K\) be the connected component of \(v\) in \(F^T_i[V(C^+)]\).
For all \(j \in \sigma^{-1}(v)\) we set \[\assign[\Gamma(C)]{T}{i}{j} \coloneqq V(K) \cap S.\]

We now prove that \(\Gamma\) defined as above, is a valid configuration selector.
For this, we first show that \(\Gamma\) is a well-defined configuration selector, by showing that all components \(C \in \comp{\augment{G}{\mathcal{T}} - S}\) admit \(\Gamma(C)\).

Consider any \(T \in \mathcal{T}_S\), \(i \in \natint{d(T)}\), \(j \in \sigma_C^{-1}(T\cap V(C))\), and let \(v \coloneqq \sigma(j)\).
As \(v \in T\), we have \(v \in V(F^T_i)\).
Additionally, there is a \(C\neq C' \in \comp{\augment{G}{\mathcal{T}} - S}\) with \(V(C') \cap T \neq \emptyset\).
Let \(u \in T \cap V(C')\) and consider the simple \(vu\)-path \(P\) in \(F^T_i\).
As \(S\) is a fracture modulator, there is a first \(s \in S\) on \(P\).
Denote with \(P'\) the \(vs\)-subpath of \(P\).
We have \(V(P') \subseteq V(C^+)\).
So, the connected component of \(v\) in \(F^T_i[V(C^+)]\) contains \(s\) and by definition we have \(s \in \assign{T}{i}{j}\); so, \(\assign{T}{i}{j} \neq \emptyset\).

Next, we need to find a solution to \(\confInst{C}{\Gamma(C)}{\sigma}\).
We notice that the host graph of \(\confInst{C}{\Gamma(C)}{\sigma}\) is exactly \(H_C\).
If there is a \(T \in \mathcal{T}_C\) with \(T\cap V(C) = \emptyset\), the vertex \(\aug{T}\) is only connected to \(S\) in \(\augment{G}{\mathcal{T}}\) and so \(V(C) = \{\aug{T}\}\), which means that \(G[V(C)]\) is the empty graph as it does not contain any augmented vertices.
Thus, \(\mathcal{F}_C\) is empty, which means \(\demandOp_{\Gamma(C)}\) and \(\supplyOp_{\Gamma(C)}\) are the constant \(0\)-function.
Additionally all \(v \in V(C)\) satisfy \(v \notin V(G) \supseteq T\), so \(\assignOp_{\Gamma(C)}\) is the constant \(\emptyset\)-function.
Therefore, we can verify that in this case \(C\) admits \(\Gamma(C)\).
So, assume from now on, that such a \(T\) does not exist.

Consider any tree \(F \in \mathcal{X}\).
As \(\aug{\pi(F)} \in V(C)\), we have \(\pi(F) \subseteq V(C^+)\).
Denote with \(\phi \colon V(G_C) \to V(H_C)\) the mapping induced by the contractions of \(E(G_C - V(C^+))\) in \(G_C\).
Notice that for all \(\{u,v\} \in E(G_C)\), we have that either \(\phi(u) = \phi(v)\), or \(\phi(u)\phi(v) \in E(H_C)\).
So, \(\phi\) is almost a graph homomorphism.
Denote with \(\phi_E \colon E(G_C) \rightharpoonup E(H_C)\) the partial edge mapping function induced by \(\phi\).
Now let \(\psi \colon G_C \to G\) be the graph homomorphism that maps each \(v\in V(C^+)\) to \(v\) and all vertices in \(V(G_C) \setminus V(C^+)\) to their original vertices in \(G\).
Note that \(\psi\) restricted to edges is an injection, as edges in \(C^+\) are preserved and all other edges in \(G_C\) correspond to edges in edge-disjoint trees of \(\mathcal{X}\).
Thus, \(F' \coloneqq H_C[\phi_E(\psi^{-1}_E(F))]\) is a connected subgraph of \(H_C\).
As \(\aug{\pi(F)}\in V(C)\), we have \(\pi(F) \subseteq V(C^+)\).
Thus, \(\phi(\psi^{-1}(\pi(F))) = \pi(F)\) and \(F'\) connects \(\pi(F)\).

Consider any distinct \(F_1, F_2 \in \mathcal{X}\).
In creating \(G_C\), we added different copies of \(F_1[V(G)\setminus V(C^+)]\) and \(F_2[V(G)\setminus V(C^+)]\), which means that \(\phi_E(\psi_E^{-1}(F_1))\) and \(\phi_E(\psi^{-1}_E(F_2))\) are disjoint.
So, the set \(\mathcal{X}' \coloneqq \{H_C[\phi_E(\psi_E^{-1}(F))]\mid F \in \mathcal{X}\}\) contains edge-disjoint subgraphs of \(H_C\).
As all \(T \in \mathcal{T}_C\) satisfy \(T\cap V(C) \neq \emptyset\), every tree assigned to \(T\) must contain an edge of \(C^+\) and is therefore contained in \(\mathcal{F}_C\).
Thus, we can choose an assignment of \(\mathcal{X}'\) to \(T_C\) such that \(d(T)\)-many trees are assigned to each terminal set \(T\).

Consider \(\mathcal{Z}' \coloneqq \bigcup_{F \in \mathcal{Z}}\{F[N_F[V(K)]] \mid K \in \comp{F[V(C)]}\}\), that is for each component \(K \in \comp{F[V(C)]}\), the set \(\mathcal{Z}'\) contains the subgraph of \(F\) induced on the vertices of \(K\) and the vertices adjacent to \(K\) in \(F\).
Since every tree \(F \in \mathcal{Z}\) is cycle-free, the trees induced by \(F\) in \(\mathcal{Z}'\) do not share two or more vertices.
So, they also do not share any edges of \(F\) and as the trees in \(\mathcal{Z}\) are disjoint, this holds for \(\mathcal{Z}'\) as well.
Assigning an \(F' \in \mathcal{Z}'\) to \(S \cap V(F')\), we get \(\supply{U}\)-many trees assigned to \(U\) for each \(U \subseteq S\cap V(G)\).

Finally, let \(T \in \mathcal{T}_S\) and \(i\in \natint{d(T)}\).
We see that \(\mathcal{K}^T_{i} \coloneqq\comp{F^T_i[V(C^+)]}\) are vertex- and, thus, edge-disjoint trees.
By definition of \(\assignOp\), we have \(\{V(K)\}_{K \in K^T_{i}} = \assignSets{T}{i}{\sigma}\) and there is an assignment of \(\mathcal{K}_{i}^T\) to \(\assignSets{T}{i}{\sigma}\).
Let \(\mathcal{K} \coloneqq \bigcup_{T\in \mathcal{T}_S, i \in d(T)} \{K^T_{i}\}\) be a set of edge-disjoint trees.
For all \(T \in \mathcal{T}_S\), \(i \in \natint{d(T)}\), and \(U \in \assignSets{T}{i}{\sigma}\), we assign a different tree \(K \in \mathcal{K}\) with \(V(K) = U\) to \(U\).
So, to each \(U\subseteq V(C^+)\), we assign \(\sum_{T \in \mathcal{T}_S}\abs{\{i \in \natint{d(T)}\mid U \in \assignSets{T}{i}{\sigma}\}}\) distinct to trees \(U\).
As \(\bigcup_{K \in \mathcal{K}} E(K) \subseteq \bigcup_{F \in \mathcal{Y}} E(F)\), the trees in \(\mathcal{K}\) are edge-disjoint from the trees in \(\mathcal{X}'\) and \(\mathcal{Z}'\).
So, \(\mathcal{L}\coloneqq \mathcal{X}'\cup \mathcal{Z}' \cup \mathcal{K}\) is a set of edge-disjoint trees that with the assignment \(\pi\) described above solves \(\confInst{C}{\Gamma(C)}{\sigma_C}\).

Now, we verify that \((\mathcal{L}, \pi)\) not only solves \(\confInst{C}{\Gamma(C)}{\sigma_C}\), but \((\sigma_C, \mathcal{L}, \pi)\) actually gives rise to \(\Gamma(C)\) on \(C\).
\begin{enumerate}
\item Recall that \(\mathcal{S} = \confInstSupplySetDefinition\).
By choice of \(\pi\), \(\mathcal{Z'} = \pi^{-1}(\mathcal{S})\) and all \(F \in \mathcal{Z}'\) fulfill \(V(F) \subseteq V(C^+)\).
\item Consider some \(v \in V(H_C) \setminus V(C^+)\), denote with \(F\in \mathcal{X}\) the tree that created the vertices \(\phi^{-1}(v)\), and let \(F' \coloneqq H_C[\phi_E(\psi^{-1}_E(F))] \in \mathcal{X}'\) denote the corresponding tree in \(\mathcal{L}\).
Since all edges between \(\inv\phi(v)\) and \(N_{G_C}(\inv\phi(v))\) are contained in \(F\), all edges between \(v\) and \(N_{H_C}(v)\) are contained in \(F'\).
Thus, \(\deg[F']v = \deg[H_C]v\).
Since all trees in \(\mathcal{L}\) are edge-disjoint, this implies that no other \(F'' \in \mathcal{L}\) fulfills \(v\in V(F'')\).

We now show that \(\deg[H_C]v \geq 2\).
Consider any \(u \in \phi^{-1}(v) \subseteq V(F)\).
We know \(\aug{\pi(F)} \in V(C)\), so \(\pi(F) \subseteq V(C^+)\).
Since \(u \notin V(C^+)\), we have \(u \notin \pi(F)\), and, by assumption, \(u\) is not a leaf of \(F\).
Let \(x,y\in V(C^+)\) be leaves of \(F\) such that \(u\) is contained in the simple path \(P\) connecting \(x\) to \(y\) in \(F\).
Since \(x, y \in V(C^+)\), \(v=\phi(u)\) is neither \(x=\phi(x)\) nor \(y=\phi(y)\).
Let \(a\) denote the last vertex preceding any \(\phi^{-1}(v)\), and let \(b\) denote the first vertex following any \(\phi^{-1}(v)\) in \(P\).
Since \(S\) is a fracture modulator, \(a, b \in S\).
Therefore, \(v\) is connected to \(a\) as well as \(b\), and we have \(\deg[H_C]{v} \geq 2\).
\item Consider \(F \in \mathcal{L}\).
By the definitions of the terminal sets, we have \(\abs{\pi(F)} \geq 2\); so, all vertices in \(V(F)\) have an incident edge in \(F\).
Additionally, it is ensured that \(V(F) \cap V(C)\) is non-empty, showing that \(E(F) \cap E(C^+) \neq \emptyset\).
As the trees in \(\mathcal{Z}' \cup \mathcal{K}\) are subgraphs of trees in \(\mathcal{F}\), they are cycle-free.
Let \(F' \in \mathcal{X}'\) and let \(F \in \mathcal{X}\) be such that \(H_C[\phi_E(\inv\psi_E(F))]=F'\).
Note that \(F\) is a tree.
Since \(F\) is isomorphic to \(G_C[\inv\psi_E(F)]\), both are cycle-free.
Observe that \(F' = H_C[\phi_E(\inv\psi_E(F))]\) is equal to \(G_C[\inv\psi_E(F)]\) with some edges contracted; so, \(F'\) is cycle-free as well.
\end{enumerate}

We have proven that \(\Gamma\) is a configuration selector.
Next, we prove that \(\Gamma\) is valid.
For this we need to specify a function \(\rho \colon 2^{S\cap V(G)} \times \N \to 2^{2^{S\cap V(G)}}\) that specifies how, the terminal sets in \(\terminalsContainedS\) are fulfilled.
For any \(T \in \terminalsContainedS\) and \(i \in \natint{d(T)}\), consider \(F^T_i - S\).
This graph might be disconnected and each component connects some subset of \(S\).
So, we set \(\rho(T, i) = \{N_{F^T_i}(V(K))\mid K \in \comp{F^T_i - S}\}\) as each subset of \(S\) that gets connected by one of those components.
All other values of \(\rho\) we define to be \(\emptyset\).
Now, we verify the properties that are required for validity.
\begin{enumerate}
\item Consider any \(U \subseteq S \cap V(G)\).
First, we notice that for all \(C \in \comp{\augment{G}{\mathcal{T}} - S}\), we have
\begin{align*}
\demand[\Gamma(C)]{U}&=\abs{\{v \in V(H_C) \setminus V(C^+)\mid N(v) = U\}}\\
&=\abs{\{v \in V(H_C) \setminus V(C^+)\mid N_{G_C}(\inv\phi(v)) = U\}}\\
&=\sum_{F \in \mathcal{X_C}} \abs{\{K \in \comp{F - V(C^+)} \mid N_F(V(K)) = U\}}.
\end{align*}
Now, consider
\begin{align*}
&\sum_{C \in \comp{\augment{G}{\mathcal{T}} - S}}(\supply[\Gamma(C)]U - \demand[\Gamma(C)]U)\\
=&
 \begin{aligned}[t]
     \sum_{C \in \comp{\augment{G}{\mathcal{T}} - S}} \Bigg(&\sum_{F \in \mathcal{Z_C}}\abs{\{K \in \comp{F[V(C)]}\mid N_F(V(K)) = U\}}\\
     &- \sum_{F \in \mathcal{X_C}} \abs{\{K \in \comp{F - V(C^+)} \mid N_F(V(K)) = U\}}\Bigg)
 \end{aligned}\\
=&
 \begin{aligned}[t]
     \sum_{F \in \mathcal{F}}\Bigg(&\sum_{\substack{C \in \comp{\augment{G}{\mathcal{T}} - S};\\ F \in \mathcal{Z}_C}}\abs{\{K \in \comp{F[V(C)]}\mid N_F(V(K)) = U\}}\\
     &-\sum_{\substack{C \in \comp{\augment{G}{\mathcal{T}} - S};\\ F \in \mathcal{X}_C}}\abs{\{K \in \comp{F - V(C^+)} \mid N_F(V(K)) = U\}}\Bigg)
 \end{aligned}
\end{align*}
Denote with \(e(F)\) the inner term of the sum over \(F \in \mathcal{F}\) and denote with \(s(F)\) and \(d(F)\) the positive and negative sums in \(e(F)\), respectively.
We now prove that \[ e(F) =\begin{cases} 1,&\text{if } \pi(F) \in\terminalsContainedS\land \exists K \in \comp{F - S}: N_F(V(K)) = U\\ 0,&\text{otherwise}.\end{cases}\]

For this, we first show that for all \(F \in \mathcal{F}\) with \(\pi(F) \notin \terminalsContainedS\), it holds that \(e(F) = 0\).
We distinguish two cases.
First, let \(\pi(F) \in \mathcal{T}_S\).
For all \(C \in \comp{\augment{G}{\mathcal{T}} - S}\), we have \(F \in \mathcal{Y}_C\); so, \(s(F) = d(F) = e(F) = 0\).
Now let \(\pi(F) \in \mathcal{T} \setminus (\mathcal{T}_S\cup\terminalsContainedS)\).
Denote with \(C \in \comp{\augment{G}{\mathcal{T}} - S}\) the unique component with \(\pi(F) \in \mathcal{T}_C\).
Now assume there is a \(K \in \comp{F - V(C^+)}\) with \(N_F(V(K)) = U\).
As \(V(K)\) has at least 2 incident outgoing edges in \(F\) and since \(F\) is cycle-free, this \(K\) is unique and \(d(F) = 1\).
Since \(S\) is a fracture modulator, there is a \(C' \in \comp{\augment{G}{\mathcal{T}} - S}\) with \(V(K) \subseteq V(C')\).
This is also the unique component \(C'' \in \comp{\augment{G}{\mathcal{T}} - S}\) which has a \(K' \in \comp{F[V(C'')]}\) with \(N_F(V(K)) = U\).
Thus, \(s(F) = 1\) and \(e(F) = 0\).
If there is no \(K \in \comp{F - V(C^+)}\) with \(N_F(V(K)) = U\), we have that \(d(F) = s(F) = e(F) = 0\).

Now consider \(F \in \mathcal{F}\) with \(\pi(F) \in \terminalsContainedS\).
And let \(C\in \comp{\augment{G}{\mathcal{T}} - S}\) be the unique component with \(\pi(F) \in \mathcal{T}_C\).
Notice that \(\aug{F}\) is only connected to \(S\) and so \(C = \{\aug{F}\}\).
Thus, \(\mathcal{F}_C = \emptyset\) and, in particular, \(\mathcal{X}_C = \emptyset\).
Therefore, \(d(F) = 0\).
Now, assume there is a \(C \in \comp{\augment{G}{\mathcal{T}} - S}\) such that there is a \(K\in\comp{F[C]}\) with \(N_F(V(K)) = U\).
With the same argument as above, \(C\) is unique.
Thus, \(s(F) = 1\).
Finally, notice that the existence of the pair \((C, K)\) is equivalent to the existence of a \(K' \in \comp{F - S}\) with \(N_F(V(K')) = U\), which proves that \(e(F)\) has the desired alternative representation and so
\begin{align*}
&\sum_{C \in \comp{\augment{G}{\mathcal{T}} - S}}(\supply[\Gamma(C)]U - \demand[\Gamma(C)]U)\\
=&\abs{\left\{(T, i) \in 2^{S\cap V(G)} \times \N\;\middle|\; \exists K \in \comp{F^T_i- S}\colon N_{F^T_i}(V(K)) = U\right\}}.
\end{align*}

Now, we consider \(r_U\).
We observe that
\begin{align*}
 r_U=&\abs{\left\{(T, i) \in 2^{S\cap V(G)} \times \N\mid U \in \rho(T,i)\right\}}\\
 =&\abs{\left\{(T, i) \in 2^{S\cap V(G)} \times \N\mid U \in \{N_{F^T_i}(K)\mid K \in \comp{F^T_i - S}\}\right\}}\\
 =&\abs{\left\{(T, i) \in 2^{S\cap V(G)} \times \N\mid \exists K \in \comp{F^T_i - S} : N_{F^T_i}(V(K)) = U\right\}}\\
 =&\sum_{C \in \comp{\augment{G}{\mathcal{T}} - S}}(\supply[\Gamma(C)]U - \demand[\Gamma(C)]U),
\end{align*}
and so \(r_U + \sum_{C \in \comp{\augment{G}{\mathcal{T}} - S}}\demand[\Gamma(C)]U = \sum_{C \in \comp{\augment{G}{\mathcal{T}} - S}}\supply[\Gamma(C)]U.\)
\item \label{item:proof-equiv-selector-positive-valid-terminals-in-S} First we prove that the considered hypergraph is connected. Then, we show that it is minimally connected.

Let \(T \in \terminalsContainedS\) and \(i \in \natint{d(T)}\) be given.
Consider \(s,t \in T \cup \bigcup\rho(T,i)\).
We notice that \(u,v \in V(F^T_i)\), so there is a \(uv\)-path \(P\) in \(F^T_i\).
Denote with \(s_1, s_2, \dots, s_k\) the sequence of vertices in \(S\cap V(G)\) of \(P\).
As \(G[S]\) is edgeless, for all \(i \in \natint{k - 1}\) the vertices \(s_i\) and \(s_{i+1}\) are not adjacent in \(P\).
So, there is a the sub-path connecting \(s_i\) and \(s_{i+1}\), whose inner vertices are contained in one component of \(F^T_i - S\).
Therefore, there is a subset in \(\rho(T,i)\) containing both \(s_i\) and \(s_{i+1}\) and the hypergraph \((T\cup \bigcup \rho(T,i), \rho(T,i))\) is connected.

Now, assume there is an \(R \in \rho(T,i)\) such that the hypergraph \(H\coloneqq(T \cup \bigcup \rho(T,i),\allowbreak \rho(T,i)\setminus \{R\})\) is connected.
Let \(K \in \comp{F^T_i - S}\) with \(N_{F^T_i}(V(K)) = R\) be chosen.
Since \(H\) is connected, by a similar argument as above, \(F^T_i - K\) is connected as well.
We assumed that the leaves \(L\) of \(F^T_i\), fulfill \(L \subseteq \pi(F^T_i) = T\subseteq S\).
So, \(\abs{R} \geq 2\) and let \(s,t \in R\) be distinct.
Observe that there are \(st\)-paths in \(F^T_i - K\) and \(F^T_i[V(K^+)]\) which are edge-disjoint.
Thus, \(F^T_i\) is not cycle-free, which we assumed.
\item This follows analogously to the proof of connectivity in \Cref{item:proof-equiv-selector-positive-valid-terminals-in-S} and concludes the proof that \(\Gamma\) is a valid configuration selector.
\end{enumerate}

Now, assume that there is a valid configuration selector \(\Gamma\).
For every \(C\in\comp{\augment{G}{\mathcal{T}} - S}\), let \((\sigma_C, \mathcal{F}_C, \pi_C)\) give rise to \(\Gamma(C)\) on \(C\) and denote the host graph of \(\confInst{C}{\Gamma(C)}{\sigma_C}\) with \(H_C\).
We now prove that the instance is indeed solvable by constructing a solution.
For this, we split \(\mathcal{F}_C\) into parts according to the designated purpose each tree fulfills in our solution.

First, we set \(\mathcal{S}_C \coloneqq \inv\pi_C(\terminalsContainedS)\) to be all trees that \(\pi_C\) assigns to a terminal set contained in \(S\); so, it will be used to provide connections of subsets of \(S\) to other trees.
The remaining trees of \(\mathcal{F}_C\) will either directly contribute to some \(\mathcal{T}_C\), or will be used to fulfill some requirement of a terminal set in \(\mathcal{T}_S\).

Let \(U \subseteq V(C^+)\) with \(U \cap V(C) \neq \emptyset\).
For each \(T \in \mathcal{T}_S\) and \(i \in \natint{d(T)}\) with \(U \in \assignSets{T}{i}{\sigma_C}\) choose a distinct \(A^C_{T,i,U}\) from \(\inv\pi_C(U)\), which will be used to construct the \(i\)-th tree for the terminal set \(T\).
Finally, we set \(\mathcal{R}_{U} \coloneqq \inv\pi_C(U) \setminus \{A_{T,i,U}^C \mid T \in \mathcal{T}_S, i \in \natint{d(T)}\}\) to be the trees assigned to the terminal set \(U\) in the final solution.
As trees in \(\mathcal{F}_C \setminus \mathcal{S}_C\) might contain edges and vertices not present in \(G\), we cannot directly include them into our final solution.

To fix this problem, consider for each \(C \in \comp{\augment{G}{\mathcal{T}} - S}\) the set \(V(H_C) \setminus V(C^+)\), assume they are all disjoint and call their union \(K\).
Let \(\mathcal{S} \coloneqq \bigcup_{C \in \comp{\augment{G}{\mathcal{T}} - S}} \mathcal{S}_C\) and choose an injection \(\eta \colon K \to \mathcal{S}\) such that for all \(v \in K\) where \(C \in \comp{\augment{G}{\mathcal{T}} - S}\) is such that \(v \in V(H_C)\), we have that \(\pi_C(\eta(v)) = N(v)\).
The existence of \(\eta\) is guaranteed by \Cref{item:valid-selector-enough-supply} of \Cref{def:valid-selector}.
Note that \(\mathcal{S}\) is a set of edge-disjoint trees.

This function gives us a way to consider a vertex in \(K\) and find a unique replacement tree for it using edges and vertices in \(G\) that preserve connectivity.
Additionally, consider a \(C \in \comp{\augment{G}{\mathcal{T}} - S}\) and distinct \(F_1, F_2 \in \mathcal{F}_C\) if we replace the vertices \(v \in (V(F_1) \cup V(F_2)) \cap K\) by \(\eta(v)\) in the respective trees, they are still edge-disjoint since \(V(F_1) \cap K\) and \(V(F_2) \cap K\) are disjoint and so the vertices get replaced by different trees.

Let \(C \in \comp{\augment{G}{\mathcal{T}} - S}\) and \(T \in \mathcal{T}_C\) be such that \(T \cap V(C) \neq \emptyset\).
Consider any \(F \in \mathcal{R}_T\) and denote with \(F^*\) the graph obtained by replacing every vertex in \(v \in V(F) \cap K\) with \(\eta(v)\).
Note that \(F^*\) only uses vertices and edges of \(G\).
Let \(\mathcal{R}_T^*\coloneqq \{F^* \mid F \in \mathcal{R}_T\}\) and assign all those edge-disjoint subgraphs of \(G\) to the terminal set \(T\) in the final solution.
Noticing that \[\abs{\mathcal{R}_T^*} = d'(T) - \abs{\{U \in \mathcal{T}_S, i \in \natint{d(U)} \mid T \in \assignSets{U}{i}{\sigma}\}} = d(T),\] we can now turn our attention to the other terminal sets.

First, we consider \(T \in \mathcal{T}_S\) and let \(i \in \natint{d(T)}\).
Denote the union of all subgraphs that are selected for this terminal set to satisfy the \(i\)-th tree with \[F = \bigcup_{C \in \comp{\augment{G}{\mathcal{T}} - S}, U \in \assignSets{T}{i}{\sigma}} A_{T,i, U}^C.\]
We now prove that \(F\) is connected and that \(T \subseteq V(F)\).

Consider \(C \in \comp{\augment{G}{\mathcal{T}} - S}\), \(j \in \natint{\abs{S}}\), and \(s,t \in \assign[\Gamma(C)]{T}{i}{j}\subseteq S\).
We first prove that \(s\) can reach \(t\) via \(F\).
Any vertex in \(\inv\sigma(j)\) gives rise to an \(U \in \assignSets[\Gamma(C)]{T}{i}{\sigma_C}\subseteq V(C^+)\) with \(s,t \in U\).
The graph \(A^C_{T, i, U}\) is connected, a subgraph of \(F\), and \(s,t \in V(A^C_{T,i,U})\); so, \(s\) can reach \(t\) via \(A^C_{T,i,U}\) and \(F\) as well.

Consider \(s,t \in V(F) \cap S\).
We now show that \(s\) can reach \(t\) via \(F\).
We know by \Cref{item:valid-selector-assign-connected} of \Cref{def:valid-selector}, that there is a sequence of \((C_1, j_1), (C_2, j_2),\dots, (C_k, j_k)\) such that \(s \in \assign[\Gamma(C_1)]{T}{i}{j_1}\), \(t \in \assign[\Gamma(C_k)]{T}{i}{j_k}\), and for all \(\ell \in \natint{k - 1}\) the sets \(\assign[\Gamma(C_i)]{T}{i}{j_i}\) and \(\assign[\Gamma(C_{\ell+1})]{T}{i}{j_{\ell+1}}\) overlap.
As for all \(\ell \in \natint{k}\), the sets\allowbreak{} \(\assign[\Gamma(C_\ell)]{T}{i}{j_\ell}\) are connected via \(F\), there is a path from \(s\) to \(t\) via \(F\).

To show, that \(F\) is connected, it is now enough to show for every \(u \in V(F) \setminus S\), that there is a path to some vertex in \(V(F) \cap S\) using edges of \(F\).
As \(u \in V(F)\), there exists \(C\) and \(i\) such that there is an \(U\in\assignSets[\Gamma(C)]{T}{i}{\sigma_C}\) with \(u \in U\).
By definition, there is an \(s \in U \cap S\).
Since \(A^C_{T,i,U}\) is connected, a subgraph of \(F\), and contains \(u\) and \(s\), the vertex \(u\) can reach \(s\) in \(F\).

To show that \(T \subseteq V(F)\), first let \(t \in T \cap S\).
Consider \Cref{item:valid-selector-assign-connected} of \Cref{def:valid-selector}, and let \(H\) denote the considered hypergraph.
If \(\{t\} = V(H)\), let \(t' \in T\setminus S\)—which exists since \(T \not\subseteq S\)—and \(C\in \comp{\augment{G}{\mathcal{T}} - S}\) be the component containing \(t'\).
Then, for all \(j \in \inv\sigma_C(t')\), we have \(\assign[\Gamma(C)]Tij = \{t\}\).
Otherwise, since \(H\) is connected, there is a hyperedge incident to \(t\) in \(H\).
So, there are, by definition, \(C \in \comp{\augment{G}{\mathcal{T}} - S}\) and \(j \in \natint{\abs{S}}\) with \(t \in \assign[\Gamma(C)]Tij\).
In either case, there is a \(C \in \comp{\augment{G}{\mathcal{T}} - S}\) and \(j \in \natint{\abs{S}}\) with \(t \in \assign[\Gamma(C)]Tij\).
Now, since \(\sigma(j)\) is a assigned to some set containing \(t\), there is a \(U \in \assignSets[\Gamma(C)]Ti\sigma\) with \(t\in U \subseteq V(C^+)\).
So, \(t \in V(A^C_{T, i, U}) \subseteq V(F)\).
Now, consider \(t \in T \setminus S\).
Let \(C\in\comp{\augment{G}{\mathcal{T}} - S}\) be the component containing \(t\) and let \(j \in \inv\sigma_C(t)\).
By \Cref{item:admit-terminal-set-assigned} of \Cref{def:admit-configuration}, we know that \(\assign[\Gamma(C)]{T}{i}{j}\neq \emptyset\), so there is an \(U \in \assignSets[\Gamma(C)]{T}{i}{\sigma_C}\) with \(t \in U\).
Since \(U \subseteq V(A^C_{T,i,U}) \subseteq V(F)\), we have \(T \subseteq V(F)\).

As \(F\) is not a subgraph of \(G\), we cannot assign it to the terminal set \(T\) in our solution.
Now consider the graph \(F^*\) obtained from \(F\) by replacing all \(v \in V(F) \setminus V(G)\) by the subgraph \(\eta(v)\).
This is a connected subgraph of \(G\) and, as we only removed vertices not contained in \(V(G)\), we have \(T \subseteq V(F^*)\).
We assign \(F^*\) to \(T\).
Note that all \(F^*\) created in this manner are edge-disjoint from each other and the subgraphs already assigned to other terminal sets and that to each \(T\in \mathcal{T}_S\), we assign \(d(T)\)-many such subgraphs.

To construct the subgraphs assigned to the terminal sets \(\terminalsContainedS\), let \(\rho\) be a function witnessing validity of \(\Gamma\) on \(G\).
Let \(L\coloneqq\{(T, i, U) \mid T \in \mathcal{T}_S, i \in \natint{d(T)}, U \in \rho(T,i)\}\).
Choose an injection \(\mu \colon L \to \mathcal{S}\) such that \(\image{\mu}\) and \(\image{\eta}\) are disjoint, and for all \((T,i,U) \in L\) with \(C\in\comp{\augment{G}{\mathcal{T}} - S}\) denoting the component containing \(\mu(T,i,U)\), we have \(\pi_C(\mu(T,i,U)) = U\).
The existence of \(\mu\) is guaranteed by \Cref{item:valid-selector-enough-supply} of \Cref{def:valid-selector}.

Now let \(T \in \terminalsContainedS\) and \(i \in \natint{d(T)}\), consider \(F \coloneqq \bigcup_{U \in \rho(T, i)} \mu(T, i, U)\), and let the corresponding hypergraph considered in \Cref{item:valid-selector-rho-connected} of \Cref{def:valid-selector} be denoted by \(H\).
As we assume the instance to be reduced by \Cref{rr:degree-negative-instance}, which in turn applies \Cref{rr:sensible-terminal-sets} exhaustively, we have \(\abs{T} \geq 2\).
As \(T \subseteq V(H)\) and as \(H\) is connected, for every \(t \in T\), there is an \(U \in E(H)=\rho(T,i)\) with \(t \in U\) and so \(t \in V(F)\).
Additionally, as \(H\) is connected, so is \(F\).
We now assign \(F\) to \(T\) in our final solution.
Note that by the injectivity of \(\mu\) and the discontinues of \(\image{\mu}\) and \(\image{\eta}\), all these subgraphs are edge-disjoint from each other and all previously assigned subgraphs.
Additionally, we assign \(d(T)\) subgraphs to \(T\).
Thus, our final solution solves the instance.
\end{proof}

\subsection{Signatures and Equivalence Classes}
\label{sec:org824e70b}
When building our ILP, we want to treat components with the same signature equally.
More specifically, we want to represent all components that share the same signature by a common set of variables.
This Section has three goals.
First, we analyze the equivalence relation on \(\comp{\augment{G}{\mathcal{T}} - S}\) induced by whether or not two components have the same signature.
More specifically, we bound the number of non-empty equivalence classes.
Second, we bound the size of the signatures.
Third, we present a simple method to compute the signature of a component.

To analyze the number of non-empty equivalence classes, we provide a sufficient condition on when two components are equivalent.
We choose this condition in such a way, that it is almost mechanical to bound the number of non-empty equivalence classes for a given instance.

\begin{definition}
\label{def:indistinguishable-components}
Let \(C_1, C_2 \in \comp{\augment{G}{\mathcal{T}} - S}\), we call \(C_1\) and \(C_2\) \emph{indistinguishable}, if there exists a graph isomorphism \(\phi: C^+_1\to C^+_2\) such that
\begin{enumerate}
\item \label{item:modulator-identity} \(\phi\big|_S = \identity{S}\),
\item \label{item:same-augmented} for all \(v \in V(C^+)\), for all \(v \in V(C^+_{1})\), we have \(v \in V(G)\) if and only if \(\phi(v) \in V(G)\),
\item \label{item:same-demand} for all \(T \in \mathcal{T}_{V(C_{1})}\), we have \(d(T) = d(\invAug{\phi(\aug{T})})\).
\end{enumerate}
\end{definition}

We call the equivalence classes induced by whether or not components are indistinguishable by the name \emph{indistinguishablility classes}.
First, we show that if two components are indistinguishable, they are in fact equivalent as well.

\begin{lemma}
\label{stmt:indist-classes-preserve-signature}
Let \(C, C' \in \comp{\augment{G}{\mathcal{T}} - S}\) be indistinguishable components.
Then, \(C\) and \(C'\) are equivalent (i.e., \(\signature{C} = \signature{C'}\)).
\end{lemma}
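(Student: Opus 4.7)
The plan is to show $\signature{C} \subseteq \signature{C'}$; the reverse inclusion then follows by applying the same argument to $\inv\phi$. So fix an arbitrary $\gamma \in \signature{C}$ and a witness $(\sigma, \mathcal{F}, \pi)$ giving rise to $\gamma$ on $C$. The natural candidate witness for $C'$ is $(\sigma', \mathcal{F}', \pi')$, where $\sigma' \coloneqq \phi \circ \sigma$, $\mathcal{F}'$ is obtained by applying $\phi$ (suitably extended) to each $F \in \mathcal{F}$, and $\pi'$ is $\pi$ reindexed through the natural terminal-set bijection induced by $\phi$.

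The key technical step is to argue that $\confInst{C}{\gamma}{\sigma}$ and $\confInst{C'}{\gamma}{\sigma'}$ are isomorphic as $\gstp$ instances, via an extension $\tilde\phi$ of $\phi$ that fixes each demand vertex introduced in the construction. This is where I expect the main bookkeeping work to lie. The host graph of $\confInst{C}{\gamma}{\sigma}$ is $C^+$ together with $\demand{Q}$ fresh vertices for every $Q\subseteq S\cap V(G)$ attached to $Q$; the same construction for $C'$ uses $(C')^+$ with the same fresh vertices (and the same $Q$, since $\phi$ fixes $S$). Because $\phi$ is a graph isomorphism that pointwise fixes $S$, the map $\tilde\phi$ obtained by extending $\phi$ with the identity on demand vertices is a graph isomorphism between the two host graphs. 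I then need to check that the three kinds of terminal sets $\mathcal{Q}$, $\mathcal{S}$, $\mathcal{A}$, together with their demands, correspond under $\tilde\phi$: $\mathcal{S}$ depends only on $\supplyOp_\gamma$ and lies in $S$, so $\tilde\phi$ fixes it; $\mathcal{A}$-sets depend on $\assignOp_\gamma$ and $\sigma$, and using $\sigma' = \phi\circ\sigma$ together with the fact that $\phi$ fixes $S$ shows that the $\mathcal{A}$-sets of $\confInst{C'}{\gamma}{\sigma'}$ are exactly the $\tilde\phi$-images of those of $\confInst{C}{\gamma}{\sigma}$; and $\mathcal{Q}$-sets correspond via the bijection $T \mapsto \invAug{\phi(\aug{T})}$, where \Cref{item:same-augmented} ensures that augmented vertices map to augmented vertices and \Cref{item:same-demand} ensures that demands match.

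Given this isomorphism of instances, define $\mathcal{F}' \coloneqq \{\tilde\phi(F)\}_{F \in \mathcal{F}}$ and let $\pi'$ send $\tilde\phi(F)$ to the $\tilde\phi$-image of $\pi(F)$ under the terminal bijection above. Edge-disjointness and connectedness of the $\tilde\phi(F)$ are immediate from the fact that $\tilde\phi$ is a graph isomorphism, and the assignment condition that $d'(U)$ subgraphs are mapped to $U$ follows because $\tilde\phi$ preserves demands. It remains to verify that $\sigma' = \phi \circ \sigma$ is a surjection onto $V(C')$ (it is, since $\phi$ bijects $V(C)$ with $V(C')$ by \Cref{item:same-augmented} combined with \Cref{item:modulator-identity}) and that all three clauses of \Cref{def:admit-configuration} carry over. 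Clause~\ref{item:admit-terminal-set-assigned} transfers because $\inv{\sigma'}(T' \cap V(C')) = \inv\sigma(\phi^{-1}(T') \cap V(C))$, and the bijection pairs $T'$ with $T = \invAug{\phi^{-1}(\aug{T'})}$ while $\assignOp_\gamma$ is the same for both. Clauses~\ref{item:admit-supply-assign-no-demand}--\ref{item:admit-every-F-uses-component-sensibly} transfer because $\tilde\phi$ maps $V(C^+)$ to $V((C')^+)$, is the identity on demand vertices, and preserves degree, edge-disjointness, and acyclicity.

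In short, the proof is essentially mechanical once the correct isomorphism $\tilde\phi$ is identified; the only real risk is an indexing slip in matching $\mathcal{Q}$-terminals across the two instances, which is precisely what conditions~\ref{item:modulator-identity}--\ref{item:same-demand} in \Cref{def:indistinguishable-components} are designed to handle.
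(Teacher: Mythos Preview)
Your proposal is correct and follows essentially the same approach as the paper: transport the witness $(\sigma,\mathcal{F},\pi)$ through the isomorphism $\phi$ extended by the identity on the demand vertices, and verify that this yields a witness for $\gamma$ on $C'$. The paper's proof is in fact terser than yours---it simply names the extended isomorphism $\psi$ and the transported triple $(\phi\circ\sigma,\{\psi_E(F)\},\psi\circ\pi\circ\psi^{-1})$ and leaves the verification to the reader---so your more explicit bookkeeping of how $\mathcal{Q},\mathcal{S},\mathcal{A}$ and their demands correspond is a genuine elaboration rather than a deviation; note only that for Clause~\ref{item:admit-terminal-set-assigned} the relevant terminal sets lie in $\mathcal{T}_S$, so the ``bijection'' you invoke is the identity and the needed equality $\phi^{-1}(T\cap V(C'))=T\cap V(C)$ follows because membership in $T$ is detected by adjacency to $\aug{T}\in S$, which $\phi$ fixes.
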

\begin{proof}
Consider any \(\gamma\in \signature{C}\).
As indistinguishablility is a symmetric property, it is enough to show \(\gamma\in\signature{C'}\) to prove the claim.

Let \((\sigma, \mathcal{F}, \pi)\) give rise to \(\gamma\) on \(C\) and let \(\phi\) be an isomorphism between \(C\) and \(C'\) satisfying the additional requirements of \Cref{def:indistinguishable-components}.
Since for all \(v\in S\), we have \(v = \phi(v)\) and for all \(u \in V(C^+)\), we have \(u \in V(G)\) if and only if \(\phi(u) \in V(G)\), it is possible to extend \(\phi\big|_{V(C^+) \cap V(G)}\) to an isomorphism \(\psi\) between the host-graphs of \(\confInst{C}{\gamma}{\sigma}\) and \(\confInst{C'}{\gamma}{\sigma \circ \inv\phi}\) such that \(\phi\big|_{V(C^+) \cap V(G)}=\psi\big|_{V(C^+) \cap V(G)}\).
One can now verify that \[ \left(\sigma\circ\inv\phi, \left\{\psi_E(F)\mid F\in \mathcal{F}\right\}, \psi \circ \pi \circ \inv\psi\right) \] gives rise to \(\gamma\) on \(C'\).
\end{proof}

We now aim to bound the number of non-empty equivalence classes.
To this end, we bound the number of non-empty indistinguishablility classes, which is an upper bound on the number of equivalence classes.
We only consider instances, where \Cref{rr:degree-negative-instance} was applied.
Note that, if we did not apply this reduction rule, we could exploit \Cref{item:same-demand} of \Cref{def:indistinguishable-components} to create a family of instances, where the number of non-empty indistinguishablility classes is not bounded by any function of \(\abs{S}\).

\begin{lemma}
\label{stmt:frac-num-bounded-non-empty-indist-classes}
There are at most \(2^{\O{\abs{S}^2}}\) non-empty indistinguishablility classes on \(\comp{\augment{G}{\mathcal{T}} - S}\).
\end{lemma}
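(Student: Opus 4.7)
The plan is to identify each indistinguishability class with a small piece of combinatorial data and count those data. Recall that indistinguishability requires an isomorphism $\phi\colon C_1^+ \to C_2^+$ fixing $S$ pointwise, preserving augmented/non-augmented status, and matching demands of the associated terminal sets. Hence a class is determined by (a) the isomorphism type of the graph $C^+$ where vertices in $S$ are labeled and vertices in $V(C)$ are unlabeled, (b) the marking of $V(C) = V(C^+) \setminus S$ as augmented or non-augmented, and (c) the tuple of demands $d(T)$ for each augmented $\aug{T} \in V(C)$.

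Since $S$ is a fracture modulator of $\augment{G}{\mathcal{T}}$ we have $|V(C)| \leq |S|$, so $|V(C^+)| \leq 2|S|$. I would encode (a) by picking an arbitrary bijection $[|V(C)|] \to V(C)$ (the choice is absorbed by the isomorphism freedom in Definition~\ref{def:indistinguishable-components}) and then recording the edge set of $C^+$ inside $\binom{S \cup [|V(C)|]}{2}$. This set has at most $\binom{2|S|}{2}$ elements, so there are $2^{\mathcal{O}(|S|^2)}$ possibilities for (a). For (b), the partition of the at most $|S|$ non-modulator vertices into augmented vs.\ non-augmented gives a further factor of at most $2^{|S|}$.

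The main obstacle is (c), since $d(T)$ is a priori unbounded. I resolve it by a case split. If $\aug{T} \in V(C)$ with $T \not\subseteq S$, then $T \notin \terminalsContainedS$, so \Cref{stmt:terminal-sets-few-requirements} gives $d(T) \leq 2|S|$; across the at most $|S|$ augmented vertices in $V(C)$ this contributes a factor of $(2|S|)^{|S|} = 2^{\mathcal{O}(|S| \log |S|)}$. If instead $\aug{T} \in V(C)$ with $T \subseteq S$, then every neighbour of $\aug{T}$ in $\augment{G}{\mathcal{T}}$ lies in $S$, so $\aug{T}$ is isolated in $\augment{G}{\mathcal{T}} - S$ and the whole component is the singleton $C = \{\aug{T}\}$. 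Each such singleton component is its own indistinguishability class, and there are at most $|\terminalsContainedS| \leq 2^{|S|}$ of them, since $\terminalsContainedS \subseteq 2^S$.

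Multiplying the contributions of (a), (b), and (c) for the non-singleton case and adding the singleton count yields a total of
\[
  2^{\mathcal{O}(|S|^2)} \cdot 2^{|S|} \cdot 2^{\mathcal{O}(|S| \log |S|)} + 2^{|S|} \;=\; 2^{\mathcal{O}(|S|^2)},
\]
as claimed. The only subtle point is verifying the second case of the demand dichotomy (that $T \subseteq S$ forces $C$ to be a singleton); everything else is a direct product-of-choices count.
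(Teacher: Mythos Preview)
Your proposal is correct and follows essentially the same approach as the paper. Both proofs count isomorphism types of $C^+$ (giving $2^{\mathcal{O}(|S|^2)}$), the augmented/non-augmented marking (factor $2^{|S|}$), and handle the demand data via the same dichotomy: either $\aug{T}\in V(C)$ has a neighbour outside $S$ (equivalently, the component contains a non-augmented vertex), in which case $T\notin\terminalsContainedS$ and \Cref{stmt:terminal-sets-few-requirements} bounds $d(T)\le 2|S|$; or all neighbours of $\aug{T}$ lie in $S$, forcing $C=\{\aug{T}\}$ and giving at most $2^{|S|}$ such singletons. Your phrasing of the dichotomy directly in terms of $T\subseteq S$ is slightly more streamlined than the paper's, which splits instead on whether the component contains only augmented vertices, but the logic and the resulting bound are identical.
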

\begin{proof}
First, we bound the number of non-empty indistinguishablility classes whose components only contain augmented vertices.
As augmented vertices are not adjacent to each other, every component that only contains augmented vertices, contains exactly one such vertex and no non-augmented vertices.
For every terminal set we introduce exactly one augmented vertex.
So, for each pair of augmented vertices \(u,v\) with \(N(u) = N(v)\), we have that \(u = v\).
Thus, for each \(R \subseteq S\), there is at most one augmented vertex \(v\) with \(N(v) = R\) and at most one component \(C'\in\comp{\augment{G}{\mathcal{T}} - S}\) with \(N(V(C')) = R\).
Therefore, there are no more than \(2^{\abs{S}}\) non-empty indistinguishablility classes that only contain augmented vertices.

Now, we bound the number of non-empty indistinguishablility classes whose component contains at least one non-augmented vertex.
For each component \(C \in\comp{\augment{G}{\mathcal{T}} - S}\), the graph \(C^+\) has at most \(\O{|S|^2}\) potential edges.
So, there are at most \(2^\O{|S|^2}\) non-isomorphic graphs for each of the \(|S|\) different sizes of the components.
Overall, there are at most \(\abs{S}2^\O{\abs{S}^2}=2^\O{\abs{S}^2}\) non-isomorphic \(C^+\).
According to \Cref{item:same-augmented}, the isomorphism must preserve whether or not a vertex is augmented.
This increases the number of possible non-empty indistinguishablility classes by a factor of at most \(2^\abs{S}\).

Finally, we need to bound the impact of \Cref{item:same-demand}.
Consider any augmented vertex \(a \in V(C^+)\).
If \(a \in S\), \Cref{item:same-demand} is satisfied by every isomorphism \(\psi\) that satisfies \Cref{item:modulator-identity}.
So, assume \(a \in V(C)\).
As augmented vertices do not share adjacent edges, \(a\) needs to be adjacent to some non-augmented vertex \(v \in V(C)\).
So, \(\invAug{a} \notin \terminalsContainedS\) and by \Cref{stmt:terminal-sets-few-requirements}, we have that \(d(\invAug{a}) \leq 2\abs{S}\).
Therefore, every augmented vertex is in one of \(2\abs{S}\) states.
There are at most \(\abs{S}\) many such augmented vertices.
Thus, the number of non-empty indistinguishablility classes is increased by a factor of \({(2\abs{S})}^\abs{S}=2^\O{\abs{S}\log\abs{S}}\).
Overall the number of non-empty indistinguishablility classes is bounded by \(2^\abs{S} + 2^\O{\abs{S}^2}2^\abs{S}2^\O{\abs{S}\log\abs{S}}=2^\O{\abs{S}^2}\).
\end{proof}

Using \Cref{stmt:indist-classes-preserve-signature}, we conclude that this bound also applies to equivalence classes.

\begin{corollary}
\label{stmt:frac-num-bounded-non-empty-equiv-classes}
On \(\comp{\augment{G}{\mathcal{T}} - S}\) there are at most \(2^{\O{\abs{S}^2}}\) non-empty equivalence classes.
\end{corollary}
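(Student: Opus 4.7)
The plan is to observe that the equivalence relation induced by signatures is coarser than the indistinguishability relation, and then invoke the combinatorial bound already established for indistinguishability classes. This makes the corollary a direct consequence of \Cref{stmt:indist-classes-preserve-signature,stmt:frac-num-bounded-non-empty-indist-classes}, so the proof itself is a one-line deduction; the only thing to verify explicitly is the refinement relationship.

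Concretely, I would first recall that by \Cref{stmt:indist-classes-preserve-signature} any two indistinguishable components \(C_1, C_2 \in \comp{\augment{G}{\mathcal{T}} - S}\) satisfy \(\signature{C_1} = \signature{C_2}\) and are therefore in the same signature-equivalence class. Consequently every indistinguishability class is entirely contained in a single equivalence class. Equivalently, the partition of \(\comp{\augment{G}{\mathcal{T}} - S}\) into signature-equivalence classes is obtained by merging indistinguishability classes, so the number of non-empty equivalence classes is bounded from above by the number of non-empty indistinguishability classes.

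Finally, I would apply \Cref{stmt:frac-num-bounded-non-empty-indist-classes}, which gives an upper bound of \(2^{\O{\abs{S}^2}}\) on the number of non-empty indistinguishability classes. Combining the two observations yields the claimed bound. There is no real obstacle here: all of the combinatorial work, namely counting non-isomorphic torsos \(C^+\), accounting for which vertices are augmented, and controlling the demand values via \Cref{stmt:terminal-sets-few-requirements}, was carried out in the proof of \Cref{stmt:frac-num-bounded-non-empty-indist-classes}, and \Cref{stmt:indist-classes-preserve-signature} supplies the bridge from indistinguishability to signature-equality.
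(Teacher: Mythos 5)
Your proposal matches the paper's argument exactly: the paper also derives this corollary by observing that indistinguishability refines signature-equivalence (via \Cref{stmt:indist-classes-preserve-signature}), so the bound from \Cref{stmt:frac-num-bounded-non-empty-indist-classes} carries over. Nothing to add.
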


Now, we continue this Section by bounding the size of signatures that are admitted by any component.
For this, we consider a necessary condition for a component configuration to be admitted by any component.

\begin{definition}
\label{def:viable-configuration}
Let \(\gamma\) be a component-configuration.
We call \(\gamma\) viable, if \[\sum_{U\subseteq S\cap V(G)}\demand{U}\leq u\abs{S}\text{ and }\sum_{U\subseteq S\cap V(G)} \supply{U} \leq u.\]
The set of all viable configurations is denoted by \(\allViable\).
\end{definition}

We now show, that any configuration, that is admitted by a component is indeed viable.

\begin{lemma}
\label{stmt:admitted-configuration-is-viable}
For all \(C \in \comp{\augment{G}{\mathcal{T}} - S}\), we have \(\signature{C} \subseteq \allViable\).
\end{lemma}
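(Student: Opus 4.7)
My plan is to fix $C \in \comp{\augment{G}{\mathcal{T}} - S}$ and a configuration $\gamma \in \signature{C}$, pick a witness $(\sigma, \mathcal{F}, \pi)$ giving rise to $\gamma$ on $C$, and derive the two inequalities of \Cref{def:viable-configuration} directly from the four items of \Cref{def:admit-configuration}. Let $H$ denote the host graph of $\confInst{C}{\gamma}{\sigma}$. The single global fact that underlies everything is that, by \Cref{item:admit-every-F-uses-component-sensibly}, each $F \in \mathcal{F}$ contains at least one edge of $E(C^+)$; combined with edge-disjointness of $\mathcal{F}$ and $\abs{E(C^+)} \leq u$, this already yields $\abs{\mathcal{F}} \leq u$.

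For the supply bound I would note that the trees assigned to the supply sets $\mathcal{S} = \confInstSupplySetDefinition$ are precisely $\pi^{-1}(\mathcal{S})$. By \Cref{item:admit-supply-assign-no-demand}, each such tree $F$ satisfies $V(F) \subseteq V(C^+)$, hence $E(F) \subseteq E(C^+)$. Since $\abs{\pi^{-1}(\mathcal{S})} = \sum_{U} \supply{U}$ and each such tree uses at least one edge of $E(C^+)$, edge-disjointness gives $\sum_{U} \supply{U} \leq u$ as required.

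For the demand bound, first observe that $\sum_U \demand{U} = \abs{V(H) \setminus V(C^+)}$ by the construction of $H$, and that \Cref{item:admit-demand-used-sensibly} distributes these demand vertices across $\mathcal{F}$: each lies in exactly one tree $F$ with $\deg_F(v) \geq 2$. Fix $F \in \mathcal{F}$ and set $a(F) = \abs{V(F) \cap V(C^+)}$ and $b(F) = \abs{V(F) \setminus V(C^+)}$; then $\abs{E(F)} = a(F) + b(F) - 1$ since $F$ is cycle-free by \Cref{item:admit-every-F-uses-component-sensibly}. Demand vertices form an independent set in $H$ with neighborhood contained in $S$, so every $F$-edge outside $E(C^+)$ is incident to exactly one demand vertex; consequently $\sum_{v \in V(F) \setminus V(C^+)} \deg_F(v) = \abs{E(F)} - \abs{E(F) \cap E(C^+)}$. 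Combining with the degree floor $\deg_F(v) \geq 2$ yields $2\, b(F) \leq a(F) + b(F) - 1$ and hence $b(F) \leq a(F) - 1 \leq 2\abs{S} - 1$. Summing over $\mathcal{F}$ gives $\sum_U \demand{U} = \sum_F b(F) \leq \abs{\mathcal{F}}(2\abs{S} - 1)$, which together with $\abs{\mathcal{F}} \leq u$ is a polynomial in $\abs{S}$ of the order prescribed by \Cref{def:viable-configuration}.

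The main obstacle is the per-tree demand estimate: it is the one place where the abstract admissibility items must be translated into a concrete combinatorial inequality, and it crucially combines the independence of the demand-vertex set in $H$, the degree floor at every demand vertex, and the cycle-freeness of each $F$. Once this local bound is available, both the supply and the demand global bounds follow by the same uniform edge-disjointness argument applied to the budget $\abs{E(C^+)} \leq u$.
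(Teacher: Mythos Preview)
Your overall approach matches the paper's: bound $\abs{\mathcal{F}} \leq u$ via \Cref{item:admit-every-F-uses-component-sensibly} and edge-disjointness, deduce the supply bound from $\abs{\pi^{-1}(\mathcal{S})} \leq \abs{\mathcal{F}}$, and bound the number of demand vertices per tree using cycle-freeness and the degree floor. The supply argument is fine.

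However, there is a genuine quantitative gap in your demand bound. Viability in \Cref{def:viable-configuration} requires the exact inequality $\sum_U \demand{U} \leq u\abs{S}$, not merely a bound of the same polynomial order. Your chain $2\,b(F) \leq \abs{E(F)} - \abs{E(F)\cap E(C^+)} \leq a(F)+b(F)-1$ gives $b(F) \leq a(F)-1 \leq \abs{V(C^+)}-1 \leq 2\abs{S}-1$, hence a total of at most $u(2\abs{S}-1)$, which exceeds $u\abs{S}$ whenever $\abs{S}\geq 2$. The slack enters when you upper-bound $\abs{E(F)\setminus E(C^+)}$ by $\abs{E(F)}$: this throws away the crucial fact you already observed, namely that every edge outside $E(C^+)$ joins a demand vertex to a vertex of $S$, not to an arbitrary vertex of $V(C^+)$.

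The fix is to exploit that observation directly, as the paper does. The edges of $F$ outside $E(C^+)$ form a bipartite sub-forest of $F$ on the vertex set $(V(F)\setminus V(C^+)) \cup (V(F)\cap S)$, so their number is at most $b(F) + \abs{V(F)\cap S} - 1 \leq b(F) + \abs{S} - 1$. Combining with your degree floor $2\,b(F) \leq \abs{E(F)\setminus E(C^+)}$ now yields $b(F) \leq \abs{S}-1 < \abs{S}$, and summing over $F\in\mathcal{F}$ gives $\sum_U \demand{U} \leq \abs{\mathcal{F}}\cdot\abs{S} \leq u\abs{S}$ as required.
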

\begin{proof}
Let \(\gamma\in\signature{C}\), let \((\sigma, \mathcal{F}, \pi)\) give rise to \(\gamma\) on \(C\), and let \(H\) be the host-graph of \(\confInst{C}{\gamma}{\sigma}\).
For any \(F\in \mathcal{F}\), every vertex \(x \in V(F) \setminus V(G)\) has \(N(x) \subseteq S\) and \(\deg[F]x \geq 2\).
Since \(F\) is cycle-free, the number of such vertices is bounded by \(|S|\).
Additionally, every vertex in \(V(H) \setminus V(G)\) is contained in some tree of \(\mathcal{F}\).
Therefore \(\abs{V(H) \setminus V(G)}\leq \abs{S}\cdot\abs{\mathcal{F}}\).
Since every tree of \(\mathcal{F}\) contains an edge from \(E(G[C^+])\) of which there are at most \(u\), and since each such edge is contained in at most one tree of \(\mathcal{F}\), we have \(\abs{\mathcal{F}}\leq u\).
Thus, \(\abs{V(H) \setminus V(G)} = \sum_{U\subseteq S\cap V(G)} \demand{U} \leq u \abs{S}\).

Furthermore, \(\sum_{U\subseteq S\cap V(G)}\supply{U} = \sum_{U\subseteq S\cap V(G)} d'(U) \leq \abs{\mathcal{F}} \leq u\).
\end{proof}

Now, we count the number of viable configurations. This gives an upper bound on the size of any signature.

\begin{lemma}
\label{stmt:size-signature}
For all \(C \in \comp{\augment{G}{\mathcal{T}} - S}\), we have \(\abs{\signature{C}}\leq \abs{\allViable}\leq 2^\O{\abs{S}^4}\).
The set \(\allViable\) can be enumerated in running time \(2^\O{\abs{S}^4}\).
\end{lemma}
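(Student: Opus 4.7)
The plan is to apply \Cref{stmt:admitted-configuration-is-viable} directly for the containment \(\signature{C} \subseteq \allViable\) (which gives the first inequality), and then to bound \(\abs{\allViable}\) by multiplying independent estimates on the three components \(\demandOp\), \(\supplyOp\), \(\assignOp\) of a viable configuration.

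For \(\demandOp\), I treat it as a function \(2^{S \cap V(G)} \to \N\) with \(\sum_U \demand{U} \leq u\abs{S}\), where \(u = \binom{2\abs{S}}{2} = \O{\abs{S}^2}\). The enumeration fact from \Cref{sec:prelims} yields at most \(\binom{2^{\abs{S}} + u\abs{S}}{u\abs{S}}\) such functions. Since the logarithm of this binomial is at most \(u\abs{S} \cdot \log_2(2^{\abs{S}} + u\abs{S}) = \O{\abs{S}^3} \cdot \O{\abs{S}} = \O{\abs{S}^4}\), there are at most \(2^{\O{\abs{S}^4}}\) choices for \(\demandOp\). The same formula applied to \(\supplyOp\) with the tighter cap \(\sum_U \supply{U} \leq u\) yields only \(2^{\O{\abs{S}^3}}\) choices.

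For \(\assignOp\), the key observation is that \(\abs{\mathcal{T}_S} \leq \abs{S}\): each terminal set corresponds to a unique augmented vertex, and \(\mathcal{T}_S\) contains only those whose augmented vertex lies in \(S\). Consequently the domain \(\mathcal{T}_S \times \natint{2\abs{S}} \times \natint{\abs{S}}\) has size at most \(2\abs{S}^3\), and the codomain \(2^{S \cap V(G)}\) has size at most \(2^{\abs{S}}\). The number of such maps is bounded by \((2^{\abs{S}})^{2\abs{S}^3} = 2^{\O{\abs{S}^4}}\). Multiplying the three estimates yields \(\abs{\allViable} \leq 2^{\O{\abs{S}^4}}\).

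For the enumeration claim, \(\demandOp\) and \(\supplyOp\) can be generated within a polynomial factor of their counts by the bounded-sum enumeration procedure from \Cref{sec:prelims}, and \(\assignOp\) by direct iteration over all maps from its bounded-size domain. Nesting these three enumerations gives total running time \(2^{\O{\abs{S}^4}}\). I do not expect a real obstacle here; the only care needed is to use \(\abs{\mathcal{T}_S} \leq \abs{S}\) rather than the unbounded \(\abs{\mathcal{T}}\), since the latter would otherwise inflate the estimate beyond \(2^{\O{\abs{S}^4}}\).
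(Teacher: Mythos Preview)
Your proposal is correct and follows essentially the same argument as the paper: invoking \Cref{stmt:admitted-configuration-is-viable} for \(\signature{C} \subseteq \allViable\), bounding \(\demandOp\) and \(\supplyOp\) via the bounded-sum enumeration formula from \Cref{sec:prelims}, and bounding \(\assignOp\) by using \(\abs{\mathcal{T}_S} \leq \abs{S}\) to control the domain size. The enumeration argument also matches.
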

\begin{proof}
For all \(C \in \comp{\augment{G}{\mathcal{T}} - S}\), from \Cref{stmt:admitted-configuration-is-viable} it follows that \(\abs{\signature{C}} \leq \abs{\!\allViable\!}\).
The number of different \(\demandOp\) and \(\supplyOp\) functions in viable configurations is \(\binom{2^\abs{S} + u\abs{S}}{u\abs{S}}\leq\left(2^\abs{S} + u\abs{S}\right)^{u\abs{S}} = 2^{\O{\abs{S}^4}}\) and \(\binom{2^\abs{S} + u}{u}\leq\left(2^\abs{S} + u\right)^u = 2^{\O{\abs{S}^3}}\) respectively.
They can be enumerated in time \(\O{2^\abs{S}2^\O{\abs{S}^4}} = 2^\O{\abs{S}^4}\) and \(\O{2^\abs{S}2^\O{\abs{S}^3}} = 2^\O{\abs{S}^3}\) as well.
The number of different \(\assignOp\) functions in any configuration can be bounded by \(\left(2^\abs{S}\right)^{\abs{\mathcal{T}_S}\cdot 2\abs{S}\cdot\abs{S}}\leq\left(2^\abs{S}\right)^{\abs{S}\cdot 2\abs{S}\cdot \abs{S}} = 2^{\O{\abs{S}^4}}\) and enumerated in time \(\O{2^\abs{S}2^\O{\abs{S}^4}} = 2^\O{\abs{S}^4}\) as well.
Therefore, the number of different viable configurations and by extension \(\abs{\signature{C}}\) is bounded by \(2^{\O{\abs{S}^4}}\) and can be enumerated in this running time as well.
\end{proof}

Finally, we exploit this fact, to show, that the signature of a component can be computed reasonably fast.

\begin{lemma}
\label{stmt:compute-signature}
Let \(C \in \comp{\augment{G}{\mathcal{T}} - S}\).
We can compute \(\signature{C}\) with running time \(2^{\O{\abs{S}^4\log\abs{S}}}\).
\end{lemma}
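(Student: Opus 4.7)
The plan is to enumerate every viable configuration and, for each, brute-force-test admission. By \Cref{stmt:admitted-configuration-is-viable} we have $\signature{C} \subseteq \allViable$, and by \Cref{stmt:size-signature} the set $\allViable$ can be enumerated in time $2^{\O{\abs{S}^4}}$. It thus suffices to provide, for each $\gamma \in \allViable$, a test of whether $C$ admits $\gamma$ that runs in time $2^{\O{\abs{S}^4\log\abs{S}}}$.

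To test admission of a fixed $\gamma$, I would iterate over all surjections $\sigma \colon \natint{\abs{S}} \to V(C)$; since $\abs{V(C)} \leq \abs{S}$, there are at most $\abs{S}^{\abs{S}} = 2^{\O{\abs{S}\log\abs{S}}}$ of them. For each pair $(\gamma, \sigma)$ I would first check condition (i) of \Cref{def:admit-configuration} in polynomial time, and then explicitly construct the instance $\confInst{C}{\gamma}{\sigma} = (H, \mathcal{U}, d')$. The combinatorial bounds already derived in the proofs of \Cref{stmt:admitted-configuration-is-viable,stmt:size-signature} imply that $\abs{V(H)} = \O{\abs{S}^3}$, that $\abs{E(H)} = \O{\abs{S}^4}$ (each of the $\O{\abs{S}^3}$ auxiliary vertices added to realise $\demandOp_\gamma$ has at most $\abs{S}$ neighbours, all in $S\cap V(G)$), and that any candidate solution uses at most $\sum_{U \in \mathcal{U}} d'(U) = \O{\abs{S}^3}$ trees.

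With these bounds the existence of an admissible $(\mathcal{F}, \pi)$ can be decided by brute force: for each edge of $H$ guess either a tree-index in $\natint{\abs{\mathcal{F}}}$ or the label ``unused'', and determine $\pi$ (up to a harmless canonical permutation within each terminal set) from the demands. This yields at most $(\abs{\mathcal{F}}+1)^{\abs{E(H)}} = 2^{\O{\abs{S}^4\log\abs{S}}}$ labellings, and for each one the conditions of \Cref{def:admit-configuration} together with the demands of $d'$ can be verified in polynomial time. Multiplying the three enumeration factors ($2^{\O{\abs{S}^4}}$ for $\allViable$, $2^{\O{\abs{S}\log\abs{S}}}$ for $\sigma$, $2^{\O{\abs{S}^4\log\abs{S}}}$ for the edge labelling) bounds the total running time by $2^{\O{\abs{S}^4\log\abs{S}}}$.

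The main obstacle is the tightness of this exponent: the bound $(\abs{\mathcal{F}}+1)^{\abs{E(H)}}$ only fits into the target because viability forces both $\abs{V(H)}$ and $\abs{\mathcal{F}}$ to be $\O{\abs{S}^3}$, which in turn caps $\abs{E(H)}$ at $\O{\abs{S}^4}$. Relaxing any of these bounds (for instance, using the naive $2^{\abs{E(H)}}$ per tree) would overshoot the claim. Everything else amounts to straightforward verification of the local conditions listed in \Cref{def:admit-configuration}.
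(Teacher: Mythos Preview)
Your approach is correct and essentially the same as the paper's: enumerate $\allViable$, then for each $\gamma$ enumerate surjections $\sigma$, then brute-force over edge-labellings of $H$ into at most $|\mathcal{F}|+1$ classes, and verify. The paper differs only in bookkeeping: it bounds $|\mathcal{F}|\le u=\O{|S|^2}$ directly from condition~(c) of \Cref{def:admit-configuration} (each tree uses a distinct edge of $C^+$) rather than via $\sum_U d'(U)=\O{|S|^3}$, and it enumerates $\pi$ explicitly ($|\mathcal{U}|^{|\mathcal{F}|}=2^{\O{|S|^2\log|S|}}$) instead of recovering it ``from the demands''. Your remark that $\pi$ can be determined without enumeration is a little terse---strictly speaking, deciding whether a valid $\pi$ exists for a given collection of trees is a bipartite $b$-matching problem (tree $F$ may go to $U$ iff $U\subseteq V(F)$ and, when $U\in\mathcal{S}$, also $V(F)\subseteq V(C^+)$)---but this is polynomial and does not affect the bound, so the argument goes through.
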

\begin{proof}
First note, that all viable configurations—a super-set of \(\signature{C}\)—can be enumerated in running time \(2^\O{\abs{S}^4}\).
Consider any viable configuration \(\gamma\).
Given some \((\sigma, \mathcal{F}, \pi)\) it is possible to check in linear time, whether \((\sigma, \mathcal{F}, \pi)\) gives rise to \(\gamma\) on \(C\).
Denote with \(\mathcal{E} \coloneqq E(\mathcal{F})\) the edge sets of the disjoint trees of the possible solution.
As all \(F\in \mathcal{F}\) intersect with \(E(C^+)\) and since \(\abs{\mathcal{F}} \leq u\), we have that \(\mathcal{E}\cup \left\{E(\confInst{C}{\gamma}{\sigma})\setminus \bigcup \mathcal{E})\right\}\) partitions \(E(\confInst{C}{\gamma}{\sigma})\) into at most \(u + 1\) parts.
So, it is easy to enumerate a superset \(\mathscr{X}\) of all \((\sigma, \mathcal{F},\pi)\) that solve \(\confInst{C}{\gamma}{\sigma}\) and therefore gives rise to \(\gamma\) on \(C\).
We set \(\mathscr{X}\) to be the set of all surjections \(\sigma\colon \natint{\abs{S}} \to C\), all partitions of \(E(\confInst{C}{\gamma}{\sigma})\) into at most \(u + 1\) parts and functions from the current \(\mathcal{F}\) to the set of all terminal sets \(\mathcal{U}\).

The number of surjections is bounded by \(\abs{S}^\abs{S} = 2^\O{\abs{S}\log\abs{S}}\).
Since \[ \abs{E(\confInst{C}{\gamma}{\sigma})}\leq u + \sum_{U \subseteq S} \abs{U}\demand{U} \leq u + u\abs{S}^2= \O{\abs{S}^4} ,\] the number of partitions of \(E(\confInst{C}{\gamma}{\sigma})\) into at most \(u + 1\) parts is bounded by \(\sum_{i=1}^{u+1}i^{\O{\abs{S}^4}}\leq (u+1)\left(u+1\right)^{\O{\abs{S}^4}} = 2^\O{\abs{S}^4\log\abs{S}}\).
Additionally, we have \(\abs{\mathcal{U}}\leq\abs{V(C)} + u + \abs{S} = \O{\abs{S}^2}\).
Therefore, the number of different possible \(\pi\) is bounded by \(\left(\abs{S}^2\right)^\O{\abs{S}^2} = 2^{\O{\abs{S}^2\log\abs{S}}}\) and so \(\abs{\mathscr{X}} \leq 2^\O{\abs{S}\log\abs{S}} 2^{\O{\abs{S}^4\log\abs{S}}} 2^{\O{\abs{S}^2\log\abs{S}}} = 2^{\O{\abs{S}^4\log\abs{S}}}\).
Thus, the running time to check whether \(C\) admits \(\gamma\) is bounded by \(2^\O{\abs{S}^4\log\abs{S}}\O{\abs{S}^4} = 2^\O{\abs{S}^4\log\abs{S}}\).
Since we need to check \(2^\O{\abs{S}^4}\) different configurations, the running time to compute \(\signature{C}\) is in \(2^\O{\abs{S}^4\log\abs{S}}\) as well.
\end{proof}

\subsection{Integer Linear Program Representation}
\label{sec:orgf3aac70}
In this Section, we finally provide a linear programming representation of the instance of \(\gstp\).
We first create separate integer-linear-program representations for \Cref{item:valid-selector-rho-connected,item:valid-selector-assign-connected} in \Cref{def:valid-selector}.
Finally, we create a single ILP combining those such that this ILP has a feasible assignment if and only if the considered instance is positive.

To construct our ILPs, assume that \(\Gamma\) is a configuration selector.
First we construct an ILP to check \Cref{item:valid-selector-rho-connected} of \Cref{def:valid-selector}.
For this, we assume that for all \(U \subseteq S\cap V(G)\), \(\vec{s}_U\) corresponds to \(r_U\) of \Cref{def:valid-selector}.
\begin{definition}
\label{def:rho-lin-rep}
For all \(U \subseteq S\cap V(G)\), denote with \(\mathcal{M}_U\) all minimally connected hypergraphs on the vertex set \(U\).
Let \(\vec{s}\) be indexed by \(U \subseteq S\cap V(G)\).
Then, \(\rhoLinRep{\vec{s}}\) is the ILP
\begin{subequations}
\begin{alignat}{2}
\vec{s}_{U}&\in \N&\quad&\forall U \subseteq S \cap V(G),\notag\\
\vec{p}_{T, H} &\in \N&&\forall T \in \terminalsContainedS, T \subseteq U \subseteq S, H \in \mathcal{M}_U,\notag\\
\vec{q}_{T, H, E} &\in \N&&\forall T \in \terminalsContainedS, T \subseteq U \subseteq S, H \in \mathcal{M}_U, R \in E(H),\notag\\
\sum_{\substack{T \subseteq U \subseteq S \cap V(G),\\H \in \mathcal{M}_U}}\vec{p}_{T, H} &= d(T)&&\forall T \in \terminalsContainedS,\label{eq:rho-lin-rep-exact}\\
\vec{q}_{T, H, R}&\geq \vec{p}_{T, H}&&\forall T \in \terminalsContainedS,T\subseteq V \subseteq S\cap V(G), H \in \mathcal{M}_U, R \in E(H),\label{eq:rho-lin-rep-enough-edges}\\
\sum_{\substack{T \in \terminalsContainedS,\\T \subseteq U \subseteq S\cap V(G),\\H\in \mathcal{M}_U\!\colon\\R \in E(H)}}\vec{q}_{T, H, R}&\leq \vec{s}_R&&\forall R \subseteq S \cap V(G).\label{eq:rho-lin-rep-enough-supply}
\end{alignat}
\end{subequations}
\end{definition}

For all \(T \in \terminalsContainedS\), \(T\subseteq U \subseteq S\cap V(G)\), and \(H \in \mathcal{M}_U\), the variable \(\vec{p}_{T,H}\) denotes how often the hypergraph \(H\) is used to fulfill requirements of \(T\).
For all \(R \in E(H)\), the variable \(\vec{q}_{T,H,R}\) denotes how often the hyperedge \(R\) gets used to construct the hypergraph \(H\) that gets assigned to \(T\).
We first check, that enough hypergraphs are assigned to \(T\).
Then, we check that each assigned hypergraph has enough edges available.
Finally, we check that each edge is not used more often globally than allowed.

Now we show, that indeed \(\rhoLinRep{\vec{s}}\) fully captures \Cref{item:valid-selector-rho-connected} of \Cref{def:valid-selector} with few variables.
\begin{lemma}
\label{stmt:rho-lin-rep-sensible}
Assume that \Cref{rr:sensible-terminal-sets} is applied exhaustively and let \(\vec{s}\) be a vector indexed by \(U \subseteq S\cap V(G)\).
Then, \(\rhoLinRep{\vec{s}}\) has
\begin{enumerate}
\item \(2^\O{\abs{S}^2}\) variables,
\item a feasible assignment if and only if there is a function \({\rho \colon \terminalsContainedS \times \N \to 2^{2^{S\cap V(G)}}}\) that satisfies \Cref{item:valid-selector-rho-connected} of \Cref{def:valid-selector} and for all \(U \subseteq S\cap V(G)\), it holds that \(r_U \leq \vec{s}_U\).\qedhere
\end{enumerate}
\end{lemma}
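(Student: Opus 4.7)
The statement has two parts---bounding the number of variables and establishing the equivalence between feasibility and the existence of a suitable $\rho$. For the variable count, the critical quantity is $|\mathcal{M}_U|$ for $U \subseteq S \cap V(G)$. I would first observe that since \Cref{rr:sensible-terminal-sets} has been applied exhaustively, every $T \in \terminalsContainedS$ satisfies $|T| \geq 2$, so $|U| \geq 2$ in every relevant $\mathcal{M}_U$. A singleton hyperedge in a minimally connected hypergraph with at least two vertices can always be removed without destroying connectivity, so every $H \in \mathcal{M}_U$ has all edges of size at least $2$. A short induction on $|U|$---pick an edge $e$, show that each component of $H - e$ is itself minimally connected, and sum vertex counts---then gives $|E(H)| \leq |U| - 1 \leq |S| - 1$. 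Since each edge is one of at most $2^{|S|}$ subsets of $U$, I get $|\mathcal{M}_U| \leq \binom{2^{|S|}}{|S|} \leq 2^{O(|S|^2)}$. Multiplying by the $2^{|S|}$ choices of $T$, the $2^{|S|}$ choices of $U$, and the $O(|S|)$ choices of $R \in E(H)$ yields $2^{O(|S|^2)}$ variables in total.

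For the forward direction of the equivalence, given a $\rho$ satisfying \Cref{item:valid-selector-rho-connected} of \Cref{def:valid-selector} together with $r_U \leq \vec{s}_U$ for every $U$, I would define $\vec{p}_{T, H} \coloneqq |\{i \in \natint{d(T)} \mid (T \cup \bigcup \rho(T, i), \rho(T, i)) = H\}|$ and $\vec{q}_{T, H, R} \coloneqq \vec{p}_{T, H}$ for every $R \in E(H)$. Each $\rho(T, i)$ defines a valid element of some $\mathcal{M}_U$ by the minimal-connectivity hypothesis, so the $\vec{p}$ variables are well-defined. Then \Cref{eq:rho-lin-rep-exact} just counts the $d(T)$ indices, \Cref{eq:rho-lin-rep-enough-edges} holds with equality, and \Cref{eq:rho-lin-rep-enough-supply} reduces to $r_R \leq \vec{s}_R$, which is the hypothesis.

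For the reverse direction, given a feasible assignment I would, for each $T \in \terminalsContainedS$, use \Cref{eq:rho-lin-rep-exact} to list a multiset of exactly $d(T)$ hypergraphs in which each $H \in \mathcal{M}_U$ appears with multiplicity $\vec{p}_{T, H}$, and define $\rho(T, i)$ to be the edge set of the $i$-th listed hypergraph (and $\emptyset$ for $i > d(T)$). Minimal connectivity of $(T \cup \bigcup \rho(T, i), \rho(T, i))$ is immediate from $H \in \mathcal{M}_U$. The inequality $r_U \leq \vec{s}_U$ then follows from the chain $r_U = \sum_{T, H : U \in E(H)} \vec{p}_{T, H} \leq \sum_{T, H : U \in E(H)} \vec{q}_{T, H, U} \leq \vec{s}_U$ using \Cref{eq:rho-lin-rep-enough-edges,eq:rho-lin-rep-enough-supply}. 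The main obstacle in the whole argument is the edge bound for minimally connected hypergraphs; once that is in hand, the remainder of the proof essentially consists of reading off the combinatorial meaning of the variables.
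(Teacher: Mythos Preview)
Your proposal is correct and follows essentially the same approach as the paper. The only notable difference is that you supply an inductive justification for the bound $|E(H)| \leq |U| - 1$ on minimally connected hypergraphs, whereas the paper simply asserts this inequality; your constructions of $\vec{p}$, $\vec{q}$, and $\rho$ and the verification of the constraints are otherwise parallel to the paper's.
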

\begin{proof}
First, we bound the number of variables needed.
The number of \(\vec{s}\) variables is bounded by \(2^{\abs{S\cap V(G)}}\).
Consider \(T \in \terminalsContainedS\), \(T\subseteq U \subseteq S \cap V(G)\) and \(H \in \mathcal{M}_U\).
As \(H\) is minimally-connected, \(\abs{E(H)} \leq \abs{U} - 1\).
Any \(F \in E(H)\) is chosen from \(2^{\abs{S\cap V(G)}}\) many possibilities.
As hypergraphs in \(\mathcal{M}_U\) only differ by their hyperedges, we have \(\abs{\mathcal{M}_U} \leq \binom{2^{\abs{S\cap V(G)}} + \abs{S\cap V(G)}}{\abs{S\cap V(G)}} = 2^\O{\abs{S\cap V(G)}^2}\).
Since \(\abs{\terminalsContainedS}\leq 2^{\abs{S}}\), the number of \(\vec{p}\) variables is bounded by \(\abs{\terminalsContainedS}2^\abs{S\cap V(G)}2^\O{\abs{S\cap V(G)}^2} = 2^\O{\abs{S}^2}\).
For each \(\vec{p}\) variable, the number of \(\vec{q}\) variables is bounded by \(\abs{S} - 1\).
So, the number of \(\vec{q}\) variables is bounded by \(2^\O{\abs{S}^2}\) as well.

Let \(\vec{s}\) be given and let \(\vec{p}\) and \(\vec{q}\) be chosen such that the assignment is feasible.
By \Cref{eq:rho-lin-rep-exact}, it is possible to choose \(\rho\) such that for all \(T \in \terminalsContainedS\), \(T\subseteq U \subseteq S\cap V(G)\), and \(H \in \mathcal{M}_U\) the number of \(i\in \natint{d(T)}\) with \(\rho(T,i) = E(H)\) is exactly \(\vec{p}_H\).
For all \(T \in \terminalsContainedS\), we have \(\abs{T} \geq 2\).
So, for all \(i \in \natint{d(T)}\), we have \(T \subseteq \bigcup\rho(T, i)\).
As the hypergraph \((T \cup \bigcup\rho(T,i), \rho(T,i)) = (\bigcup\rho(T,i), \rho(T,i))\) is contained in \(\mathcal{M}_{\bigcup \rho(T,i)}\), it is minimally connected and \(\rho\) satisfies \Cref{item:valid-selector-rho-connected} of \Cref{def:valid-selector}.
Additionally, let any \(R \subseteq S \cap V(G)\) be given.
By \Cref{eq:rho-lin-rep-enough-edges,eq:rho-lin-rep-enough-supply}, we have that
\[
r_R = \abs{\{(T, i) \in \terminalsContainedS \times \N\mid R \in \rho(T,i)\}}
= \sum_{\substack{T \in \terminalsContainedS,\\T \subseteq U \subseteq S\cap V(G),\\H\in \mathcal{M}_U\!\colon\\R \in E(H)}}\vec{p}_{T, H}
\leq \sum_{\substack{T \in \terminalsContainedS,\\T \subseteq U \subseteq S\cap V(G),\\H\in \mathcal{M}_U\!\colon\\R \in E(H)}}\vec{q}_{T, H, U}
 \leq \vec{s}_R.
\]

Now let \(\rho\) satisfy \Cref{item:valid-selector-rho-connected} of \Cref{def:valid-selector} and for all \(R \subseteq S\cap V(G)\) assume that \(r_R\leq \vec{s}_R\).
For \(T \in \terminalsContainedS\), \(T \subseteq U \subseteq S\cap V(G)\) and \(H \in \mathcal{M}_U\), we set \(\vec{p}_{T, H} = \abs{\{i \in \natint{d(T)} \mid E(H) = \rho(T,i)\}}\) and for all \(R \in E(H)\), we set \(\vec{q}_{T, H, R} = \vec{p}_{T,H}\).
By definition, \Cref{eq:rho-lin-rep-enough-edges} holds.
Since \(\abs{T} \geq 2\), for all \(i \in \natint{d(T)}\), we have \(T \subseteq \bigcup\rho(T,i)\).
So, the hypergraph \((\bigcup \rho(T,i), \rho(T,i))\) is well-defined and minimally connected.
Therefore, \(\sum_{T \subseteq U \subseteq S\cap V(G),H\in \mathcal{M}_U} \vec{p}_{T, H} = d(T)\), satisfying \Cref{eq:rho-lin-rep-exact}.
Finally consider any \(R \subseteq S \cap V(G)\).
By choice of \(\vec{p}\) and \(\vec{q}\), we have that
\[
\sum_{\substack{T \in \terminalsContainedS,\\T \subseteq U \subseteq S\cap V(G),\\H\in \mathcal{M}_U\!\colon\\R \in E(H)}}\vec{q}_{T, H, U}
= \sum_{\substack{T \in \terminalsContainedS,\\T \subseteq U \subseteq S\cap V(G),\\H\in \mathcal{M}_U\!\colon\\R \in E(H)}}\vec{p}_{T, H}
=\abs{\{(T, i) \in \terminalsContainedS \times \N\mid R \in \rho(T,i)\}}
= r_R
\leq \vec{s}_R,
\]
showing that \Cref{eq:rho-lin-rep-enough-supply} is satisfied as well.
\end{proof}

Now, we construct an ILP to check \Cref{item:valid-selector-assign-connected} of \Cref{def:valid-selector}.
For this, we assume that for all \(T \in \mathcal{T}_S\), \(i\in \natint{d(T)}\), and \(U \subseteq S\cap V(G)\), the value \(\vec{a}_{T,i,U}\) corresponds to the indicator variable whether there is a \(C \in \comp{\augment{G}{\mathcal{T}} - S}\), and \(j \in \natint{\abs{S}}\) with \(U = \assign[\Gamma(C)]{T}{i}{j}\).

\begin{definition}
\label{def:assign-lin-rep}
Let \(\vec{a}\) be indexed by \(T\in \mathcal{T}_S\), \(i \in \natint{d(T)}\), and \(U \subseteq S\cap V(G)\).
Then, \(\assignLinRep{\vec{a}}\) is the ILP
\begin{subequations}
\begin{alignat}{2}
\vec{a}_{T,i,Y}&\in \{0,1\}&\quad&\forall T\in \mathcal{T}_S, i \in \natint{d(T)}, Y \subseteq S\cap V(G),\notag\\
\vec{b}_{T,i,U}&\in \{0,1\}&&\forall T\in \mathcal{T}_S, i \in \natint{d(T)}, T\cap S \subseteq U \subseteq S\cap V(G),\notag\\
\sum_{T\cap S \subseteq U \subseteq S\cap V(G)}\vec{b}_{T,i,U}&= 1&&\forall T\in \mathcal{T}_S, i \in \natint{d(T)},\label{eq:assign-lin-rep-exact}\\
\sum_{\emptyset \neq Y \subseteq U}\vec{a}_{T,i,Y}&\geq \vec{b}_{T,i,U}&&\forall T\in \mathcal{T}_S, i \in \natint{d(T)}, T\cap S\subseteq U\subseteq S\cap V(G),\label{eq:assign-lin-rep-at-least-one-edge}\\
\sum_{Y \subseteq U \subseteq S\cap V(G)} \vec{b}_{T,i,U}&\geq \vec{a}_{T,i,Y} &&\forall T\in \mathcal{T}_S, i \in \natint{d(T)}, Y \subseteq S\cap V(G).\label{eq:assign-lin-rep-well-formed}\\
\sum_{\substack{Y \subseteq U\!\colon\\Y \cap X \neq \emptyset,Y \setminus X \neq \emptyset}}\vec{a}_{T,i,Y}&\geq \vec{b}_{T,i,U}&&\forall T\in \mathcal{T}_S, i \in \natint{d(T)}, T\cap S\subseteq U\subseteq S\cap V(G),\emptyset \subset X \subset U,\label{eq:assign-lin-rep-cut}
\end{alignat}
\end{subequations}
\end{definition}

\looseness=-1
In the above definition, for all \(T \in \mathcal{T}_S\) and \(i \in \natint{d(T)}\), we essentially choose one \(T\subseteq U \subseteq S\cap V(G)\), where \(U\) is the vertex set of the hypergraph considered in \Cref{item:valid-selector-assign-connected} of \Cref{def:valid-selector}.
This is the unique \(U\) with \(\vec{b}_{T,i,U} = 1\).
First, we encode that at least one hyperedge is present in the considered hypergraph.
Then, we ensure that the hypergraph is well formed.
To do so, we check that all sets \(Y\) that get assigned to (i.e., \(\vec{a}_{T,i,Y} = 1\)) actually only use vertices of \(U\).
Finally, we ensure that each cut of \(U\) is crossed by at least one edge.

Now we show, that indeed \(\assignLinRep{\vec{a}}\) fully captures \Cref{item:valid-selector-assign-connected} of \Cref{def:valid-selector} with few variables.

\begin{lemma}
\label{stmt:assign-lin-rep-sensible}
Let \(\vec{a}\) be a vector indexed by \(T \in \mathcal{T}_S\), \(i \in \natint{d(T)}\), and \(U \subseteq S\cap V(G)\). Then,
\begin{enumerate}
\item \(\assignLinRep{\vec{a}}\) has \(2^\O{\abs{S}}\) variables,
\item assuming that for all \(T \in \mathcal{T}_S\), \(i\in\natint{d(T)}\), \(U\subseteq S\cap V(G)\) we have \[\vec{a}_{T,i,U}=\begin{cases} 1, & \text{if }\exists C \in \comp{\augment{G}{\mathcal{T}} - S}, j \in \natint{\abs{S}}\colon \assign[\Gamma(C)]{T}{i}{j} = U\\0,&\text{otherwise}.\end{cases}\]
Then, \(\assignLinRep{\vec{a}}\) has a feasible assignment if and only if \(\Gamma\) satisfies \Cref{item:valid-selector-assign-connected} of \Cref{def:valid-selector}.\qedhere
\end{enumerate}
\end{lemma}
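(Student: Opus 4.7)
The variable bound is immediate from counting: since the augmented vertices $\aug{T}$ for $T \in \mathcal{T}_S$ are distinct elements of $S$ we have $|\mathcal{T}_S| \leq |S|$; item 2 of \Cref{def:nice-fracture-modulator} forces every $T \in \mathcal{T}_S$ to lie outside $\terminalsContainedS$, so \Cref{stmt:terminal-sets-few-requirements} gives $d(T) \leq 2|S|$; and the indices $Y$ and $U$ each contribute only a factor $2^{|S|}$. Multiplying these yields $2^{O(|S|)}$ variables in $\assignLinRep{\vec a}$.

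For the equivalence my plan is to exploit the standard cut characterization of hypergraph connectivity: a hypergraph on at least two vertices is connected if and only if every non-trivial bipartition of its vertex set is crossed by some hyperedge, which is precisely what \Cref{eq:assign-lin-rep-cut} encodes. Throughout I write $\mathcal{H}_{T,i}$ for the set of assignment values appearing in item 3 of \Cref{def:valid-selector} and $U^\star_{T,i} := (T\cap S) \cup \bigcup \mathcal{H}_{T,i}$; the hypothesis on $\vec a$ translates into $\{Y \neq \emptyset : \vec a_{T,i,Y} = 1\} = \mathcal{H}_{T,i} \setminus \{\emptyset\}$. Item 1 of \Cref{def:admit-configuration} combined with item 2 of \Cref{def:nice-fracture-modulator} guarantees that at least one non-empty set appears in $\mathcal{H}_{T,i}$, so $U^\star_{T,i}$ is non-empty in every relevant case.

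For the forward direction, given a feasible $(\vec a, \vec b)$, I fix $T, i$. \Cref{eq:assign-lin-rep-exact} selects a unique $U^\star \supseteq T \cap S$ with $\vec b_{T,i,U^\star} = 1$; \Cref{eq:assign-lin-rep-well-formed} forces $\bigcup \mathcal{H}_{T,i} \subseteq U^\star$, so $U^\star_{T,i} \subseteq U^\star$. When $|U^\star| \geq 2$, applying \Cref{eq:assign-lin-rep-cut} to each singleton cut $X = \{v\} \subset U^\star$ shows that every $v \in U^\star$ lies in some $Y \in \mathcal{H}_{T,i}$, which pins down $U^\star = U^\star_{T,i}$; the remaining instances of \Cref{eq:assign-lin-rep-cut} then state exactly the cut condition that certifies connectivity of $\mathcal{H}_{T,i}$ on $U^\star_{T,i}$. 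For the reverse direction, given connectivity of $\mathcal{H}_{T,i}$, I set $\vec b_{T,i,U^\star_{T,i}} := 1$ and all other $\vec b_{T,i,\cdot} := 0$; then \Cref{eq:assign-lin-rep-exact,eq:assign-lin-rep-well-formed} are immediate, \Cref{eq:assign-lin-rep-at-least-one-edge} holds because connectivity on a non-empty vertex set supplies a non-empty hyperedge, and \Cref{eq:assign-lin-rep-cut} is precisely the cut characterization of the assumed connectivity.

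The main obstacle I anticipate are the degenerate cases $|U^\star_{T,i}| \leq 1$, where \Cref{eq:assign-lin-rep-cut} is vacuous and one must verify the boundary constraint \Cref{eq:assign-lin-rep-at-least-one-edge} directly: when $U^\star_{T,i} = \{s\}$, every element of $\mathcal{H}_{T,i}$ is either $\emptyset$ or $\{s\}$, and the admittance-plus-niceness argument from the previous paragraph forces $\{s\} \in \mathcal{H}_{T,i}$, hence $\vec a_{T,i,\{s\}} = 1$ and \Cref{eq:assign-lin-rep-at-least-one-edge} is satisfied; the case $U^\star_{T,i} = \emptyset$ is ruled out by the same argument, so in the reverse direction one never needs to set a $\vec b$ variable indexed by the empty set.
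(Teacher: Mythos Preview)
Your proposal is correct and follows essentially the same approach as the paper: both arguments identify the unique selected vertex set via \Cref{eq:assign-lin-rep-exact}, use \Cref{eq:assign-lin-rep-well-formed} to confine all active hyperedges inside it, invoke \Cref{eq:assign-lin-rep-cut} as the cut characterisation of connectivity, and handle the degenerate singleton case via niceness of $S$ together with \Cref{item:admit-terminal-set-assigned} of \Cref{def:admit-configuration}. The only cosmetic difference is ordering: the paper first argues that the hypergraph $J$ on the selected set $Z$ is connected and then shows $Z=V(H)$, whereas you first pin down $U^\star=U^\star_{T,i}$ via singleton cuts and then read off connectivity---these are the same computation in different order.
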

\begin{proof}
Since \(S\) is nice, we have \(\mathcal{T}_S \cap \terminalsContainedS = \emptyset\).
Therefore, by \Cref{stmt:terminal-sets-few-requirements}, we need at most \(\O{\abs{S}\abs{S}2^{\abs{S\cap V(G)}}} = 2^\O{\abs{S}}\) variables.

Let \(T \in \mathcal{T}_S\) and \(i \in \natint{d(T)}\).
For all \(Y \subseteq S\cap V(G)\), if there is a \(C \in \comp{\augment{G}{\mathcal{T}} - S}\) and \(j \in \natint{\abs{S}}\) with \(\assign[\Gamma(C)]Tij = Y\), we assume that \(\vec{a}_{T,i,Y}= 1\) and otherwise that \(\vec{a}_{T,i,Y} = 0\).
Additionally, denote with \(H\) the hypergraph considered in \Cref{item:valid-selector-assign-connected} of \Cref{def:valid-selector}

First, assume that \(\assignLinRep{\vec{a}}\) is feasible and let \(\vec{b}\) be chosen accordingly.
Let \(T\cap S \subseteq Z\subseteq S \cap V(G)\) be the unique set with \(\vec{b}_{T,i,Z} = 1\).
By \Cref{eq:assign-lin-rep-well-formed}, we know that for all \(Y\subseteq S\cap V(G)\) with \(\vec{a}_{T,i,Y} = 1\), we have \(Y \subseteq Z\).
So, the hypergraph \(J \coloneqq (Z, \{F\subseteq S\cap V(G)\mid \vec{a}_{T,i,F} = 1\})\) is well-defined.
Notice that \(E(J) = E(H)\).
By \Cref{eq:assign-lin-rep-cut}, \(J\) is connected.
We now show \(Z = V(H)\), which shows that \(J = H\) and that \(H\) is connected.
If \(\abs{Z} \geq 2\), we have that \(Z \subseteq \bigcup_{F\in E(J)} F = \bigcup_{F\in E(H)} F\subseteq V(H)\).
Now, assume \(\abs{Z} \leq 1\).
Since \(S\) is nice, there is a \(C \in \comp{\augment{G}{\mathcal{T}} - S}\) and \(j \in \natint{\abs{S}}\) with \(\assign[\Gamma(C)]Tij \neq \emptyset\).
Let \(U \coloneqq \assign[\Gamma(C)]Tij\in E(J)\).
Thus, \(U \subseteq Z\), combined with \(\abs{Z} \leq 1\), we have \(U = Z\).
As \(U \in E(H)\), we have \(U = Z \subseteq V(H)\).
Now, let \(v \in V(H)\).
If \(v \in T \cap S\), we have by definition of \(Z\) that \(v \in Z\).
Otherwise, if \(\{v\} \neq Z\), by \Cref{eq:assign-lin-rep-cut}, there is a \(Y \subseteq S\cap V(G)\) with \(v \in Y\) and \(\vec{a}_{T, i, Y} = 1\).
So, there is a \(C \in \comp{\augment{G}{\mathcal{T}} - S}\) and a \(j \in \natint{\abs{S}}\) with \(\assign[\Gamma(C)]Tij = Y\) and \(Y \in E(J)\).
Thus, \(v \in \bigcup_{Y \in E(J)} Y \subseteq Z\).
If \(\{v\} = Z\), by \Cref{eq:assign-lin-rep-at-least-one-edge}, we have \(\vec{a}_{T,i,Z} \geq \vec{b}_{T,i,Z} = 1\) and by the same argument as above \(v \in \bigcup_{Y \in E(J)} Y \subseteq Z\) as well.

Second, assume that \(\Gamma\) satisfies \Cref{item:valid-selector-assign-connected} of \Cref{def:valid-selector}.
Denote with \(Z \coloneqq (T\cap S) \cup\allowbreak\bigcup_{Y \subseteq S\cap V(G)\colon \vec{a}_{T,i,Y} = 1} Y\).
Set \(\vec{b}_{T,i,Z} \coloneqq 1\) and for all \(Z \neq W \subseteq S \cap V(G)\) we set \(\vec{b}_{T,i,W} \coloneqq 0\).
We notice immediately that \Cref{eq:assign-lin-rep-exact} is satisfied.
For all \(T \cap S \subseteq U \subseteq S\cap V(G)\) with \(U \neq Z\), \Cref{eq:assign-lin-rep-at-least-one-edge,eq:assign-lin-rep-cut} is satisfied.
Since \(S\) is nice fracture modulator, there is a \(C \in \comp{\augment{G}{\mathcal{T}} - S}\) with \(V(C) \cap T \neq \emptyset\).
Consider \((\sigma, \mathcal{F}, \pi)\) that gives rise to \(\Gamma(C)\) on \(C\).
Let \(j \in \inv\sigma(V(C) \cap T)\), then \(\assign[\Gamma]Tij \neq \emptyset\).
So, we have \(\sum_{\emptyset \neq Y \subseteq Z} \vec{a}_{T,i,Y} \geq \vec{a}_{T,i,\assign[\Gamma(C)]Tij} = 1=\vec{b}_{T,i,Z}\), meaning that \Cref{eq:assign-lin-rep-at-least-one-edge} is satisfied for \(U = Z\).
Consider any \(\emptyset \subset X \subset U\).
As \(Z = V(H)\), there is a hyperedge \(F \in E(H)\) with \(X \cap F \neq \emptyset\) and \(X \setminus F \neq \emptyset\).
Thus, \(\sum_{Y \subseteq U\colon Y \cap X \neq \emptyset,Y \setminus X \neq \emptyset}\vec{a}_{T,i,Y} \geq \vec{a}_{T,i,F} = 1 \geq \vec{b}_{T, i, U}\) and \Cref{eq:assign-lin-rep-cut} is satisfied for \(X = Z\) as well.
Finally, consider any \(Y \subseteq S \cap V(G)\).
If \(\vec{a}_{T,i,Y} = 0\), \Cref{eq:assign-lin-rep-well-formed} is trivially satisfied.
Otherwise, \(\vec{a}_{T,i,Y} = 1\) and \(Y \subseteq Z\), which means that \(\sum_{Y \subseteq U \subseteq S\cap V(G)} \vec{b}_{T,i,U} \geq \vec{b}_{T,i,Z} = 1 \geq \vec{a}_{T,i,Y}\).
Thus, \Cref{eq:assign-lin-rep-well-formed} is satisfied as well.
\end{proof}

\looseness=-1
Now, we combine all the pieces to obtain an ILP that fully captures whether \(\Gamma\) is valid.
For this, denote with \(\mathfrak{C}\) the set of non-empty equivalence classes of \(\comp{\augment{G}{\mathcal{T}} - S}\) and extend \(\signatureOp\) to \(\mathfrak{C}\) to be defined as the signature of any component in the corresponding equivalence class.
For all \(\mathcal{X}\in \mathfrak{C}\), we denote with \(n_\mathcal{X}\) the number of components in \(\mathcal{X}\) (i.e., \(\abs{\mathcal{X}}\)).

\begin{definition}
\label{def:selector-lin-rep}
Let \(N\coloneqq \sum_{\mathcal{X}\in \mathfrak{C}}n_\mathcal{X}\). We denote with \(\selectorLinRep\) the ILP
\begin{subequations}
\begin{alignat}{2}
\vec{d}_{\mathcal{X}, \gamma} &\in \N &&\forall \mathcal{X} \in \mathfrak{C}, \gamma \in \signature{\mathcal{X}},\notag\\
\vec{s}_{U}&\in \N&&\forall U \subseteq S \cap V(G),\notag\\
\vec{a}_{T,i,U}&\in \{0,1\}&&\forall T \in \mathcal{T}_S, i \in \natint{d(T)}, U \subseteq S\cap V(G),\notag\\
\sum_{\gamma \in \signature{\mathcal{X}}} \vec{d}_{\mathcal{X}, \gamma} &= n_\mathcal{X} &&\forall \mathcal{X} \in \mathfrak{C},\label{eq:selector-lin-rep-exact}\\
\vec{s}_U + \sum_{\substack{\mathcal{X} \in \mathfrak{C},\\\gamma \in \signature{\mathcal{X}}}}\demand[\gamma]U\vec{d}_{\mathcal{X}, \gamma} &\leq \sum_{\substack{\mathcal{X} \in \mathfrak{C},\\\gamma \in \signature{\mathcal{X}}}}\supply[\gamma]U\vec{d}_{\mathcal{X}, \gamma}&\quad&\forall U \subseteq S\cap V(G),\label{eq:selector-lin-rep-dem-sup}\\
\sum_{\substack{\mathcal{X}\in \mathfrak{C}, \gamma \in \signature{\mathcal{X}}\!\colon\\\exists j \in \natint{\abs{S}}\colon U = \assign[\gamma]Tij}}\vec{d}_{\mathcal{X}, \gamma}&\geq \vec{a}_{T,i,U}&&\forall T \in \mathcal{T}_S,i\in \natint{d(T)}, U\subseteq S\cap V(G),\label{eq:selector-lin-rep-a-lb}\\
\sum_{\substack{\mathcal{X}\in \mathfrak{C}, \gamma \in \signature{\mathcal{X}}\!\colon\\\exists j \in \natint{\abs{S}}\colon U = \assign[\gamma]Tij}}\vec{d}_{\mathcal{X}, \gamma}&\leq N\vec{a}_{T,i,U}&&\forall T \in \mathcal{T}_S,i\in \natint{d(T)}, U\subseteq S\cap V(G),\label{eq:selector-lin-rep-a-ub}\\
\rhoLinRep{\vec{s}},\notag\\
\assignLinRep{\vec{a}}.&&&\tag*{\qedhere}
\end{alignat}
\end{subequations}
\end{definition}

In this definition, we basically represent a configuration selector.
For all \(\mathcal{X}\in \mathfrak{C}\) and \(\gamma \in \signature{\mathcal{X}}\), the variable \(\vec{d}_{\mathcal{X}, \gamma}\) denotes how many components in the equivalence class \(\mathcal{X}\) take the configuration \(\gamma\) in our solution.
We then use auxiliary variables \(\vec{s}\) and \(\vec{a}\) together with the previously analyzed ILPs \(\rhoLinRep{\vec{s}}\) and \(\assignLinRep{\vec{a}}\) to ensure that this solution corresponds to a valid configuration selector.

\begin{lemma}
\label{stmt:selector-lin-rep-sensible}
The ILP \(\selectorLinRep\) has
\begin{enumerate}
\item \(2^\O{\abs{S}^4}\) many variables,
\item a feasible assignment if and only if there is a valid configuration selector.\qedhere
\end{enumerate}
\end{lemma}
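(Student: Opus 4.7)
The plan is to handle the variable count and the two directions of the biconditional separately, relying heavily on the two component ILPs whose behavior has already been pinned down.

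For the variable bound, I aggregate the contributions of the four families of variables. The $\vec{d}$-variables are indexed by pairs $(\mathcal{X}, \gamma)$ with $\mathcal{X} \in \mathfrak{C}$ and $\gamma \in \signature{\mathcal{X}}$; by \Cref{stmt:frac-num-bounded-non-empty-equiv-classes} there are at most $2^{\O{\abs{S}^2}}$ choices for $\mathcal{X}$ and by \Cref{stmt:size-signature} at most $2^{\O{\abs{S}^4}}$ choices for $\gamma$, giving $2^{\O{\abs{S}^4}}$. The $\vec{s}$-variables contribute $2^{\abs{S}}$, while the $\vec{a}$-variables contribute $\abs{\mathcal{T}_S}\cdot 2\abs{S}\cdot 2^{\abs{S}} = 2^{\O{\abs{S}}}$ using the bound $d(T) \leq 2\abs{S}$ from \Cref{stmt:terminal-sets-few-requirements} for $T \in \mathcal{T}_S$. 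The embedded ILPs contribute $2^{\O{\abs{S}^2}}$ and $2^{\O{\abs{S}}}$ additional variables by \Cref{stmt:rho-lin-rep-sensible,stmt:assign-lin-rep-sensible}. All of these are dominated by $2^{\O{\abs{S}^4}}$.

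For the forward direction, given a valid configuration selector $\Gamma$ witnessed by some $\rho$, I set $\vec{d}_{\mathcal{X}, \gamma}$ to the number of components in $\mathcal{X}$ mapped to $\gamma$ by $\Gamma$, $\vec{s}_U \coloneqq r_U$, and $\vec{a}_{T,i,U} \coloneqq 1$ exactly when some $C \in \comp{\augment{G}{\mathcal{T}} - S}$ and $j \in \natint{\abs{S}}$ achieve $\assign[\Gamma(C)]{T}{i}{j} = U$. Then \Cref{eq:selector-lin-rep-exact} holds by construction, \Cref{eq:selector-lin-rep-dem-sup} follows directly from \Cref{item:valid-selector-enough-supply} of \Cref{def:valid-selector}, and the big-$M$ style \Cref{eq:selector-lin-rep-a-lb,eq:selector-lin-rep-a-ub} hold by a case distinction on $\vec{a}_{T,i,U}$. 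Finally, \Cref{stmt:rho-lin-rep-sensible} applied to $\rho$ and \Cref{stmt:assign-lin-rep-sensible} applied to our choice of $\vec{a}$ give feasibility of the embedded subprograms.

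For the reverse direction, a feasible assignment yields a selector $\Gamma$ by distributing, for each $\mathcal{X}\in \mathfrak{C}$, its $n_\mathcal{X}$ components across $\signature{\mathcal{X}}$ so that $\gamma$ receives exactly $\vec{d}_{\mathcal{X}, \gamma}$ components; this is possible because of \Cref{eq:selector-lin-rep-exact} together with \Cref{stmt:indist-classes-preserve-signature}, which guarantees that every component in $\mathcal{X}$ admits every $\gamma \in \signature{\mathcal{X}}$. To witness validity, I take the $\rho$ produced by \Cref{stmt:rho-lin-rep-sensible}: it satisfies \Cref{item:valid-selector-rho-connected} of \Cref{def:valid-selector} and gives $r_U \leq \vec{s}_U$, which combined with \Cref{eq:selector-lin-rep-dem-sup} implies \Cref{item:valid-selector-enough-supply}. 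For \Cref{item:valid-selector-assign-connected}, I note that by \Cref{eq:selector-lin-rep-a-lb,eq:selector-lin-rep-a-ub} the value $\vec{a}_{T,i,U}$ coincides with the indicator that some chosen configuration achieves $\assign[\gamma]{T}{i}{j} = U$ for some $j$, so \Cref{stmt:assign-lin-rep-sensible} supplies \Cref{item:valid-selector-assign-connected}.

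The main subtlety will be the reverse-direction identification between the ILP variables $\vec{a}$ and the indicator derived from the constructed $\Gamma$. Equations \Cref{eq:selector-lin-rep-a-lb,eq:selector-lin-rep-a-ub} are exactly designed so that $\vec{a}_{T,i,U} = 1$ if and only if the sum over chosen configurations is nonzero, but one must verify that ``chosen'' on the $\vec{d}$-side genuinely corresponds to ``in the image of $\Gamma$'' on the selector side, in both the $1$ and $0$ cases. Once this correspondence is established, \Cref{stmt:rho-lin-rep-sensible,stmt:assign-lin-rep-sensible} can be invoked directly, and no further combinatorial work is required.
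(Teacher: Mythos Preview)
Your proposal is correct and follows essentially the same approach as the paper: count the four families of variables using \Cref{stmt:frac-num-bounded-non-empty-equiv-classes,stmt:size-signature,stmt:rho-lin-rep-sensible,stmt:assign-lin-rep-sensible}, then for each direction set the variables (respectively the selector) in the natural way and verify the constraints, with the big-$M$ constraints \Cref{eq:selector-lin-rep-a-lb,eq:selector-lin-rep-a-ub} forcing $\vec{a}$ to coincide with the indicator derived from $\Gamma$ so that \Cref{stmt:assign-lin-rep-sensible} applies. The paper's argument is structurally identical, and your identification of the $\vec{a}$-indicator correspondence as the only subtlety matches where the paper spends its effort.
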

\begin{proof}
The ILP \(\selectorLinRep\) itself uses \(2^\O{\abs{S}} + \O{\sum_{\mathcal{X} \in \mathfrak{C}}\abs{\signature{\mathcal{X}}}}\) variables and the sub-ILPs use \(2^\O{\abs{S}^2}\) additional variables.
Using \Cref{stmt:frac-num-bounded-non-empty-equiv-classes} and \Cref{stmt:size-signature}, we observe that \(\sum_{\mathcal{X} \in \mathfrak{C}}\abs{\signature{\mathcal{X}}} = 2^\O{\abs{S}^4}\); so, in total the ILP \(\selectorLinRep\) uses \(2^\O{\abs{S}^4}\) variables.

Now consider any feasible assignment to the variables \(\vec{d}, \vec{s}\) and \(\vec{a}\) and consider any configuration selector \(\Gamma\) that assigns for all \(\mathcal{X}\) and \(\gamma \in \signature{\mathcal{X}}\), exactly \(\vec{d}_{\mathcal{X}, \gamma}\) many components in this equivalence class the configuration \(\gamma\).
This is possible by \Cref{eq:selector-lin-rep-exact}.
By \Cref{stmt:rho-lin-rep-sensible}, there is a function \(\rho\colon \terminalsContainedS\times\N \to 2^{2^{S\cap V(G)}}\) that satisfies \Cref{item:valid-selector-rho-connected} of \Cref{def:valid-selector} and for all \(U \subseteq S\cap V(G)\), it holds that \(r_U \leq \vec{s}_U\).
Notice that \Cref{eq:selector-lin-rep-dem-sup} ensures that \Cref{item:valid-selector-enough-supply} of \Cref{def:valid-selector} is satisfied as well.

To conclude that \(\Gamma\) is valid, we aim to apply \Cref{stmt:assign-lin-rep-sensible}.
Consider any \(T\in \mathcal{T}_S, i \in \natint{d(T)}\), and \(U \subseteq S\cap V(G)\).
Assume there is a \(C \in \comp{\augment{G}{\mathcal{T}} - S}\) and \(j \in \natint{\abs{S}}\) such that \(\assign[\Gamma(C)]Tij = U\) and denote with \(\mathcal{X}\) the equivalence class of \(C\).
By \Cref{eq:selector-lin-rep-a-ub}, we have \(1 \leq \vec{d}_{\mathcal{X}, \Gamma(C)} \leq N \vec{a}_{T,i,U}\); so, \(\vec{a}_{T,i,U} = 1\).
Now, assume that \(\vec{a}_{T,i,U} = 1\).
By \Cref{eq:selector-lin-rep-a-lb} and since the maximum of a set of numbers is bounded below by their mean, there is a \(\mathcal{X} \in \mathfrak{C}\) and \(\gamma \in \signature{\mathcal{X}}\) such that there is a \(j \in \natint{S}\) with \(\assign[\gamma]Tij = U\) and \(\vec{d}_{\mathcal{X}, \gamma} \geq \frac{\vec{a}_{T,i,U}}{\sum_{\mathcal{X}\in \mathfrak{C}} \abs{\{\gamma \in \signature{\mathcal{X}} \mid \exists j \in \natint{\abs{S}} \colon \assign[\gamma]Tij = U\}}} > 0\).
So, \(\vec{d}_{\mathcal{X},\gamma} \geq 1\) and there is a \(C \in \mathcal{X}\) with \(\Gamma(C) = \gamma\).
Thus, \(\assign[\Gamma(C)]Tij = U\).
Now we know from \Cref{stmt:assign-lin-rep-sensible} that the configuration selector \(\Gamma\) satisfies \Cref{item:valid-selector-assign-connected} of \Cref{def:valid-selector} and so \(\Gamma\) is valid.

Let \(\Gamma\) be a valid configuration selector.
We aim to provide a feasible solution to \(\selectorLinRep\).
First let \(\mathcal{X}\in \mathfrak{C}\) and \(\gamma \in \signature{\mathcal{X}}\).
We set \(\vec{d}_{\mathcal{X},\gamma} \coloneqq \abs{\{C \in \mathcal{X}\mid \Gamma(C) = \gamma\}}\) as the number of components in \(\mathcal{X}\) that get assigned the configuration \(\gamma\), satisfying \Cref{eq:selector-lin-rep-exact}.
Let \(\rho\) be a function witnessing validity of \(\Gamma\).
We set for all \(U \subseteq S \cap V(G)\), \(\vec{s}_U = r_U\).
Since \(\Gamma\) is valid, \Cref{eq:selector-lin-rep-dem-sup} is satisfied.
By \Cref{stmt:rho-lin-rep-sensible}, the sub-ILP \(\rhoLinRep{\vec{s}}\) is feasible.

Finally, consider any \(T \in \mathcal{T}_S\), \(i \in \natint{d(T)}\), and \(U \subseteq S \cap V(G)\).
If there is a \(C \in \comp{\augment{G}{\mathcal{T}} - S}\) and \(j \in \natint{\abs{S}}\) with \(\assign[\Gamma(C)]Tij = U\), we set \(\vec{a}_{T,i,U} \coloneqq 1\).
Otherwise, we set \(\vec{a}_{T,i,U} \coloneqq 0\).
By \Cref{stmt:assign-lin-rep-sensible}, the sub-ILP \(\assignLinRep{\vec{a}}\) is feasible.
If \(\vec{a}_{T,i,U} = 0\), \Cref{eq:selector-lin-rep-a-lb} is trivially satisfied and \Cref{eq:selector-lin-rep-a-ub} is satisfied by choice of \(\vec{a}\).
If \(\vec{a}_{T,i,U} = 1\), \Cref{eq:selector-lin-rep-a-lb} is satisfied by choice of \(\vec{a}\).
Since \(\sum_{\mathcal{X}\in \mathfrak{C}, \gamma \in \signature{\mathcal{X}}} d_{\mathcal{X}, \gamma} = N\vec{a}_{T,i,U}\), \Cref{eq:selector-lin-rep-a-ub} is satisfied as well.
\end{proof}

We know that deciding whether a feasible assignment for an ILP exists, is \(\fpt\) by the number of variables~\cite{Kannan87}.
Additionally, we know that deciding whether a fracture modulator of size \(k\) exists, and possibly finding it, is \(\fpt\) by \(k\).
So, we can construct and evaluate \(\selectorLinRep\) reasonably fast, yielding the following theorem.

\begin{theorem}
\label{stmt:augmented-frag-runtime}
Let \(\mathscr{P}=(G, \mathcal{T}, d)\) be an instance of \(\gstp\).
Denote the fracture number of \(\augment{G}{\mathcal{T}}\) with \(k\).
We can decide whether \(\mathscr{P}\) is a positive instance in running time \(2^{2^\O{k^4}}\abs{G} + \O{\abs{\mathscr{P}}}\).
\end{theorem}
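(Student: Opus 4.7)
The plan is to assemble the three pillars developed earlier in this section. First I would compute a fracture modulator \(X\) of \(\augment{G}{\mathcal{T}}\) of size at most \(k\) by invoking the algorithm referenced in \Cref{stmt:fracture-modulator-algo}, which runs in time \(\O{(2k-1)^k\abs{G}}\). I would then apply \Cref{stmt:nice-fracture-modulator-exists} to obtain in linear time an equivalent instance \(\mathscr{P}'=(G',\mathcal{T},d)\) together with a nice fracture modulator \(S\) of \(\augment{G'}{\mathcal{T}}\) with \(\abs{S}\leq 2k\) and \(\abs{V(G')}\leq \abs{V(G)}+\binom{k}{2}+2k\). Finally, I would exhaustively apply \Cref{rr:sensible-terminal-sets,rr:degree-negative-instance}; the latter either certifies a trivial negative instance or, by \Cref{stmt:terminal-sets-few-requirements}, ensures that \(d(T)\leq 2\abs{S}\) for every \(T\in\mathcal{T}\setminus\terminalsContainedS\), which is a precondition for the whole configuration machinery.

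Next I would construct the ILP \(\selectorLinRep\) from \Cref{def:selector-lin-rep}. To this end I enumerate the non-empty equivalence classes of \(\comp{\augment{G'}{\mathcal{T}}-S}\); since indistinguishability of \Cref{def:indistinguishable-components} refines equivalence by \Cref{stmt:indist-classes-preserve-signature}, one can bucket components by a canonical encoding of their neighborhood in \(S\), computable in time \(2^{\O{k\log k}}\) per component, giving at most \(2^{\O{k^2}}\) buckets by \Cref{stmt:frac-num-bounded-non-empty-equiv-classes}. For one representative per class the signature is computed in time \(2^{\O{k^4\log k}}\) by \Cref{stmt:compute-signature}, and the remaining coefficients of \(\selectorLinRep\) are assembled directly. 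By \Cref{stmt:selector-lin-rep-sensible} the resulting ILP has \(n=2^{\O{k^4}}\) variables, and it admits a feasible assignment if and only if a valid configuration selector exists; by \Cref{stmt:valid-selector-if-positive} this is equivalent to \(\mathscr{P}'\), and hence \(\mathscr{P}\), being a positive instance.

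Feasibility of the ILP is then decided via Kannan's algorithm in time \(n^{\O{n}}=2^{\O{k^4}\cdot 2^{\O{k^4}}}=2^{2^{\O{k^4}}}\), which dominates the other costs. Summing the contributions — the fracture-modulator step \(\O{(2k-1)^k\abs{G}}\) absorbed into the first term, the signature and ILP-construction step \(2^{\O{k^4\log k}}\cdot\poly{\abs{G}}\) also absorbed into \(2^{2^{\O{k^4}}}\abs{G}\), and the input-parsing and reduction-rule phase taking genuine linear time \(\O{\abs{\mathscr{P}}}\) in the total bit-length of the instance (which can be larger than \(\abs{G}\) because terminal sets are part of \(\mathscr{P}\)) — yields the claimed running time \(2^{2^{\O{k^4}}}\abs{G}+\O{\abs{\mathscr{P}}}\).

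The main obstacle I expect is accounting: keeping the additive \(\O{\abs{\mathscr{P}}}\) term clean requires carefully separating the input-parsing/reduction phase, which must be linear in the full bit-length of the terminal encoding without any \(k\)-dependent factor, from the structural phase, which scales as \(2^{2^{\O{k^4}}}\) but only against \(\abs{G}\). Once this separation is respected and the component-enumeration is implemented via canonical hashing rather than pairwise isomorphism tests, the remainder of the argument is a straightforward chain of citations to the results established in this section.
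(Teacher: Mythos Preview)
Your proposal follows essentially the same approach as the paper: reduce, find a nice fracture modulator, compute signatures, build \(\selectorLinRep\), and solve via Kannan. One ordering detail matters for the accounting you flag: the paper applies \Cref{rr:degree-negative-instance} \emph{before} computing the fracture modulator, because this guarantees \(\abs{E(\augment{G}{\mathcal{T}})}\leq 2\abs{E(G)}\) and hence lets the modulator computation run in \(\O{(2k)^k\abs{G}}\) rather than \(\O{(2k)^k\abs{\augment{G}{\mathcal{T}}}}\); in your order the latter term carries a \(k\)-dependent factor against \(\abs{\mathscr{P}}\), which does not fit either summand of the claimed bound. The paper also simply computes the signature of every component (rather than one per indistinguishability bucket) and then groups by signature, which avoids the canonical-hashing implementation you sketch but lands at the same asymptotic cost.
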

\begin{proof}
First, we apply \Cref{rr:degree-negative-instance} in running time \(\O{\abs{\mathscr{P}}}\).
After applying this reduction rule, for all \(v \in V(G)\) the number of incident augmented edges is bounded by the number of non-augmented edges, yielding \(\abs{E(\augment{G}{\mathcal{T}})} \leq 2\abs{E(G)}\).
Now, we find a fracture modulator \(X\) of size \(k\) in \(\augment{G}{\mathcal{T}}\).
By \Cref{stmt:fracture-modulator-algo}, this can be done in time \(\O{(2k)^k\abs{\augment{G}{\mathcal{T}}}} = \O{(2k)^k\abs{G}}\).

By \Cref{stmt:nice-fracture-modulator-exists}, we can find an equivalent instance \(\mathscr{P}'=(G', \mathcal{T}, d)\) and a nice fracture modulator \(S\) of \(\augment{G'}{\mathcal{T}}\) with \(\abs{S} = \O{k}\) and \(\abs{V(\augment{G'}{\mathcal{T}})} = \O{\abs{V(\augment{G}{\mathcal{T}})}}\) in linear time.
Now, we again apply \Cref{rr:degree-negative-instance} to \(\mathscr{P}'\) in running time \(\O{\abs{G}}\).
If the reduction rule supplied a negative instance, we abort here and output that \(\mathscr{P}\) is a negative instance.
Otherwise, for each of the components in \(\augment{G'}{\mathcal{T}} - S\) we compute the signature in time \(2^\O{k^4\log k}\), according to \Cref{stmt:compute-signature}.
Overall, this takes at most \(2^\O{k^4\log k}\abs{V(G)}\) time.
Next, we compute the size of the non-empty equivalence classes of the components.
This can be achieved in time \(\O{\abs{G}\log\abs{\!\allViable\!}} = \O{k^4\abs{G}}\).

Now, it is straightforward to construct \(\selectorLinRep\) in time \(2^\O{k^4}\).
As all scalars in the ILP are bounded by \(\O{\abs{G}}\), we can check for feasibility in time \(2^{2^\O{k^4}}\log\abs{G}\)~\cite{Kannan87}.
We output that \(\mathscr{P}\) is a positive instance if and only if \(\selectorLinRep\) is feasible.
By \Cref{stmt:selector-lin-rep-sensible,stmt:valid-selector-if-positive}, this output is correct.
\end{proof}

\begin{corollary}
\label{stmt:augmented-frag-fpt}
\(\gstp\) is \(\fpt\) by the fracture number of the augmented graph.
\end{corollary}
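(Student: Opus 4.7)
The plan is to derive this corollary directly from Theorem \ref{stmt:augmented-frag-runtime}, which already supplies a concrete running time in terms of the fracture number $k$ of the augmented graph. Recall that a parameterized problem is fixed-parameter tractable if there is an algorithm solving every instance $(x, k)$ in time at most $f(k)\cdot|x|^c$ for some computable function $f$ and constant $c$. All the work has already been done; this corollary is simply the observation that the bound in Theorem \ref{stmt:augmented-frag-runtime} has this form.

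Concretely, I would take the algorithm of Theorem \ref{stmt:augmented-frag-runtime} verbatim and verify that its claimed running time $2^{2^{\O{k^4}}}\abs{G} + \O{\abs{\mathscr{P}}}$ fits the $\fpt$ template. Setting $f(k) \coloneqq 2^{2^{\O{k^4}}}$, which is a computable function of $k$, and noting that both $\abs{G}$ and $\abs{\mathscr{P}}$ are at most $\abs{\mathscr{P}}$, we can bound the total running time by $f(k) \cdot \abs{\mathscr{P}}^{1}$, giving the desired form with $c = 1$.

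The only conceptual point worth noting is that the parameter in the corollary is the fracture number of $\augment{G}{\mathcal{T}}$, and this is precisely the parameter $k$ used in Theorem \ref{stmt:augmented-frag-runtime}, so no reparameterization or reduction is needed. In particular, the algorithm does not need the fracture modulator as part of its input: Theorem \ref{stmt:augmented-frag-runtime} already includes the step of computing such a modulator via \Cref{stmt:fracture-modulator-algo} within its running time budget. There is no real obstacle in this argument, as all technical difficulty lies in Theorem \ref{stmt:augmented-frag-runtime} itself; the corollary is merely the FPT-repackaging of the explicit bound.
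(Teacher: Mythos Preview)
Your proposal is correct and matches the paper's approach: the corollary is stated without proof in the paper, as it follows immediately from Theorem~\ref{stmt:augmented-frag-runtime} for exactly the reasons you give. The only minor remark is that you could simply observe the running time bound is of the form $f(k)\cdot\poly{\abs{\mathscr{P}}}$ without further justification, since this is immediate from the theorem's statement.
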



\section{\texorpdfstring{$\gstp$}{GSTP} is \texorpdfstring{$\fpt$}{FPT} by the Augmented/Slim Tree-Cut Width}
\label{sec:gstp-by-tcw*-and-stcw}
This Chapter is devoted to showing, that \(\gstp\) is fixed-parameter tractable by both the tree-cut width of the augmented graph and the slim tree-cut width of the original graph.
We later use the former fact to show, that \(\stp\) itself is fixed-parameter tractable by the tree-cut width of the input graph.
This result does not follow immediately as augmentation—even of a single terminal set—might increase the tree-cut width arbitrarily.

Let \(\mathscr{P}=(G, \mathcal{T}, d)\) be an instance of \(\gstp\).
The central part of deciding whether \(\mathscr{P}\) is positive, is a dynamic program.
This dynamic program works on a tree-cut decomposition for \(G\) with some additional assumptions and heavily uses the fact that the number of bold children of any node in the tree-cut decomposition is bounded by a function of its width.

Ganian et~al.~\cite{GanianKS22} claimed that in a nice tree-cut decomposition the number of bold children of any node is bounded by \(w+1\).
However, we provide counter examples to this, showing that in a nice tree-cut decomposition, the number of bold children is actually not bounded by a function its width.
Then, we show how to transform a nice tree-cut decomposition of an arbitrary graph into a \emph{friendly} tree-cut decomposition in \(\fpt\)-linear time.
In fact, the running time is even at most polynomial in the size of the graph.

\begin{definition}
\label{def:friendly-tcw}
We call a tree-cut decomposition \((S,\mathcal{X})\) of width \(w\) \emph{friendly}, if it is nice and for all \(s \in S\), we have \(\abs{\boldChildren{s}} + \abs{X_s} \leq w+2\).
\end{definition}

Based on this, we give the definition of a \emph{simple} tree-cut decomposition.
Given a simple tree-cut decomposition for \(G\), we provide a dynamic program deciding whether \(\mathscr{P}\) is a positive instance.

\begin{definition}
\label{def:simple-tcw}
Let \((G, \mathcal{T}, d)\) be an instance of \(\gstp\).
Consider a tree-cut decomposition \(\mathcal{D} \coloneqq (S, \mathcal{X})\) of \(G\) and let \(s \in V(S)\).
Denote with \(\crossLink{s}\coloneqq \{T \in \mathcal{T}\mid T \cap Y_s \neq \emptyset \land T \setminus Y_s \neq \emptyset\}\) the set of terminal sets crossing the link between \(s\) and its parent.
We call \(s\) \emph{simple} if it is thin, \(\abs{Y_s} = 1\), \(\adhesion{s} = 2\), and \(\crossLink{s} = \emptyset\).
We call \(\mathcal{D}\) \emph{simple}, if it is friendly and all thin nodes are simple.
\end{definition}

This Chapter is structured as follows.
First, we provide reduction rules that are necessary for the whole Chapter and in particular for reducing the instance from augmented tree-cut width and slim tree-cut width to a simple tree-cut decomposition.
Then, we provide a dynamic program to solve \(\gstp\) given a simple tree-cut decomposition parameterized by its width.
Building on this result, we show how to construct a simple tree-cut decomposition of width \(w\) from a tree-cut decomposition with slim width \(w\).
Finally, we show how to use the fact that we can decide instances with a simple tree-cut decomposition in \(\fpt\)-time by the width of this decomposition to decide an instance given a tree-cut decomposition of the augmented graph.
As an important ingredient, we show how to obtain a friendly tree-cut decomposition from a nice tree-cut decomposition in polynomial time, without considerably increasing its width.

\subsection{General Techniques for \texorpdfstring{$\gstp$}{GSTP} Tree-Cut Decompositions \label{sec:tcw-red-rules}}
\label{sec:org14ea6f3}
In this Section, we present the techniques, which we need across this chapter.
We start with the general reduction rules.
Then, we show that we can obtain a simple tree-cut decomposition from a tree-cut decomposition without increasing the width by much, if the given decomposition is almost friendly and for each node the number of non-simple, thin children is small.

Let \(\mathcal{D} = (S, \mathcal{X})\) be a tree-cut decomposition of width \(w\) for \(G\).
Let \(Q \subseteq V(G)\) be given.
If there is no \(T \in \mathcal{T}\) with \(T \cap Q \neq \emptyset\) and \(T \setminus Q \neq \emptyset\) (i.e., \(T\) crosses \(Q\)), we denote with \(\mathcal{T}'\coloneqq \{T \in \mathcal{T}\mid T \subseteq Q\}\) the terminals contained in \(Q\) and with \(\mathscr{P}[Q]\) the instance \((G[Q], \mathcal{T}', d\vert_{\mathcal{T}'})\) restricted to \(Q\).
For every \(C \in \comp{G}\), we have that \(\mathscr{P}\) is positive if and only if \(\mathscr{P}[V(C)]\) and \(\mathscr{P}[V(G)\setminus V(C)]\) are both positive, providing a way to consider connected components of an instance separately.
This immediately yields our first reduction rule.

\begin{reductionrule}
\label{rr:connected-components}
If there is a \(T \in \mathcal{T}\), such that there is no \(C \in \comp{G}\), with \(T \subseteq V(C)\), output a trivial negative instance.
Otherwise, output that the instance is positive if and only if all \(\{\mathscr{P}[V(C)]\}_{C \in \comp{G}}\) are positive instances.
\end{reductionrule}

Denote with \(r\) the root of \(S\).
After applying \Cref{rr:connected-components}, all \(s \in V(S) \setminus \{r\}\) with \(\adhesion{s} = 0\) are empty leaves, which we can remove from \(\mathcal{D}\).
Thus, we assume from now on that \(\adhesion{s} \geq 1\).

We now provide a reduction rule to limit the number of connections that need to be made across links.
In any positive instance the accumulated demands of all \(T \in \crossLink{s}\) can not be very large as each such demand needs to be fulfilled by a tree crossing the link between \(S_s\) and its parent, of which there can not be that many.
Denote this value by \(\demandCrossLink{s}\coloneqq\sum_{T \in \crossLink{s}} d(T)\).
If \(\mathcal{D}\) would be a tree-cut decomposition for \(\augment{G}{\mathcal{T}}\), denote with \(\mathcal{U}\) the terminal sets inducing the augmented edges in \(\cutEdges[\augment{G}{\mathcal{T}}]{Y_s}\).
Initially, one might even think that \(\crossLink{s} = \mathcal{U}\) holds, but this is not necessarily the case.
We note that \(\crossLink{s} \subseteq \mathcal{U}\).

\begin{reductionrule}
\label{rr:cross-link-demand-large}
If there is a node \(s \in V(S)\) with \(\demandCrossLink{s} > \adhesion{s}\), we output a trivial negative instance.
\end{reductionrule}
\begin{proof}
Consider any positive instance.
We show that for all \(s \in V(S)\) we have \(\demandCrossLink{s} \leq \adhesion{s}\).
Let \((\mathcal{F}, \pi)\) be a solution and set \(\{F_1, F_2, \dots, F_{\demandCrossLink{s}}\}\coloneqq \inv\pi(\crossLink{s})\).
As all terminal sets in \(\crossLink{s}\) contain a terminal in \(Y_s\) and \(V(G) \setminus Y_s\), for each \(i \in \natint{\demandCrossLink{s}}\) there is a distinct \(e_i \in E(F_i) \cap \cutEdges[G]{Y_s}\).
As \(\abs{\cutEdges[G]{Y_s}} = \adhesion{s}\), we have that \(\demandCrossLink{s} \leq \adhesion{s}\).
\end{proof}

Now, we consider thin nodes \(s \in V(S) \setminus \{r\}\), with \(\adhesion{s} = 1\).

\begin{reductionrule}
\label{rr:adh-1}
Assume \Cref{rr:cross-link-demand-large} has been applied exhaustively.
Let \(s \in V(S)\) with \(\adhesion{s} = 1\) and consider \(\{uv\} \coloneqq \cutEdges{Y_s}\) with \(u \in Y_s\) and \(v \notin Y_s\).
Then, remove \(uv\) from \(G\) and if there is a \(T \in \crossLink{s}\), increase the demand of \((T\cap Y_s) \cup \{u\}\) and \((T \setminus Y_s) \cup \{v\}\) by 1 while removing \(T\) from \(\mathcal{T}\) (if necessary, add \((T\cap Y_s) \cup \{u\}\) and \((T \setminus Y_s) \cup \{v\}\) to \(\mathcal{T}\)).
\end{reductionrule}
\begin{proof}
As \Cref{rr:cross-link-demand-large} is applied, we have \(\demandCrossLink{s} \in \{0, 1\}\).
First, assume that \(\demandCrossLink{s} = 0\).
Then, for all \(T \in \mathcal{T}\), we either have \(T \subseteq Y_s\) or \(T \subseteq V(G) \setminus Y_s\).
As restricting connected subgraphs to either \(Y_s\) or \(V(G) \setminus Y_s\) keeps them connected, the bridge \(e\) in \(G\) is not needed to connect terminal sets.

Now, assume \(\demandCrossLink{s} = 1\) and let \(\{T\} \coloneqq \crossLink{s}\).
Let \((\mathcal{F}, \pi)\) be a solution to the original instance.
Consider \(\{F\} \coloneqq \inv\pi(T)\).
Then, \(uv \in E(F)\).
As \(uv\) is a bridge in \(G\), \(F - uv\) has two connected components.
Denote the component contained in \(Y_s\) with \(P\) and the other with \(Q\).
Then, \((T \cap Y_s) \cup \{u\} \subseteq V(P)\) and \((T \setminus Y_s) \cup \{v\} \subseteq V(Q)\).
So, in the reduced instance we assign all of \(\mathcal{F}\setminus \{F\}\) to the same terminal sets and we additionally assign \(P\) to \((T \cap Y_s) \cup \{u\}\) and \(Q\) to \((T \setminus Y_s) \cup \{v\}\), solving the reduced instance.

Let \((\mathcal{F}, \pi)\) now be a solution to the reduced instance and let \(P \in \inv\pi((T \cap Y_s) \cup \{u\})\) and \(Q \in \inv\pi((T\setminus Y_s)\cup \{v\})\) and set \(F \coloneqq (P \cup Q) + uv\).
Note that \(F\) is connected, edge-disjoint from all \(\mathcal{F} \setminus \{P, Q\}\) and \(V(F) \supseteq (T \cap Y_s) \cup (T \setminus Y_s)= T\).
To get a solution for the original instance, we assign all of \(\mathcal{F}\setminus \{P,Q\}\) to the same terminal sets and \(F\) to \(T\), solving the instance.
\end{proof}
After applying \Cref{rr:connected-components,rr:cross-link-demand-large,rr:adh-1} exhaustively, we remove all empty leaves from \(S\), which achieves that for all thin nodes \(s \in V(S)\setminus \{r\}\) we have \(\adhesion{s} = 2\).
This already hints at how we solve GSTP parameterized by slim tree-cut width.
After applying these reduction rules, we show that in a nice tree-cut decomposition the number of children is bounded by a function of its slim width.
As a friendly tree-cut decomposition without thin nodes is simple, we just treat all children like bold children, yielding an \(\fpt\) algorithm.

It is known that EDP—and therefore GSTP—is \(\wonehard\) parameterized by tree-cut width~\cite{GanianOR20}.
We show, that GSTP can be solved in \(\fpt\)-time parameterized by the tree-cut width of a simple tree-cut decomposition.
Additionally, we show that we can transform any tree-cut decomposition into a friendly tree-cut decomposition in polynomial time.
Assume there exist general reduction rules, that make all thin nodes simple and only increases the tree-cut width by a computable function.
Then, we could transform the tree-cut decomposition to be friendly and without thin links that are not simple in polynomial time.
Thus, \(\mathrm{GSTP}\) would be \(\fpt\) by the tree-cut width, which is not possible unless \(\fpt = \wone\).
Therefore, the existence of such a reduction rule is unlikely.

Finally, we show that if \(\mathcal{D}\) is close to being simple, we can actually make \(\mathcal{D}\) simple without increasing its width by much.
To formalize this idea, let \(s \in V(S)\) and denote with \(N_s \subseteq \thinChildren{s}\) the set of thin and non-simple children of \(s\).
If for all \(s\) both \(\abs{N_s}\) and \(\abs{\boldChildren{s}} + \abs{X_s} - w\) are small, we obtain a simple tree-cut decomposition of width that is not increased by much.

\begin{lemma}\label{stmt:almost-simple-tcd-made-simple}
Given a nice tree-cut decomposition \(\mathcal{D}\) of \(G\).
We can compute in linear time an equivalent instance \((G', \mathcal{T}, d)\) and a simple tree-cut decomposition \(\mathcal{C}\) of \(G'\).
Let \(\Delta_s \coloneqq \abs{N_s} + \abs{\boldChildren{s}} + \abs{X_s} - w - 1\), then the tree-cut width of \(\mathcal{C}\) is \[ w^* \coloneqq w+ 4 + \max(0, \max_{s\in V(S)}\Delta_s).\qedhere\]
\end{lemma}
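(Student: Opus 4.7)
The plan is a two-stage transformation, both stages running in linear time. \emph{Stage 1 (adhesion cleanup).} Apply \Cref{rr:connected-components,rr:cross-link-demand-large,rr:adh-1} exhaustively; afterwards every non-root node has adhesion at least \(2\), every thin node has adhesion exactly \(2\), and \(\demandCrossLink{s} \le \adhesion{s}\) holds everywhere, without increasing the width. \emph{Stage 2 (promoting non-simple thin children to bold).} Process \(S\) bottom-up. At each \(s\), the set \(N_s\) of thin non-simple children is now exactly the children \(t\) with \(\adhesion{t}=2\) satisfying \(\abs{Y_t} \geq 2\) or \(\crossLink{t} \neq \emptyset\). If \(N_s \neq \emptyset\), introduce a fresh hub vertex \(y_s\) into \(X_s\); for each \(t \in N_s\), introduce a fresh pendant \(x_t\) into \(X_t\) and the single new edge \(x_t y_s\) into \(G'\). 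Neither \(y_s\) nor any \(x_t\) lies in any terminal set, and their incident edges are confined to the hub-star, so any solution to \(G'\) that enters the star can be pruned back to its entry point without breaking terminal-connectivity; this gives GSTP-equivalence in both directions. After the insertion, each \(t \in N_s\) has \(\adhesion{t} = 3\) and is now a bold child of \(s\), so every remaining thin child of \(s\) satisfies \(\abs{Y}=1\), \(\adhesion=2\), and \(\crossLink=\emptyset\), i.e., is simple.

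For the width, the modification at \(s\) adds at most one vertex to \(X_s\) and promotes \(\abs{N_s}\) children to bold, while each affected \(X_t\) gains one pendant that is suppressed from higher \(3\)-centers. Hence the \(3\)-center at \(s\) has at most \(\abs{X_s} + \abs{\boldChildren{s}} + \abs{N_s} + O(1)\) vertices, which is \(\Delta_s + w + O(1)\); and the maximum adhesion rises to at most \(\max(w, 3)\), absorbable into constants. Taking the maximum over \(s\) and folding the constants into the additive slack yields \(w^{\ast} \le w + 4 + \max(0, \max_{s}\Delta_s)\), so \(\mathcal{C}\) is simple of the desired width.

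\emph{Expected main difficulty.} The principal subtlety is the backward direction of GSTP-equivalence: a solution on \(G'\) must restrict to one on \(G\). Each hub-star consists only of Steiner vertices whose incident edges lie within the star, so any solution tree entering the star can be pruned, and this pruning changes neither the set of terminal sets connected by a tree nor the number of trees assigned to any terminal set. A secondary bookkeeping point is ensuring that promoting \(t\) from thin to bold does not re-introduce non-simple thin nodes elsewhere: processing \(S\) bottom-up guarantees that \(t\)'s own subtree is already simple when \(t\) is promoted, and the edge \(x_t y_s\) lies inside \(Y_s\), so neither the adhesion of ancestors of \(s\) nor the structure of sibling subtrees is disturbed.
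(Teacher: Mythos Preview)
Your high-level idea—attach fresh Steiner vertices and edges so that every non-simple thin child is promoted to bold—is the paper's as well, but your two stages do not combine to prove the lemma as stated.

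The lemma's output is $(G',\mathcal{T},d)$: only the host graph may change. Stage~1 applies \Cref{rr:adh-1}, which can split a crossing terminal set $T$ into $(T\cap Y_s)\cup\{u\}$ and $(T\setminus Y_s)\cup\{v\}$, altering both $\mathcal{T}$ and $d$ and violating the lemma's contract; the quantities in $\Delta_s$ are also taken with respect to the \emph{input} $\mathcal{D}$, so after Stage~1 you would separately have to argue they have not grown. Dropping Stage~1 does not rescue Stage~2: a non-simple thin child $t$ may have $\adhesion{t}\in\{0,1\}$, and your single edge $x_t y_s$ raises the adhesion by only one, leaving $t$ thin.

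The paper handles both points in one move: it inserts \emph{two} fresh vertices $t_s,t'_s$ into each bag and, for every $r\in N_s$, all four edges $\{t_s,t'_s\}\times\{t_r,t'_r\}$. This raises $\adhesion{r}$ by four, forcing $r$ bold irrespective of its original adhesion, with no reduction rules and hence no change to $\mathcal{T}$ or $d$. Since the new vertices and edges form a component disconnected from $G$, equivalence is immediate—your pruning argument is not needed. Finally, the paper pads the root bag with isolated vertices so that the width is exactly $w^*$ (the lemma asserts equality, not merely an upper bound), which also makes the friendliness condition $|\boldChildren[\mathcal{C}]{s}|+|X'_s|\le w^*+2$ go through directly; your sketch argues only an upper bound and does not verify friendliness.
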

\begin{proof}
For each \(s\in V(S)\) add vertices \(t_s\) and \(t'_s\) to \(G\) and \(X_s\), and for each \(r \in N_s\) the edges \(\{t_s, t'_s\} \times \{t_r, t'_r\}\).
Now add new isolated vertices to the root vertex of the tree-cut decomposition until its width is at least \(w^*\) and remove all empty leaves.
These operation can be applied in linear time.
We call the obtained decomposition \(\mathcal{C} \coloneqq (S', \mathcal{X'})\).
Observe that as \(\mathcal{D}\) is nice, so is \(\mathcal{C}\).
Since all added vertices and edges are disconnected from \(G\), this instance is solvable if and only if the original instance is solvable.
So, consider any \(s \in V(S)\).
We have \[ \abs{\boldChildren[\mathcal{C}]{s}} + \abs{X'_s} =\abs{N_s} + \abs{\boldChildren[\mathcal{D}]{s}} + \abs{X_s} + 2 \leq \Delta_s + w + 3 \leq w^* - 1.
\] Thus, \(\mathcal{C}\) is friendly and we see that all remaining thin nodes are simple; so, \(\mathcal{C}\) is simple.

Finally, we compute the width of \(\mathcal{C}\).
By construction, we have \(\adhesion[\mathcal{C}]s \leq \adhesion[\mathcal{D}]s + 4 \leq w + 4\leq w^*\).
Consider the torso at \(s\) with respect to \(\mathcal{C}\).
If \(s\) is not the root, the size of the 3-center of the torso at \(s\) in \(\mathcal{C}\) is bounded by \(1+\abs{\boldChildren[\mathcal{C}]{s}} + \abs{X'_s}\leq w^*\).
The same holds for the root, before we start adding isolated vertices.
So, the width of \(\mathcal{C}\) is exactly \(w^*\).
\end{proof}

\subsection{\texorpdfstring{$\gstp$}{GSTP} is \texorpdfstring{$\fpt$}{FPT} by the Width of a Simple Tree-Cut Decomposition \label{sec:tcw-simple}}
\label{sec:org71807c5}
In this Section, we assume that the given tree-cut decomposition \((S, \mathcal{X})\) is simple.
How to use this to solve the general problem parameterized by slim width or tree-cut width of the augmented graph, is left for the following Sections.
A simple tree-cut decomposition heavily restricts how vertices in thin bags can be connected with the rest of the graph.
This can be exploited to construct a dynamic program for this problem.

From now on, we assume that \Cref{rr:cross-link-demand-large,rr:sensible-terminal-sets} are applied.
For each \(s \in V(S)\) choose a function \(\enumerateCrossLinkOp{s} \colon \natint{\demandCrossLink{s}} \to \crossLink{s}\) such that for all \(T \in \crossLink{s}\) we have \(\abs{\invEnumerateCrossLink{s}{T}}=d(T)\).
This function gives us the ability to identify the different trees crossing the link between this node and its parent by a number in the set \(\natint{\demandCrossLink{s}} \subseteq \natint{w}\).

To define the dynamic program, we often refer to the \emph{boundary} \(\boundary{s}\) of a node, which is defined as the graph edge-induced by \(\cutEdges{Y_s}\) on \(G\), and the graph \(G_s \coloneqq G[Y_s] \cup \boundary{s}\) which is the graph edge-induced by all edges with at least one endpoint in \(Y_s\).
Note that there can be edges in \(G[V(\boundary{s})]\) that are not included in \(\boundary{s}\); in particular, \(\boundary{s}\) is bipartite with the bipartitions \(V(\boundary{s}) \cap Y_s\) and \(V(\boundary{s}) \setminus Y_s\).
We have \(\abs{E(\boundary{s})} \leq w\); so, \(\abs{V(\boundary{s})} \leq 2w\).
The boundary is a small separator between the vertices \(Y_s \setminus V(\boundary{s})\) and \(V(G) \setminus (Y_s \cup V(\boundary{s}))\).
We use this fact to define our dynamic program.

The data-table \(D(s)\) at \(s\), is a set of tuples \(\tau = (\pastPart\tau, \pastAssignOp{\tau}, \futurePart{\tau})\) with \(\pastAssignOp\tau \colon \pastPart\tau \to \natint{w}\) where \(\pastPart{\tau}\) and \(\futurePart{\tau}\) are partitions of a (not necessarily proper) subset of \(E(\boundary{s})\).
If \(\bigcup\pastPart\tau\) and \(\bigcup\futurePart\tau\) are disjoint, we call \(\tau\) a syntactically valid tuple at \(s\).
Note that we do not assign the partitions in \(\futurePart\tau\) to indices, since at this point we do not care, whether they are eventually used for the same terminal set.

Intuitively, a tuple \(\tau\) gives almost complete information about the part of the final solution that crosses \(\boundary{s}\).
Consider a solution \((\mathcal{F}, \pi)\) to the whole instance and denote with \(\mathcal{F}'\subseteq \mathcal{F}\) the trees containing an edge of \(E(\boundary{s})\).
Set \(\mathcal{F}^\downarrow \coloneqq \{F \in \mathcal{F}'\mid \pi(F) \cap Y_s \neq \emptyset\}\) and \(\mathcal{F}^\uparrow \coloneqq \mathcal{F}'\setminus \mathcal{F}^\downarrow\) as the subgraphs crossing the link between \(s\) and its parent which are assigned to terminal set starting below this link and starting above this link respectively.
This solution corresponds to a tuple \(\tau\in D(s)\) with \(\pastPart\tau \coloneqq \{\{E(K) \cap E(\boundary{s}) \mid K \in \comp{F[E(G_s)]}\}\}_{F \in \mathcal{F}^\downarrow}\) and \(\futurePart\tau \coloneqq \{\{E(K) \cap E(\boundary{s}) \mid K \in \comp{F[E(G_s))]}\}\}_{F \in \mathcal{F}^\uparrow}\) where we consider the edges in \(E(\boundary{s})\) per connected component of the trees in \(\mathcal{F}'\) restricted to the edges with at least one endpoint in \(Y_s\).
For each \(F \in \mathcal{F}^\downarrow\) we choose a distinct \(\lambda_F \in \natint{w}\) such that if \(\pi(F) \in \crossLink{s}\), we have \(\lambda_F \in \natint{\demandCrossLink{s}}\) and \(\enumerateCrossLink{s}{\lambda_F} = \pi(F)\) and if \(\pi(F) \notin \crossLink{s}\), we have \(\lambda_F \in \natint{w}\setminus\natint{\demandCrossLink{s}}\).
Now, we set \(\pastAssignOp{\tau}\) to \(\lambda_F\) for each set induced by \(F\).

However, when computing \(D(s)\) we do not know enough about the whole instance to ensure that the remaining instance can actually be solved by a solution that is valid up to this node.
So, we cannot require that exactly the tuples induced by complete solutions are the members of \(D(s)\).
Rather, we only require that the tuples in \(D(s)\) correspond to solutions that are valid up to the considered node that could be extended to complete solutions assuming that partitions of \(\pastPart{\tau}\) assigned to the same tree are connected in the whole solution and that the connections provided by \(\futurePart\tau\) are enough to fulfill the requirements not intersecting \(Y_s\).

To formally define what this means, consider a syntactically valid tuple \(\tau\) and let \(\mathcal{U}_s \coloneqq \{T \in \mathcal{T}\mid T \subseteq Y_s\}\).
Note that \(\mathcal{U}_s\) and \(\crossLink{s}\) are disjoint and their union is the set of all terminals \(T \in \mathcal{T}\) that are not disjoint from \(Y_s\).
We add \(w\) vertices \(\{q_{s,i}\}_{i\in\natint{w}}\) to \(G_s\) such that each has the neighborhood \(V(\boundary{s}) \setminus Y_s\).
Call this graph \(G_{s}^*\).
These additional vertices can be used to simulate that a subgraph gets connected outside \(G_s\).
Let \(i\in\natint{\demandCrossLink{s}}\) and denote with \(Q_{s,i} \coloneqq (\enumerateCrossLink{s}{i} \cap Y_s) \cup \{q_{s,i}\}\) the vertices in \(Y_s\) of the terminal set assigned to the \(i\)-th subgraph crossing \(\boundary{s}\), which we called \(\enumerateCrossLink{s}{i}\), combined with \(q_{s,i}\).
Additionally, define for each \(P \in \futurePart\tau\) the set \(R_{s,P} \coloneqq V(G[P])\setminus Y_s\) to be all vertices of edges contained in \(P\) that are not in \(Y_s\).
Finally, define the instance \(\mathscr{D}_{s,\tau}\coloneqq(G_{s}^*, \mathcal{U}_s\cup \{Q_{s,i}\}_{i \in\natint{\demandCrossLink{s}}} \cup \{R_{s,P}\}_{P \in \futurePart\tau}, d')\), where for all \(T \in \mathcal{U}_s\) we have \(d'(T) = d(T)\), for all \(i \in \natint{\demandCrossLink{s}}\), we have \(d'(Q_{s,i}) = 1\), and for all \(P \in \futurePart\tau\), we have \(d'(R_{s,P}) = \abs{\{P' \in \futurePart\tau\mid R_{s, P'} = R_{s,P}\}}\).

\begin{definition}
\label{def:simple-tcw-dp}
For each \(s \in V(S)\) the data-table \(D(s)\) is exactly the set of syntactically valid tuples \(\tau\) at \(s\) where the instance \(\mathscr{D}_{s,\tau}\) has a solution \((\mathcal{F}, \pi)\) such that
\begin{enumerate}
\item \label{item:simple-tcd-edge-partitions} we have \(E(\boundary{s}) \cap \bigcup E(\mathcal{F})\subseteq\bigcup\pastPart\tau \cup \bigcup\futurePart\tau\),
\item \label{item:simple-tcd-future-sensible} for all \(P \in \futurePart\tau\) and \(F \in \inv\pi(R_{s,P})\), the set \(V(F)\) is disjoint from \(\{q_{s,i}\}_{i \in \natint{w}}\),
\item \label{item:simple-tcd-future-exact} for all \(P \in \futurePart\tau\), there is a \(F \in \inv\pi(R_{s,P})\) with \(E(F) \cap E(\boundary{S}) = P\),
\item \label{item:simple-tcd-past} for all \(i \in \natint{w}\), let \(\mathcal{P}_i \coloneqq \invPastAssign{\tau}{i}\), then
\begin{itemize}
\item if \(\mathcal{P}_i = \emptyset\), we have \(q_{s,i} \notin \bigcup V(\mathcal{F})\),
\item otherwise, there is exactly one \(F \in \mathcal{F}\) with \(q_{s,i} \in V(F)\) and this \(F\) additionally satisfies that \(E(F - q_{s,i})\) can be partitioned into \(\{E_P\}_{P \in \mathcal{P}_i}\) such that for all \(P \in \mathcal{P}_i\), the graph \(F[E_P]\) is connected and \(E_P \cap E(\boundary{s}) = P\).\qedhere
\end{itemize}
\end{enumerate}
\end{definition}

With \Cref{item:simple-tcd-edge-partitions} we ensure that the edges used in the solution are accounted for in \(\pastPart\tau\) and \(\futurePart\tau\).
We want to be able to assume that for every \(P \in \futurePart\tau\) there is a subgraph completely contained in \(G[Y_s] \cup \boundary{s}\) connecting all edges of \(P\).
So, in \Cref{item:simple-tcd-future-sensible} we ensure that the vertices \(\{q_{s,i}\}_{i \in \natint{w}}\)—that are used to simulate that a subgraph gets connected outside of this graph—are not included in the subgraphs connecting the edges in \(P\).
These subgraphs should also use exactly the edge-set \(P\) in \(E(\boundary{s})\), which we ensure with \Cref{item:simple-tcd-future-exact}.
Finally, consider \Cref{item:simple-tcd-past} and \(i \in \natint{w}\).
If \(\mathcal{P}_i = \emptyset\), that is, no edges are assigned to the \(i\)-th subgraph, the vertex \(q_{s,i}\), which is used to mark the \(i\)-th subgraph crossing \(\boundary{s}\), is not used in any subgraph.
Otherwise, we again ensure that for each \(P \in \mathcal{P}_i\) there is a subgraph in this solution that connects the edges of \(P\) inside \(G_s\).
Notice that we can combine \Cref{item:simple-tcd-edge-partitions,item:simple-tcd-future-exact,item:simple-tcd-past} to show, that for any \(F \in \mathcal{F}\) with \(E(F) \cap E(\boundary{s}) \neq \emptyset\) there is a \(P \in \pastPart{\tau} \cup \futurePart{\tau}\) with \(P\subseteq E(F)\).

Immediately, we observe that this dynamic program can indeed be used to determine whether \(\mathscr{P}\) is a positive instance, if we assume that it is calculated correctly.

\begin{lemma}
\label{stmt:simple-tcw-dp-root-correct}
Let \(r\) be the root of \(S\).
Then, \(\mathscr{P}\) is a positive instance if and only if \(D(r) \neq \emptyset\).
\end{lemma}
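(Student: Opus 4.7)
The plan is to show that at the root $r$ of $S$ the data table $D(r)$ degenerates to a direct indicator of whether $\mathscr{P}$ admits a solution, since the boundary at $r$ carries no edges. The argument is essentially a careful unpacking of \Cref{def:simple-tcw-dp} with $s = r$; the only thing to verify carefully is that the auxiliary isolated vertices introduced in $G_r^\ast$ do not interfere.

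First, I would observe that $Y_r = V(G)$, so $\cutEdges{Y_r} = \emptyset$ and $E(\boundary{r}) = \emptyset$. The only partitions of subsets of $E(\boundary{r})$ are empty, so the unique syntactically valid tuple at $r$ is $\tau^\ast \coloneqq (\emptyset, \emptyset, \emptyset)$ with $\pastAssignOp{\tau^\ast}$ the empty function. Thus $D(r) \neq \emptyset$ if and only if $\tau^\ast \in D(r)$.

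Next, I would unpack the instance $\mathscr{D}_{r, \tau^\ast}$. Since $V(\boundary{r}) \setminus Y_r = \emptyset$, every $q_{r,i}$ is isolated in $G_r^\ast$. As no $T \in \mathcal{T}$ can contain vertices outside $Y_r$, we have $\crossLink{r} = \emptyset$, and $\futurePart{\tau^\ast} = \emptyset$ contributes no further terminal sets. Hence $\mathscr{D}_{r, \tau^\ast} = (G_r^\ast, \mathcal{T}, d)$, differing from $\mathscr{P}$ only by the presence of $w$ isolated vertices.

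Finally, I would check the four conditions of \Cref{def:simple-tcw-dp}. Items \ref{item:simple-tcd-edge-partitions}--\ref{item:simple-tcd-future-exact} are vacuous since $E(\boundary{r})$, $\pastPart{\tau^\ast}$, and $\futurePart{\tau^\ast}$ are all empty. For item \ref{item:simple-tcd-past}, every $\mathcal{P}_i = \emptyset$, so the requirement reduces to $q_{r,i} \notin \bigcup_{F \in \mathcal{F}} V(F)$. This is automatic: since \Cref{rr:sensible-terminal-sets} has been applied, every $T \in \mathcal{T}$ satisfies $\abs{T} \geq 2$, so every $F \in \mathcal{F}$ contains at least two vertices and thus an edge, while $q_{r,i}$ has no incident edges in $G_r^\ast$ and therefore cannot lie in any connected subgraph of $G_r^\ast$ of order at least two. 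Consequently, any solution to $\mathscr{P}$ is a solution to $\mathscr{D}_{r, \tau^\ast}$ witnessing $\tau^\ast \in D(r)$, and any witness of $\tau^\ast \in D(r)$ restricts to a solution of $\mathscr{P}$.
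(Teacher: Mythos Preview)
Your proof is correct and follows essentially the same approach as the paper: both observe that $\boundary{r}$ is empty, identify the unique syntactically valid tuple, and note that $\mathscr{D}_{r,\tau}$ is $\mathscr{P}$ plus $w$ isolated vertices. Your treatment of \Cref{item:simple-tcd-past} is in fact more explicit than the paper's, which simply asserts that every solution to $\mathscr{D}_{r,\tau}$ satisfies the additional requirements of \Cref{def:simple-tcw-dp}; you correctly supply the reason (via \Cref{rr:sensible-terminal-sets} and connectedness) why no solution subgraph can contain an isolated $q_{r,i}$.
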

\begin{proof}
We note that \(\cutEdges{r} = \emptyset\).
Therefore, \(\boundary{r}\) is the empty graph, which means that there is exactly one syntactically valid tuple \(\tau\) at \(r\).
\(\mathscr{D}_{r,\tau}\) is the same instance as \(\mathscr{P}\) except that there are \(w\) nodes added with neighborhood \(V(\boundary{r}) \setminus Y_r = \emptyset\).
So, every solution to \(\mathscr{P}\) can be transformed into a valid solution of \(\mathscr{D}_{r,\tau}\) and vice versa.
Additionally, every solution to \(\mathscr{D}_{r,\tau}\) also satisfies the additional requirements of \Cref{def:simple-tcw-dp}, proving the statement.
\end{proof}

To compute this dynamic program, we consider that the size of the data-table is bounded by a function of the parameter.
Denote with \(\mathcal{V}_s\) the set of all syntactically valid tuples at \(s\).
As \(\abs{\crossLink{s}} \leq w\) and as \(\abs{E(\boundary{s})} \leq w\), we have \(\abs{D(s)} \leq \abs{\mathcal{V}_s} \leq 2^\O{w\log w}\).
Now assume that for all bold children \(b\in\boldChildren{s}\) of \(s\), we have already computed the data-table \(D(b)\).
We now show how to compute \(D(s)\) in \(\fpt\)-time.
To achieve this, it is enough to decide in \(\fpt\)-time for a given \(\tau \in \mathcal{V}_s\) whether \(\tau \in D(s)\).
For this we iterate over all simultaneous choices of \(\tau_b \in D(b)\) for \(b \in \boldChildren{s}\) and check whether the solutions witnessing \(\tau_b \in D(b)\) can be extended to solutions witnessing \(\tau \in D(s)\).
Then, we show how to create instances of \(\gstp\) with bounded fracture number of its augmented graph such that one of these instances is positive if and only if the aforementioned condition is met.
Combined with the results from \Cref{sec:gstp-fnS} this yields the result of this Section.

Assume for all \(b \in \boldChildren{s}\) a \(\tau_b \in D(b)\) is fixed.
To combine the subgraphs of the sub-solutions witnessing \(\tau_b \in D(b)\) to a solution witnessing \(\tau \in D(s)\), we need to be able to translate the local numbering of the solution subgraphs into a numbering shared across all solution subgraphs not fully contained in one connected component of \(S - s\).
When dealing with a shared mapping, we want to avoid that we map solution subgraphs assigned to different terminal sets to the same index.
For convenience denote with \(A \coloneqq \{s\} \cup \boldChildren{s}\) the set of \(s\) and all its bold children, with \(\terminalsContainedNode{s} \coloneqq \{T \in \mathcal{T}\mid T \subseteq X_s\}\) the set of terminal sets completely contained in \(X_s\), and with \(\mathcal{X}\coloneqq \terminalsContainedNode{s}\cup \bigcup_{a \in A}\crossLink{a}\) the set of all terminal sets, which are not completely contained in one sub-tree of \(S-s\).
As \(s\) is simple it has at most \(w + 2\) bold children.
Since we applied \Cref{rr:cross-link-demand-large}, \(w(w + 3)\) is an upper bound on the cumulated demand of all terminal sets crossing the links between \(s\) and its parent or any of its bold children.
It is also an upper bound on the number of edges crossing bold links adjacent to \(s\).
So, we need at most \(w(w+3)\) shared indices for the crossing subgraphs.
We set \(u \coloneqq w(w + 3) +w\) slightly larger than this upper bound to have spare indices.

\begin{definition}
\label{def:local-shared-map}
For all \(a \in A\) let \(\localSharedMapOp{a} \colon \natint{w}\to \natint{u}\) be an injective function and if \(s \neq a\) let \(\unknownTerminalIn{a}\) be a unique symbol.
Additionally, let \(\blockedIndex\) be a another unique symbol.
We call \((\localSharedMapOp{a})_{a \in A}\) a \emph{local to shared mapping} if there is a function \(\enumerateSharedOp \colon \natint{u} \to \mathcal{X} \cup \{\unknownTerminalIn{b}\}_{b \in \boldChildren{s}}\cup\{\blockedIndex\}\), called a \emph{shared mapping enumerator}, such that
\begin{enumerate}
\item \label{item:local-shared-def-preserves-crossLink} for all \(a \in A\) and \(T \in \crossLink{a}\), we have \(\invEnumerateShared{T}= \localSharedMap{a}{\invEnumerateCrossLink{a}{T}}\),
\item \label{item:local-shared-def-number-containedInNode} for all \(T \in \terminalsContainedNode{s}\), we have \(\abs{\invEnumerateShared{T}}\leq d(T)\),
\item \label{item:local-shared-def-unnamed-sensible} for all \(b \in \boldChildren{s}\), we have \(\inv{\enumerateSharedOp}(\unknownTerminalIn{b}) \subseteq \localSharedMap{b}{\natint{w}\setminus \natint{\demandCrossLink{b}}}\),
\item \label{item:local-shared-def-unknown-is-unnamend-or-blocked} for all \(b \in \boldChildren{s}\) and \(i \in \natint{w}\setminus \natint{\demandCrossLink{b}}\), we have \[\enumerateShared{\localSharedMap{b}{i}}= \begin{cases}\unknownTerminalIn{b},&\text{if }\invPastAssign{\tau_b}{i}\neq\emptyset\\\blockedIndex,&\text{otherwise.}\end{cases}\]
\end{enumerate}
\end{definition}

In the definition above, we map all link-local indices to shared indices across all considered bold links using the functions \((\localSharedMapOp{a})_{a \in A}\).
Then, we identify each shared index \(i\in\natint{u}\) with a terminal set \(\enumerateShared{i}\).
If the concrete terminal set is known at this point (i.e.,~it is contained in \(\mathcal{X}\)), we assign it to exactly this terminal set.
We ensure this with \Cref{item:local-shared-def-preserves-crossLink,item:local-shared-def-number-containedInNode}.
In particular, \Cref{item:local-shared-def-preserves-crossLink} ensures that for all \(a \in A\) and \(T \in \crossLink{s}\) there is a bijection between the local and shared indices for subgraphs designated for \(T\).
For terminal sets \(T \in \mathcal{T}\) completely contained in \(X_s\) (i.e.,~\(T \in \terminalsContainedNode{s}\)), we ensure with \Cref{item:local-shared-def-number-containedInNode} that the number of designated subgraphs does not exceed the number of required subgraphs.

However, if the terminal set is completely contained in a \(\{Y_b\}_{b \in \boldChildren{s}}\), we do not know the exact terminal set to which this subgraph gets assigned in the solution.
This limitation stems from the fact that if we would store the concrete terminal sets, the size of the dynamic program might no longer be bounded by a function of the parameter.
Luckily, we can treat them interchangeably at this point.
So, we introduce the symbols \(\{\unknownTerminalIn{b}\}_{b\in\boldChildren{s}}\), where for any \(b \in \boldChildren{s}\) the symbol \(\unknownTerminalIn{b}\) is used to denote an arbitrary terminal set completely contained in \(Y_b\).
We introduce the symbol \(\blockedIndex\) to symbolize that an index does not correspond to a subgraph of the solution.
This is done, in part, for \(\enumerateSharedOp\) to be a total function.
Let \(b \in \boldChildren{s}\).
Using \Cref{item:local-shared-def-unnamed-sensible}, we ensure that each subgraph that is designated to be assigned to a terminal set contained in \(Y_b\) crosses \(\boundary{b}\).
The purpose of \Cref{item:local-shared-def-unknown-is-unnamend-or-blocked} is two-fold.
First and in combination with \Cref{item:local-shared-def-unnamed-sensible}, we require that exactly the indices \(j \in \natint{u}\) for which there is a \(i \in \natint{w}\setminus \natint{\demandCrossLink{s}}\) with \(\localSharedMap{b}{i} = j\) map to \(\unknownTerminalIn{b}\).
That is \(\invEnumerateShared{\unknownTerminalIn{b}} = \{\localSharedMap{b}{i}\mid i\in\natint{w}\setminus\natint{\demandCrossLink{s}}; \invPastAssign{\tau_b}{i} \neq \emptyset\}\).
Additionally, \Cref{item:local-shared-def-unknown-is-unnamend-or-blocked} ensures that the remaining indices of \(\invLocalSharedMap{b}{\natint{w}\setminus \natint{\demandCrossLink{s}}}\) are not used.

Consider the graph \(J_s\coloneqq G^*_s\left[X_s \cup V(\boundary{s}) \cup \{q_{s,i}\}_{i \in \natint{w}}\right] \cup \bigcup_{c \in \children{s}} \boundary{c}\), which is graph obtained from \(G^*_s\) after removing all edges completely contained in a \(\{Y_b\}_{b\in\boldChildren{s}}\) and all vertices isolated by this operation.
This graph is the part of \(G_s\) that is most relevant for propagating the dynamic program from the bold children to their parent \(s\).
For all \(b \in \boldChildren{s}\) and \(P\in \futurePart{\tau_b} \cup \pastPart{\tau_b}\) we add a vertex \(\tilde{p}_{b, P}\) connected to all \(V(G[P])\).
The vertices \(\tilde{p}_{b,P}\) can be used to simulate that the edges in \(P\) are connected using a subgraph contained in \(Y_b\).
For each \(i \in \natint{u} \setminus \invEnumerateShared{\blockedIndex}\) we add a vertex \(\tilde{m}_i\) to the graph connected to all \(V(J_s) \setminus \bigcup_{t \in \thinChildren{s}} X_t\).
These vertices are used to mark the subgraphs for the particular indices.
Call the obtained graph \(\dpTransitionGraph\).

First, we consider the subgraphs that are assigned a shared index \(i \in \natint{u}\setminus \invEnumerateShared\blockedIndex\).
If \(\enumerateShared{i} \in \mathcal{X}\), we set \(\widetilde{C_i} \coloneqq \enumerateShared{i}\cap X_s\) and otherwise \(\widetilde{C_i} \coloneqq \emptyset\), as the vertices of \(X_s\) which need to be contained in this subgraph by the assigned terminal set.
Now, if \(i \in \localSharedMap{s}{\natint{\demandCrossLink{s}}}\), we set \(\widetilde{M_i} \coloneqq \widetilde{C_i} \cup \{q_{s,\invLocalSharedMap{i}{s}}, \tilde{m}_i\}\) and otherwise \(\widetilde{M_i} \coloneqq \widetilde{C_i} \cup \{\tilde{m}_i\}\).
This ensures that all vertices of the final terminal set \(\enumerateShared{i}\) contained in \(Y_s\) are included, that the subgraph is marked using \(\tilde{m}_i\), and that we can simulate, if necessary, that the subgraph is connected outside of \(G_s\).
Next, we consider the subgraphs that supply connections to the remaining graph.
These are the subgraphs crossing a bold link adjacent to \(s\) that are not assigned an index.
More concretely, we consider the subgraphs witnessing that \(P \in \futurePart{\tau}\) can be connected in \(G_s\).
Recall, that \(R_{s,P} \coloneqq V(G[P]) \setminus Y_s\).
Finally, we consider the subgraphs that do not cross into a \(G[Y_b]\) for any \(b \in \boldChildren{s}\).
These are the remaining subgraphs that get assigned to a \(T \in \terminalsContainedNode{s}\).
For these we do not need to alter the set of terminals and set \(\widetilde{\terminalsContainedNode{s}} \coloneqq \{T \in \terminalsContainedNode{s}\mid \abs{\invEnumerateShared{T}} < d(T)\}\).
The terminal sets for the instance we define as \(\widetilde{\mathcal{T}_s}\coloneqq \{\widetilde{M_i}\}_{i \in \natint{u}\setminus\invEnumerateShared\blockedIndex}\cup \{R_{s,P}\}_{P \in \futurePart{\tau}} \cup \widetilde{\terminalsContainedNode{s}}\).

Note that the sets \(\{\widetilde{M_i}\}_{i \in \natint{u}\setminus\invEnumerateShared\blockedIndex}\), \(\{R_{s,P}\}_{P \in \futurePart{\tau}}\), and \(\widetilde{\terminalsContainedNode{s}}\) are pairwise disjoint.
For all \(T \in \{\widetilde{M_i}\}_{i \in \natint{u}\setminus\invEnumerateShared\blockedIndex}\), we set the demand to 1.
For all \(T \in \{R_{s,P}\}_{P \in \futurePart{\tau}}\), we set the demand equal to the number of \(P\) giving rise to \(T\).
For all \(T \in \widetilde{\terminalsContainedNode{s}}\), we set the demand to \(d(T) - \abs{\invEnumerateShared{T}}\).
Call the obtained instance \(\mathscr{C}(s,\tau, (\tau_b)_{b \in \boldChildren{s}},\enumerateSharedOp, (\localSharedMapOp{a})_{a\in A})\).

\begin{definition}
\label{def:simple-tcd-witness}
Choose for all \(b \in \boldChildren{s}\) a \(\tau_b\in D(b)\).
We say that \((\tau_b)_{b\in \boldChildren{s}}\) witnesses \(\tau \in D(s)\), if there is a local to shared mapping \((\localSharedMapOp{a})_{a\in A}\) with a shared terminal enumerator \(\enumerateSharedOp\) such that there is a solution \((\widetilde{\mathcal{F}}, \tilde\pi)\) to the instance \(\mathscr{C}(s,\tau, (\tau_b)_{b \in \boldChildren{s}},\enumerateSharedOp, (\localSharedMapOp{a})_{a\in A})\) such that
\begin{enumerate}
\item\label{item:simple-tcd-witness-edge-partitions} we have \(E(\boundary{s}) \cap \bigcup E(\widetilde{\mathcal{F}})\subseteq\bigcup\pastPart\tau \cup \bigcup\futurePart\tau\),
\item\label{item:simple-tcd-witness-future-sensible} for all \(P \in \futurePart\tau\) and \(\widetilde{F} \in \inv{\tilde\pi}(R_{s,P})\), the set \(V(\widetilde{F})\) is disjoint from \(\{q_{s,i}\}_{i \in \natint{w}}\),
\item\label{item:simple-tcd-witness-future-exact} for all \(P \in \futurePart\tau\), there is a \(\widetilde{F} \in \inv{\tilde\pi}(R_{s,P})\) with \(E(\widetilde{F}) \cap E(\boundary{S}) = P\).
\item\label{item:simple-tcd-witness-past} for all \(i \in \natint{w}\), let \(\mathcal{P}_i \coloneqq \invPastAssign{\tau}{i}\), then
\begin{itemize}
\item if \(\mathcal{P}_i = \emptyset\), we have \(q_{s,i} \notin \bigcup V(\mathcal{\widetilde{F}})\),
\item otherwise, there is exactly one \(\widetilde{F} \in \widetilde{\mathcal{F}}\) with \(q_{s,i} \in V(\widetilde{F})\) and this \(\widetilde{F}\) additionally satisfies that \(E(\widetilde{F} - q_{s,i})\) can be partitioned into \(\{\widetilde{E}_{P}\}_{P\in\mathcal{P}_i}\) such that for all \(P \in \mathcal{P}_i\), the graph \(\widetilde{F}[\widetilde{E}_P]\) is connected and \(\widetilde{E}_P \cap E(\boundary{s}) = P\), and for all \(b \in \boldChildren{s}\) and \(P' \in \pastPart{\tau_b} \cup \futurePart{\tau_b}\), there is at most one \(P \in \mathcal{P}\) with \(\tilde{p}_{b, P'} \in V(\widetilde{F}[\widetilde{E}_P])\), additionally this \(P\) satisfies \(P' \subseteq E_P\).
\end{itemize}
\item\label{item:simple-tcd-witness-marked} for all \(i \in \natint{u}\setminus\invEnumerateShared\blockedIndex\), let \(\{\widetilde{F}\} \coloneqq \inv{\tilde\pi}(\widetilde{M_i})\), then
\begin{itemize}
\item there is at most one edge adjacent to \(m_i\) in \(\bigcup E(\mathcal{\widetilde{F}})\),
\item for all \(a \in A\) with \(\invLocalSharedMap{a}{i} \neq \emptyset\) and \(P \in \invPastAssign{\tau_b}{\invLocalSharedMap{b}{i}}\), we have \(P \subseteq E(\widetilde{F})\) and if \(a \in \boldChildren{s}\) even \(\tilde{p}_{a, P} \in V(\widetilde{F})\).
\end{itemize}
\item \label{item:simple-tcd-witness-sub-connected} for all \(b \in \boldChildren{s}\) and \(P \in \futurePart{\tau_b} \cup \pastPart{\tau_b}\), there is at most one \(\widetilde{F} \in \widetilde{\mathcal{F}}\) with \(\tilde{p}_{b, P} \in V(\widetilde{F})\) and this particular \(\widetilde{F}\) satisfies \(P \subseteq E(\widetilde{F})\).
\end{enumerate}
\end{definition}

To understand the definition above, consider a solution \((\mathcal{F}, \pi)\) to the instance \(\mathscr{D}_{s, \tau}\).
This solution induces for all \(b \in \boldChildren{s}\) a solution \((\mathcal{F}_b, \pi_b)\) to sub-instances \(\mathscr{D}_{b, \tau_b}\), where \(\tau_b\) is an element of \(D(b)\).
Additionally, \((\mathcal{F}, \pi)\) induces a solution \((\widetilde{F}, \tilde{\pi})\) to \(\mathscr{C}(s,\tau, (\tau_b)_{b \in \boldChildren{s}},\enumerateSharedOp, (\localSharedMapOp{a})_{a\in A})\).
We see that \Cref{item:simple-tcd-witness-edge-partitions,item:simple-tcd-witness-future-sensible,item:simple-tcd-witness-future-exact} of \Cref{def:simple-tcd-witness} are direct adaptations of \Cref{item:simple-tcd-edge-partitions,item:simple-tcd-future-sensible,item:simple-tcd-future-exact} of \Cref{def:simple-tcw-dp} transferred from \((\mathcal{F}, \pi)\) to \((\widetilde{F}, \tilde{\pi})\).
\Cref{item:simple-tcd-witness-sub-connected} of \Cref{def:simple-tcd-witness} ensures that the connections simulated by the vertices \(\tilde{p}_{b, P}\) are only used once.
This is also reflected in the change of \Cref{item:simple-tcd-witness-past} in \Cref{def:simple-tcd-witness}.
We adapt it to make sure that the connections simulated by the vertices \(\tilde{p}_{b,P}\) are only used for one subgraph giving the connections for a single \(P' \in \pastPart{\tau}\).
With \Cref{item:simple-tcd-witness-marked} of \Cref{def:simple-tcd-witness}, we ensure that the vertices marking the subgraphs are not used to introduce additional connections, and that the edges and connections inside a \(\{G[Y_b]\}_{b\in\boldChildren{s}}\) which are assigned to this index are all used in the marked subgraph.

We now show in \Cref{stmt:simple-tcd-dp-implies-witness,stmt:simple-tcd-dp-implies-witness}, that \Cref{def:simple-tcd-witness} fully captures whether \(\tau \in D(s)\).

\begin{lemma}
\label{stmt:simple-tcd-dp-implies-witness}
Let \(\tau \in D(s)\).
Then, for all \(b \in \boldChildren{s}\) there is a \(\tau_b \in D(b)\) such that \((\tau_b)_{b \in \boldChildren{s}}\) witnesses \(\tau \in D(s)\).
\end{lemma}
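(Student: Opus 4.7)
The plan is to start from a solution $(\mathcal{F}, \pi)$ to $\mathscr{D}_{s, \tau}$ that witnesses $\tau \in D(s)$ and use it to simultaneously produce (i) for each $b \in \boldChildren{s}$ a candidate $\tau_b$ together with a witness for $\tau_b \in D(b)$, and (ii) a local-to-shared mapping, a shared enumerator, and a solution to $\mathscr{C}(s,\tau,(\tau_b)_b,\enumerateSharedOp,(\localSharedMapOp{a})_{a\in A})$ satisfying Definition~\ref{def:simple-tcd-witness}. The high-level idea is that $\mathcal{F}$ already tells us, for every bold child $b$, what crosses $\boundary{b}$ and how; we only need to repackage this information to match the syntactic shapes of $D(b)$ and of the transition instance.

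The first step is the extraction at each bold child $b$. For every $F \in \mathcal{F}$ with $E(F)\cap E(G_b) \neq \emptyset$, consider the connected components of $F[E(G_b)]$ and record the edge sets $E(K)\cap E(\boundary{b})$; collect these sets over all such $F$ with $\pi(F)\cap Y_b \neq \emptyset$ into $\pastPart{\tau_b}$, and the remaining ones into $\futurePart{\tau_b}$. Define $\pastAssignOp{\tau_b}$ by picking for each $F$ a distinct index in $\natint{w}$, forced to lie in $\natint{\demandCrossLink{b}}$ with value $\invEnumerateCrossLink{b}{\pi(F)}$ whenever $\pi(F)\in\crossLink{b}$. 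A witness solution to $\mathscr{D}_{b,\tau_b}$ is obtained by restricting each such $F$ to $G_b$, rerouting edges that originally left $Y_b$ via the auxiliary vertices $q_{b,i}$, and keeping all trees assigned to terminal sets contained in $Y_b$ unchanged. Verifying the four conditions of Definition~\ref{def:simple-tcw-dp} for $\tau_b$ is then a direct transcription: the future/past partition covers exactly the boundary edges used (by \Cref{item:simple-tcd-edge-partitions} of Definition~\ref{def:simple-tcw-dp} applied to the global solution), the $q_{b,i}$-free condition and boundary-equality condition for future parts carry over, and the ``past index'' condition is fulfilled by construction.

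The second step is assembling the shared structures. Let $A = \{s\}\cup\boldChildren{s}$. For each $a \in A$ I fix an injection $\localSharedMapOp{a}\colon \natint{w}\to\natint{u}$; on $\natint{\demandCrossLink{a}}$ I use the unique indices dictated by \Cref{item:local-shared-def-preserves-crossLink} of Definition~\ref{def:local-shared-map}, i.e.\ two local indices at different $a$'s that both encode the same $T\in\crossLink{a}\cap\crossLink{a'}$ map to the same shared index. For indices above $\demandCrossLink{a}$, I pick pairwise fresh shared indices. Since each tree in $\mathcal{F}$ that crosses some $\boundary{b}$ contributes at most one local index at $s$ and at $b$, and by Reduction Rule~\ref{rr:cross-link-demand-large} the total crossing demand is at most $w(w+3)$, the bound $u = w(w+3)+w$ is sufficient to keep the mapping injective where required. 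The enumerator $\enumerateSharedOp$ is then forced: shared indices coming from $\localSharedMap{s}{\natint{\demandCrossLink{s}}}$ or $\localSharedMap{b}{\natint{\demandCrossLink{b}}}$ map to their crossing terminal set; indices $\localSharedMap{b}{i}$ with $i>\demandCrossLink{b}$ map to $\unknownTerminalIn{b}$ if $\invPastAssign{\tau_b}{i}\neq\emptyset$ and to $\blockedIndex$ otherwise (this is exactly \Cref{item:local-shared-def-unknown-is-unnamend-or-blocked}); and for trees $F$ assigned to some $T\in\terminalsContainedNode{s}$, we reserve $|\inv\pi(T)|\le d(T)$ shared indices whose enumerator value is $T$, meeting \Cref{item:local-shared-def-number-containedInNode}.

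The third step constructs $(\widetilde{\mathcal{F}},\tilde\pi)$. For every $F\in\mathcal{F}$ that is not entirely contained in some $G[Y_b]$, take $F[E(J_s)\cap E(F)]$ and augment it by an edge to each $\tilde{p}_{b,P}$ whose $P$ is contained in the corresponding component, and by the edge to $\tilde{m}_i$ for the shared index $i$ assigned to $F$ (and to $q_{s,j}$ if $F$ crosses $\boundary{s}$ with local past-index $j$). Trees of $\mathcal{F}$ assigned to a terminal set fully inside some $Y_b$ are not copied into $\widetilde{\mathcal{F}}$; they are already accounted for through the vertices $\tilde{p}_{b,P}$ and through the unknown-index symbols. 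Assign each resulting tree to the appropriate element of $\widetilde{\mathcal{T}_s}$ via $\tilde\pi$. Edge-disjointness is inherited from $\mathcal{F}$ because all added edges are incident to brand-new vertices; connectivity of each $\widetilde{F}$ holds because the contracted subtrees in $Y_b$ are replaced by the hub vertex $\tilde{p}_{b,P}$ that is adjacent to exactly the boundary endpoints of $P$. Checking the six items of Definition~\ref{def:simple-tcd-witness} is then a clause-by-clause verification using the corresponding items of Definition~\ref{def:simple-tcw-dp} for $(\mathcal{F},\pi)$ and the construction of $\tau_b$.

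The main obstacle is bookkeeping in the third step: one must argue that the hub vertices $\tilde{p}_{b,P}$ are used at most once and only by the unique $\widetilde{F}$ that globally contains $P$, so that \Cref{item:simple-tcd-witness-sub-connected} and the refined past condition \Cref{item:simple-tcd-witness-past} both hold. This follows from the observation that in $\mathcal{F}$ the edges of $P$ belong to one tree $F$, and the component of $F[E(G_b)]$ carrying $P$ sits inside exactly one component of $F$ restricted to $J_s$; hence only one $\widetilde{F}$ inherits an edge to $\tilde{p}_{b,P}$, and it inherits exactly one such edge per $P$ by taking the single boundary endpoint that lies on the side of $J_s$. Once this uniqueness is established, all remaining verifications are mechanical.
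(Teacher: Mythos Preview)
Your construction of \(\pastPart{\tau_b}\) is too coarse, and this breaks \Cref{item:simple-tcd-witness-past} of \Cref{def:simple-tcd-witness}. You take the connected components of \(F[E(G_b)]\); the paper instead first uses the edge-partition \(\{E_P\}_{P\in\mathcal P_i}\) of \(E(F-q_{s,i})\) guaranteed by \Cref{item:simple-tcd-past} of \Cref{def:simple-tcw-dp} and records components of \(F[E_P\cap E(G_b)]\). The distinction matters because the ``refined past condition'' in \Cref{item:simple-tcd-witness-past} asks for a partition \(\{\widetilde E_P\}_{P\in\mathcal P_i}\) of \(E(\widetilde F-q_{s,i})\) in which every hub \(\tilde p_{b,P'}\) lies in a \emph{single} class \(\widetilde E_P\), and moreover \(P'\subseteq\widetilde E_P\). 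With your coarser \(P'\), this can be impossible.

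Concretely, take \(F\) a tree with \(q_{s,i}\) a leaf and \(F-q_{s,i}\) a spider centred at some \(x\in Y_b\) with four legs, the \(j\)-th leg being \(x\,v_j\,u_j\,a_j\) where \(f_j=v_ju_j\in E(\boundary b)\) and \(e_j=u_ja_j\in E(\boundary s)\). Suppose \(\mathcal P_i=\{P_1,P_2\}\) with \(P_1=\{e_1,e_2\}\) and \(P_2=\{e_3,e_4\}\), witnessed by splitting \(E(F-q_{s,i})\) at \(x\). Your \(\pastPart{\tau_b}\) has the single block \(P'=\{f_1,f_2,f_3,f_4\}\), so \(\widetilde F-q_{s,i}\) contains one hub \(\tilde p_{b,P'}\). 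Whichever class \(\widetilde E_{P_k}\) receives \(\tilde p_{b,P'}\) (and hence all of \(f_1,\dots,f_4\) and all hub edges), the other class is left with only the two isolated edges \(e_{2k+1},e_{2k+2}\), which is disconnected. Your closing paragraph argues uniqueness of the \(\widetilde F\) containing \(\tilde p_{b,P'}\), but that addresses \Cref{item:simple-tcd-witness-sub-connected}; the failure is internal to a single \(\widetilde F\). The fix is exactly what the paper does: carry the \(s\)-level partition \(\{E_P\}\) down into the definition of \(\pastPart{\tau_b}\), so that in the example you get two hubs \(\tilde p_{b,\{f_1,f_2\}}\) and \(\tilde p_{b,\{f_3,f_4\}}\), one per class.
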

\begin{proof}
As \(\tau \in D(s)\), there is a solution \((\mathcal{F}, \pi)\) to \(\mathscr{D}_{s, \tau}\) satisfying the additional requirements of \Cref{def:simple-tcw-dp}.
First, we define the shared terminal enumerator \(\enumerateSharedOp\) and a shared labelling of the solution subgraphs containing an edge of \(\bigcup_{b \in \boldChildren{s}}E(\boundary{b})\).
Then, we define the \((\tau_b)_{b\in\boldChildren{s}}\) using local labellings of the solution subgraphs using an edge of \(E(\boundary{b})\).
Based on the last two results, we then define the local to shared mappings \((\localSharedMapOp{a})_{a \in A}\) and prove that the shared terminal enumerator defined previously actually fulfills the requirements of \Cref{def:local-shared-map}.
Finally, we show that the \((\tau_b)_{b\in\boldChildren{s}}\) satisfy the requirements of \Cref{def:simple-tcd-witness}.

Let \(\mathcal{F}' \coloneqq \{F \in \mathcal{F} \mid E(F) \cap E(J_s) \neq \emptyset\}\) be the solution subgraphs containing an edge of \(J_s\).
These are the subgraphs relevant for propagating the dynamic program.
Denote with \(\mathcal{F}^\star \coloneqq \{F \in \mathcal{F}' \mid V(F) \subseteq X_s \cup \bigcup_{t\in \thinChildren{s}} X_t\}\) the solution subgraphs that are completely contained in \(X_s\) and the vertices of the thin children of \(s\).
These subgraphs are very local and not of high importance for the propagation of the dynamic program.
We can disregard them for most of the following arguments.
Let \(\mathcal{F}^{\boundaryOp} \coloneqq \mathcal{F}' \setminus \mathcal{F}^\star\) be the remaining subgraphs, which cross the boundary of \(s\) or a bold child.
Partition \(\mathcal{F}^{\boundaryOp}\) by whether or not the subgraph is designated for a terminal set starting above or below \(s\) in the final solution, that is \(F^{\uparrow} \coloneqq \bigcup_{P\in\futurePart{\tau}}\inv\pi(R_{s,P})\subseteq \mathcal{F}^{\boundaryOp}\) and \(\mathcal{F}^\downarrow \coloneqq \mathcal{F}^{\boundaryOp} \setminus \mathcal{F}^\uparrow\).
The tree-cut decomposition is simple and therefore friendly; so, \(s\) has at most \(w + 2\) bold children and \(\ell \coloneqq \abs{\mathcal{F}^\downarrow} \leq \abs{\mathcal{F}^{\boundaryOp}} \leq w (w + 3) = u - w\).
So, let \(\sigma \colon \mathcal{F}^\downarrow \to \natint{\ell}\) be a bijection, which gives a shared numbering on the subgraphs crossing bold links.

Now, we define the shared mapping enumerator.
For all \(i \in \natint{u} \setminus \natint{\ell}\), we set \(\enumerateShared{i} \coloneqq \blockedIndex\).
Let \(i \in \natint{\ell}\) and \(F \coloneqq \inv\sigma(i)\).
If there is a \(j \in \natint{\demandCrossLink{s}}\) with \(\pi(F) = Q_{s,j}\), then set \(\enumerateShared{i} \coloneqq \enumerateCrossLink{s}{j}\).
If there is a \(b \in \boldChildren{s}\) with \(\pi(F) \subseteq Y_b\), then set \(\enumerateShared{i} \coloneqq \unknownTerminalIn{b}\).
Otherwise, set \(\enumerateShared{i} \coloneqq \pi(F)\).
At this point it suffices to show that
\begin{itemize}
\item \(\enumerateSharedOp\) is actually well defined, that is that all values are contained in \(\codomain{\enumerateSharedOp} = \mathcal{X} \cup \{\unknownTerminalIn{b}\}_{b\in\boldChildren{s}}\),
\item for all \(T \in \mathcal{X} \setminus \terminalsContainedNode{s}\), we have \(\abs{\invEnumerateShared{T}} = d(T)\),
\item for all \(F \in \mathcal{F}^\downarrow\), with \(\enumerateShared{\sigma(F)} \in \mathcal{X}\), we have \(\pi(F) \cap Y_s = \enumerateShared{\sigma(F)} \cap Y_s\).
\end{itemize}

For all \(i \in \natint{\demandCrossLink{s}}\), we have \(\enumerateCrossLink{s}{i} \in \crossLink{s} \subseteq \mathcal{X}\).
Now, consider an \(i \in \natint{\ell}\) with \(F \coloneqq \inv\sigma(i)\) such that there is no \(j \in \natint{\demandCrossLink{s}}\) with \(Q_{s,j} = \pi(F)\) and no \(b \in \boldChildren{s}\) with \(\pi(F) \subseteq Y_b\), that is the last case of the definition of \(\enumerateSharedOp\) applies.
As \(F \in \mathcal{F}^\downarrow\), there is no \(P \in \futurePart{\tau}\) with \(F \in \inv\pi(R_{s,P})\).
So, \(\pi(F) \in \mathcal{U}_s\).
If \(\pi(F) \subseteq X_s\), then \(\pi(F) \in \terminalsContainedNode{s}\subseteq \mathcal{X}\).
Otherwise, there is a \(b \in \boldChildren{s}\) with \(\pi(F) \cap Y_b \neq \emptyset\).
As \(\pi(F) \not\subseteq Y_b\), we have \(\pi(F) \in \crossLink{b} \subseteq \mathcal{X}\).
Therefore, \(\enumerateSharedOp\) is well defined.

Now, let \(T \in \crossLink{s}\).
For all \(i \in \invEnumerateCrossLink{s}{T}\), there is a subgraph \(F_{s,i} \in \mathcal{F}^\downarrow\) with \(q_{s,i} \in V(F_{s,i})\).
So, \(\enumerateShared{\inv\sigma(F_{s,i})} = T\), showing \(\abs{\invEnumerateShared{T}} \geq d(T)\).
By construction, no other solution subgraphs \(F'\) exhibit \(\enumerateShared{\inv\sigma(F')} = T\).
So, \(\abs{\invEnumerateShared{T}} = d(T)\).
Now, let \(T \in \mathcal{X} \setminus (\crossLink{s} \cup \terminalsContainedNode{s})\).
Then, \(T \in \mathcal{U}_s\) and we see that \(\inv\pi(T) \subseteq \mathcal{F}^\downarrow\) and so \(\invEnumerateShared{T} = \inv\pi(T)\).
Thus, \(\abs{\invEnumerateShared{T}} = d(T)\).

Consider \(F \in \mathcal{F}^\downarrow\) with \(\enumerateShared{\sigma(F)} \in \mathcal{X}\).
If there is a \(j \in \natint{\demandCrossLink{s}}\) with \(Q_{s,j} = \pi(F)\), then \(\pi(F) \cap Y_s = \enumerateCrossLink{s}{j} \cap Y_s = \enumerateShared{\sigma(F)} \cap Y_s\).
Otherwise, we have \(\pi(F) \in \mathcal{U}_s\); so, \(\pi(F) = \enumerateShared{\sigma(F)}\), which is sufficient to prove the claim.

Let \(b \in \boldChildren{s}\), we now define \(\tau_b\).
Let \(\mathcal{F}_b \coloneq \{F \in \mathcal{F} \mid E(F) \cap E(G_b) \neq \emptyset\}\) be the trees that cross into \(G_b = G[Y_b] \cup \boundary{b}\) and let \(\mathcal{F}_b^{\boundaryOp}\coloneqq \{F \in \mathcal{F}_b \mid E(F) \cap E(\boundary{b}) \neq \emptyset\}\) be the trees of \(\mathcal{F}_b\) that are not completely contained in \(G[Y_b]\).
Let \(\mathcal{F}^\downarrow_b \coloneqq \{{F \in \mathcal{F}^{\boundaryOp}_b} \mid \pi(F) \cap Y_b \neq \emptyset\}\) be the trees of \(\mathcal{F}^{\boundaryOp}_b\) that are assigned to terminal sets starting below the link between \(b\) and its parent and let \(\mathcal{F}^\uparrow_b \coloneqq \mathcal{F}^{\boundaryOp}_b \setminus \mathcal{F}^\downarrow_b\) be the remaining trees crossing \(\boundary{b}\).
Recall that the terminal sets of \(\mathscr{D}_{s,\tau}\) can be partitioned into \(\mathcal{U}_s\), \(\{R_{s, P}\}_{P \in \futurePart{\tau}}\), and \(\{Q_{s,i}\}_{i \in \natint{\demandCrossLink{b}}}\).
We note that for all \(P \in \futurePart{\tau}\), we have by definition that \(R_{s,P}\) is disjoint from \(Y_s\supseteq Y_b\) and so \(\inv{\pi}(R_{s,P})\) is disjoint from \(F^\downarrow_b\).
Thus, \(F^\downarrow_b\) are the subgraphs that, by our interpretation of the dynamic program, get assigned to terminal sets starting at or below \(b\) in our final solution.

We set \(\futurePart{\tau_b} \coloneqq \bigcup_{F \in \mathcal{F}^{\uparrow}_b}\{E(K) \cap E(\boundary{b}) \mid K \in \comp{F[E(G_b)]}\}\) to be the edge sets of the components of the solution subgraphs of \(\mathcal{F}^{\uparrow}_b\) after removing all edges outside \(G_s\) and all vertices isolated by this operation.

To define \(\pastPart{\tau_b}\) and \(\pastAssignOp{\tau_b}\), we define a local numbering \(\lambda_b \colon \mathcal{F}^\downarrow_b \to \natint{w}\) where \(\lambda_b\) is an injective function.
Let \(T \in \crossLink{b}\), and set \(\mathcal{F}_{T} \coloneqq \inv\sigma(\invEnumerateShared{T})\).
Notice that \(\mathcal{F}_{T}\subseteq \mathcal{F}^\downarrow_b\) and that \(\abs{\mathcal{F}_{T}} = \abs{\invEnumerateShared{T}} = d(T)\).
This enables us to choose for each \(F \in \mathcal{F}_{T}\) a unique \(i \in \invEnumerateCrossLink{b}{T}\) and set \(\lambda_b(F) \coloneqq i\).
Note that \(\inv\sigma(\invEnumerateShared{T}) = \inv\lambda_b(\invEnumerateCrossLink{b}{T})\), that is we designate exactly the same solution subgraphs to \(T\) in the solution \(\mathscr{D}_{b, \tau_b}\) as in the solution to \(\mathscr{D}_{s,\tau}\).
Then, we choose for each \(F \in \mathcal{F}^\downarrow_b \setminus \bigcup_{T \in \crossLink{b}} \mathcal{F}_{T}\) a unique \(i \in \natint{w} \setminus \natint{\demandCrossLink{b}}\) and set \(\lambda_b(F) \coloneqq i\), concluding the description of \(\lambda_b\).

Now, consider \(F \in \mathcal{F}\).
If there is a \(i \in \natint{w}\) such that \(q_{s,i} \in V(F)\), let \(\mathcal{P} \coloneqq \invPastAssign{s}{i}\).
By \Cref{item:simple-tcd-past} of \Cref{def:simple-tcw-dp}, we can partition \(E(F - q_{s,i})\) into \(\mathcal{E}_F\coloneqq\{E_P\}_{P \in \mathcal{P}}\) such that for all \(P \in \mathcal{P}\), the graph \(F[E_P]\) is connected and \(E_P \cap E(\boundary{s}) = P\).
Otherwise, set \(\mathcal{E}_F \coloneqq \{E(F)\}\).
Note that for all \(F \in \mathcal{F}\), we have \(\bigcup \mathcal{E}_F = E(F - \{q_{s,i}\}_{i \in \natint{w}})\).
We now set \(\pastPart{\tau_b} \coloneqq \bigcup_{F \in \mathcal{F}^\downarrow_b, D \in \mathcal{E}_F}\{E(K) \cap E(\boundary{b})\mid K \in \comp{F[D \cap E(G_b)]}\}\), that is we consider for each \(F \in \mathcal{F}^\downarrow_b\) the partitioning \(\mathcal{E}_F\) of the edge set of the graph \(F - \{q_{s,i}\}_{i \in \natint{w}}\) given by \Cref{item:simple-tcd-past} and for each \(D \in \mathcal{E}_F\) and each connected component \(K\) of \(F[D]\) with all edges outside \(G_b\) removed, we create a partition in \(\pastPart{\tau_b}\) containing the edges of \(K\) in \(\boundary{b}\).
Finally, for all \(F\in \mathcal{F}^\downarrow_b, D \in \mathcal{E}_F\) and \(K \in \comp{F[D \cap E(G_b)]}\) we set \(\pastAssign{\tau_b}{E(K)\cap E(\boundary{b})} \coloneqq \lambda_b(F)\).
Note that we have \(\futurePart{\tau_b} = \bigcup_{F \in \mathcal{F}^\uparrow, D \in \mathcal{E}_F}\{E(K) \cap E(\boundary{b})\mid K \in \comp{F[D \cap E(G_b)]}\}\).

We now show that \(\tau_b \in D(b)\).
For this, we need to provide a solution to \(\mathscr{D}_{b, \tau_b}\) satisfying the additional requirements of \Cref{def:simple-tcw-dp}.
First, consider a \(F \in \mathcal{F}^\uparrow_b\) and let \(K \in \comp{F[E(G_b)]}\).
In our solution, we assign \(K\) to the terminal set \(V(K) \setminus Y_b = R_{b, E(K)\cap E(\boundary{b})}\).
Note that this already satisfies \Cref{item:simple-tcd-future-exact,item:simple-tcd-future-sensible} of \Cref{def:simple-tcw-dp}.
Additionally, it is easy to see that the demand of all \(\{R_{b,P}\}_{P \in \futurePart{\tau_b}}\) is satisfied by this assignment.

Now, consider \(F\in \mathcal{F}^\downarrow_b\subseteq \mathcal{F}^{\boundaryOp}_b\).
Let \(F' \coloneqq F[E(G_s)]\) which might be disconnected.
Note that any \(K \in \comp{F'}\) contains a \(v_K \in V(\boundary{b})\setminus Y_b\).
Let \(F^*_{\lambda_b(F)}\) be \(F'\) combined with \(q_{b, \lambda_b(F)}\) and all incident edges.
As \(q_{b,\lambda_b(F)}\) is connected to a vertex in each component of \(F'\), the graph \(F^*_{\lambda_b(F)}\) is connected and as the underlying \(F \in \mathcal{F}^\downarrow_b\) are edge-disjoint and as \(\lambda_b\) is injective, all such \(F^*_{\lambda_b(F)}\) are edge-disjoint.
Notice that \(F^*_{\lambda_b(F)}\) is contained in \(G^*_b\).

Let \(T \coloneqq \enumerateShared{\sigma(F)}\).
As \(F \in \mathcal{F}^\downarrow_b\), we have \(\pi(F) \cap Y_b \neq \emptyset\).
If \(T \in \crossLink{b}\), we have \(\lambda_b(F) \in \invEnumerateCrossLink{b}{T}\); so, \(V(F^*_{\lambda_b(F)}) \supseteq (\enumerateCrossLink{b}{\lambda_b(F)} \cap Y_b)\cup\{q_{b, \lambda_b(F)}\} = Q_{b,\lambda_b(F)}\).
Consequently, we assign \(F^*_{\lambda_b(F)}\) to \(Q_{b, \lambda_b(F)}\).
If \(T \in \mathcal{X}\setminus \crossLink{b}\), we have \(T \cap Y_b = \pi(F) \cap Y_b \neq \emptyset\), but since \(T \notin \crossLink{b}\), we have \(T \setminus Y_b = \emptyset\).
Thus, \(T \notin \mathcal{U}_b\), violating \(T \in \mathcal{X}\).
Finally, assume there is a \(b' \in \boldChildren{s}\) with \(T = \unknownTerminalIn{b'}\).
As \(\pi(F) \cap Y_b \neq \emptyset\), we have \(b' = b\) and \(\pi(F) \subseteq Y_b\).
So, we assign \(F^*_{\lambda_b(F)}\) to \(\pi(F) \in \mathcal{U}_b\) in our solution to \(\mathscr{D}_{b, \tau_b}\).

Assigning the \(\{F^*_i\}_{i\in\image{\lambda_b}}\) as described satisfies the demand of all \(\{Q_{b,i}\}_{i\in \natint{\demandCrossLink{b}}}\) and for all \(U \in \mathcal{U}_b\), this assigns \(\abs{\{F \in \inv\pi(U) \mid E(F)\cap E(\boundary{b}) \neq \emptyset\}}\) subgraphs to \(U\).
Let \(U \in \mathcal{U}_b\), and consider a \(F \in \inv\pi(U)\) with \(E(F) \cap E(\boundary{s}) = \emptyset\).
As \(U \subseteq Y_b\) and \(F \notin \mathcal{F}^{\boundaryOp}_b\), the graph \(F\) is completely contained in \(G[Y_b]\); so, we assign \(F\) to \(U\) is our solution.
This satisfies the requirements of all \(\mathcal{U}\).
Which completes the description of our solution to \(\mathscr{D}_{b,\tau_b}\).

To complete the prove that \(\tau_b \in D(b)\), we need to show that our solution satisfies the additional requirements posed in \Cref{def:simple-tcw-dp}.
Note that by construction \Cref{item:simple-tcd-future-sensible,item:simple-tcd-future-exact} are already satisfied.
We see that the edges of \(\boundary{b}\) that are used in our solution are exactly \(E(\boundary{b})\cap \bigcup E(\mathcal{F}^{\boundaryOp}_b)\).
By definition of \(\pastPart{\tau_b}\) and \(\futurePart{\tau_b}\), we have \(\bigcup \pastPart{\tau_b} = E(\boundary{b}) \cap \bigcup E(\mathcal{F}^\downarrow_b)\) and \(\bigcup \futurePart{\tau_b} = E(\boundary{b}) \cap \bigcup E(\mathcal{F}^\uparrow_b)\).
So, \(\bigcup \pastPart{\tau_b}\cup\bigcup \futurePart{\tau_b} = E(\boundary{b}) \cap \bigcup E(\mathcal{F}^{\boundaryOp}_b)\) showing that \Cref{item:simple-tcd-edge-partitions} of \Cref{def:simple-tcw-dp} is satisfied.

Now, consider \(i \in \natint{w}\) and let \(\mathcal{P}_i \coloneqq \invPastAssign{\tau_b}{i}\).
If \(P_i = \emptyset\), there is no \(F \in \mathcal{F}^\downarrow_b\) with \(\lambda_b(F) = i\) and by construction there is no solution subgraph \(F\) in our solution with \(q_{b,i} \in V(F)\).
Otherwise, we have that \(q_{b,i} \in V(F^*_i)\).
The subgraph \(F^*_i\) is part of our solution and the unique subgraph of our solution containing \(q_i\).
Additionally, \(F^*_i - q_{b,i} = F[E(G_s)]\) and \(\mathcal{P}_i = \bigcup_{D \in \mathcal{E}_F}\{E(K) \cap E(\boundary{b})\mid K \in \comp{F[D \cap E(G_b)]}\}\).
Let \(P \in \mathcal{P}_i\).
Then, there is a \(D \in \mathcal{E}_F\) and \(K \in \comp{F[D \cap E(G_b)]}\) with \(P = E(K) \cap E(\boundary{b})\).
Set \(E_P^F \coloneqq E(K)\).
Notice that for all \(P' \in \mathcal{P}_i\), the set \(E^F_P\) is disjoint from \(E^F_{P'}\).
Additionally, \(\bigcup_{P \in \mathcal{P}_i} E^F_P = \bigcup_{D \in \mathcal{E}_F} D \cap E(G_b)= E(G_b)\cap\bigcup_{D \in \mathcal{E}_F} D = E(G_b) \cap E(F - \{q_{s,i}\}_{i\in\natint{w}}) = E(F^*_i - q_{b,i})\).
So, \(\{E^F_P\}_{P \in \mathcal{P}_i}\) satisfies the requirements of \Cref{item:simple-tcd-past} in \Cref{def:simple-tcw-dp} and \(\tau_b \in D(b)\).

We now show, that \((\tau_b)_{b \in \boldChildren{s}}\) witness \(\tau \in D(s)\).
For this, we first define a local to shared mapping \((\localSharedMapOp{a})_{a \in A}\).
We aim to use the local mappings \((\lambda_b)_{b\in B}\) and the shared numbering \(\sigma\) of the trees in \(F^\downarrow\).
To use the same approach for \(s\) as for the bold children, we now define a local numbering for \(s\).
Let \(\mathcal{F}^\downarrow_s \coloneqq \{F \in \mathcal{F}^\downarrow \mid E(F) \cap E(\boundary{s}) \neq \emptyset\}\) be the trees assigned to a terminal set starting at or below \(s\) crossing \(\boundary{s}\), analogously to \(\{\mathcal{F}\}_{b \in \boldChildren{s}}\).
We now choose an injection \(\lambda_s \colon \mathcal{F}^\downarrow_s \to \natint{w}\).
Let \(F \in \mathcal{F}^\downarrow_s\).
By \Cref{item:simple-tcd-edge-partitions,item:simple-tcd-future-exact,item:simple-tcd-past} of \Cref{def:simple-tcw-dp}, there is a unique \(i \in \natint{w}\) such that \(q_{s,i} \in V(F)\).
Set \(\lambda_s(F) \coloneqq i\).
Note that analogously to \(\{\lambda_b\}_{b\in\boldChildren{s}}\), we have for all \(T \in \crossLink{s}\) that \(\inv\sigma(\invEnumerateShared{T}) = \inv\lambda_s(\invEnumerateCrossLink{s}{T})\).
To define the local to shared mappings, let \(a \in A\).
For all \(i \in \image{\lambda_a}\), we set \(\localSharedMap{a}{i} \coloneqq \sigma(\inv\lambda_a(i))\).
For each \(i \in \natint{w} \setminus \image{\lambda_a}\), we choose a unique \(j \in \invEnumerateShared{\blockedIndex} = \natint{u}\setminus \natint{\ell}\), which works since \(u - l \geq w\), and set \(\localSharedMap{a}{i} \coloneqq j\).

We now show, that \((\localSharedMapOp{a})_{a \in A}\) combined with \(\enumerateSharedOp\) are actually a local to shared mapping and a shared mapping enumerator according to \Cref{def:local-shared-map}.
Let \(a \in A\) and \(T \in \crossLink{a}\).
We know \(\inv\sigma(\invEnumerateShared{T}) = \inv\lambda(\invEnumerateCrossLink{a}{T})\).
Thus, \(\invEnumerateShared{T} = \sigma(\inv\lambda(\invEnumerateCrossLink{a}{T})) = \localSharedMap{a}{\invEnumerateCrossLink{a}{T}}\).
Consider \(T \in \terminalsContainedNode{s}\subseteq \mathcal{X}\).
We have \(\inv\sigma(\invEnumerateShared{T}) \subseteq \inv\pi(T)\).
Thus, \(\abs{\invEnumerateShared{T}} \leq d(T)\).
Let \(b \in \boldChildren{s}\).
Consider \(i \in \invEnumerateShared{\unknownTerminalIn{b}}\) and let \(F \coloneqq \inv\sigma(i)\).
For all \(j \in \natint{\crossLink{s}}\), we have that \(\pi(F) \neq Q_{s,j}\).
Thus, \(\pi(F) \in \mathcal{U}_s\) and, in particular, \(\pi(F) \in \mathcal{U}_b\), implying \(F \in \mathcal{F}^\downarrow_b\).
Now, assume \(\lambda_b(F) \in \natint{\demandCrossLink{b}}\).
Then, \(\enumerateShared{i} \in \crossLink{b}\), violating \(\pi(F) \in \mathcal{U}_b\).
Thus, \(\lambda_b(F) \in \natint{w} \setminus\natint{\demandCrossLink{b}}\) and \(i = \sigma(F) = \sigma(\inv\lambda_b(\lambda_b(F)))) = \localSharedMap{b}{\lambda_b(F)}\in \localSharedMap{b}{\natint{w} \setminus \natint{\demandCrossLink{b}}}\).
Finally, let \(i \in \natint{w} \setminus \natint{\demandCrossLink{b}}\).
If \(i \in \image{\lambda_b}\), let \(F \coloneqq \inv\lambda_b(i) \in \mathcal{F}^\downarrow_b\), \(D \in \mathcal{E}_F\) and \(K \in \comp{F[D \cap E(G_b)]}\) be a component of the solution subgraph \(F\) after removing all edges outside \(D \cap E(G_b)\) and all vertices that got isolated that way.
By definition \(\pastAssign{\tau_b}{E(K) \cap E(\boundary{s})} = i\).
Additionally, \(\enumerateShared{\sigma(F)} \notin \crossLink{b}\) and as \(\pi(F) \cap Y_b \neq \emptyset\), we have \(\enumerateShared{\sigma(F)} = \unknownTerminalIn{b}\) and, in particular, \(\enumerateShared{\sigma(\inv\lambda_b(\lambda_b(F)))} = \enumerateShared{\localSharedMap{b}{\lambda_b(F)}} = \enumerateShared{\localSharedMap{b}{i}} = \unknownTerminalIn{b}\).
If \(i \notin \image{\lambda_b}\), then, by construction, there is no \(P \in \pastPart{\tau_b}\) with \(\pastAssign{\tau_b}{P} = i\) and \(\enumerateShared{\localSharedMap{b}{i}} = \blockedIndex\),completing the proof that \((\localSharedMapOp{a})_{a\in A}\) is a local to shared mapping and \(\enumerateSharedOp\) a shared mapping enumerator.

To prove that \((\tau_b)_{b \in \boldChildren{s}}\) witness \(\tau \in D(s)\), we provide a solution to the instance \(\mathscr{C}(s,\tau, (\tau_b)_{b \in \boldChildren{s}}, \enumerateSharedOp, (\localSharedMapOp{b})_{b \in \boldChildren{s}})\).
For this, consider any \(F \in \mathcal{F}' = \{F \in \mathcal{F}\mid E(F)\cap E(J_s) \neq \emptyset\}\).
We now show how to transform \(F\) into a connected subgraph \(\widetilde{F}\) of \(\widetilde{J_s}\) such that the edges and vertices in \(J_s\) do not change that much.
Let \(F' \coloneqq F[E(J_s)]\) be \(F\) with all edges outside \(J_s\) removed and all vertices which get isolated by this operation removed.
This graph might be disconnected.
For all \(b \in \boldChildren{s}\) and \(P \in \pastPart{\tau_b} \cup \futurePart{\tau_b}\) with \(E(F') \cap P \neq \emptyset\), add \(\tilde{p}_{b, P}\) and all incident edges to \(F'\).
If there is a \(q_{s, i} \in F'\), add the edges \(\{vq_{s,i}\mid v\in N(q_{s,i})\}\).
Call the obtained graph \(\widetilde{F}\).
Notice that \(V(\widetilde{F}) \cap (X_s \cup V(\boundary{s}) \cup \{q_{s,i}\}_{i\in\natint{w}}) = V(F)\cap (X_s \cup V(\boundary{s}) \cup \{q_{s,i}\}_{i\in\natint{w}})\) as well as \(E(\widetilde{F} - \{q_{s,i}\}_{i \in \natint{w}}) \cap E(J_s) = E(F - \{q_{s,i}\}_{i \in \natint{w}}) \cap E(J_s)\).
Additionally, note that \(P \subseteq E(F') \subseteq E(\widetilde{F})\), which shows that \Cref{item:simple-tcd-witness-sub-connected} of \Cref{def:simple-tcd-witness} is fulfilled if we do not use any \(\{\tilde{p}_{b, P}\}_{b\in\boldChildren{s}, P \in \pastPart{\tau_b}\cup \futurePart{\tau_b}}\) in another solution subgraph.
By \Cref{item:simple-tcd-future-exact,item:simple-tcd-witness-past}, we also know that all such \(\widetilde{F}\) are edge-disjoint.

To show that \(\widetilde{F}\) is connected, we first show that each connected component of \(\widetilde{F}\) contains an edge of \(E(J_s)\).
Then, we show—a slightly stronger statement as—that for all \(D \in \mathcal{E}_F\), all vertices of \(\widetilde{F}[D]\) are contained in one connected component of \(\widetilde{F}\).
Finally, we show that there is a single vertex contained in every connected component, which shows that there is at most one connected component.

First, assume there is a component of \(K \in \comp{\widetilde{F}}\) with \(E(K) \cap E(J_s) = \emptyset\).
If there is a \(b \in \boldChildren{s}\) and \(P \in \pastPart{\tau_b}\cup\futurePart{\tau_b}\) with \(\tilde{p}_{b, P} \in V(K)\), then \(N[\tilde{p}_{b,P}] \subseteq V(K)\).
As \(P\subseteq E(\widetilde{F}[N[p_{b,P}]])\), we have \(\emptyset \neq P \subseteq E(K)\) and since \(P \subseteq E(J_s)\), this is not possible.
As \(V(\widetilde{F}) \subseteq V(J_s) \cup \{\tilde{p}_{b, P}\}_{b\in\boldChildren{s}, P \in \bigcup\pastPart{\tau_b}\cup\bigcup\futurePart{\tau_b}}\), the graph \(K\) is equal to an isolated node of \(J_s\).
As \(F\) has no isolated vertices and since \(F'\) is an edge-induced graph on \(F\), the graph \(F'\) and hence \(\widetilde{F}\) does not contain isolated vertices, making this impossible as well.
Thus, \(E(K)\) is not disjoint from \(J_s\).

Let \(D \in \mathcal{E}_F\).
Let \(\widetilde{Z} \coloneqq \{\tilde{p}_{b,P}\mid b \in \boldChildren{s}, P \in \pastPart{\tau_b} \cup \futurePart{\tau_b}\}\) to be all vertices that simulate connectivity of partitions of the sub-solutions \(\{\tau_b\}_{b\in\boldChildren{s}}\).
Let \(\widetilde{Z}_D \coloneqq \{\tilde{p}_{b,P} \in Z\mid P \cap D \neq \emptyset\}\) be the vertices simulating connectivity inside a component for a partition induced by \(D\).
Set \(\widetilde{Z}^E_D \coloneqq \{uz\mid z \in Z_D, u \in N(z)\}\) to be the edges adjacent to the vertices of \(Z_D\).
Note that \(\widetilde{Z}^E_D \subseteq E(\widetilde{F})\).
We now show, that \(\widetilde{F}[D \cup \widetilde{Z}^E_D]\) is connected.
First, we note that any connected component of \(\widetilde{F}[D \cup \widetilde{Z}^E_D]\) contains a vertex of \(\widetilde{F}[D]\).
Consider as \(u, v \in V(\widetilde{F}[D])\) vertices in different connected components of \(\widetilde{F}[D \cup \widetilde{Z}^E_D]\) such that their distance in \(F[D]\) is minimized.
Let \(P\) be a shortest path connecting \(u\) and \(v\) in \(F[D]\).
By minimality and since \(V(F[D]) \cap \{q_{s,i}\}_{i \in \natint{w}} = \emptyset\), we have \(V(P) \cap V(J_s) = \{u,v\}\).
Note that the sets \(\{Y_b \setminus V(J_s)\}_{b\in \boldChildren{s}}\) are disconnected from each other in \(G^*_s\) and \(V(G^*_s) \setminus V(J_s) = \bigcup_{b\in \boldChildren{s}}(Y_b \setminus V(J_s))\).
So, there is a \(b \in \boldChildren{s}\) with \(V(P) \subseteq Y_b \cup \{u,v\}\).
By minimality, we even have \(u,v \in Y_b\).
Now, consider \(K \in \comp{F[D \cap E(G_b)]}\) with \(u \in V(K)\).
Then, \(V(P) \subseteq V(K)\) and in particular \(v \in V(K)\).
As \(u\) and \(v\) are not isolated in \(F[D]\), there are \(e_u\in E(F'[D])\) and \(e_v\in E(F'[D])\) adjacent to \(u\) and \(v\) respectively.
As all edges completely contained in \(Y_b\) are removed from \(F'\), we have \(e_u, e_v \in E(\boundary{b})\) and, by definition of \(\pastPart{\tau_b}\) and \(\futurePart{\tau_b}\), there is a \(P\in\pastPart{\tau_b}\cup\futurePart{\tau_b}\) with \(e_u, e_v \in P\).
Since \(\tilde{p}_{b, P} \in \widetilde{Z}_D\) and all adjacent edges are contained in \(\widetilde{Z}^E_D\), \(u\) and \(v\) are actually in the same component of \(\widetilde{F}[D \cup \widetilde{Z}^E_D]\).
Thus, \(\widetilde{F}[D\cup \widetilde{Z}^E_D]\) is connected.

If \(\abs{\mathcal{E}_F} = 1\), the last argument shows that \(\widetilde{F}\) is connected.
So, assume \(\abs{\mathcal{E}_F} > 1\).
This is only the case if there is an \(i \in \natint{w}\) such that \(q_{s,i} \in V(F)\).
Let \(K \in \comp{\widetilde{F}}\).
Since \(E(K) \cap E(J_s) \neq \emptyset\), there is a \(D \in \mathcal{E}_F\) with \(E(K) \cap D \neq \emptyset\).
By definition of \(D\), there is an edge \(uv \in E(\boundary{s}) \cap D\) with \(v\notin Y_s\).
Note that, by construction, \(vq_{s,i} \in E(\widetilde{F})\).
Since all vertices of \(V(\widetilde{F}[D])\) are in the same component of \(\widetilde{F}\), we have \(v \in V(\widetilde{K})\) and by extension \(q_{s,i} \in V(K)\), showing that \(\widetilde{F}\) has one connected component.

To describe our solution consider a \(F \in \mathcal{F}^\downarrow\) and choose a \(v \in V(\widetilde{F}) \cap V(J_s)\).
Set \(\widetilde{F_{\sigma(F)}} \coloneqq \widetilde{F} \cup \{vm_{\sigma(F)}\}\).
If \(\enumerateShared{\sigma(F)} \notin \mathcal{X}\), then \(\widetilde{C_{\sigma(F)}} = \emptyset \subseteq V(\widetilde{F_{\sigma(F)}})\).
If \(\enumerateShared{\sigma(F)} \in \mathcal{X}\), then \(\widetilde{C_{\sigma(F)}} = \enumerateShared{\sigma(F)} \cap X_s\).
We have \(\pi(F) \cap Y_s = \enumerateShared{\sigma(F)} \cap Y_s\).
Therefore, \(\enumerateShared{\sigma(F)} \cap X_s \subseteq V(F)\) and, in particular, \(\enumerateShared{\sigma(F)} \cap X_s \subseteq V(\widetilde{F_{\sigma(F)}})\).
Since \(m_{\sigma(F)} \in V(\widetilde{F_{\sigma(F)}})\), we have \(\widetilde{M_{\sigma(F)}}\subseteq V(\widetilde{F_{\sigma(F)}})\); so, we assign \(\widetilde{F_{\sigma(F)}}\) to \(\widetilde{M_{\sigma(F)}}\) satisfying the demand of all \(\{\widetilde{M_i}\}_{i \in \natint{u} \setminus \invEnumerateShared{\blockedIndex}}\).

Consider \(F \in \mathcal{F}^\uparrow = \bigcup_{P\in\futurePart{\tau}} \inv\pi(R_{s,P})\).
We note that \(\widetilde{F}\) is edge-disjoint from all already assigned subgraphs.
In our solution we assign \(\widetilde{F}\) to \(\pi(F)\) as well.
This satisfies the demand of all \(\{R_{s,P}\}_{P \in \futurePart{\tau}}\).

Finally, consider \(F \in F^\star = F' \setminus (\mathcal{F}^\uparrow \cup \mathcal{F}^\downarrow)\).
As \(V(F) \subseteq X_s \cup \bigcup_{t\in\thinChildren{s}} X_t\), we have \(F = \widetilde{F}\) and notice that \(\widetilde{F}\) is edge-disjoint from all already assigned subgraphs.
So, we assign \(\widetilde{F}\) to \(\pi(F)\) as well.
We claim that this satisfies the demand of all \(\terminalsContainedNode{s}\).
Let \(T \in \terminalsContainedNode{s}\) and consider \(\inv\pi(T)\).
Set \(\mathcal{F}^{\boundaryOp}_T \coloneqq \inv\sigma(\invEnumerateShared{T})\) and notice that \(\mathcal{F}^{\boundaryOp}_T \subseteq \inv\pi(T)\).
Let \(F \in \inv\pi(T) \setminus \mathcal{F}^{\boundaryOp}_T\).
As \(\pi(F) \subseteq X_s\), we have \(V(F) \cap X_s \neq \emptyset\) and since \(F \notin \mathcal{F}^{\boundaryOp}\), we have that \(E(F)\) is disjoint from \(\bigcup_{a \in A} E(\boundary{a})\).
Therefore, \(V(F) \subseteq X_s \cup \bigcup_{t\in\thinChildren{s}} X_t\) and \(F \in \terminalsContainedNode{s}\).
So, we assign \(\abs{\inv\pi(T)} - \abs{\mathcal{F}^{\boundaryOp}_T} = \abs{\inv\pi(T)} - \abs{\invEnumerateShared{T}}\) subgraphs to \(T\) satisfying its demand.
Completing our solution of \(\mathscr{C}(s,\tau, (\tau_b)_{b \in \boldChildren{s}}, \enumerateSharedOp, (\localSharedMapOp{b})_{b \in \boldChildren{s}})\).

To complete the prove that \((\tau_b)_{b\in\boldChildren{s}}\) witness \(\tau \in D(s)\), we need to verify the additional requirements to our solution \((\widetilde{\mathcal{F}}, \widetilde\pi)\) posed in \Cref{def:simple-tcd-witness}.
Notice that \Cref{item:simple-tcd-witness-edge-partitions,item:simple-tcd-witness-future-sensible,item:simple-tcd-witness-future-exact} of \Cref{def:simple-tcd-witness} directly follow from \(\tau \in D(s)\) and we already argued \Cref{item:simple-tcd-witness-sub-connected} of \Cref{def:simple-tcd-witness}.

Consider \(i \in \natint{w}\) and \(\mathcal{P}_i = \pastAssign{\tau}{i}\).
If \(\mathcal{P}_i = \emptyset\), there is no \(F \in \mathcal{F}\) with \(q_{s,i} \in V(F)\).
As for all \(\widetilde{F} \in \widetilde{\mathcal{F}}\), we have \(q_{s,i} \in \widetilde{F}\) if and only if \(q_{s,i} \in F\), there also is no \(\widetilde{F}' \in \widetilde{\mathcal{F}}\) with \(q_{s,i} \in V(\widetilde{F}')\).
Otherwise, there is exactly one \(F \in \mathcal{F}\) with \(q_{s,i} \in V(F)\).
For \(P \in \widetilde{\mathcal{P}_i}\) consider \(D \in \mathcal{E}_F\) with \(P = D \cap E(\boundary{s})\) and choose as the partition for \(P\) the set \(\widetilde{E}_P \coloneqq (D \cap E(J_s))\cup \widetilde{Z}^E_D\).
We know that \(\widetilde{F}[D \cup \widetilde{Z}^E_D]=\widetilde{F}[\widetilde{E}_P]\) is connected.
Consider a \(\tilde{p}_{b, P'} \in \widetilde{Z}_D\).
By definition of \(\{\tau_b\}_{b \in \boldChildren{s}}\), we have \(\emptyset \neq P' \subseteq D \cap E(J_s) \subseteq \widetilde{E}_P\).
So, there is no \(D' \in \mathcal{E}_F \setminus \{D\}\) with \(\tilde{p}_{b,P'} \in \widetilde{Z}_{D'}\) implying that the partitions are actually disjoint and that any vertex of \(\widetilde{Z}_D = V(\widetilde{F}[\widetilde{E}_P]) \cap \widetilde{Z}\) is only a member of \(V(\widetilde{F}[\widetilde{E}_P])\).
Additionally, we can verify that \(\bigcup_{P \in \mathcal{P}_i} \widetilde{E}_P = E(\widetilde{F} - \{q_{s,i}\})\), showing that \Cref{item:simple-tcd-witness-past} of \Cref{def:simple-tcd-witness} is satisfied.

Let \(i \in \natint{u}\setminus \invEnumerateShared{\blockedIndex}\) and \(\{\widetilde{F}\} \coloneqq \inv{\widetilde\pi}(\widetilde{M_i})\).
By construction \(\widetilde{F}\) is the unique \(\widetilde{F}' \in \widetilde{\mathcal{F}}\) with \(m_i \in V(\widetilde{F}')\) and only one edge adjacent to \(m_i\) is contained in \(E(\widetilde{F})\).
Now, let \(a \in A\) and \(P \in \pastPart{\tau_a}\) be such that \(\localSharedMap{a}{\pastAssign{\tau_a}{P}} = i\).
Consider \(F' \coloneqq \inv\sigma(\localSharedMap{a}{\pastAssign{\tau_a}{P}}) = \inv\sigma(i)\); so, by construction, we have \(\widetilde{F'} = \widetilde{F}\).
If \(a = s\) by \Cref{item:simple-tcd-past} of \Cref{def:simple-tcw-dp} and otherwise by construction, we have \(P \subseteq E(F') \cap E(\boundary{a}) = E(\widetilde{F}) \cap E(\boundary{a})\).
If \(a \in \boldChildren{s}\), then we add \(\tilde{p}_{a, P}\) to \(\widetilde{F}\) during construction showing that \Cref{item:simple-tcd-witness-marked} of \Cref{def:simple-tcd-witness}, and by extension \Cref{def:simple-tcd-witness} itself, is satisfied.
\end{proof}

\begin{lemma}
\label{stmt:simple-tcd-witness-implies-dp}
If for all \(b \in \boldChildren{s}\) there is a \(\tau_b \in D(b)\) such that \((\tau_b)_{b \in \boldChildren{s}}\) witnesses \(\tau \in D(s)\), then \(\tau \in D(s)\) indeed holds.
\end{lemma}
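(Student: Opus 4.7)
The plan is to invert the construction of \Cref{stmt:simple-tcd-dp-implies-witness}: given the witnessing data, I would splice the witness solution $(\widetilde{\mathcal{F}}, \tilde\pi)$ to $\mathscr{C}(s,\tau, (\tau_b)_{b \in \boldChildren{s}}, \enumerateSharedOp, (\localSharedMapOp{a})_{a \in A})$ with the sub-solutions $(\mathcal{F}_b, \pi_b)$ to each $\mathscr{D}_{b, \tau_b}$ guaranteed by $\tau_b \in D(b)$ to produce a solution $(\mathcal{F}, \pi)$ to $\mathscr{D}_{s, \tau}$ satisfying \Cref{def:simple-tcw-dp}.

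First, I would extract the relevant building blocks from each sub-solution. By \Cref{item:simple-tcd-past} of \Cref{def:simple-tcw-dp} applied to $\tau_b$, for every local index $j \in \natint{w}$ with $\mathcal{P}_j^b \coloneqq \invPastAssign{\tau_b}{j} \neq \emptyset$ there is a unique $F^{b,\downarrow}_j \in \mathcal{F}_b$ containing $q_{b,j}$, together with a partition $\{E^b_P\}_{P \in \mathcal{P}_j^b}$ of $E(F^{b,\downarrow}_j - q_{b,j})$ into connected fragments with $E^b_P \cap E(\boundary{b}) = P$. By \Cref{item:simple-tcd-future-sensible,item:simple-tcd-future-exact} of \Cref{def:simple-tcw-dp}, for every $P \in \futurePart{\tau_b}$ there is a subgraph $F^{b,\uparrow}_P \in \inv{\pi_b}(R_{b,P})$ with $E(F^{b,\uparrow}_P) \cap E(\boundary{b}) = P$ avoiding every $q_{b,i}$. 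Let $\mathcal{F}^\circ_b$ denote the remaining subgraphs of $\mathcal{F}_b$, whose edges lie inside $E(G[Y_b])$.

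Now I would assemble $\mathcal{F}$ by splicing. For each shared index $i \in \natint{u} \setminus \invEnumerateShared{\blockedIndex}$, let $\widetilde{F}_i \coloneqq \inv{\tilde\pi}(\widetilde{M_i})$ and form $F_i$ by deleting every $\tilde m_{i'}$ and every $\tilde p_{b,P}$ occurring in $\widetilde{F}_i$ and attaching in place of each removed $\tilde p_{b,P}$ the fragment $F^{b,\downarrow}_{\invLocalSharedMap{b}{i}}[E^b_P]$ if $P \in \pastPart{\tau_b}$, and $F^{b,\uparrow}_P$ if $P \in \futurePart{\tau_b}$; by \Cref{item:simple-tcd-witness-marked,item:simple-tcd-witness-sub-connected} of \Cref{def:simple-tcd-witness} these fragments exist, are uniquely determined, and no fragment is used for two different indices. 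Construct $F_P$ analogously from a subgraph in $\inv{\tilde\pi}(R_{s,P})$ for each $P \in \futurePart{\tau}$. Finally, add every witness subgraph assigned to $\widetilde{\terminalsContainedNode{s}}$ and every subgraph in each $\mathcal{F}^\circ_b$ to $\mathcal{F}$ with their existing assignments. Define $\pi$ on the spliced pieces by: $\pi(F_i) \coloneqq Q_{s,\invLocalSharedMap{s}{i}}$ if $\enumerateShared{i} \in \crossLink{s}$; $\pi(F_i) \coloneqq \pi_b(F^{b,\downarrow}_{\invLocalSharedMap{b}{i}})$ if $\enumerateShared{i} = \unknownTerminalIn{b}$ for some $b \in \boldChildren{s}$; $\pi(F_i) \coloneqq \enumerateShared{i}$ otherwise; and $\pi(F_P) \coloneqq R_{s,P}$.

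The main obstacle is verifying that $(\mathcal{F}, \pi)$ is indeed a solution satisfying \Cref{def:simple-tcw-dp}. Connectivity of each $F_i$ follows because each spliced fragment is connected and shares with the witness exactly the edges $P \subseteq E(\boundary{b})$ along which it attaches. Edge-disjointness across $\mathcal{F}$ reduces to internal edge-disjointness of $\widetilde{\mathcal{F}}$ (for edges in $J_s$) together with internal edge-disjointness of each $\mathcal{F}_b$ (for edges internal to $G[Y_b]$), combined with the uniqueness clauses of \Cref{item:simple-tcd-witness-marked,item:simple-tcd-witness-sub-connected}. \Cref{item:simple-tcd-edge-partitions,item:simple-tcd-future-sensible,item:simple-tcd-future-exact} transfer directly from the corresponding items of \Cref{def:simple-tcd-witness}; \Cref{item:simple-tcd-past} is obtained by combining the witness partition of $E(\widetilde{F}_i - q_{s,i})$ with the fragment partitions absorbed through splicing, where the containment condition $\tilde p_{b,P'} \in V(\widetilde{F}[\widetilde{E}_P])$ guarantees that each absorbed fragment falls inside exactly one piece. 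For the demand counts, \Cref{item:local-shared-def-preserves-crossLink} of \Cref{def:local-shared-map} delivers exactly $d(T)$ spliced subgraphs for every $T \in \crossLink{a}$; for $T \in \terminalsContainedNode{s}$ the $\abs{\invEnumerateShared{T}}$ spliced subgraphs plus the $d(T) - \abs{\invEnumerateShared{T}}$ witness subgraphs in $\widetilde{\terminalsContainedNode{s}}$ sum to $d(T)$; for $T \in \mathcal{U}_b \setminus \crossLink{b}$ the pieces of $\mathcal{F}^\circ_b$ together with the spliced subgraphs from indices $i$ with $\enumerateShared{i} = \unknownTerminalIn{b}$ and $\pi_b(F^{b,\downarrow}_{\invLocalSharedMap{b}{i}}) = T$ reconstruct all of $\inv{\pi_b}(T)$; and the demands of $Q_{s,j}$ and $R_{s,P}$ are immediate from the construction.
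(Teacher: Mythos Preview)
Your plan is correct and follows essentially the same construction as the paper's proof: both replace each placeholder vertex $\tilde p_{b,P}$ in the witness solution by the corresponding connected fragment from the sub-solution $(\mathcal{F}_b,\pi_b)$ (the paper calls this replacement $\psi(\widetilde F)$, using $H_{b,P}$ for the fragments), delete the markers $\tilde m_i$, carry over the untouched subgraphs in $\widetilde{\terminalsContainedNode{s}}$ and in each $\mathcal{F}_b\setminus\mathcal{F}^{\boundaryOp}_b$, and then perform the same case analysis on $\enumerateShared i$ to define $\pi$. Your indexing of the past-fragment as $F^{b,\downarrow}_{\invLocalSharedMap{b}{i}}[E^b_P]$ is consistent with the paper's $H_{b,P}=F^{\pastAssign{\tau_b}{P}}_b[E_P]$, since \Cref{item:simple-tcd-witness-marked,item:simple-tcd-witness-sub-connected} together force $\invLocalSharedMap{b}{i}=\pastAssign{\tau_b}{P}$ whenever $\tilde p_{b,P}\in V(\widetilde F_i)$ with $P\in\pastPart{\tau_b}$.
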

\begin{proof}
Let \((\localSharedMapOp{b})_{b\in \boldChildren{s}}\) and \(\enumerateSharedOp\) be local to shared mappings and a shared mapping enumerator such that they fulfill the requirement of \Cref{def:simple-tcd-witness}.
Let \((\widetilde{\mathcal{F}}, \tilde{\pi})\) be a solution to \(\mathscr{C}(s,\tau, (\tau_b)_{b \in \boldChildren{s}},\enumerateSharedOp, (\localSharedMapOp{a})_{a\in A})\) satisfying the additional requirements of \Cref{def:simple-tcd-witness}. 
For all \(b \in \boldChildren{s}\) let \((\mathcal{F}_b, \pi_b)\) be a solution to \(\mathscr{D}_{b, \tau_b}\) satisfying the additional requirements of \Cref{def:simple-tcw-dp}.
Based on this, we now provide a solution to \(\mathscr{D}_{s, \tau}\) satisfying the additional requirements of \Cref{def:simple-tcw-dp}.

For every \(b\in \boldChildren{s}\), set \(\mathcal{F}^{\boundaryOp}_b \coloneqq \{F \in \mathcal{F}_b \mid E(F) \cap E(\boundary{b}) \neq \emptyset\}\).
For all \(F \in \mathcal{F}_b \setminus \mathcal{F}^{\boundaryOp}_b\), we know \(\pi_b(F) \in \mathcal{U}_b \subseteq \mathcal{U}_s\) and that \(F\) is contained in \(G[Y_b]\), which in turn is a subgraph of \(G^*_s\).
So, we assign \(F\) to \(\pi_b(F)\) in our solution of \(\mathscr{D}_{s, \tau}\).
For all \(U \in \mathcal{U}_b\subseteq \mathcal{U}_s\), this assigns \(d(U) - \abs{\inv\pi_b(U) \cap \mathcal{F}^{\boundaryOp}_b}\) many subgraphs to \(U\).

Now, consider any \(F \in \widetilde{\mathcal{F}}\).
Let \(\widetilde{\mathcal{F}}^{\star}\coloneqq \{F \in \widetilde{\mathcal{F}} \mid V(F) \subseteq X_s \cup \bigcup_{t \in \thinChildren{s}} X_t\}\) and set \(\widetilde{\mathcal{F}}^{\boundaryOp} \coloneq \widetilde{\mathcal{F}} \setminus \widetilde{\mathcal{F}}^\star\).
All \(F \in \widetilde{\mathcal{F}}^\star\) are contained in \(G^*_s\) and \(\pi(F) \in \terminalsContainedNode{s}\).
As \(F\) is disjoint from all previously assigned subgraphs, we assign \(F\) to \(\pi(F)\) in our solution.
For all \(T \in \terminalsContainedNode{s}\), we assign \(d'(T) - \abs{\inv{\tilde{\pi}}(T) \cap \widetilde{\mathcal{F}}^{\boundaryOp}} = d(T) - \abs{\invEnumerateShared{T}} - \abs{\inv{\tilde{\pi}}(T) \cap \widetilde{\mathcal{F}}^{\boundaryOp}}\) subgraphs to \(T\).

Finally, consider \(\widetilde{F} \in \widetilde{\mathcal{F}}^{\boundaryOp}\).
The graph \(\widetilde{F}\) might not be a subgraphs of \(G^*_s\).
In particular, the vertices \(\widetilde{Z} \coloneqq \{\tilde{p}_{b, P}\}_{b \in \boldChildren{s}, P \in \pastPart{\tau_b} \cup \futurePart{\tau_b}}\) and \(\widetilde{W}\coloneqq \{\tilde{m}_i\}_{i\in\natint{u}\setminus\invEnumerateShared{\blockedIndex}}\) are not contained in \(G^*_s\).
We aim to create a connected subgraph \(\psi(\widetilde{F})\) similar to \(\widetilde{F}\) contained in \(G^*_s\).

By \Cref{item:simple-tcd-witness-marked} of \Cref{def:simple-tcd-witness}, we can remove \(\widetilde{W}\) from \(\widetilde{F}\) without destroying connectedness.
For each \(\tilde{p}_{b, P} \in \widetilde{Z}\), we now find edge-disjoint connected subgraphs \(H_{b, P}\) completely contained in \(G_b\) with \(E(H_{b, P}) \cap E(\boundary{b}) = P\).
If \(P \in \futurePart{\tau_b}\), choose a unique \(F \in \inv\pi_b(R_{b,P})\) and set \(H_{b, P} \coloneqq F\).
This works by choice of the demand of \(R_{b,P}\) and \Cref{item:simple-tcd-future-sensible,item:simple-tcd-future-exact} of \Cref{def:simple-tcw-dp}.
Otherwise \(P \in \pastPart{\tau_b}\).
Set \(i \coloneqq {\pastAssign{\tau_b}{P}}\) and let \(F_b^i \in \mathcal{F}_b\) be the unique subgraph with \(q_{b,i} \in V(F^i_b)\).
By \Cref{item:simple-tcd-past}, we can partition \(E(F^i_b - q_{b,i})\) into \(\{E_{P'}\}_{P' \in \invPastAssign{\tau_b}{i}}\) such that for all \(P'\in\invPastAssign{\tau_b}{i}\), the graph \(F[E_{P'}]\) is connected and \(E_{P'} \cap E(\boundary{s}) = P'\).
Then, choose \(H_{b, P} \coloneqq F^i_b[E_P]\) and note that \(\bigcup_{P' \in \pastAssign{\tau_b}{i}} F^i_b[E_{P'}] = F^i_b - q_{b,i}\).
Now, set \(\psi(\widetilde{F})\) equal to \(\widetilde{F}\) where we replace all \(\tilde{p}_{b, P} \in V(\widetilde{F})\cap \widetilde{Z}\) by \(H_{b,P}\), that is \(\psi(\widetilde{F}) \coloneqq (\widetilde{F} - \widetilde{W} - \widetilde{Z}) \cup \bigcup_{\tilde{p}_{b,P} \in V(\widetilde{F}) \cap \widetilde{Z}} H_{b,P}\).
By \Cref{item:simple-tcd-witness-sub-connected} of \Cref{def:simple-tcd-witness}, all such graphs are edge-disjoint from each other and from all already assigned subgraphs.
To see that \(\psi(\widetilde{F})\) is actually connected, notice that, by \Cref{item:simple-tcd-witness-sub-connected} of \Cref{def:simple-tcd-witness}, \(N(\tilde{p}_{b,P}) = V(F[P]) \subseteq V(H_{b,P})\) and that \(H_{b,P}\) is connected. So, any path in \(F\) going over \(\tilde{p}_{b,P}\) can be replaced by a path going through \(H_{b,P}\).

Now, we aim to assign \(\psi(\widetilde{F})\) to a terminal set.
If \(\tilde{\pi}(\widetilde{F}) \in \{R_{s, P}\}_{P \in \pastPart{\tau}\cup\futurePart{\tau}}\cup\terminalsContainedNode{s}\), we assign \(\psi(\widetilde{F})\) to \(\tilde{\pi}(\widetilde{F})\subseteq V(\widetilde{F} - \widetilde{W} - \widetilde{Z})\subseteq V(\psi(\widetilde{F}))\).
For \(T \in \{R_{s, P}\}_{P \in \pastPart{\tau}\cup\futurePart{\tau}}\), we have \(\inv{\tilde\pi}(T) \subseteq \widetilde{\mathcal{F}}^{\boundaryOp}\).
Thus, this satisfies the demand of \(T\) and \Cref{item:simple-tcd-future-sensible,item:simple-tcd-future-exact} of \Cref{def:simple-tcw-dp}.
For \(T \in \terminalsContainedNode{s}\), this newly assigns \(\abs{\inv{\tilde\pi}(T) \cap \widetilde{\mathcal{F}}^{\boundaryOp}}\) subgraphs to \(T\).
Otherwise, there is a unique \(i \in \natint{u} \setminus \invEnumerateShared{\blockedIndex}\) with \(\tilde{\pi}(\widetilde{F}) = \widetilde{M_i}\).
Consider \(b \in \boldChildren{s}\) with \(\invLocalSharedMap{b}{i} \neq \emptyset\).
Set \(\ell \coloneqq \invLocalSharedMap{b}{i}\).
By \Cref{item:simple-tcd-witness-marked} of \Cref{def:simple-tcd-witness}, for all \(P \in \invPastAssign{\tau_b}{\ell}\) we have \(\tilde{p}_{b, P} \in V(\widetilde{F})\).
Thus, \(H_{b,P}\) is contained in \(\psi(\widetilde{F})\) and in particular \(V(F^{\ell}_b - q_{b,\ell}) \subseteq V(\psi(\widetilde{F}))\).
If \(\ell \in \natint{w} \setminus \natint{\demandCrossLink{s}}\), by \Cref{item:local-shared-def-unknown-is-unnamend-or-blocked} of \Cref{def:local-shared-map} we have \(\enumerateShared{i} = \unknownTerminalIn{b}\).
In particular \(\pi_b(F^\ell_b) \in \mathcal{U}_b\).
Since \(\pi_b(F^\ell_b) \subseteq Y_b\), we have \(\pi_b(F^\ell_b) \subseteq V(F^\ell_b - q_{b,\ell}) \subseteq V(\psi(\widetilde{F}))\) and we assign \(\psi(\widetilde{F})\) to \(\pi_b(F^\ell_b)\).
Otherwise, we have by \Cref{item:local-shared-def-unnamed-sensible} of \Cref{def:local-shared-map} that \(\enumerateShared{i} \in \mathcal{X}\).
As \(\enumerateShared{i} \cap X_s \subseteq \widetilde{M_i} = \tilde{\pi}(\widetilde{F})\), we have \(\enumerateShared{i} \cap X_s \subseteq V(\psi(\widetilde{F}))\).
Now, let \(b \in \boldChildren{s}\) be such that \(\enumerateShared{i} \in \crossLink{b}\) and again set \(\ell \coloneqq \invLocalSharedMap{b}{i}\).
Then, \(\ell \in \natint{\demandCrossLink{s}}\) and \(\pi_b(F^\ell_b) \setminus\{q_{b, \ell}\} = Q_{b, \ell} \setminus \{q_{b,\ell}\} = \enumerateCrossLink{b}{\ell} \cap X_b = \enumerateShared{i} \cap Y_b\).
Additionally, \(\pi_b(F^\ell_b) \setminus \{q_{b,\ell}\} = \enumerateShared{i} \cap Y_b \subseteq V(F^\ell_b - q_{b, \ell}) \subseteq V(\psi(\widetilde{F}))\).
For \(b \in \boldChildren{s}\) with \(\enumerateShared{i} \notin \crossLink{b}\), we have \(\enumerateShared{i} \cap Y_b = \emptyset \subseteq V(\psi(\widetilde{F}))\).
Thus, \(\enumerateShared{i} \cap Y_s = \enumerateShared{i} \cap (X_s \cup \bigcup_{b\in \boldChildren{s}} Y_b) \subseteq V(\psi(\widetilde{F}))\) and if \(\enumerateShared{i} \subseteq Y_s\), we assign \(\psi(\widetilde{F})\) to \(\enumerateShared{i}\).
Otherwise, \(\enumerateShared{i} \in \crossLink{s}\) and by \Cref{item:local-shared-def-preserves-crossLink} of \Cref{def:local-shared-map}  we have \(\invLocalSharedMap{s}{i} \in \natint{\demandCrossLink{s}}\).
Set \(\ell \coloneqq \invLocalSharedMap{s}{i}\).
Thus, \(q_{s, \ell} \in \pi(\widetilde{F}) = \widetilde{M_i} \subseteq V(\psi(\widetilde{F}))\).
So, \(Q_{s, \ell} = (\enumerateCrossLink{s}{\ell} \cap Y_s) \cup \{q_{s, \ell}\} = (\enumerateShared{i} \cap Y_s) \cup \{q_{s, \ell}\} \subseteq V(\psi(\widetilde{F}))\) and we assign \(\psi(\widetilde{F})\) to \(Q_{s, \ell}\).

We claim that now all demands are satisfied.
Recall, that we already proved this for all terminal sets in \(\{R_{s, P}\}_{P \in \futurePart{\tau}}\).
For \(T \in \terminalsContainedNode{s}\), before considering \(F \in \inv{\tilde\pi}(\{\widetilde{M_i}\}_{i \in \natint{u}\setminus\invEnumerateShared{\blockedIndex}})\) we already assigned \(d(T) - \abs{\invEnumerateShared{T}}\) subgraphs to \(T\).
As \(T \subseteq Y_s\), for all \(i \in \invEnumerateShared{T}\) the subgraph \(\psi(\inv{\tilde\pi}(M_i))\) gets assigned to \(T\) satisfying its demand.
Let \(T \in \mathcal{X} \setminus (\terminalsContainedNode{s} \cup \{\crossLink{s}\})\).
By \Cref{item:local-shared-def-preserves-crossLink} of \Cref{def:local-shared-map}, \(\abs{\invEnumerateShared{T}} = d(T)\) and by construction for all \(i \in \invEnumerateShared{T}\), the subgraph \(\psi(\inv{\tilde\pi}(\widetilde{M_i}))\) gets assigned to \(T\) satisfying its demand.
Let \(b \in \boldChildren{s}\) and \(T \in \mathcal{U}_b\).
Out of \(\mathcal{F}_b\setminus \mathcal{F}^{\boundaryOp}_b\) we assign \(d(T) - \abs{\inv{\pi_b}(T) \cap \mathcal{F}^{\boundaryOp}_b}\) subgraphs to \(T\).
Now, consider any \(F \in \inv{\pi_b}(T) \cap \mathcal{F}^{\boundaryOp}_b\).
According to \Cref{item:simple-tcd-edge-partitions,item:simple-tcd-future-exact,item:simple-tcd-past} of \Cref{def:simple-tcw-dp}, there is a \(P \in \pastPart{\tau_b}\) with \(P \subseteq E(F)\).
So, there is a \(i \in \natint{w}\setminus \natint{\demandCrossLink{b}}\) with \(q_{b, i} \in V(F)\).
Consider \(\widetilde{F} \in \inv{\tilde\pi}(\widetilde{M_i})\), and notice that \(\psi(\widetilde{F})\) gets assigned to \(\pi_b(F) = T\).
As for each \(F \in \inv{\pi_b}(T) \cap \mathcal{F}^{\boundaryOp}_b\) the corresponding \(\psi(\widetilde{F})\) is different, we additionally assign \(\abs{\inv{\pi_b}(T) \cap \mathcal{F}^{\boundaryOp}_b}\) subgraphs to \(T\), satisfying its demand.
Therefore the demand of all \(\widetilde{\terminalsContainedNode{s}} \cup (\mathcal{X}\setminus \crossLink{s}) \cup \bigcup_{b\in\boldChildren{s}} \mathcal{U}_b = \mathcal{U}_s\) is satisfied.
Finally, consider \(i \in \natint{\demandCrossLink{s}}\) and let \(\widetilde{F} \coloneqq \inv{\tilde\pi}(\widetilde{M_{\localSharedMap{s}{i}}})\).
We assign \(\psi(\widetilde{F})\) to \(Q_{s,i}\) satisfying its demand.
Showing that we have a solution \((\mathcal{F}, \pi)\) for \(\mathscr{D}_{s,\tau}\).

Now, we need to show that this solution satisfies the additional requirements of \Cref{def:simple-tcw-dp}.
Recall that we already argued \Cref{item:simple-tcd-future-exact,item:simple-tcd-future-sensible}.
Note that \(E(\boundary{s}) \cap \bigcup E(\mathcal{F}) = E(\boundary{s}) \cap \bigcup E(\widetilde{\mathcal{F}})\).
Thus, by \Cref{item:simple-tcd-witness-edge-partitions} of \Cref{def:simple-tcd-witness} we have \Cref{item:simple-tcd-edge-partitions} of \Cref{def:simple-tcw-dp}.

Finally, consider \(i \in \natint{w}\) and let \(\mathcal{P}_i \coloneqq \pastAssign{\tau}{i}\).
If \(\mathcal{P}_i = \emptyset\), by \Cref{item:simple-tcd-witness-past} of \Cref{def:simple-tcd-witness} there is no \(\widetilde{F} \in \widetilde{\mathcal{F}}\) with \(q_{s,i} \in V(\widetilde{F})\).
As we do not add \(q_{s,i}\) to any solution subgraph, this shows that there is no \(F \in \mathcal{F}\) with \(q_{s,i} \in V(F)\).
Otherwise, let \(\widetilde{F} \coloneqq \inv{\tilde\pi}(\widetilde{M_{\localSharedMap{s}{i}}})\).
We have \(q_{s,i} \in \widetilde{F}\) and \(\{\psi(\widetilde{F})\} = \inv\pi(Q_{s,i})\) and note that \(\psi(\widetilde{F})\) is the unique \(F' \in \mathcal{F}\) with \(q_{s,i} \in V(F')\).

According to \Cref{item:simple-tcd-witness-past} of \Cref{def:simple-tcd-witness}, there is a partitioning of \(E(\widetilde{F} - q_{s,i})\) into \(\{\widetilde{E}_P\}_{P \in \mathcal{P}_i}\) such that for every \(P \in \mathcal{P}_i\) the graph \(\widetilde{F}[\widetilde{E}_P]\) is connected and \(\widetilde{E}_P \cap E(\boundary{s}) = P\) for all \(\tilde{p} \in \widetilde{Z} \cap V(\widetilde{F})\) there is exactly one \(P \in \mathcal{P}_i\) with \(\tilde{p} \in V(\widetilde{F}[\widetilde{E}_P])\).
Now, let \(P \in \mathcal{P}_i\) and set the partition required by \Cref{item:simple-tcd-past} of \Cref{def:simple-tcw-dp} to be \(E_P \coloneqq E(\psi(\widetilde{F}[\widetilde{E}_P]))\).
That is \(E_P = (\widetilde{E}_P \cap E(J_s)) \cup\bigcup_{b \in \boldChildren{s}, P' \in \pastPart{\tau_b}\cup\futurePart{\tau_b}\colon P' \subseteq \widetilde{E}_P} E(H_{b, P'}) = E(\psi(\widetilde{F})[E_P])\).
As \(\widetilde{E}_P \cap E(\boundary{s}) = P\) and as \(\widetilde{F}[\widetilde{E}_P]\) is connected, so is \(E_P \cap E(\boundary{s}) = P\) and \(\psi(\widetilde{F}[\widetilde{E}_P]) = \psi(\widetilde{F})[E_P]\) is connected as well, as desired.
All \(\{\widetilde{E}_P\}_{P \in \mathcal{P}_i}\) are pairwise disjoint; so, are \(\{E_P\}_{P \in \mathcal{P}_i}\) as well.
For each \(\tilde{p}_{b,P'} \in \widetilde{Z} \cap V(\widetilde{F})\) there is exactly one \(P \in \mathcal{P}_i\) with \(P' \subseteq P\).
So, \(\bigcup_{P \in \mathcal{P}_i}E_P = E(\psi(\widetilde{F} - q_{s,i})) = E(\psi(\widetilde{F}) - q_{s,i})\) and \(\{E_P\}_{P \in \mathcal{P}_i}\) is a partitioning of the edges in \(\psi(\widetilde{F}) - q_{s,i}\) and \Cref{item:simple-tcd-past} as well as \Cref{def:simple-tcw-dp} itself is satisfied.
\end{proof}

We have shown, that it is enough to check for each syntactically valid tuple \(\tau\) whether there are \((\tau_b)_{b \in \boldChildren{s}}\) such that for all \(b \in \boldChildren{s}\) we have \(\tau_b \in D(b)\) and \((\tau_b)_{b \in \boldChildren{s}}\) witnesses \(\tau \in D(s)\).
Now, we consider how to check for given \((\tau_b)_{b\in\boldChildren{s}}\) in \(\fpt\)-time whether \((\tau_b)_{b\in\boldChildren{s}}\) witness \(\tau \in D(s)\) with respect to some local to shared mappings \((\localSharedMapOp{a})_{a\in A}\) and shared mapping enumerator \(\enumerateSharedOp\).

For this, let \(C \coloneqq V(\dpTransitionGraph) \setminus \bigcup_{t \in \thinChildren{s}} X_t\).
For all \(t \in \thinChildren{s}\), we have that \(\abs{X_t} = 1\) and \(N(X_t) \subseteq X_s \cup V(\boundary{s}) \subseteq C\).
Thus, \(C\) is a vertex cover of \(\dpTransitionGraph\).
As \(C \neq \emptyset\) whenever \(V(\dpTransitionGraph) \setminus C \neq \emptyset\), it is a fracture modulator as well.
We see that \[C = V(X_s) \cup V(\boundary{s}) \cup \{q_{s,i}\}_{i \in \natint{w}} \cup \bigcup_{b \in \boldChildren{s}} \left(V(\boundary{b}) \cup \bigcup_{P \in \pastPart{\tau_b} \cup \futurePart{\tau_b}} V(\tilde{p}_{b, P})\right).\] Thus, \(\abs{C} \leq w + 2w + w + (w + 2)(2w + w) = \O{w^2}\).
As the tree-cut decomposition is simple, for all \(T \in \widetilde{\mathcal{T}}_s\) we have \(T \subseteq C\) and \(C\) is a fracture modulator for \(\augment{\widetilde{J}_s}{\widetilde{\mathcal{T}}_s}\) (i.e.,~the augmented graph of \(\mathscr{C}(s, \tau, (\tau_b)_{b \in \boldChildren{s}}, \enumerateSharedOp, (\localSharedMapOp{a})_{a\in A})\)) as well.
So, \Cref{stmt:augmented-frag-fpt} allows us to find a solution to the instance in \(\fpt\) time.
This motivates us to construct gadgets to simulate the additional restrictions given by \Cref{def:simple-tcd-witness}.

We can group the requirements of \Cref{def:simple-tcd-witness} into five categories.
\begin{itemize}
\item \Cref{item:simple-tcd-witness-edge-partitions,item:simple-tcd-witness-past} of \Cref{def:simple-tcd-witness} require some predefined edges and vertices not be used in the solution.
\item \Cref{item:simple-tcd-witness-future-sensible} of \Cref{def:simple-tcd-witness} requires some vertices not to be part of the solution subgraph assigned to specific terminal sets.
\item \Cref{item:simple-tcd-witness-future-exact,item:simple-tcd-witness-past,item:simple-tcd-witness-marked,item:simple-tcd-witness-sub-connected} of \Cref{def:simple-tcd-witness} require some vertices and edges to be part of specific solution subgraphs.
\item \Cref{item:simple-tcd-witness-past} of \Cref{def:simple-tcd-witness} requires that the solution subgraphs assigned to a specific terminal set can be partitioned into edge-disjoint connected subgraphs.
\item \Cref{item:simple-tcd-witness-marked} of \Cref{def:simple-tcd-witness} requires that adjacent to a vertex at most one edge is chosen in any solution.
\end{itemize}

Ensuring that some vertices and edges are not used in any solution subgraph is quite easy.
We can just remove them.
Ensuring that some vertices are not used in a specific solution subgraph is more difficult and we leave this for later.

For the remaining gadgets it is very useful to sub-divide the edges contained in \(\dpTransitionGraph\) with two vertices.
That is for an edge \(e \in E(\dpTransitionGraph{})\) we replace \(e\) by a \(4\)-path \(\subDivP{e}\).
Call the newly created vertices \(\subDivA{e}\) and \(\subDivB{e}\) respectively, set \(\subDivV{e} \coloneqq \{\subDivA{e}, \subDivB{e}\}\) to be the inner vertices of \(\subDivP{e}\), set \(\subDivE{e} \coloneqq \subDivA{a}\subDivB{e}\) to be the edge between \(\subDivA{e}\) and \(\subDivB{e}\), and call the modified graph \(\dpTransitionGraph{}'\).
Note that if \(\abs{C} \geq 5\), the set \(C\) stays a fracture modulator of the augmented graph.
It is easy to transfer any solution of \(\dpTransitionGraph\) to \(\dpTransitionGraph{}'\).
Now consider a solution \((\mathcal{F}', \pi')\) on \(\dpTransitionGraph{}'\), for all \(F \in \mathcal{F}'\) let \(L_{F} \coloneqq \{e \in E(\dpTransitionGraph{}[C]) \mid \subDivE{e} \in E(F)\}\) be the edges \(e \in E(\dpTransitionGraph{})\) where \(F'\) contains the middle edge of the 4-path that replaced \(e\).
Note that \(\phi(F) \coloneqq \dpTransitionGraph{}[L_{F}]\) is connected, \(V(\phi(F)) = V(F) \cap V(\widetilde{J}_{s})\), and for every edge \(e \in E(\dpTransitionGraph)\) there is at most one \(F' \in \mathcal{F}'\) with \(e \in E(\phi(F')) = L_{F'}\).
Thus, we can assign \(\phi(F)\) to \(\pi'(F')\) to obtain a solution for the instance on the host graph \(\dpTransitionGraph\).

Now, consider a set of edges \(D \subseteq E(\dpTransitionGraph{})\) and a terminal set \(T \in \widetilde{\mathcal{T}}_s\).
To ensure that there is a \(F \in \inv{\tilde\pi}(T)\) with \(D \subseteq E(F)\), we reduce the demand of \(T\) by one and increase set the demand of \(T \cup \bigcup\subDivV{D}\) to 1.
Call the modified terminal set \(\widetilde{\mathcal{T}}_s'\) and the modified demand function \(d'\).
Note that \(C \cup \aug{T \cup \bigcup\subDivV{D}}\) is a fracture modulator of the augmented graph of the modified instance.

\begin{lemma}
\label{stmt:gadget-edges-contained-in-specific-solution}
There is a solution \((\widetilde{\mathcal{F}}, \tilde\pi)\) to the instance \((\dpTransitionGraph, \widetilde{\mathcal{T}}_s, d)\) such that there is a \(F \in \inv{\tilde\pi}(T)\) with \(D \subseteq E(F)\) if and only if there is a solution \((\widetilde{\mathcal{F}}', \tilde\pi')\) to the instance \((\dpTransitionGraph{}', \widetilde{\mathcal{T}}_s', d')\).
The fracture number of \(\augment{\dpTransitionGraph{}'}{\widetilde{\mathcal{T}}_s'}\) is at most \(\abs{C} + 1\).
\end{lemma}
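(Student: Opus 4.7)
The plan is to verify both directions of the equivalence and the fracture-number bound by direct transfer of solutions between $\dpTransitionGraph$ and $\dpTransitionGraph{}'$, using the map $\phi$ already introduced (which collapses each subdivided $4$-path $\subDivP{e}$ back to $e$) together with its inverse operation. The forward direction is routine: given $(\widetilde{\mathcal{F}}, \tilde\pi)$ with witness $F \in \inv{\tilde\pi}(T)$ containing $D$, I lift each tree by replacing every edge $e$ of it with its $4$-path $\subDivP{e}$; the lift $\tilde F$ of $F$ then contains every vertex of $\bigcup\subDivV{D}$, so I re-assign $\tilde F$ to $T \cup \bigcup \subDivV{D}$, which matches the reduced demand $d'(T) = d(T)-1$ and the new demand $d'(T \cup \bigcup \subDivV{D}) = 1$, while all other trees retain their original assignments. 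Edge-disjointness and connectivity are preserved by the subdivision operation.

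For the backward direction let $(\widetilde{\mathcal{F}}', \tilde\pi')$ solve the modified instance and set $F^* \coloneqq \inv{\tilde\pi'}(T \cup \bigcup \subDivV{D})$. The goal is to first arrange $\subDivE{e} \in E(F^*)$ for every $e \in D$, and then apply $\phi$. The key structural observation is that each $\subDivA{e}$ and $\subDivB{e}$ has degree exactly $2$ in $\dpTransitionGraph{}'$, with adjacent edges $\subDivE{e}$ and one original-endpoint edge, and both lie in $V(F^*)$ by the assignment. If $\subDivE{e} \notin E(F^*)$, then both original-endpoint edges $u\subDivA{e}$ and $v\subDivB{e}$ must lie in $E(F^*)$; by edge-disjointness, any other tree $F'$ containing $\subDivE{e}$ would then have both $\subDivA{e}$ and $\subDivB{e}$ as leaves incident only to $\subDivE{e}$, forcing $V(F') = \{\subDivA{e}, \subDivB{e}\}$. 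Yet no terminal set of $\widetilde{\mathcal{T}}_s'$ lies inside that pair (the only candidate, $T \cup \bigcup\subDivV{D}$, contains a non-empty $T$ disjoint from subdivision vertices), so no such $F'$ exists. Hence $\subDivE{e}$ is unused and can be added to $F^*$; connectivity is preserved because $u$ and $v$ already lie in the same component of $F^*$. After this cleanup for every $e \in D$, applying $\phi$ to every tree yields edge-disjoint connected subgraphs of $\dpTransitionGraph$, and assigning $\phi(F^*)$ to $T$ (possible since $T \subseteq V(\phi(F^*))$) and $\phi(F'')$ to $\tilde\pi'(F'')$ otherwise gives a solution to $(\dpTransitionGraph, \widetilde{\mathcal{T}}_s, d)$ with $D \subseteq E(\phi(F^*))$, as required.

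For the fracture-number bound I take the modulator $C' \coloneqq C \cup \{\aug{T \cup \bigcup \subDivV{D}}\}$ of size $\abs{C}+1$. Each non-$C$ vertex of $\dpTransitionGraph$ is a thin-child bag vertex of degree at most $2$, so in $\dpTransitionGraph{}' - C'$ its component grows by at most two size-$2$ subdivision-pairs along its incident $4$-paths; edges between two $C$-vertices contribute isolated subdivision-pairs; augmented vertices of old terminal sets are isolated because $T' \subseteq C$ for every $T' \in \widetilde{\mathcal{T}}_s$; and $\aug{T \cup \bigcup \subDivV{D}}$ sits in the modulator. All resulting components thus have size bounded by a small constant, which is comfortably below $\abs{C}+1$ given that $\abs{C} = \Theta(w^2)$. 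The main obstacle is the backward-direction cleanup, but the degree-$2$ rigidity of the subdivision vertices together with the non-existence of any terminal set confined to $\{\subDivA{e}, \subDivB{e}\}$ makes the local swap unnecessary and the argument clean.
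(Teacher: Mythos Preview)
Your proposal is correct and follows essentially the same approach as the paper. Both directions use the canonical lift/collapse via subdivision, and for the backward direction you and the paper reach the same key fact—that for each $e\in D$ the middle edge $\subDivE{e}$ is either already in $F^*$ or used by no other tree—by slightly different but equivalent degree-$2$ arguments; the fracture-number bound via $C \cup \{\aug{T \cup \bigcup \subDivV{D}}\}$ is exactly what the paper notes just before the lemma.
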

\begin{proof}
The solution \((\widetilde{\mathcal{F}}, \tilde\pi)\) can be transferred canonically to the instance \((\dpTransitionGraph{}', \widetilde{\mathcal{T}}_s', d')\).
Now, consider a solution \((\widetilde{\mathcal{F}}', \tilde\pi')\) to the instance \((\dpTransitionGraph{}', \widetilde{\mathcal{T}}_s', d')\).
Let \(\{F\} \coloneqq \inv{\tilde\pi}(T \cup \bigcup \subDivV{D})\), \(e \in D\), and let \(F' \in \widetilde{\mathcal{F}}' \setminus \{F\}\).
Then, \(F\) contains at-least two edges of \(E(\subDivP{e})\).
Therefore \(F'\), contains at most 1 edge of \(E(\subDivP{e})\) and as \(F'\) is connected and contains a vertex apart from \(\subDivV{e}\), the graph \(F'\) does not contain the edge \(\subDivE{e}\).
So, we assign \(\phi(F \cup \bigcup_{e\in D} \subDivE{e})\) to \(T\) and \(\phi(F')\) to \(\tilde\pi'(F')\), obtaining a solution to the original instance with the desired additional property.
\end{proof}

A similar idea also works, to ensure that for a terminal set \(T \in \widetilde{\mathcal{T}}_s\) and a set \(U\subseteq V(\dpTransitionGraph{})\), there is a \(F \in \inv{\tilde\pi}(T)\) with \(U \subseteq V(F)\) and \(D \subseteq E(F)\).
We reduce the demand of \(T\) by 1 and increase the demand of \(T \cup U\cup\bigcup_{e \in D}\subDivV{e}\) by 1.
Call the modified set of terminal sets \(\widetilde{\mathcal{T}}''\) and the demand function \(d''\).

\begin{corollary}
\label{stmt:gadget-edges-and-vertices-contained-in-specific-solution}
There is a solution \((\widetilde{\mathcal{F}}, \tilde\pi)\) to the instance \((\dpTransitionGraph, \widetilde{\mathcal{T}}_s, d)\) such that there is a \(F \in \inv{\tilde\pi}(T)\) with \(U \subseteq V(F)\) and \(D \subseteq E(F)\) if and only if there is a solution \((\widetilde{\mathcal{F}}'', \tilde\pi'')\) to the instance \((\dpTransitionGraph{}', \widetilde{\mathcal{T}}_s'', d'')\).
The fracture number of \(\augment{\dpTransitionGraph{}'}{\widetilde{\mathcal{T}}_s''}\) is at most \(\abs{C} + 1\).
\end{corollary}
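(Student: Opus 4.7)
The plan is to mimic the argument of \Cref{stmt:gadget-edges-contained-in-specific-solution} almost verbatim, with the only change being that the new aggregated terminal set carries the extra vertices of $U$ along with $T$ and the subdivision markers $\bigcup_{e\in D}\subDivV{e}$. Concretely, $T \cup U \cup \bigcup_{e\in D}\subDivV{e}$ plays exactly the role that $T \cup \bigcup_{e\in D}\subDivV{e}$ played before; the extra vertices of $U$ simply impose an additional containment requirement on the single subgraph assigned to this combined terminal.

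For the forward direction, given a solution $(\widetilde{\mathcal{F}},\tilde\pi)$ of the original instance with a designated $F \in \inv{\tilde\pi}(T)$ satisfying $U \subseteq V(F)$ and $D \subseteq E(F)$, first lift everything to the subdivided graph $\dpTransitionGraph{}'$ by replacing each $e \in E(\dpTransitionGraph)$ by $\subDivP{e}$. Then $F$ canonically lifts to a subgraph $F'$ containing all three edges of $\subDivP{e}$ for every $e \in D$, hence in particular containing $\subDivV{e}$; combined with $U \subseteq V(F) \subseteq V(F')$, we get $T \cup U \cup \bigcup_{e\in D}\subDivV{e} \subseteq V(F')$. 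Reassign $F'$ to this aggregated terminal and keep the remaining assignments unchanged to obtain a feasible solution to $(\dpTransitionGraph{}', \widetilde{\mathcal{T}}_s'', d'')$.

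For the backward direction, let $(\widetilde{\mathcal{F}}'',\tilde\pi'')$ be a solution to the modified instance and let $\{F\} \coloneqq \inv{\tilde\pi''}(T \cup U \cup \bigcup_{e\in D}\subDivV{e})$. As in the previous lemma, for every $e \in D$ the tree $F$ contains both vertices $\subDivV{e}$ and is connected, hence contains at least two edges of $\subDivP{e}$; any other tree $F'$ can therefore contain at most one edge of $\subDivP{e}$ and, being connected with a vertex outside $\subDivV{e}$, cannot use $\subDivE{e}$. Applying the projection $\phi$ from the proof of \Cref{stmt:gadget-edges-contained-in-specific-solution} to the tree $F \cup \bigcup_{e\in D}\subDivE{e}$ yields a connected subgraph of $\dpTransitionGraph$ that contains $U$ (since $U \subseteq V(\dpTransitionGraph)$ is preserved by $\phi$) and the edges $D$; assigning it to $T$ and projecting every other tree via $\phi$ provides the required solution to the original instance, with the desired additional property.

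The fracture-number bound is straightforward: $C \cup \aug{T \cup U \cup \bigcup_{e\in D}\subDivV{e}}$ is a fracture modulator of $\augment{\dpTransitionGraph{}'}{\widetilde{\mathcal{T}}_s''}$ of size at most $|C|+1$, since $U \subseteq C$ by construction (all vertices of $\dpTransitionGraph$ relevant to terminal sets already lie in $C$ because the tree-cut decomposition is simple), and the added augmented vertex absorbs the only new terminal set. The main obstacle, as before, is merely bookkeeping: ensuring that the auxiliary augmented vertex together with $C$ still separates the graph into components of size at most $|C|+1$, but since subdividing edges only introduces degree-$2$ vertices whose neighborhoods are already in $C$, this verification is routine.
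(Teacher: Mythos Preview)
Your proposal is correct and matches the paper's approach: the paper states the corollary immediately after \Cref{stmt:gadget-edges-contained-in-specific-solution} without a separate proof, simply noting that ``a similar idea also works'' by augmenting the terminal set to $T \cup U \cup \bigcup_{e\in D}\subDivV{e}$. Your sketch spells out exactly this. One minor remark on the fracture-number justification: you argue that $U \subseteq C$, but the cleaner reason the bound holds is that the new augmented vertex itself is placed in the modulator, so its additional adjacencies to $U$ are irrelevant for the component sizes after removal---this is the same mechanism as in the lemma.
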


To model \Cref{item:simple-tcd-witness-future-exact} of \Cref{def:simple-tcd-witness}, we first notice that according to \Cref{item:simple-tcd-witness-past} of \Cref{def:simple-tcd-witness} all edges in \(\bigcup\pastPart\tau\) are used by solution subgraphs containing a vertex of \(\{q_{s,i}\}_{i \in \natint{w}}\).
Now, consider a solution \((\widetilde{\mathcal{F}}, \widetilde{\pi})\) to the instance \((\widetilde{J}_{s},\widetilde{\mathcal{T}}_{s}, d)\) such that for all \(P \in \futurePart\tau\), there is a distinct \(F_{P}\in \inv{\widetilde{\pi}}(R_{s,P})\) with \(P\subseteq E(F_{P})\) and \(E(F_{P}) \cap E(\boundary{s}) \subseteq \bigcup\futurePart\tau\).
Then, for each \(P \in \futurePart\tau\) we notice that there is no edge in \(\bigcup\futurePart\tau\setminus P\) that is not contained in a \(\{F_{P'}\}_{P' \in \in \futurePart\tau \setminus \{P\}}\).
So, we even have \(E(F_{P}) \cap E(\boundary{s}) = P\) and \Cref{stmt:gadget-edges-and-vertices-contained-in-specific-solution} is enough to model \Cref{item:simple-tcd-witness-future-exact} of \Cref{def:simple-tcd-witness} completely.

Also \Cref{item:simple-tcd-witness-sub-connected} of \Cref{def:simple-tcd-witness} can almost be modeled using \Cref{stmt:gadget-edges-and-vertices-contained-in-specific-solution}.
But in this case we do not know to which terminal set the solution subgraph \(F \in \widetilde{\mathcal{F}}'\) with \(\tilde{p}_{b,P} \in V(F)\) and \(P \subseteq E(F)\) gets assigned.
This is needed to apply \Cref{stmt:gadget-edges-and-vertices-contained-in-specific-solution}.
To remedy this situation, we check each possible simultaneous assignment of \(\tilde{p}_{b,P}\) to a terminal set \(\pi(F)\).
There are \(\abs{\natint{u} \setminus \invEnumerateShared\blockedIndex} + \abs{\futurePart{\tau}} + \abs{\widetilde{\terminalsContainedNode{s}}} \leq u + w + 2^{w} = 2^{\O{w}}\) different terminal sets and at most \(w(w+2)\) vertices \(\tilde{p}_{b,P}\).
So, there are at most \(\left(2^{\O{w}}\right)^{w(w+2)} = 2^{\O{w^3}}\) different simultaneous assignments, each of which we can check using the gadget from \Cref{stmt:gadget-edges-and-vertices-contained-in-specific-solution}.
The fracture number of the augmented graph increases by at most \(w(w+2)\); so, it still bounded by \(\O{w^2}\) and we can decide the resulting instances in \(\fpt\)-time using \Cref{stmt:augmented-frag-fpt}.

Now, consider an \(i \in \natint{w}\) with \(\mathcal{P}_i \coloneqq \invPastAssign{\tau}{i} \neq \emptyset\).
First, we assume that \(i \in \natint{\demandCrossLink{s}}\).
This ensures that we know, which solution subgraph uses \(q_{s,i}\) in \((\widetilde{\mathcal{F}}, \tilde\pi)\).
Let \(j\coloneqq \localSharedMap{s}{i}\) and set \(\{F\}\coloneqq \inv{\tilde\pi}(\widetilde{M_{j}})\).
Notice that \(\widetilde{M_{j}} = \{q_{s,i}, m_j\} \cup (\enumerateCrossLink{s}{i} \cap X_s)\).
If \((\widetilde{\mathcal{F}}, \tilde{\pi})\) satisfies \Cref{item:simple-tcd-witness-past} of \Cref{def:simple-tcd-witness}, we can partition \(E(F - q_{s,i})\) into \(\{\widetilde{E}_P\}_{P \in \mathcal{P}_i}\) such that for all \(P \in \mathcal{P}_i\) the graph \(F[\widetilde{E}_P] = \dpTransitionGraph{}[\widetilde{E}_P]\) is connected, \(\widetilde{E}_P \cap E(\boundary{s}) = P\).

To model this, we remove \(q_{s,i}\) from \(\dpTransitionGraph{}'\) and remove the terminal set \(\widetilde{M_j}\).
Then, for all \(P \in \mathcal{P}_i\), we increase the demand of \(R_{s, P}\) by 1 and require the there is a \(F_P \in \inv{\tilde\pi'}(R_{s,P})\) with \(E(F_P) \cap E(\boundary{s}) = P\).
Additionally, we require for all \(x \in \widetilde{M_j} \setminus \{q_{s,i}\} =\{m_j\}\cup( \enumerateCrossLink{s}{i}\cap X_s)\) there is a \(P \in \mathcal{P}_i\) with \(x \in V(\dpTransitionGraph{}'[F_P])\) which we can ensure using the same approach as for \Cref{item:simple-tcd-witness-sub-connected} of \Cref{def:simple-tcd-witness}.
This approach also allows us to model \Cref{item:simple-tcd-witness-past} of \Cref{def:simple-tcd-witness} completely for \(i \in \natint{\demandCrossLink{s}}\).
For \(i \in \natint{w} \setminus \natint{\demandCrossLink{s}}\), we enumerate all choices of solution subgraphs that might contain a solution subgraph using \(q_{s,i}\) and then check each simultaneous choice using the same approach.
We have to watch out to not use any solution subgraph assigned to a \(\{R_{s,P}\}_{P \in \futurePart{\tau}}\), as this would violate \Cref{item:simple-tcd-witness-future-sensible} of \Cref{def:simple-tcd-witness}.
Given a solution to one of these this modified instance on \(\dpTransitionGraph{}'\), we can construct a solution to the original instance on \(\dpTransitionGraph\) such that \Cref{item:simple-tcd-witness-past} of \Cref{def:simple-tcd-witness} is satisfied.
Each such instance has a fracture modulator of its augmented graph of size \(\O{w^2}\) and we need to check at most \(\left(2^{\O{w}}\right)^{w - \demandCrossLink{s}} = 2^{\O{w^2}}\) such instances, which we can do in \(\fpt\)-time using \Cref{stmt:augmented-frag-fpt}.

\begin{remark}
We can model \Cref{item:simple-tcd-witness-edge-partitions,item:simple-tcd-witness-future-sensible,item:simple-tcd-witness-future-exact,item:simple-tcd-witness-past,item:simple-tcd-witness-sub-connected} completely using \(2^{\O{w^3}}\) instances of the \(\gstp\)-problem each with fracture number of the augmented graph at most \(\O{w^2}\).
\end{remark}

It remains to design a gadget that at most one edge adjacent to any \(\{m_i\}_{i \in \natint{u}\setminus\invEnumerateShared\blockedIndex}\) is used in a solution subgraph.
The easiest solution to this is to choose for each \(i \in \natint{u}\setminus \invEnumerateShared\blockedIndex\) at most one edge adjacent to \(m_i\) that should be used in the solution subgraph \(\inv{\tilde\pi}(\widetilde{M_i})\) and then remove all other edges adjacent to \(m_i\) from the graph.
Whether there is a simultaneous choice for all \(i\) can be checked using \Cref{stmt:gadget-edges-contained-in-specific-solution}.

But there is a more elegant solution that allows to simulate this restriction within in single instance.
For this, let \(i\in \natint{u}\setminus \invEnumerateShared{\blockedIndex}\) be given and order the neighbors \(N(m_i)\) arbitrarily and call them \(u_0, u_1, \dots, u_{k-1}\) where \(k + 1 \coloneqq \deg{m_i}\).
For each \(j \in \natintZ{k-1}\), let the vertex \(t_{j,0} \in \subDivV{m_iu_j}\) be adjacent to \(m_i\) and let \(t_{j,3} \in \subDivV{m_iu_j}\) be adjacent to \(u_j\).
Connect \(t_{j,0}\) to \(t_{(j+1) \xmod k,0}\) with a 4-path and call the two newly created vertices \(t_{j,1}\) and \(t_{j,2}\) respectively where \(t_{j,1}\) is adjacent to \(t_{j,0}\).
Analogously, connect \(t_{j,3}\) to \(t_{(j+1)\xmod k,3}\) calling the new vertices \(t_{j,4}\) and \(t_{j,5}\) where \(t_{j,4}\) is adjacent to \(t_{j,3}\).
Finally, connect \(t_{k-1,2}\) and \(t_{k-1,5}\) with an additional edge.
This is illustrated in \Cref{fig:gadget-at-most-one-adjacent}.
Additionally, we add for all \(j \in \natintZ{k-1}\) the terminal set \(T^i_j \coloneqq \{t_{j, \ell}\}_{\ell \in \natintZ{5}}\) with demand one.

\begin{figure}[tp]
\centering
\includegraphics[width=0.6\textwidth]{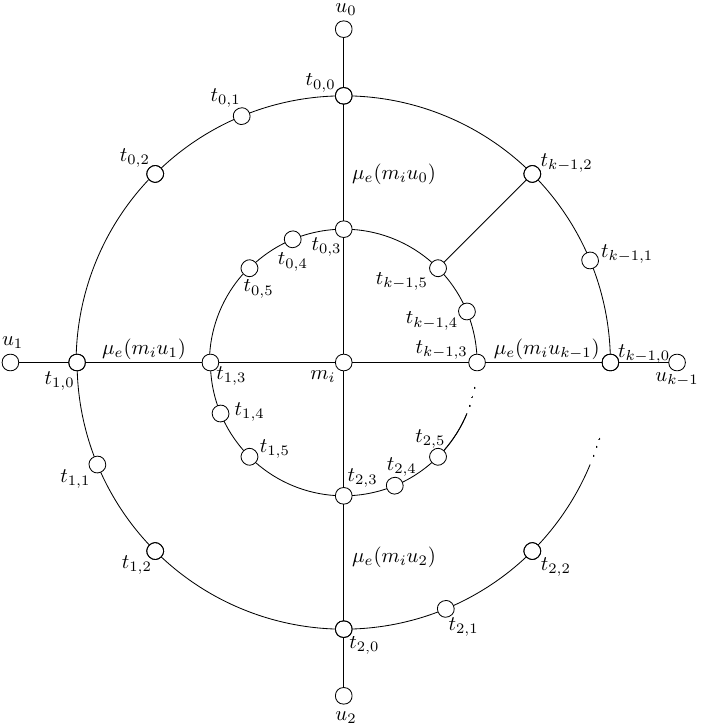}
\caption{\label{fig:gadget-at-most-one-adjacent} The gadget ensuring that at most one edge adjacent to \(m_i\) is used in the solution.}
\end{figure}

\begin{lemma}
\label{stmt:gadget-at-most-one-adjacent-correct}
The gadget depicted in \Cref{fig:gadget-at-most-one-adjacent} ensures that at most one edge adjacent to \(m_i\) is used in a solution subgraph.
The fracture number of the augmented graph containing this gadget for all \(i \in \natint{u}\setminus \invEnumerateShared{\blockedIndex}\) is at most \(\O{w^2}\).
\end{lemma}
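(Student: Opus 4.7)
The plan is to prove the two parts of the lemma---correctness of the gadget and the fracture number bound---separately. For correctness, I would introduce for each terminal set $T^i_j$ its \emph{canonical tree}, namely the six-vertex path $t_{j,2}\text{--}t_{j,1}\text{--}t_{j,0}\text{--}t_{j,3}\text{--}t_{j,4}\text{--}t_{j,5}$ consisting of the four ``spoke'' edges $t_{j,0}t_{j,1}$, $t_{j,1}t_{j,2}$, $t_{j,3}t_{j,4}$, $t_{j,4}t_{j,5}$ together with the direct subdivision edge $t_{j,0}t_{j,3}$. These canonical trees are pairwise edge-disjoint, span their respective terminal sets, and use no edge adjacent to $m_i$. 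In the direction ``the at-most-one-adjacent-edge constraint is satisfiable implies the gadgeted instance is solvable'', one simply extends any constraint-satisfying solution by the $k$ canonical trees. For the converse direction, I would normalize any given solution of the gadgeted instance so that each $F_j \coloneqq \tilde\pi^{-1}(T^i_j)$ coincides with its canonical tree and $\widetilde{F} \coloneqq \tilde\pi^{-1}(\widetilde{M_i})$ has $m_i$ only as a leaf, whence the at-most-one property follows.

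The hardest part will be establishing this normalization, and the planned tool is a pigeon-hole argument on the \emph{cross edges} of the gadget, namely those edges that link a top triple $\{t_{j,0},t_{j,1},t_{j,2}\}$ to the corresponding bottom triple $\{t_{j,3},t_{j,4},t_{j,5}\}$. Concretely, the cross edges are the $k$ direct spokes $t_{j,0}t_{j,3}$ and the single closing edge $t_{k-1,2}t_{k-1,5}$, totalling $k+1$. Each $F_j$ must use at least one cross edge to bridge its top and bottom triples, consuming $k$ of the $k+1$ available; and each ``tunnel'' of $\widetilde{F}$ between $m_i$ and a terminal in $C \setminus \{m_i\}$ requires an additional cross edge. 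Consequently at most one tunnel is possible and $\widetilde{F}$ uses at most one edge adjacent to $m_i$. Any non-canonical $F_j$ that routes through $m_i$ can be shortcut since $m_i$ is not a terminal of $T^i_j$; a careful case analysis on the distribution of cross edges among the $F_j$'s then confirms that each canonical replacement is edge-compatible with the remainder of the solution.

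For the fracture number bound, I would verify that the base modulator $C$ of $\augment{\dpTransitionGraph{}'}{\widetilde{\mathcal{T}}_s'}$, of size $\O{w^2}$, gives rise to a fracture modulator of the augmented graph with all gadgets attached of size $\O{w^2}$. Since $m_i \in C$ and $N(m_i) \subseteq C$ by construction of $\dpTransitionGraph$, removing $C$ completely disconnects each gadget from the rest of the graph and from every other gadget. What remains of a single gadget is one connected component on the $6k$ interior vertices $\{t_{j,\ell}\}_{j,\ell}$ together with the $k$ augmented vertices $\{\aug{T^i_j}\}_j$, of total size $7k = \O{w^2}$. Extending $C$ by at most $\O{w^2}$ further graph vertices (padding up to size $7|C|$) then produces a fracture modulator satisfying the component-size condition, which gives the claimed $\O{w^2}$ upper bound on the fracture number.
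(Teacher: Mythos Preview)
Your core pigeon-hole observation---the $k+1$ cross edges $L = \{t_{j,0}t_{j,3}\}_j \cup \{t_{k-1,2}t_{k-1,5}\}$ separate the top layer from the bottom layer, and each $F_j$ must consume at least one of them---is exactly the mechanism the paper uses, and your fracture-number argument is essentially the paper's as well.

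The divergence is in how you finish the converse direction. The paper does \emph{not} normalize the $F_j$ to canonical trees. Instead it observes that an ``edge adjacent to $m_i$'' in the unsubdivided graph $\dpTransitionGraph$ is precisely a middle subdivision edge $t_{j,0}t_{j,3} \in L$; since at most one $L$-edge $e^*$ lies outside $D = \bigcup_j E(F_j)$, the projection $\phi$ of any tree in $\widetilde{\mathcal F}''$ already uses at most one such edge, with no normalization needed. What remains is connectivity of the $\phi$-images, which the paper handles by showing that the $u_j$ become pairwise separated inside the gadget once the $D$-edges are removed, together with a single edge swap $t_{k-1,2}t_{k-1,5} \leftrightarrow t_{0,0}t_{0,3}$ for the exceptional case. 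Your normalization route, by contrast, carries a genuine gap: nothing forces $F_j$ to use its \emph{own} cross edge $t_{j,0}t_{j,3}$, so replacing it by the canonical tree can collide with an edge already committed to $\widetilde F$ or to some $F_{j'}$ (for instance, $F_0$ may route through $m_i$ and use $t_{1,0}t_{1,3}$ instead). The promised ``careful case analysis'' is not supplied and is not obviously routine. Moreover, your intermediate target ``$m_i$ is a leaf of $\widetilde F$'' is a statement about the subdivided graph; the actual constraint concerns middle edges, and once you have the pigeon-hole bound on $L$-edges the detour through normalization buys nothing.
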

\begin{proof}
First, note that any solution to the original instance where at most one edge adjacent to \(m_i\) is used can be transferred to a solution to the instance with the gadget.
Now, consider a solution \((\widetilde{\mathcal{F}}',\tilde\pi')\) to the instance with the gadget.
Let \(i \in \natint{u} \setminus \invEnumerateShared\blockedIndex\) and denote with \(D \coloneqq \bigcup_{j \in \natintZ{k-1}} E(\inv{\tilde\pi'}(T^i_j))\) all edges that are used to connect the new terminal sets for the gadget of this \(i\).
Set \(L \coloneqq \{\subDivE{m_iu_\ell}\}_{\ell\in\natintZ{k-1}} \cup \{t_{k-1,2}t_{k-1,5}\}\).
This set disconnects each terminal set in \(\{T^i_j\}_{j \in \natintZ{k-1}}\) and \(\abs{L} = k+1\).
So, there is at most one edge \(e^* \in L\setminus D\).
Denote with \(\widetilde{\mathcal{F}}''\) the solution subgraphs in \(\widetilde{\mathcal{F}}'\) that are not assigned to any \(\{T^i_j\}_{i \in \natint{u}\setminus\invEnumerateShared\blockedIndex, j \in \natintZ{k-1}}\).
If there is a \(\ell^* \in \natintZ{k-1}\) with \(e^* = \subDivE{m_iu_{\ell^*}}\), we claim that \(\phi(\widetilde{\mathcal{F}}'')\) with the canonical mapping is a solution to the original instance where at most one edge adjacent to \(m_i\) is used.
Otherwise, we have \(e^* = t_{k-1,2}t_{k-1,5}\).
Then, we replace the edge \(t_{k-1,2}t_{k-1,5}\) by \(\subDivE{m_iu_0}\) in the solution subgraphs in \(\widetilde{\mathcal{F}}'\).
We now again claim that after replacement \(\phi(\widetilde{\mathcal{F}}'')\) with the canonical mapping is a solution to the original instance where at most one edge adjacent to \(m_i\) is used.

First, we show that after removing the edges in \(D\) from the gadget, the vertices \(\{u_j\}_{j \in \natintZ{k-1}}\) are disconnected inside the gadget.
Showing that all \(\phi(\widetilde{\mathcal{F}}'')\setminus \phi(\inv{\tilde\pi'}(\widetilde{M_i}))\) are connected.
Consider any path \(P\) between \(u_{\ell}\) and \(u_{\ell'}\) where \(\ell \neq \ell'\) that only uses edges introduced by the gadget and \(\bigcup_{j \in \natintZ{k-1}} E(\subDivP{m_iu_j})\) such that \(E(P)\) is disjoint from \(D\).
Then, \(P\) either contains both edges adjacent to a \(p_{x, 1}\) for some \(x \in \natintZ{k-1}\) or two edge of \(L\).
The first case is not possible as at least one edge adjacent to \(p_{x,1}\) is contained in \(D\); the second case is not possible as \(\abs{L \setminus D} \leq 1\).
Additionally, all \(\phi(\widetilde{\mathcal{F}}'')\setminus \phi(\inv{\tilde\pi'}(\widetilde{M_i}))\) do not use an edge adjacent to \(m_i\).

Now, consider \(\{F\} \coloneqq \inv{\tilde\pi'}(\widetilde{M_{i}})\).
The edge \(e^*\) is reachable from all terminals in \(\widetilde{M_i}\) using only edges from \(F\).
If \(e^* = t_{k-1,2}t_{k-1,5}\), we also have that \(t_{0,0}\) or \(t_{0,3}\) as not both edges adjacent to \(t_{k-1,1}\) or \(t_{k-1,4}\) can be past of \(F\).
So, when we replace \(e^*\) by \(t_{0,0}t_{0,3}\) in \(F\), it stays connected.
Thus, \(\phi(F)\) is connected in any case.
This shows that the gadget works as expected.

Finally, consider \(C\) and notice that \(m_i\) and all \(\{u_j\}_{j \in \natintZ{k-1}}\) are contained in \(C\).
So, the connected component after removing \(C\) created by the gadget has at most \(6(k+1)=\O{\deg{m_i}}\) vertices.
We have that \(N(m_i) \subseteq C\). So, \(6(k+1) = \O{w^2}\) and there is a fracture modulator of size \(\O{w^2}\).
\end{proof}

So, given \((\tau_b)_{b\in\boldChildren{s}}\), \((\localSharedMapOp{a})_{a\in A}\), and \(\enumerateSharedOp\), we can decide by solving \(2^{\O{w^3}}\) \(\gstp\)-instances each with fracture number of the augmented graph at most \(\O{w^2}\), whether \((\tau_b)_{b\in\boldChildren{s}}\) witness \(\tau \in D(s)\) with respect to \((\localSharedMapOp{a})_{a\in A}\) and \(\enumerateSharedOp\).
As there are at most \(u^{w(w+3)} = 2^{\O{w^{2}\log w}}\) local shared mappings and \(\left(2^{w} + (w+3)w +w + 2+ 1\right)^{u} = 2^{\O{w^{3}}}\) shared mapping enumerators, we can check whether overall \((\tau_{b})_{b \in \boldChildren{s}}\) witness \(\tau \in D(s)\) by solving \(2^{\O{w^3}}\) \(\gstp\)-instances each with fracture number of the augmented graph at most \(\O{w^2}\)
According to \Cref{stmt:augmented-frag-runtime} this takes time at most \(2^{2^{\O{w^8}}}\abs{\widetilde{J}_s}\).
Combined with \Cref{stmt:simple-tcd-witness-implies-dp,stmt:simple-tcd-dp-implies-witness} this shows that the \(D(s)\) can be computed in \(\fpt\)-time given that for all bold children the dynamic program is calculated correctly.

\begin{corollary}
\label{stmt:simple-dp-propagation-time}
If for all \(b \in \boldChildren{s}\), we have computed \(D(b)\), we can compute \(D(s)\) in time \(2^{2^\O{w^8}}\abs{\widetilde{J}_s}\).
\end{corollary}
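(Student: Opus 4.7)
The plan is to combine the equivalence established in Lemmas \ref{stmt:simple-tcd-dp-implies-witness} and \ref{stmt:simple-tcd-witness-implies-dp} with the gadget reductions developed in the discussion preceding the corollary. First I would iterate over all syntactically valid tuples $\tau \in \mathcal{V}_s$; since $\abs{E(\boundary{s})} \leq w$ and $\abs{\crossLink{s}} \leq w$, we have $\abs{\mathcal{V}_s} \leq 2^{\O{w \log w}}$, which only contributes a single-exponential outer factor. By the two lemmas, deciding $\tau \in D(s)$ reduces to searching for $(\tau_b)_{b\in \boldChildren{s}}$ with $\tau_b \in D(b)$ that witness $\tau \in D(s)$ through some local-to-shared mapping and shared mapping enumerator.

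Next I would enumerate the admissible witness candidates. Since the decomposition is simple and hence friendly, $\abs{\boldChildren{s}} \leq w + 2$, so there are at most $(2^{\O{w \log w}})^{w+2} = 2^{\O{w^2 \log w}}$ joint choices of $(\tau_b)_{b \in \boldChildren{s}}$. For each such joint choice the preceding discussion bounds the number of local-to-shared mappings by $u^{w(w+3)} = 2^{\O{w^2 \log w}}$ and the number of shared mapping enumerators by $2^{\O{w^3}}$.

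To decide, for a fixed combination $(\tau, (\tau_b)_b, (\localSharedMapOp{a})_a, \enumerateSharedOp)$, whether $(\tau_b)_b$ witnesses $\tau \in D(s)$ with respect to the chosen mappings, I would invoke the gadget machinery assembled above. The five groups of requirements in Definition \ref{def:simple-tcd-witness} are handled as follows: deletions take care of Items \ref{item:simple-tcd-witness-edge-partitions} and \ref{item:simple-tcd-witness-future-sensible}; Corollary \ref{stmt:gadget-edges-and-vertices-contained-in-specific-solution} together with the branching over assignments of each $\tilde{p}_{b,P}$ and of each $q_{s,i}$ with $i \in \natint{w}\setminus\natint{\demandCrossLink{s}}$ handles Items \ref{item:simple-tcd-witness-future-exact}, \ref{item:simple-tcd-witness-past} and \ref{item:simple-tcd-witness-sub-connected}; and Lemma \ref{stmt:gadget-at-most-one-adjacent-correct} handles Item \ref{item:simple-tcd-witness-marked} in a single instance. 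This produces at most $2^{\O{w^3}}$ auxiliary $\gstp$-instances of size $\O{\abs{\widetilde{J}_s}}$ whose augmented graphs admit a fracture modulator of size $\O{w^2}$ (the set $C$ extended by the branched-over vertex assignments). Each such instance is then decided by Theorem \ref{stmt:augmented-frag-runtime} in time $2^{2^{\O{(w^2)^4}}} \abs{\widetilde{J}_s} = 2^{2^{\O{w^8}}} \abs{\widetilde{J}_s}$.

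Multiplying the enumerated factors,
\[
    2^{\O{w \log w}} \cdot 2^{\O{w^2 \log w}} \cdot 2^{\O{w^2 \log w}} \cdot 2^{\O{w^3}} \cdot 2^{\O{w^3}} \cdot 2^{2^{\O{w^8}}} \abs{\widetilde{J}_s} = 2^{2^{\O{w^8}}} \abs{\widetilde{J}_s},
\]
since the doubly-exponential factor absorbs every singly-exponential contribution. The main obstacle, and the reason I would present the bound in this layered form, is bookkeeping: one must verify that the bases and exponents of the five nested enumerations line up correctly so that no factor polynomial in $\abs{\widetilde{J}_s}$ gets inadvertently promoted into the parameter exponent, and that the size bound $\O{\abs{\widetilde{J}_s}}$ of the auxiliary instances is preserved across the gadget constructions.
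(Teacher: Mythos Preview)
Your proposal is correct and follows essentially the same route as the paper: enumerate over $\tau$, over the joint choices $(\tau_b)_{b\in\boldChildren{s}}$, over local-to-shared mappings and shared mapping enumerators, then invoke the gadget reductions to produce $2^{\O{w^3}}$ auxiliary $\gstp$ instances of augmented fracture number $\O{w^2}$, each solved via Theorem~\ref{stmt:augmented-frag-runtime}. You are in fact more explicit than the paper about the outer enumerations over $\tau$ and $(\tau_b)_b$, which the paper leaves implicit since they are absorbed by the doubly-exponential factor anyway.
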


Combined with \Cref{stmt:simple-tcw-dp-root-correct}, we get that \(\gstp\) can be solved in \(\fpt\)-time given a simple tree-cut decomposition of appropriate width.

\begin{theorem}
\label{stmt:fpt-simple-tcw}
Given a simple tree-cut \((S, \mathcal{X})\) decomposition of width \(w\), we can decide whether the instance is positive in time \(2^{2^{\O{w^8}}}\abs{V(G)}\).
\end{theorem}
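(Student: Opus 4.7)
The plan is a bottom-up dynamic program along the given simple tree-cut decomposition $(S,\mathcal{X})$. I would first apply \Cref{rr:connected-components,rr:cross-link-demand-large,rr:adh-1,rr:sensible-terminal-sets} exhaustively in total time $\O{\abs{V(G)}}$; by \Cref{def:simple-tcw} this preserves simplicity, and afterwards every thin non-root node satisfies $\adhesion{s}=2$ and the demand bound $\demandCrossLink{s}\le \adhesion{s}$ holds at every node, which is what the propagation machinery of \Cref{sec:tcw-simple} requires.

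I would then traverse $S$ in post-order. For a leaf $s$, the set $D(s)$ contains only syntactically valid tuples built from $\boundary{s}$, and since $\boldChildren{s}=\emptyset$ a direct application of \Cref{stmt:simple-tcd-dp-implies-witness,stmt:simple-tcd-witness-implies-dp} (with an empty family of witnesses) lets me read off $D(s)$ from solving the \(\gstp\) instance $\mathscr{C}(s,\tau,\emptyset,\enumerateSharedOp,\localSharedMapOp{s})$ for each candidate $\tau$; this is just the base case of \Cref{stmt:simple-dp-propagation-time}. For an internal node $s$, I would inductively assume that $D(b)$ has already been computed for all $b\in\boldChildren{s}$, enumerate the at most $2^{\O{w\log w}}$ syntactically valid tuples $\tau$ at $s$, and for each $\tau$ decide $\tau\in D(s)$ via the characterization of \Cref{stmt:simple-tcd-dp-implies-witness,stmt:simple-tcd-witness-implies-dp}: $\tau\in D(s)$ iff there exist $(\tau_b)_{b\in\boldChildren{s}}\in\prod_{b}D(b)$, a local-to-shared mapping and a shared mapping enumerator that witness it. By \Cref{stmt:simple-dp-propagation-time} this takes time $2^{2^{\O{w^8}}}\abs{\widetilde{J}_s}$. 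Once the root $r$ is reached, I output \emph{positive} iff $D(r)\neq\emptyset$, which is correct by \Cref{stmt:simple-tcw-dp-root-correct}.

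The main obstacle is not correctness—that is already packaged by the preceding lemmas—but bounding the cumulative running time by $2^{2^{\O{w^8}}}\abs{V(G)}$. For this I would show $\sum_{s\in V(S)}\abs{\widetilde{J}_s}=\O{h(w)\abs{V(G)}}$ for some computable $h$. The ingredients are: $\mathcal{X}$ is a near-partition of $V(G)$, so $\sum_s\abs{X_s}=\abs{V(G)}$; the decomposition is friendly, so $\abs{\boldChildren{s}}+\abs{X_s}\le w+2$ and each bold link carries at most $w$ edges, giving $\sum_s\abs{V(\boundary{s})}=\O{w\abs{V(G)}}$ and $\sum_s\sum_{b\in\boldChildren{s}}(\abs{\pastPart{\tau_b}}+\abs{\futurePart{\tau_b}})=\O{w\abs{V(G)}}$; and the auxiliary vertex sets $\{q_{s,i}\}$, $\{\tilde p_{b,P}\}$, $\{\tilde m_i\}$ contribute at most $2^{\O{w}}$ vertices per node. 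After removing empty leaves, $\abs{V(S)}\le\abs{V(G)}$, so the sum is $2^{\O{w}}\abs{V(G)}$ and it gets absorbed into the $2^{2^{\O{w^8}}}$ factor, yielding the claimed bound.
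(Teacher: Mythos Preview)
Your proposal is correct and follows essentially the same approach as the paper: bottom-up computation of the tables $D(s)$ using \Cref{stmt:simple-dp-propagation-time}, correctness via \Cref{stmt:simple-tcw-dp-root-correct}, and a global bound on $\sum_s\abs{\widetilde{J}_s}$. Two small remarks: the paper only computes $D(s)$ for \emph{bold} nodes (thin nodes are simple leaves whose contribution is already absorbed into their parent's $J_s$), and the key term you need to control in $\abs{\widetilde{J}_s}$ is $\abs{\thinChildren{s}}$ (coming from $\bigcup_{c\in\children{s}}\boundary{c}$), which the paper handles directly via $\sum_s\abs{\thinChildren{s}}\le\abs{V(S)}\le 2\abs{V(G)}$; your bound $\sum_s\abs{V(\boundary{s})}=\O{w\abs{V(G)}}$ captures the same thing by double-counting, though the justification ``each bold link carries at most $w$ edges'' should read ``each link'', and the bound $\abs{V(S)}\le\abs{V(G)}$ should be $\le 2\abs{V(G)}$.
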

\begin{proof}
Given the simple tree-cut composition, we compute the dynamic program for all bold nodes and output whether the dynamic program of the root is empty.
By \Cref{stmt:simple-tcw-dp-root-correct}, this is correct.
It remains to calculate the running time.
According to \Cref{stmt:simple-dp-propagation-time}, we only need to compute upper bounds on \(\sum_{s \in S\colon s \text{ is bold}} \abs{\widetilde{J}_s}\).

Let \(s \in V(S)\).
Then,
\begin{align*}
  \abs{\widetilde{J}_s}
  \leq& \abs{G^*_s\left[X_s \cup V(\boundary{s}) \cup \{q_{s,i}\}_{i \in \natint{w}}\right]} + \sum_{c \in \children{s}} \abs{\boundary{c}}\\
  &+ \sum_{a \in A} \sum_{P \in \pastPart\tau\cup\futurePart\tau} (w + 1) + u\abs{V(J_{s}) \setminus \bigcup_{t \in \thinChildren{s}} X_{t}}\\
  \leq& (4w)^2 + 3 \abs{\thinChildren{s}} + \sum_{b \in \boldChildren{s}} 3 \adhesion{s} + (w+3)(w + 1) + u\O{w^{2}}\\
  =& \O{\abs{\thinChildren{s}} + w^4}.
\end{align*}
As every node is a child of at most one other node, we have \(\sum_{s \in S} \abs{\thinChildren{s}} \leq \abs{V(S)} \leq 2 \abs{V(G)}\); so, \(\sum_{s \in S\colon s \text{ is bold}} \abs{\widetilde{J}_s} \leq \O{w^4\abs{G}}\).
Therefore, the overall running time is as claimed.
\end{proof}

\subsection{\texorpdfstring{$\gstp$}{GSTP} is \texorpdfstring{$\fpt$}{FPT} by the Slim Tree-Cut Width}
\label{sec:orga3f51cc}
In this Section, we show that \(\gstp\) is \(\fpt\) by the slim tree-cut width of host graph.
For this, let \(\mathscr{P} \coloneqq (G, \mathcal{T}, d)\) be an instance of \(\gstp\) and let \(\mathcal{D}=(S, \mathcal{X})\) be a nice tree-cut decomposition of \(G\) of slim width \(\overline{w}\) and width \(w \leq \overline{w}\).
To achieve our result, we transform \(\mathcal{D}\) in such a way, that it becomes simple, while not increasing its tree-cut width beyond \(\overline{w} + 4\).
We do this, by adding vertices and edges such that all links are bold.
Then, we apply \Cref{stmt:fpt-simple-tcw} to decide whether the instance is positive.

Now, assume that the reduction rules presented in \Cref{sec:tcw-red-rules} are applied exhaustively to this instance.
In particular, \Cref{rr:connected-components,rr:adh-1} together ensure that there are no links with adhesion one in the tree-cut decomposition.
Which allows to bound the number of children of each node in \(S\) by a function of \(\overline{w}\).

\begin{lemma}
\label{stmt:slim-tcw-number-children}
Assume \Cref{rr:connected-components,rr:adh-1} are applied exhaustively and that all empty leaves are removed.
Let \(s \in V(S)\).
Then, \(\abs{\children{s}} + \abs{X_s} \leq \overline{w}\).
\end{lemma}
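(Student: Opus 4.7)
The plan is to bound $|\children{s}| + |X_s|$ from below by $|V(\tilde{H}^{2,\mathcal{D}}_s)|$, and then use the fact that the slim width $\overline{w}$ majorizes the size of every 2-center. First I would recall that the torso $H^\mathcal{D}_s$ has vertex set $X_s \cup \{z_C : C \in \comp{T-s}\}$, where the peripheral vertices $z_C$ are obtained by contracting the sets $Z_C = \bigcup_{t \in C} X_t$. The only non-core vertices of the torso are these peripheral vertices, one per component of $T - s$; in particular, each child $c$ of $s$ contributes a distinct peripheral vertex $z_{C_c}$, and if $s$ is not the root there is one additional peripheral vertex $z_{V(G) \setminus Y_s}$ corresponding to the parent side.

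The key step is to show that none of the peripheral vertices gets suppressed during the construction of the 2-center. For a child $c$ of $s$, the degree of $z_{C_c}$ in $H^\mathcal{D}_s$ equals $|\cutEdges{Y_c}| = \adhesion{c}$. By \Cref{rr:connected-components} any node with adhesion $0$ is an empty leaf which we removed, and by \Cref{rr:adh-1} we eliminated all links of adhesion $1$; so $\adhesion{c} \geq 2$ for every non-root node $c$. Analogously, the parent-side peripheral vertex (when it exists) has degree $\adhesion{s} \geq 2$. Hence every non-core vertex of $H^\mathcal{D}_s$ already has degree at least $2$ before any suppression happens. Since the only vertices eligible for suppression are the non-core ones, and none of them has degree $\leq 1$, the suppression process terminates immediately and $V(\tilde{H}^{2,\mathcal{D}}_s)$ contains all of $X_s$ together with $z_{C_c}$ for every child $c$ of $s$.

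Putting the two pieces together yields $|V(\tilde{H}^{2,\mathcal{D}}_s)| \geq |X_s| + |\children{s}|$, and since $|V(\tilde{H}^{2,\mathcal{D}}_s)| \leq \overline{w}$ by definition of slim width, the claim $|\children{s}| + |X_s| \leq \overline{w}$ follows. No step looks like it should pose a real obstacle: the only thing that requires care is making sure that the suppression cannot propagate to eventually remove a peripheral vertex, which is immediate once one notes that suppressing non-core vertices of degree $\leq 1$ never increases the degree of other non-core vertices and that all peripheral vertices start with degree at least $2$.
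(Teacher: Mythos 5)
Your proposal is correct and follows essentially the same route as the paper: identify the peripheral torso vertices, use \Cref{rr:connected-components} and \Cref{rr:adh-1} (plus removal of empty leaves) to guarantee that every child vertex $z_c$ and the parent-side vertex have degree at least $2$, conclude that no peripheral vertex is suppressed in forming the 2-center, and then bound $\abs{\children{s}}+\abs{X_s}$ by the 2-center size, which is at most $\overline{w}$. The only cosmetic difference is that you argue no suppression occurs at all, while the paper argues that suppressing any $z_c$ would require suppressing $z_{top}$ first, which its degree $\adhesion{s}\geq 2$ forbids; both arguments are sound and equivalent here.
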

\begin{proof}
Consider the torso \(H_s\) at \(s\).
The 2-center of \(H_s\) with respect to \(X_s\) is obtained by repeatedly suppressing vertices in \(H_s\) of degree at most \(1\).
For all \(c \in \children{s}\), denote with \(z_c\in V(H_s)\) the vertex in \(H_s\) obtained by contracting the subgraph \(G[Y_c]\).
As \Cref{rr:connected-components,rr:adh-1} are applied exhaustively and all empty leaves are removed, we have \(\deg[H_s]{z_c}\geq 2\).
Thus, for any of them to be suppressed a vertex in \(V(H_s) \setminus (X_s \cup \{z_c\}_{c \in \children{s}}) = \{z_{top}\}\) has to be suppressed first.
As \(\deg[H_s]{z_{top}} = \adhesion{s} \geq 2\), this will not happen.
Therefore, \(\abs{\children{s}} + \abs{X_s} \leq \abs{\tilde H^2_s} \leq \overline{w}\).
\end{proof}

This allows us to apply \Cref{stmt:almost-simple-tcd-made-simple} to \(\mathcal{D}\) to obtain an equivalent instance with a simple tree-cut decomposition \(\mathcal{C}\).
Let \(s \in V(S)\) and consider \(\Delta_s = \abs{N_s} + \abs{\boldChildren{s}} + \abs{X_s} - w - 1\).
We have \(N_s \subseteq \thinChildren{s}\).
Thus, by \Cref{stmt:slim-tcw-number-children}, \(\Delta_s \leq \abs{\children{s}} + \abs{X_s} - w - 1 \leq \overline{w} - w - 1\) and we obtain a simple \(\mathcal{C}\) tree-cut decomposition of width \(\overline{w} + 4\).

Let \(k\) be the slim tree-cut width of \(G\).
There is an algorithm that computes a nice tree-cut decomposition of slim width \(6(k+1)^3\) in time \(2^\O{k^2\log k}\abs{V(G)}^4\)~\cite{GanianK22}.
So, we obtain a simple tree-cut decomposition of tree-cut width \(6(k+1)^3 + 4\).
Applying \Cref{stmt:fpt-simple-tcw}, we know that we can decide whether \(\mathscr{P}\) is positive in time \(2^{2^\O{k^{24}}}\abs{V(G)}\).

\begin{corollary}
\label{stmt:gstp-fpt-stcw}
Let \(k\) be the slim tree-cut width of \(G\).
We can decide whether \(\mathscr{P}\) is positive in time \(\OstarLR{2^{2^\O{k^{24}}}}\).
So, \(\gstp\) is \(\fpt\) by the slim tree-cut width of \(G\).
\end{corollary}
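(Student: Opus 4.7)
The plan is to compose the ingredients already developed in this section. First, I would compute a nice tree-cut decomposition $\mathcal{D} = (S,\mathcal{X})$ of $G$ whose slim width is bounded by a function of $k$. The algorithm of Ganian and Korchemna~\cite{GanianK22} produces, in time $2^{\OOp(k^2\log k)}\abs{V(G)}^4$, a nice tree-cut decomposition with slim width at most $\overline{w} \coloneqq 6(k+1)^3$; in particular, its width $w$ is also bounded by $\overline{w}$. I would then exhaustively apply \Cref{rr:connected-components,rr:cross-link-demand-large,rr:adh-1} from \Cref{sec:tcw-red-rules}, after which every non-root node satisfies $\adhesion{s} \geq 2$ (any node that would otherwise have adhesion $0$ is removed as an empty leaf), and no surviving thin link has adhesion $1$.

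The next step is to invoke \Cref{stmt:slim-tcw-number-children}, which under the above reductions gives $\abs{\children{s}} + \abs{X_s} \leq \overline{w}$ for every $s \in V(S)$. Consequently, $\abs{N_s} \leq \abs{\thinChildren{s}} \leq \abs{\children{s}}$, so the quantity $\Delta_s = \abs{N_s} + \abs{\boldChildren{s}} + \abs{X_s} - w - 1$ appearing in \Cref{stmt:almost-simple-tcd-made-simple} satisfies $\Delta_s \leq \abs{\children{s}} + \abs{X_s} - w - 1 \leq \overline{w} - w - 1 \leq \overline{w}$. Applying \Cref{stmt:almost-simple-tcd-made-simple}, I obtain in linear time an equivalent instance together with a \emph{simple} tree-cut decomposition $\mathcal{C}$ of width at most $\overline{w} + 4 = 6(k+1)^3 + 4 = \OOp(k^3)$.

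Finally, I feed $\mathcal{C}$ into \Cref{stmt:fpt-simple-tcw}, which decides the instance in time $2^{2^{\OOp(k^{24})}}\abs{V(G)}$ (since the width of $\mathcal{C}$ is $\OOp(k^3)$ and $(k^3)^8 = k^{24}$). Summing the running times of all stages---the decomposition algorithm, the polynomial-time reductions, \Cref{stmt:almost-simple-tcd-made-simple}, and \Cref{stmt:fpt-simple-tcw}---the dominant term is $2^{2^{\OOp(k^{24})}}\abs{V(G)}$, which is of the form $\OstarLR{2^{2^{\OOp(k^{24})}}}$, establishing the claim and in particular the $\fpt$ membership.

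Since the proof is entirely a composition of lemmas already proved earlier, there is no genuine technical obstacle left; the only thing to be careful about is verifying that the preconditions of each cited result are indeed met after the previous step---in particular, that the reduction rules preserve equivalence of the instance, that empty leaves can be removed without invalidating niceness, and that the width estimate $\overline{w} + 4$ from \Cref{stmt:almost-simple-tcd-made-simple} is correctly propagated into the exponent of \Cref{stmt:fpt-simple-tcw}.
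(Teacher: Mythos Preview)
Your proposal is correct and follows essentially the same route as the paper: compute a nice tree-cut decomposition of slim width $\overline{w}=6(k+1)^3$ via~\cite{GanianK22}, apply the reduction rules so that \Cref{stmt:slim-tcw-number-children} yields $\abs{\children{s}}+\abs{X_s}\le\overline{w}$, feed the resulting bound on $\Delta_s$ into \Cref{stmt:almost-simple-tcd-made-simple} to get a simple decomposition of width $\overline{w}+4=\OOp(k^3)$, and finish with \Cref{stmt:fpt-simple-tcw}. The only cosmetic difference is the order of presentation; the argument and the arithmetic leading to the exponent $k^{24}$ are the same.
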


\subsection{\texorpdfstring{$\gstp$}{GSTP} is \texorpdfstring{$\fpt$}{FPT} by the Augmented Tree-Cut Width}
\label{sec:org3dbd85a}
In this Section, we show that GSTP is \(\fpt\) by the tree-cut width of the augmented graph.
In contrast to the last Section, we actually need to treat thin and bold links differently.
In particular, we can not assume that the number of thin children is bounded by a function of the parameter.
Therefore, we first show how to compute a friendly tree-cut decomposition from a nice tree-cut decomposition in polynomial time.
Then, we use this to show how to reduce the problem of deciding GSTP by the tree-cut width of the augmented graph to solving multiple instances of GSTP with respect to simple tree-cut decompositions.
\subsubsection{Friendly Tree-Cut Decompositions}
\label{sec:orgf3e3bf6}
Let \(G\) be a graph and \((S, \mathcal{X})\) be a nice tree-cut decomposition of width \(w\).
Ganian et~al.~\cite{GanianKS22} claimed that for all \(s \in V(S)\), we now have \(\abs{\boldChildren{s}} \leq w + 1\).
To be able to use dynamic programming on the tree-cut decomposition, this is a crucial fact.
However, this is not true and we can find counter examples showing that in fact the number of bold-children is not bounded by a function of \(w\).
So, we introduce the notion of a friendly tree-cut decomposition in \Cref{def:friendly-tcw}, where we add this as an additional requirement.

The goal of this Section is to show, that we can compute a friendly tree-cut decomposition from a nice tree-cut decomposition in FPT-linear and quartic time while not increasing its width beyond a constant.
Then, we extend this to show that we can always find a tree-cut decomposition of the same width that is friendly in polynomial time.
We achieve this by introducing the operation of \emph{blowing up} a node \(s \in S\) that reduces the number of bold children of \(s\) to at most \(w + 2 - \abs{X_s}\) while increasing the width of the tree-cut decomposition to at most 4 and keeping the tree-cut decomposition nice.
Afterwards we extend this result to not increase the tree-cut width at all.

\begin{figure}
\begin{subfigure}{0.49\textwidth}
\includegraphics[width=\textwidth]{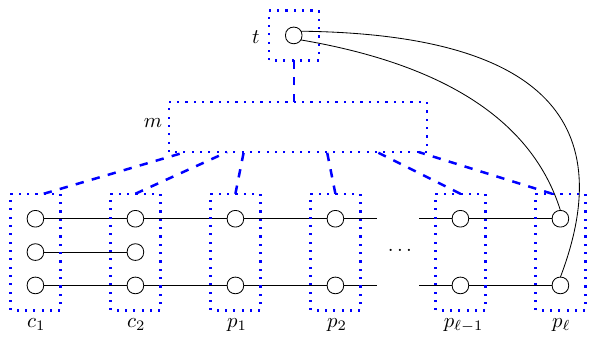}
\caption{\label{fig:unlimited-bold-children-tcw-graph} A graph family where the bags and links of the tree-cut decomposition are indicated in blue.}
\end{subfigure}
\begin{subfigure}{0.49\textwidth}
\includegraphics[width=\textwidth]{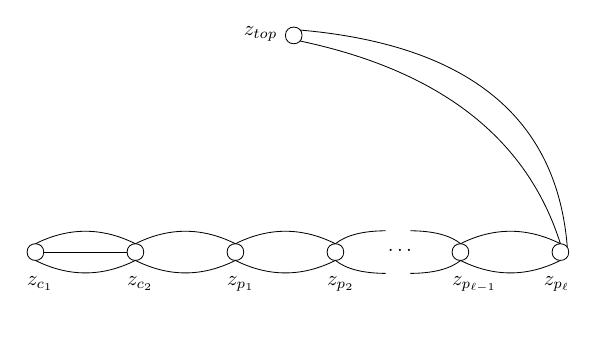}
\caption{\label{fig:unlimited-bold-children-tcw-torso} The torso at \(m\). The 3-center is the graph induced by \(z_{c_1}\) and \(z_{c_2}\).}
\end{subfigure}
\caption{\label{fig:unlimited-bold-children-tcw} A family of graphs with tree-cut width at most 5. In the depicted nice tree-cut decomposition the node \(m\) has \(\ell + 2\) bold children, where \(\ell\) can be chosen freely.}
\end{figure}

First, consider the family of graphs depicted in \Cref{fig:unlimited-bold-children-tcw}.
In \Cref{fig:unlimited-bold-children-tcw-graph}, we see a nice tree-cut decomposition for the graphs in this family.
Note that \(X_m = \emptyset\) and that \(m\) has \(\ell + 2\) bold children.
The adhesion of each node is bounded by \(5\) and the size of the 3-center of the torso of any node apart from \(m\) is bounded by 4.
Now consider the torso \(H_m\) at \(m\), which is depicted in \Cref{fig:unlimited-bold-children-tcw-torso}.
For each \(c \in \children{m}\), we denote with \(z_c\) the vertex to which the sub-tree rooted at \(c\) is contracted and with \(z_{top}\) the vertex to which \(G - Y_m\) was contracted.
As \(\deg{z_{top}} = 2\), the vertex \(z_{top}\) gets suppressed, which introduces a loop at \(z_{p_\ell}\).
Loops get removed, so now \(\deg{z_{p_{\ell}}} = 2\) and \(z_{p_{\ell}}\) gets suppressed, reducing the degree of \(z_{p_{\ell - 1}}\) to 2.
This continues until only \(z_{c_1}\) and \(z_{c_2}\) remain.
Therefore, the 3-center of the torso at \(m\) has 2 vertices and the width of this tree-cut decomposition is 5.

\begin{remark}
\label{stmt:nice-tree-cut-unlimited-bold}
For each \(\ell \in \N\), there is a graph \(G\) and nice tree-cut decomposition \((S, \mathcal{X})\) of width at most 5 such that there is a node \(m \in S\) with \(\abs{\boldChildren{m}} \geq \ell\).
\end{remark}

To remove these additional bold children, we aim to understand their structure better.
Let \(b \in \boldChildren{s}\) such that \(z_b\) is not part of the 3-center of the torso at \(s\).
We call such a \(b\) \emph{fake}.
We now show, that these fake nodes more or less have to look like the fake vertices in \Cref{fig:unlimited-bold-children-tcw-torso}.
That is, they form a narrow path from \(z_{top}\) to the 3-center.

\begin{lemma}
\label{stmt:tcw-fake-nodes-form-path}
Let \(\ell\) be the number of fake nodes in \(\boldChildren{s}\).
If \(\ell \geq 1\), there is an induced path \(P = z_{top}p_1p_2\dots p_{\ell}\) in \(H_s\) such that
\begin{itemize}
\item \(\{p_i\}_{i \in \natint{\ell}} = \{z_f\}_{f \in \boldChildren{f}; f \text{ is fake}}\),
\item the multiplicity of each edge in \(P\) is at most 2,
\item there is a suppression sequence that first suppresses all thin nodes, then \(z_{top}\), and finally all \(p_1,p_2, \dots,p_\ell\) in order,
\item for all \(i \in \natint{\ell - 1}\), we have \(N(p_i) \subseteq V(P)\), and when \(p_i\) gets suppressed its only neighbor is \(p_{i+1}\).\qedhere
\end{itemize}
\end{lemma}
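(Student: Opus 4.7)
The plan is to read the path $P$ off a suppression sequence producing the 3-center $\tilde H_s$. I would first argue, by a standard confluence property of vertex suppression, that we may choose a sequence in which all thin children are suppressed first. Thin children start with degree at most $2$, and suppressing a vertex of degree at most $2$ cannot increase any surviving vertex's degree, so thin children remain eligible throughout and may be scheduled to the front. Write $H'_s$ for the resulting intermediate graph; its non-core vertices are $z_{top}$ together with $\{z_b\}_{b\in\boldChildren{s}}$.

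The key structural claim is that in $H'_s$ every fake bold vertex $z_f$ has all its neighbors in $\{z_{top}\} \cup F$, where $F$ denotes the set of fake-bold vertices. Indeed, any edge from $z_f$ to a vertex that survives into $\tilde H_s$---a core vertex or a non-fake bold vertex---is never destroyed by later suppressions: when $z_f$ is finally suppressed, such an edge is rerouted through $z_f$'s remaining neighbors rather than deleted, and suppressions of other fake vertices do not touch the surviving endpoint. Hence the edge would persist in $\tilde H_s$ with an endpoint at $z_f \notin V(\tilde H_s)$, a contradiction. Now each $z_f \in F$ has degree at least $3$ in $H'_s$ with neighbors confined to $F \cup \{z_{top}\}$, and a handshake-style argument on the multigraph $H'_s[F \cup \{z_{top}\}]$ forces $z_{top}$ to be the first vertex in $F \cup \{z_{top}\}$ eligible for suppression: otherwise some $z_f$ would drop to degree at most $2$ using only edge-reductions within $F$, which is impossible while every remaining vertex of $F$ has degree at least $3$.

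With $z_{top}$ suppressed first, I define $p_1$ to be a fake vertex whose degree drops to at most $2$ immediately after this step, and iterate, letting $p_i$ be the fake vertex becoming suppressible right after $p_{i-1}$. The induced-path property and the multiplicity bound of $2$ per edge follow because at each step $p_i$'s unique remaining neighbor is $p_{i+1}$, so the edge $p_ip_{i+1}$ has multiplicity $\deg{p_i} \leq 2$ at that moment, and any chord $p_ip_j$ with $j \geq i+2$ would survive past the suppression of the intermediate $p_k$'s. The $N(p_i) \subseteq V(P)$ condition is then read off directly, since all neighbors of $p_i$ at any stage are in $F \cup \{z_{top}\}$ and the still-surviving ones at the moment of $p_i$'s suppression are exactly $\{p_{i+1}\}$.

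The main technical obstacle is making the handshake argument fully rigorous while carefully accounting for parallel edges that disappear as loops during the process, and for the possibility that thin-child suppressions themselves already lower a fake bold vertex's degree through multi-edges. The quantitative ``$z_{top}$-first'' claim ultimately requires a small extra invariant tracking the sum of $\deg{z_f}-2$ over surviving fake vertices in $H'_s$, showing that only a suppression of $z_{top}$ (or a vertex already drained of its degree-$3$ surplus) can reduce this invariant below the threshold needed to eliminate any $z_f$.
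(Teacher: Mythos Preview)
Your ``key structural claim''---that in $H'_s$ every fake bold vertex has all its neighbors in $\{z_{top}\}\cup F$---is false, and this is precisely why the lemma only asserts $N(p_i)\subseteq V(P)$ for $i\le\ell-1$, not for $p_\ell$. A concrete failure: take $z_{top}$ joined to $z_{f_1}$ by a single edge, $z_{f_1}$ joined to $z_{f_2}$ by a double edge, and $z_{f_2}$ joined to a core vertex $v$ by a single edge. Both $z_{f_1},z_{f_2}$ are bold (degree~$3$) and both get suppressed (suppressing $z_{top}$ drops $z_{f_1}$ to degree~$2$; suppressing $z_{f_1}$ creates a loop at $z_{f_2}$, which is deleted, leaving $z_{f_2}$ with degree~$1$). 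Yet $z_{f_2}$ is adjacent to $v\notin\{z_{top}\}\cup F$. Your justification for the claim is also broken: an edge $z_f v$ is not ``preserved with an endpoint at $z_f$'' after $z_f$ is suppressed---$z_f$ is gone---and loop deletion can genuinely destroy the connection to $v$, as the example shows.

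The paper's route is more direct and avoids both this overreach and your handshake machinery. Niceness guarantees that thin peripheral vertices are not adjacent to any bold peripheral vertex, so suppressing all thin children leaves every $z_b$ ($b$ bold) with its original degree $\ge 3$. Hence the only suppressible non-core vertex is $z_{top}$, and one argues it has a \emph{unique} neighbor (else suppressing it would not lower any degree), which must be $z_{f_1}$. The induction then repeats verbatim: at each step $p_i$ has a unique neighbor $p_{i+1}$ (else no degree drops), so suppressing $p_i$ equals deleting it, and the path emerges. Your worry about thin-child suppressions lowering fake-bold degrees through multi-edges is unfounded for the same niceness reason, and no invariant tracking $\sum(\deg z_f-2)$ is needed.
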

\begin{proof}
Assume \(\ell \geq 1\) and consider the graph \(H^*\) obtained from \(H_s\) after suppressing all vertices in \(\{z_t\}_{t\in \thinChildren{s}}\).
Note that for all \(b \in \boldChildren{s}\), the vertex \(z_b\) is not adjacent to any \(\{z_t\}_{t\in \thinChildren{s}}\) and we have \(\deg[H^*]{z_b} = \deg[H_s]{z_b}\).
Fix the fake nodes \(f_1, f_2, \dots, f_\ell \in \boldChildren{s}\) such that they get suppressed in this order and set \(Q = q_0q_1q_2\dots q_\ell \coloneqq z_{top}z_{f_1}z_{f_2}\dots z_{f_\ell}\).
To show that \(Q\) is a valid assignment for \(P\), we inductively prove that for all \(i \in \natintZ{\ell}\)
\begin{itemize}
\item let \(H_{(i)}\) be the graph obtained from \(H^*_s\) after suppressing all vertices of \(\bigcup_{j \in \natintZ{i - 1}}q_j\), then we have \(H_{(i)} = H^* - (\bigcup_{j \in \natintZ{i - 1}}q_j)\),
\item \(\deg[H_{(i)}]{q_i} \leq 2\),
\item if \(i < \ell\), we have \(N_{H_{(i)}}(q_i) = N_{H^*}(\{q_j\}_{j \in \natintZ{i}}) = \{q_{i+1}\}\).
\end{itemize}
For all \(i \in \natint\ell\) the edges incident to \(z_{f_i}\) are equal in \(H^*\) and \(H_s\); so, the claim follows.

As noted above, for all \(b \in \boldChildren{s}\), we have \(\deg[H^*]{z_{b}}=\deg[H_s]{z_{b}} \geq 3\).
In particular, another vertex needs to be suppressed before \(z_{f_1}\) can be suppressed.
As no vertex of \(X_s\) or \(\{z_b\}_{b \in \boldChildren{s}}\) can be suppressed in \(H^*\), the only choice for this is \(z_{top} = q_0\).
If \(\deg[H^*]{z_{top}}\geq 3\), it will not be suppressed.
Since \(z_{f_1}\) gets suppressed, we have \(\deg[H^*]{z_{top}} = \deg[H_{(0)}]{q_0}\leq 2\).
Note that only degrees of adjacent vertices may change upon suppression.
Thus, \(\abs{N_{H^*}(z_{top})} \geq 1\).
If \(\abs{N_{H^*}(z_{top})} \geq 2\), suppressing \(z_{top}\) does not decrease the degree of the adjacent vertices as direct edges get inserted between them.
So, \(\abs{N_{H^*}(z_{top})} = 1\).
More concretely, we have \(N_{H^*}(z_{top}) = \{z_{f_1}\} = \{q_1\}\) as otherwise the degree of \(z_{f_1}\) after suppressing \(z_{top}\) is still at least 3, so it can not be suppressed next, violating our assumption.

Now, let \(i \in \natint{\ell}\) and assume that the claim holds for \(i-1\).
So, we can suppress \(q_{i-1}\) in \(H_{(i-1)}\).
Since \(\abs{N_{H_{(i-1)}}(q_{i-1})} = 1\), this equates to deleting \(q_{i-1}\); so, we have \(H_{(i)} = H_{(i-1)} - q_{i-1} = H^* - (\bigcup_{j \in \natintZ{i - 1}}q_j)\).
By assumption, there is a suppression sequence suppressing \(z_{f_i} = q_i\) next.
Thus, \(\deg[H_{(i)}]{q_i} \leq 2\).
Now assume \(i < \ell\).
By the induction hypothesis, the vertices \(\{q_j\}_{j\in\natintZ{i-1}}\) are not adjacent to any \(\{q_{j'}\}_{j' \in \natintZ{\ell} \setminus \natintZ{i}}\) in \(H^*\).
Thus, for all \(j' \in \natintZ{\ell} \setminus \natintZ{i}\), we have \(\deg[H_{(i)}]{q_{j'}} = \deg[H^*]{q_{j'}}\geq 3\) and in particular \(\deg[H_{(i)}]{q_{i+1}}\geq 3\).
For \(q_{i+1}\) to be suppressible after suppressing \(q_i\), its degree must decrease, implying \(N_{H_{(i)}}(q_i) = \{q_{i+1}\}\).
Combined with \(N_{H^*}(\{q_j\}_{j \in \natintZ{i-1}}) = \{q_{i}\}\), we obtain \(N_{H^*}(\{q_j\}_{j \in \natintZ{i}}) = \{q_{i + 1}\}\), concluding the proof.
\end{proof}

We now describe the operation of \emph{expanding} a node \(s \in V(S)\).
Recall that \(H_s\) denotes the torso at \(s\).
Let \(a,b \in \boldChildren{s}\) be fake and denote with \(m\) the multiplicity of the edge \(z_az_b\) in \(H_s\), that is the number of edges between \(Y_a\) and \(Y_b\).
If \(m \geq 1\) and \(3 \leq \adhesion{a} + \adhesion{b} - 2m\), we can expand the node \(s\) with respect to \(a\) and \(b\) by introducing a new node \(c\) associated with an empty bag as a child of \(s\) and moving \(a\) and \(b\) to be children of \(c\).
Expanding \(s\) until this operation is not applicable any more is called \emph{blowing up} \(s\).
We now show, that these operations keep \(S\) nice while not increasing the width of the decomposition beyond 4.

\begin{lemma}
\label{stmt:tcw-expanding-upper-nice-and-width}
Let \((S', \mathcal{X}')\) be the tree-cut decomposition after expanding \(s \in S\) with respect to \(a\) and \(b\).
Then, \((S',\mathcal{X}')\) is nice and its width is at most \(\max(4,w)\).
\end{lemma}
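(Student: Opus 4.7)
The plan is to verify the three defining requirements of the lemma separately: that $(S', \mathcal{X}')$ is a valid tree-cut decomposition, that it is nice, and that its width is at most $\max(4, w)$. Validity is essentially automatic, since inserting an empty-bagged node $c$ between $s$ and the two children $a, b$, and moving $a, b$ to be children of $c$, preserves the near-partition of $V(G)$ and keeps the underlying graph a tree; in particular $Y_c^{\mathcal{D}'} = Y_a \cup Y_b$ and $Y_t^{\mathcal{D}'} = Y_t^{\mathcal{D}}$ for every other node $t$.

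For niceness, I would first observe that $c$ is bold by hypothesis, since $\adhesion[\mathcal{D}']{c} = \adhesion{a} + \adhesion{b} - 2m \geq 3$, and $a, b$ retain the same adhesion, so no new thin nodes are introduced. The only affected sibling structures are among the children of $s$, where $a, b$ are replaced by the single node $c$, and among the children of $c$, which consist only of the bold $a, b$. For any thin child $t$ of $s$ in $\mathcal{D}$, the old niceness gives $N_G(Y_t^{\mathcal{D}}) \cap Y_a = N_G(Y_t^{\mathcal{D}}) \cap Y_b = \emptyset$, hence $N_G(Y_t^{\mathcal{D}'}) \cap Y_c^{\mathcal{D}'} = \emptyset$; niceness for thin nodes with unchanged parents is inherited directly from $\mathcal{D}$.

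For the width bound the core work is bounding $\adhesion[\mathcal{D}']{c}$ and the $3$-center sizes of the two modified torsos. By \Cref{stmt:tcw-fake-nodes-form-path}, the fake bold children of $s$ correspond to an induced path $z_{top} p_1 \dots p_\ell$ in $H_s^{\mathcal{D}}$ whose edge multiplicities are each at most $2$. Since $a, b$ are fake and $z_a z_b$ has multiplicity $m \geq 1$, they must be adjacent on this induced path, so up to relabeling $a = p_i$ and $b = p_{i+1}$. Writing $m_{j, j+1}$ for the multiplicity of the path edge $p_j p_{j+1}$, I would then compute $\adhesion[\mathcal{D}']{c} = \adhesion{a} + \adhesion{b} - 2m$ case by case. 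If $i + 1 \leq \ell - 1$, both $a, b$ have their $H_s^{\mathcal{D}}$-neighborhood contained in $V(P)$, so $\adhesion{a} = m_{i-1,i} + m$ and $\adhesion{b} = m + m_{i+1,i+2}$, giving $\adhesion[\mathcal{D}']{c} = m_{i-1,i} + m_{i+1,i+2} \leq 4$. For the boundary case $b = p_\ell$, I would use the fact, extracted from the suppression order in \Cref{stmt:tcw-fake-nodes-form-path}, that after the thin-node, $z_{top}$, and $p_1, \dots, p_{\ell-1}$ suppressions the residual degree of $p_\ell$ is at most $2$; careful bookkeeping of the loops and multiplicity changes arising along the way reduces the bound on $\adhesion[\mathcal{D}']{c}$ to $m_{\ell-2,\ell-1} + 2 \leq 4$ as well.

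For the $3$-center sizes, torsos at nodes other than $s$ and $c$ are unchanged, so they contribute at most $w$. The torso at $c$ has only three vertices ($z_a$, $z_b$, and the contraction of $V(G) \setminus Y_c^{\mathcal{D}'}$), so $\abs{V(\tilde{H}_c^{\mathcal{D}'})} \leq 3$. The new torso at $s$ is obtained from $H_s^{\mathcal{D}}$ by identifying $z_a$ and $z_b$ into $z_c$ and deleting the $m$ resulting loops; I would replay the suppression order of \Cref{stmt:tcw-fake-nodes-form-path} in $H_s^{\mathcal{D}'}$, with the two consecutive suppressions $p_i, p_{i+1}$ replaced by the single suppression of $z_c$, whose residual degree is at most $2$ by the same case analysis used for the adhesion. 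This yields $\abs{V(\tilde{H}_s^{\mathcal{D}'})} \leq \abs{V(\tilde{H}_s^{\mathcal{D}})} \leq w$, so the overall width of $(S', \mathcal{X}')$ is at most $\max(4, w)$. The main obstacle I anticipate is the boundary case $b = p_\ell$: there $\adhesion{b}$ can be as large as $w$, and keeping $\adhesion[\mathcal{D}']{c}$ bounded by $4$ requires carefully relating the initial non-path degree of $p_\ell$ in $H_s^{\mathcal{D}}$ to its residual degree after the suppressions guaranteed by \Cref{stmt:tcw-fake-nodes-form-path}.
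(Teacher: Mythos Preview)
Your proposal is correct and matches the paper closely for niceness and for the $3$-center comparison (the paper also observes $H_s^{\mathcal{D}'} = H_s^{\mathcal{D}} / z_a z_b$ and replays the suppression sequence, collapsing the two steps $p_i, p_{i+1}$ into one). The one place you diverge is in bounding $\adhesion[\mathcal{D}']{c}$: you do a positional case split along the path and flag $b = p_\ell$ as an obstacle, worrying that $\adhesion{b}$ could be as large as $w$. That worry is unfounded. \Cref{stmt:tcw-fake-nodes-form-path} already forces $\deg[H_s]{z_{p_\ell}} \leq 4$: since $H_{(\ell)} = H^* - \{q_0, \dots, q_{\ell-1}\}$ and the only neighbour of $p_\ell$ among the removed vertices is $p_{\ell-1}$ (the path is induced), one has $\deg[H^*]{p_\ell} = \deg[H_{(\ell)}]{p_\ell} + m_{\ell-1,\ell} \leq 2 + 2 = 4$, and niceness prevents thin-child vertices from being adjacent to $p_\ell$, so $\deg[H_s]{p_\ell} = \deg[H^*]{p_\ell}$. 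The paper uses exactly this: every fake bold child has adhesion in $\{3,4\}$, and then a two-line case split---either both adhesions are $3$ (so $\adhesion{c} \leq 6 - 2 = 4$) or one is $4$, which forces $m = 2$ (so $\adhesion{c} \leq 8 - 4 = 4$)---finishes the bound uniformly, with no boundary case. Your ``careful bookkeeping'' would land at the same place via $\adhesion{c} = m_{\ell-2,\ell-1} + \deg[H_{(\ell)}]{p_\ell} \leq 4$, but the blanket degree-$4$ observation is cleaner and removes the case distinction you were worried about.
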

\begin{proof}
For all nodes \(x \in V(S) \setminus\{s\}\), the children did not change and so the niceness property holds.
To show the niceness property for \(s\), we show that \(c\) is a bold child, where \(c\) is the newly introduced node.
Notice that \(3 \leq \adhesion{a} + \adhesion{b} - 2m = \abs{\cutEdges{Y_a}} + \abs{\cutEdges{Y_b}} - 2m\).
As \(m\) is equal to the number of edges between \(Y_a\) and \(Y_b\) and since \(Y_c = Y_a \cup Y_b\), we have \(\abs{\cutEdges{Y_a}} + \abs{\cutEdges{Y_b}} - 2m = \abs{\cutEdges{Y_a \cup Y_b}} = \adhesion{c}\).
Therefore, \(c\) is a bold child of \(s\) and the niceness property at \(s\) is satisfied as well.
As \(a\) and \(b\) are bold, the niceness at \(c\) is also satisfied.

Now, we show that the width does not increase beyond \(\max(4,w)\) by considering adhesion and the 3-center of the torsos separately.
For each \(x \in V(S)\) the set \(Y_x\) does not change and so \(\adhesion{x}\) does not change.
Notice that \(\deg[H_s]{z_a} = \adhesion{a}\) and \(\deg[H_s]{z_b} = \adhesion{b}\).
So, we know from \Cref{stmt:tcw-fake-nodes-form-path} that \(\adhesion{a},\adhesion{b} \in \{3,4\}\).
If \(\adhesion{a} =\adhesion{b} = 3\), as \(\adhesion{c} = \adhesion{a} + \adhesion{b} - 2m\), we have \(\adhesion{c}\leq 4\).
Otherwise at least one of \(a\) or \(b\) has adhesion \(4\), implying \(m = 2\).
In this case, \(\adhesion{c} \leq 8 - 4 \leq 4\) as well.

For all \(x \in V(S) \setminus \{s\}\) the torso does not change and the torso at \(c\) contains at most \(3\leq 4\) vertices.
Finally, consider the torso at \(s\) in \(S\) and \(S'\), which we denote with \(H\) and \(H'\) respectively.
Observe that \(H' = H / z_az_b\), where \(z_c\) is the vertex in \(H'\) to which \(z_a\) and \(z_b\) get contracted.
Let \(P \coloneqq z_{top}p_1p_2\dots p_\ell\) be the path of vertices associated with the fake nodes in \(H\) obtained from \Cref{stmt:tcw-fake-nodes-form-path} and let \(i \in \natint\ell\) be such that, without loss of generality, \(p_i = z_a\) and \(p_{i+1} = z_b\).
Observe that in \(H'\) we can also suppress the vertices associated with thin nodes, then \(z_{top}\), and then \(p_1, p_2, \dots, p_{i-2}\) in this order as these nodes are not adjacent to \(z_a\) or \(z_b\) in \(H\).
Call this graph \(H_{(i-1)}'\) and the graph after suppressing the same vertices in \(H\) by \(H_{(i-1)}\).
As nodes adjacent to either \(z_a\) or \(z_b\) are not suppressed yet, \(H_{(i-1)}' = H_{(i-1)}/z_az_b\) and as \(N_{H_{(i-1)}}(p_{i-1}) = \{p_i\} = \{z_a\}\), we have \(N_{H_{(i-1)}'}(p_{i-1}) = \{z_c\}\).
Additionally, the multiplicity of \(p_{i-1}z_c\) is at most 2 in \(H_{(i-1)}'\) as well.
So, \(p_{i-1}\) is suppressible in \(H_{(i-1)}'\), which equates to deleting \(p_{i-1}\).
The same is true for \(p_{i-1}\) in \(H_{(i-1)}\), where we can then suppress \(p_i = z_a\), which also equates to deleting it.
Thus, the graph obtained from \(H_{(i-1)}'\) after suppressing \(p_{i-1}\) is \(H_{(i-1)}' - p_{i-1}=H_{(i-1)}/z_az_b - p_{i-1}\).
The graph obtained from \(H_{(i-1)}\) after suppressing \(p_{i-1}\) and \(z_a\) is \(H_{(i-1)} - p_{i-1} - z_a\).
As \(N_{H_{(i-1)}}(z_a) = \{p_{i-1},z_b\}\), these two graphs are equal.
Thus, there are suppression sequences for \(H\) and \(H'\) arriving at the same graph; so, their 3-centers are equal.
\end{proof}

Now, we show that after blowing up \(s\), the number of its bold children is bounded by \(w+2 - \abs{X_s}\).

\begin{lemma}
\label{stmt:tcw-blowing-up-number-children}
Denote with \((S', \mathcal{X'})\) the tree-cut decomposition after blowing up \(s\).
Then, \((S', \mathcal{X}')\) is nice, has width at most \(\max(4,w)\), and \(\abs{\boldChildren[S']{s}} + \abs{X_s} \leq w+2\).
\end{lemma}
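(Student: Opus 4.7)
The plan is to verify the three claims (niceness, width bound, and bold-child count) in turn. Niceness and the width bound follow by straightforward induction on the number of expansions performed during the blowup: each expansion preserves both by \Cref{stmt:tcw-expanding-upper-nice-and-width}. Termination is guaranteed because every expansion replaces two fake bold children \(a, b\) with a single new fake bold child \(c\) (bold by construction; and \(z_c\), being the contraction of the two non-3-center vertices \(z_a, z_b\), remains outside the 3-center, which is preserved under expansion as shown in the proof of that lemma). Hence at most \(\abs{\boldChildren[S]{s}}\) expansions occur.

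For the bold-child bound, I would split \(\boldChildren[S']{s}\) into non-fake and fake parts. Since the 3-center \(\tilde H'_s = \tilde H_s\) has at most \(w\) vertices and contains \(X_s\) together with one distinct peripheral vertex \(z_c\) per non-fake bold child, the number of non-fake bold children is at most \(w - \abs{X_s}\).

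For the fake part I would apply \Cref{stmt:tcw-fake-nodes-form-path} to \(S'\): assuming \(\ell' \geq 1\) fake children remain, they form a path \(P = z_{top}p_1 \dots p_{\ell'}\) in \(H'_s\) with every edge of multiplicity at most \(2\) and \(N(p_i) \subseteq V(P)\) for \(i < \ell'\). Writing \(\mu_0, \mu_1, \dots, \mu_{\ell'-1}\) for the multiplicities of consecutive edges on \(P\), one has \(\adhesion{p_i} = \mu_{i-1} + \mu_i\) for \(i < \ell'\) and \(\adhesion{p_{\ell'}} = \mu_{\ell'-1} + E\), where \(E\) counts the edges from \(p_{\ell'}\) into \(V(H'_s) \setminus V(P)\). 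Since blowing up has terminated, every pair of adjacent fake children satisfies \(\adhesion{p_i} + \adhesion{p_{i+1}} - 2\mu_i \leq 2\); substituting the adhesion formulas yields \(\mu_{i-1} + \mu_{i+1} \leq 2\) for the interior pairs and \(\mu_{\ell'-2} + E \leq 2\) at the boundary. Combined with \(\adhesion{p_j} \geq 3\) for each \(j\) and \(\mu_i \geq 1\) for each \(i\), a short case analysis (two interior constraints forcing \(\mu_0 = \mu_1 = 1\) when \(\ell' \geq 4\); the boundary constraint together with \(\mu_2 = 1\) and \(\mu_1 = 2\) forcing \(E \leq 0\) when \(\ell' = 3\)) rules out \(\ell' \geq 3\). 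Combining both counts yields \(\abs{\boldChildren[S']{s}} + \abs{X_s} \leq (w - \abs{X_s}) + 2 + \abs{X_s} = w + 2\).

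The main obstacle will be the fake-child case analysis: the inequalities are mutually tight, so particular care is required for the boundary case \(\ell' = 3\), where the interior constraint alone does not yield a contradiction and one must combine it with the boundary non-expansion constraint and the bold condition on \(p_{\ell'}\) to rule it out.
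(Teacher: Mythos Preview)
Your proposal is correct and takes essentially the same approach as the paper. Both arguments reduce to the claim that whenever at least three fake bold children remain, an expansion is available; the paper shows this directly (exhibiting the expandable pair \((c_1,c_2)\) when \(m_1=1\) and \((c_2,c_3)\) when \(m_1=2\)), while you argue the contrapositive (no expandable pair forces \(\ell'\le 2\)) via the same multiplicity inequalities. Your explicit split into non-fake and fake counts, together with invoking the 3-center preservation from \Cref{stmt:tcw-expanding-upper-nice-and-width} to justify that the new node \(c\) stays fake, is a slightly more modular packaging of the identical idea.
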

\begin{proof}
Combined with \Cref{stmt:tcw-expanding-upper-nice-and-width}, we only need to show, that whenever \(\abs{\boldChildren[S]{s}} +\abs{X_s} \geq w+3\), there are two \(a,b \in \boldChildren[S]{s}\) with respect to which we can expand \(s\).
Let \(P = z_{top}z_{c_1}z_{c_2}\dots z_{c_\ell}\) be the path of vertices associated with the fake nodes in \(H\) obtained from \Cref{stmt:tcw-fake-nodes-form-path}.
As \(\abs{\boldChildren[S]{s}} + \abs{X_s} \geq w+3\), we have \(\ell \geq 3\).
For \(i \in \natint{\ell-1}_0\), denote with \(m_i\) the multiplicity of the edge \(z_{c_i}z_{c_{i+1}}\).
If \(m_1 = 1\), we claim that \(s\) can be expanded with respect to the nodes \(c_1\) and \(c_2\); otherwise, with respect to the nodes \(c_2\) and \(c_3\).
Assume, that \(m_1 = 1\).
Then, \(\adhesion{c_1} = \adhesion{c_2} = 3\) and \(\adhesion{c_1} + \adhesion{c_2} - 2m_1 = 4 \geq 3\).
If \(m_1 \neq 1\), we have \(m_1 = 2\) and \(\adhesion{c_2} + \adhesion{c_3} - 2m_2 = m_1 + \adhesion{c_3} - m_2 \geq 2 + 1 = 3\), meaning that \(s\) can be expanded with respect to the nodes associated with \(p_2\) and \(p_3\).
\end{proof}

Finally, we note that the operation of blowing up \(s\) can be implemented in linear time in size of the respective torso.
Additionally, the newly introduced nodes do not need to be altered again to make the tree-cut decomposition friendly.
These facts allow us to design an algorithm running in \(\fpt\)-linear time.
In fact, this algorithm always runs in at most quartic time in \(\abs{V(G)}\).

\begin{theorem}
\label{stmt:tcw-blowing-up-running-time}
Given a nice tree-cut decomposition \((S, \mathcal{D})\) of width \(w\), we can compute a friendly tree-cut decomposition of width at most \(\max(4,w)\) in time \(\O{w\abs{G} + \abs{S}}\).
\end{theorem}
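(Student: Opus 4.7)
The plan is to visit every node $s$ of the input decomposition in, say, DFS order and apply the blow-up operation of Lemma~\ref{stmt:tcw-blowing-up-number-children} at $s$. Since blowing up $s$ only changes the local structure around $s$ and introduces fresh empty-bag descendants as children of $s$, it leaves the torsos and bold-child counts of all other original nodes intact, so these local operations are independent and a single pass suffices.

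Correctness is immediate from the existing lemmas. Lemma~\ref{stmt:tcw-blowing-up-number-children} guarantees that after blowing up $s$ we have $|\boldChildren{s}| + |X_s| \leq w+2 \leq \max(4,w)+2$, niceness is preserved, and the width remains at most $\max(4,w)$. Each newly introduced node $c$ has $X_c = \emptyset$ and, by construction of the expansion step, exactly two bold children, so $|\boldChildren{c}| + |X_c| = 2$, which also meets the friendliness bound. Hence the final decomposition is friendly with width at most $\max(4,w)$.

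The bulk of the argument is the running-time analysis. A single blow-up at $s$ consists of (i) materializing the torso $H_s$; (ii) running the suppression procedure which, by Lemma~\ref{stmt:tcw-fake-nodes-form-path}, exposes the fake bold children as an induced path $z_{\mathrm{top}}p_1\ldots p_\ell$ together with the relevant edge multiplicities; and (iii) walking that path applying the edge-multiplicity criterion of Lemma~\ref{stmt:tcw-blowing-up-number-children} to decide where to expand. Using adjacency-list representations that handle multi-edges and loops, each of (i)--(iii) runs in time linear in $|V(H_s)| + |E(H_s)|$, which is $\O{|X_s| + |\children{s}| + |E(G[X_s])| + \adhesion{s} + \sum_{c \in \children{s}}\adhesion{c}}$. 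Summing over all $s$ and using $\sum_s|X_s| = |V(G)|$, $\sum_s|\children{s}| \leq |V(S)|$, $\sum_s|E(G[X_s])| \leq |E(G)|$, and $\adhesion{c} \leq w$ for every $c$, yields the claimed bound $\O{w|G|+|S|}$.

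The main obstacle will be making step (i) efficient: constructing every torso on demand must not require repeated global traversals of $G$. This is resolved by an initial bottom-up preprocessing pass that, for each node $c$, records the edges in $\cutEdges{Y_c}$ alongside a pointer to the ancestor bag containing the upper endpoint of each such edge; with this information available, the torso at $s$ can be assembled in time proportional to its size, and every edge of $G$ is touched during preprocessing at most $\O{w}$ times.
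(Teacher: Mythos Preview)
Your approach is essentially the paper's: blow up every node using Lemma~\ref{stmt:tcw-blowing-up-number-children}, observe that the newly created nodes already satisfy the friendliness bound, and charge the work to the torso sizes. The difference is only in how $\sum_s |E(H_s)|$ is bounded (the paper uses a convexity argument, you use the direct estimate $|E(H_s)|\le |E(G[X_s])|+\adhesion{s}+\sum_{c\in\children{s}}\adhesion{c}$, which is cleaner).

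However, your running-time calculation has a gap. Summing your bound gives $\sum_s\bigl(\adhesion{s}+\sum_{c}\adhesion{c}\bigr)\le 2w\,|V(S)|$, so the total is $\O{|G|+w|V(S)|}$, not the claimed $\O{w|G|+|S|}$. These are not the same: a tree-cut decomposition may contain many empty bags, so $|V(S)|$ is not bounded by any function of $|G|$, and the factor $w$ in front of $|V(S)|$ cannot be absorbed. Relatedly, your assertion that ``every edge of $G$ is touched during preprocessing at most $\O{w}$ times'' is false as stated: an edge lies in $\cutEdges{Y_c}$ for every node $c$ on the tree path between the bags of its endpoints, and that path can have length $\Theta(|V(S)|)$. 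The paper closes this gap with a one-line preprocessing step: in time $\O{|S|}$ one can prune the decomposition so that $|V(S)|\le 2|V(G)|$ without increasing the width~\cite{GanianKS22}. After that, $w|V(S)|\le 2w|V(G)|=\O{w|G|}$ and both your torso-summation and your preprocessing bound go through.
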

\begin{proof}
In time \(\O{\abs{S}}\), we can reduce the number of nodes in \(\abs{S}\) to \(2\abs{V(G)}\) while not increasing its width~\cite{GanianKS22}.
We assume, without loss of generality, that this operation has already been applied.

We can compute all torsos in time \(\O{\abs{G} + \sum_{s\in V(S)} \adhesion{s}} = \O{w \abs{G}}\).
For every \(s \in V(S)\), we can now compute the 3-center and hence the fake bold children in time \(\O{\abs{H_s}}\).
As expanding \(s\) with respect to any two nodes only result in these nodes being contracted and by \Cref{stmt:tcw-fake-nodes-form-path} these nodes have degree at most 4, we can blowup \(s\) in time \(\O{\ell_s} = \O{\abs{H_s}}\) as well, where \(\ell_s\) denotes the number of fake bold children at \(s\).
The newly introduced nodes all have exactly 2 bold children and so do not need to be blown up to make the tree-cut decomposition friendly.
Therefore, the running time to make the tree-cut decomposition friendly is \(\O{\sum_{s \in V(S)} \abs{H_s}} = \O{\sum_{s \in V(S)} \abs{V(H_s)} + \abs{E(H_s)}}= \O{\abs{V(S)} + \sum_{s \in V(S)} \abs{E(H_s)}}\).

Consider \(s \in V(S)\) and denote with \(r_s \coloneqq \abs{\boldChildren{s}} - \ell_s\) the number of non-fake bold children of \(s\).
As all thin children have degree at most 2 and all vertices associated with fake node children have degree at most 4.
Thus, \(\abs{E(H_s)} \leq 4\abs{\children{s}} + (1 + \abs{X_s} + r_s)^2\) and \(\sum_{s \in V(S)} \abs{E(H_s)} \leq 4\abs{V(S)} + \sum_{s \in V(S)}(1+\abs{X_s} + r_s)^2\).
As \((1+\abs{X_s} + r_s)^2\) is convex and since for all \(s \in V(S)\), we have \(\abs{X_s} + r_s \leq w\), we conclude that for fixed \(\sum_{s \in V(S)} \abs{X_s} + r_s\) the value of \(\sum_{s \in V(S)}(1+\abs{X_s} + r_s)^2\) is maximized, when the number of \(s\) with \(\abs{X_s} + r_s = w\) is maximized.
As \(\sum_{s \in V(S)} \abs{X_s} + r_s = \O{\abs{V(G)}}\), we conclude that \(\sum_{s \in V(S)}(1+\abs{X_s} + r_s)^2 \leq \O{\frac{\abs{V(G)}}{w} w^2 + \abs{V(G)}} = \O{w\abs{V(G)}}\).
\end{proof}

When creating \(\fpt\) algorithms parameterized by tree-cut width, we can from now on assume that the number of bold children is bounded by \(w+2\), or more concretely, that \(\abs{X_s} + \abs{\boldChildren{s}} \leq w+2\).
If, however, the algorithms only work for constant tree-cut width or more concretely up to tree-cut width at most 3, we can not make this simplifying assumption based on the arguments presented above.

We now sketch how to close this gap for tree-cut decompositions of width exactly 3.
For these, we expand a node \(s\) with respect to fake nodes \(a,b\in \boldChildren{s}\) if the multiplicity of the edge \(z_az_b\) is exactly 2.
One can verify, analogously to \Cref{stmt:tcw-expanding-upper-nice-and-width,stmt:tcw-blowing-up-number-children}, that such a pair of fake nodes always exists if \(\abs{\boldChildren{s}} + \abs{X_s} \geq w+3\), but that contrary to the result of \Cref{stmt:tcw-expanding-upper-nice-and-width} the width of the obtained tree-cut decomposition is bounded by 3, but the newly introduced node turns out to be a thin child of \(s\) and so the tree-cut decomposition is not necessarily nice anymore.
Therefore, we can not use this operation as a post-processing step after having obtained a nice tree-cut decomposition.

To fix this, we take a closer look at the algorithm provided by Ganian et~al.~\cite{GanianKS22} to obtain a nice tree-cut decomposition from an arbitrary tree-cut decomposition.
Their algorithm works by repeatedly considering the thin node \(p\) with parent \(s\) of minimum depth that is bad (i.e.,~there is a \(q \in \children{s}\) with \(N(p) \cap Y_q \neq \emptyset\)).
Then, \(p\) gets moved to be included in the sub-tree rooted at \(q\).
The position to which \(p\) gets moved is chosen in such a way as to ensure that the width of the tree-cut decomposition does not increase and that after at most \(2\abs{V(S)}\) many moving operations, the tree-cut decomposition is nice.
Notice that whenever at depth \(t\) there is no bad node anymore, this fact is maintained throughout the algorithm.

The operation of expanding a node in a tree-cut decomposition of width 3 only introduces a bad node at a larger depth than at the expanded node.
Therefore, we consider the tree-cut decomposition depth layer by depth layer.
If at a depth layer, there is a bad node \(p\), we apply the moving procedure described by Ganian et~al.~\cite{GanianKS22}.
If at a depth layer, there is no bad node, but a node \(s\) with \(\abs{X_s} + \abs{\boldChildren{s}} \geq w+3\), we blow up \(s\) using the modified expansion operation.
This ensures that after a layer is processed, there are no bad nodes on this or a lower layer and all such nodes \(s\) have \(\abs{\boldChildren{s}} + \abs{X_s} \leq w+2\).
Additionally, per depth layer we apply at most \(3\abs{V(S')}\) many operations, where \(S'\) denotes the tree-cut decomposition when beginning to process the respective depth layer.
As all internal empty nodes have at least two children and we remove all empty leaves, the number of nodes in \(S\) is bounded by \(2\abs{V(G)}\)~\cite{GanianKS22}.
Thus, we do at most \(6\abs{V(G)}^2\) operations.

\begin{corollary}
\label{stmt:tcw-blowing-up-general}
Given a nice tree-cut decomposition of width \(w\), we can compute a friendly tree-cut decomposition of width at most \(w\) in polynomial time.
\end{corollary}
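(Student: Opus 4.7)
The strategy is to strengthen \Cref{stmt:tcw-blowing-up-running-time} so that the width bound $\max(4,w)$ becomes simply $w$. Since that theorem already settles the case $w \ge 4$, and for $w \le 2$ no child can be bold (adhesion at least $3$ would already violate the width bound), the remaining goal is to handle $w = 3$, where an unrestricted blow-up could introduce a bag whose torso has $4$ vertices.

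The core idea, sketched at the end of the section, is to restrict the expansion rule: expand $s$ with respect to fake children $a, b \in \boldChildren{s}$ only when the multiplicity $m$ of the edge $z_a z_b$ in the torso $H_s$ equals $2$. Adapting the calculations in the proof of \Cref{stmt:tcw-expanding-upper-nice-and-width}, the new adhesion at the introduced node $c$ is $\adhesion{a} + \adhesion{b} - 2m$, and since for $w = 3$ we have $\adhesion{a}, \adhesion{b} \le 3$, this yields $\adhesion{c} \le 2$; the 3-center analysis via the induced path structure of \Cref{stmt:tcw-fake-nodes-form-path} carries over verbatim, so the width stays at $w$. Mirroring the argument of \Cref{stmt:tcw-blowing-up-number-children}, I would then show that such a pair $a,b$ always exists along the induced fake path $z_{\mathrm{top}} p_1 p_2 \cdots p_\ell$ whenever $|\boldChildren{s}| + |X_s| \ge w + 3$: a short case analysis on the first multiplicity $m_1$ picks either $(p_1, p_2)$ or $(p_2, p_3)$.

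The main obstacle is that this stricter rule can drop $\adhesion{c}$ below $3$, turning $c$ into a thin child of $s$ and violating the niceness property. To repair niceness without undoing the width bound, the plan is to reuse the niceness-restoration procedure of Ganian et~al.~\cite{GanianKS22}, which resolves bad nodes by moving each minimum-depth bad thin node further down the tree. The crucial invariant of that procedure is that once all depth layers up to depth $t$ are free of bad nodes, they stay free of bad nodes throughout the remainder of the algorithm; moreover, a blow-up performed at depth $t$ only produces new bad nodes strictly below depth $t$.

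These two observations let me interleave both procedures. Processing the decomposition depth by depth from the root, at each layer I first run the moving procedure to clear all bad nodes on that layer, and then apply the restricted blow-up to every node $s$ at this depth with $|\boldChildren{s}| + |X_s| \ge w + 3$. Each newly introduced node has exactly two bold children and so is already friendly; hence no further blow-up is triggered on the layer being processed. After the layer is handled, both invariants---no bad nodes and the friendly-bag bound---are preserved on that layer for the rest of the run. Since the total number of nodes in the decomposition stays $\O{\abs{V(G)}}$ throughout~\cite{GanianKS22}, each of the at most $\O{\abs{V(G)}}$ layers contributes $\O{\abs{V(G)}}$ operations, giving the polynomial-time bound and thereby the corollary.
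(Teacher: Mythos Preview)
Your proposal is correct and follows essentially the same approach as the paper: restrict expansion to multiplicity-$2$ pairs in the $w=3$ case, observe that the new node $c$ is then thin (so niceness can only break at strictly greater depth), and interleave the restricted blow-up with Ganian et~al.'s depth-layered bad-node repair to restore niceness without exceeding width $w$. The paper sketches exactly this case split and the same layer-by-layer interleaving, with the same polynomial bound coming from the $\O{\abs{V(G)}}$ node-count invariant.
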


\subsubsection{Reducing \texorpdfstring{$\gstp$}{GSTP} by Augmented Tree-Cut Width to \texorpdfstring{$\gstp$}{GSTP} with a Simple Tree-Cut Decomposition}
\label{sec:orgd9164ac}
In this Section, we present some reduction rules that, taken together, are enough to remove all nodes in the tree-cut decomposition that violate the condition for a simple tree-cut decomposition.
Some of the reduction rules have as a precondition that a specific sub-instance is positive.
So, they are not reduction rules in the classical sense, as they can not necessarily be applied in polynomial but rather only in \(\fpt\)-time.

For this Section, assume that \(\mathcal{D} \coloneqq (S, \mathcal{X})\) is a friendly tree-cut decomposition of width \(w\) for \(\augment{G}{\mathcal{T}}\).
Set \(X_s' \coloneqq X_s \cap V(G)\) and let \(\mathcal{X}' \coloneqq \{X'_s\}_{s\in V(S)}\) for all \(s \in S\).
Then, \(\mathcal{D}' \coloneqq (S, \mathcal{X}')\) is a tree-cut decomposition of \(G\) with width at most \(w\).
When we refer to a reduction rule presented in \Cref{sec:tcw-red-rules}, namely \Cref{rr:connected-components,rr:cross-link-demand-large,rr:adh-1}, we mean that they are applied with respect to \(\mathcal{D}'\) while keeping \(\mathcal{D}\) in sync.
These reduction rules already bring us quite close to \(\mathcal{D}'\) being simple with respect to the nodes that are thin in \(\mathcal{D}\).

\begin{lemma}\label{stmt:general-tcw-rr-almost-simple}
After exhaustively applying \Cref{rr:sensible-terminal-sets,rr:connected-components,rr:adh-1} with respect to \(\mathcal{D}'\) and removing nodes that are empty in \(\mathcal{D}\) and \(\mathcal{D}'\), we have for all \(s \in V(S) \setminus \{r\}\) that are thin in \(\mathcal{D}\) that \(\adhesion[\mathcal{D}]{s} = 2\) and \(\crossLink[\mathcal{D'}]{s} = \emptyset\).
In particular, \(\cutEdges[\augment{G}{\mathcal{T}}]{Y^\mathcal{D}_s}\) does not contain an augmented and a non-augmented edge.
Additionally, the tree-cut decompositions can be maintained efficiently while not increasing the width of \(\mathcal{D}\) and keeping \(\mathcal{D}\) friendly if it was friendly before.
\end{lemma}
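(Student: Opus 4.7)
The plan is to first establish the two structural properties at a thin (in $\mathcal{D}$) non-root $s$ by a short adhesion calculation, and then deal with the maintenance claim, which forms the principal obstacle.

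Fix such an $s$, so $\adhesion[\mathcal{D}]{s} \leq 2$. The cut $\cutEdges[\augment{G}{\mathcal{T}}]{Y_s^{\mathcal{D}}}$ splits into non-augmented edges, which coincide with $\cutEdges[G]{Y_s^{\mathcal{D}'}}$ of total size $\adhesion[\mathcal{D}']{s}$, and augmented edges of the form $\aug{T}v$ with $v \in T$ on the side opposite to $\aug{T}$. I will show the augmented portion must be empty. If $Y_s^{\mathcal{D}'} \neq \emptyset$, then \Cref{rr:connected-components} keeps $G$ connected, forcing $\adhesion[\mathcal{D}']{s} \geq 1$, and \Cref{rr:adh-1} excludes the value $1$, so $\adhesion[\mathcal{D}']{s} \geq 2$; combined with $\adhesion[\mathcal{D}]{s} \leq 2$ this pins both to $2$ and leaves no room for augmented crossings. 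If instead $Y_s^{\mathcal{D}'} = \emptyset$ but $Y_s^{\mathcal{D}} \neq \emptyset$, the bags below $s$ contain only augmented vertices $\aug{T}$; each such $T$ satisfies $|T| \geq 2$ by \Cref{rr:sensible-terminal-sets} and has all of $T$ outside $Y_s^{\mathcal{D}}$, so $\aug{T}$ alone contributes $|T|$ augmented edges to the cut. Thinness then forces a single $\aug{T}$ with $|T|=2$, and again the cut is pure. In either scenario $\adhesion[\mathcal{D}]{s} = 2$ and the cut is never of mixed type.

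For $\crossLink[\mathcal{D}']{s} = \emptyset$, any $T$ crossing $Y_s^{\mathcal{D}'}$ has vertices of $T$ on both sides while $\aug{T}$ lies on one side only, forcing at least one augmented edge $\aug{T}v$ to cross $Y_s^{\mathcal{D}}$. In the first case above this contradicts the absence of augmented crossings; in the second case it is vacuous since $Y_s^{\mathcal{D}'} = \emptyset$ prevents any $T$ from crossing at all. Hence $\crossLink[\mathcal{D}']{s} = \emptyset$, which together with the adhesion analysis gives the "in particular" clause immediately.

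The main obstacle is the maintenance claim. The plan is to apply \Cref{rr:adh-1} by removing the bridge $uv$, replacing each $T \in \crossLink[\mathcal{D}']{s}$ by the two smaller terminal sets $(T \cap Y_s) \cup \{u\}$ and $(T \setminus Y_s) \cup \{v\}$, placing their augmented vertices into $X_s$ and into the bag of the parent of $s$, respectively, and then invoking \Cref{rr:connected-components} to split the now disconnected instance into two sub-instances. The delicate verification is that at every link of $\mathcal{D}$ the disappearance of the edge $uv$ together with the old augmented edges $\aug{T}v'$ exactly compensates for the new augmented edges incident to the two replacements, so no adhesion grows. Since after the split each sub-decomposition is a restriction of $\mathcal{D}$ to one side, all torsos shrink and bold-child counts weakly decrease at every node, preserving both the width bound of $\mathcal{D}$ and friendliness. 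Each reduction step strictly shrinks the instance, so the total work is polynomial.
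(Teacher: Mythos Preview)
Your structural argument in the first two paragraphs is essentially correct and close in spirit to the paper's, though you should note that in the case $Y_s^{\mathcal{D}'}\neq\emptyset$ you are implicitly also using $V(G)\setminus Y_s^{\mathcal{D}'}\neq\emptyset$ before connectivity of $G$ can force $\adhesion[\mathcal{D}']{s}\geq 1$; the missing sub-case folds into your second case analysis.

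The genuine gap is in the maintenance plan. Your choice to place the new augmented vertex $a_1\coloneqq\aug{(T\cap Y_s)\cup\{u\}}$ into $X_s$ does \emph{not} give the compensation you assert. Take a proper descendant $t$ of $s$ with $\aug{T}\in Y_t^{\mathcal{D}}$ and with several vertices of $T\cap Y_s$ lying in $Y_t^{\mathcal{D}}$. Before the rule, among the relevant edges only $\aug{T}y$ (with $y$ the unique vertex of $T\setminus Y_s$) and possibly $uv$ cross at $t$; after your modification, $\aug{T}$ together with all its incident edges vanishes, while $a_1\in X_s\setminus Y_t$ contributes one crossing edge for every $z\in\bigl((T\cap Y_s)\cup\{u\}\bigr)\cap Y_t$. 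The adhesion at $t$ can therefore jump by $|T\cap Y_t|-1$, so the ``exact compensation'' fails. Likewise, enlarging $X_s$ or $X_p$ by one vertex can push $|\boldChildren{\cdot}|+|X_{\cdot}|$ past $w+2$ and, since core vertices are never suppressed, can enlarge the 3-center. Thus neither friendliness nor the width bound is preserved, and your claim that ``each sub-decomposition is a restriction of $\mathcal{D}$'' is in direct tension with the insertions you just made.

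The paper avoids all three problems by a different placement: on the side containing $\aug{T}$ it \emph{reuses} that vertex in its existing bag as the augmented vertex of the new terminal set (so the only edge change is the local swap $\aug{T}y\mapsto\aug{T}u$, which never raises any adhesion), and on the other side it attaches the fresh augmented vertex as a new thin singleton child of $p$ rather than inserting it into $X_p$.
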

\begin{proof}
Let \(s \in V(S) \setminus \{r\}\) be thin in \(\mathcal{D}\).
Assume \(\adhesion[\mathcal{D}]{s} < 2\).
If \(\adhesion[\mathcal{D}]{s} = 0\), either \(Y^\mathcal{D}_s\supseteq Y^{\mathcal{D}'}_s\) is empty, which means that this node was removed, or \(G\) is disconnected, which means that \Cref{rr:connected-components} would split this instance.
New let \(\adhesion[\mathcal{D}]s = 1\) and set \(\{uv\} = \cutEdges{Y^\mathcal{D}_s}\).
If \(uv \in E(G)\), then \(\adhesion[\mathcal{D}']{s} = 1\) and \Cref{rr:adh-1} would remove \(uv\).
So, \(uv \notin E(G)\) and \(uv\) is an augmented edge.
Let \(T\) be the terminal set inducing \(uv\).
As \(\cutEdges[G]{Y_s^\mathcal{D'}}\) is empty, either \(Y^{\mathcal{D}'}_s\) or \(V(G)\setminus Y^\mathcal{D'}_s\) is empty or \(G\) would be disconnected, which would violate that \Cref{rr:connected-components} has been applied exhaustively.
Thus, \(\abs{T} \leq 1\) and \(T\) should have been removed by \Cref{rr:sensible-terminal-sets} and we have \(\adhesion[\mathcal{D}]{s} = 2\).

Now, assume that there is a \(T \in \crossLink[\mathcal{D}']{s}\).
This means that at least one edge of \(\cutEdges[\augment{G}{\mathcal{T}}]{Y^\mathcal{D}_s}\) is augmented.
If both are augmented, either \(Y^\mathcal{D'}_s\) or \(V(G) \setminus Y^\mathcal{D'}_s\) is empty.
As \(T \subseteq V(G)\), this means that either \(T \cap Y_s^{\mathcal{D}} \neq \emptyset\) or \(T \setminus Y_s^{\mathcal{D}} \neq \emptyset\) is empty, violating our assumption.
Thus, exactly one edge of \(\cutEdges[\augment{G}{\mathcal{T}}]{Y^\mathcal{D}_s}\) is augmented and so \(\adhesion[\mathcal{D'}]{s} = 1\), which means that \Cref{rr:adh-1} was applicable.

The only reduction rule, that might be hard to efficiently maintain and might increase the width of \(\mathcal{D}\) is \Cref{rr:adh-1} in the case where the terminal sets are modified.
This case only applies when \(\adhesion[\mathcal{D}']{s} = 1\) and there is a \(T \in \mathcal{T}\) with \(T \cap Y_s^{\mathcal{D}} \neq \emptyset\) and \(T \setminus Y_s^{\mathcal{D}} \neq \emptyset\).
Thus, \(\adhesion[\mathcal{D}]{s} = 2\).
Let \(\{uv\} \coloneqq \cutEdges[G]{Y^\mathcal{D'}_s}\) and \(\{uv, xy\} \coloneqq \cutEdges[\augment{G}{\mathcal{T}}]{Y_s^\mathcal{D}}\) and, without loss of generality, assume \(u \in Y^\mathcal{D'}_s\), \(v \notin Y^\mathcal{D'}_s\), and \(x = \aug{T}\).
Additionally, assume \(x \in Y^\mathcal{D}_s\), the case \(x \notin Y^\mathcal{D}_s\) follows analogously.
The augmented graph after removing \(uv\) and \(T\) is \(G^* \coloneqq \augment{G}{\mathcal{T}} - uv - xy\).

Let \(A\) be the component of \(G^*\) containing \(u\).
If before the reduction rule applied, we already had \((T \cap Y^\mathcal{D'}_s) \cup \{u\}\) as a terminal set, \(A = \augment{G}{\mathcal{T}}[Y^\mathcal{D}_s]\); so, a tree-cut decomposition of appropriate width and friendliness for this decomposition is \((S_s, \{X_{s'} \setminus \{u\} \mid s' \in S_s\})\).
Now, assume that the terminal set \((T \cap Y^\mathcal{D'}_s) \cup \{u\}\) was newly introduced.
Then, \(A = \augment{G}{\mathcal{T}}[Y^\mathcal{D}_s] + xu\), where we identify \(u\) with the augmented vertex of \((T \cap Y^\mathcal{D'}_s) \cup \{u\}\).
Consider the tree-cut decomposition \(\mathcal{D}_A \coloneqq (S_s, \{X_{s'}\}_{s' \in S_s})\).
Note that \(\mathcal{D}_A\) is friendly if \(\mathcal{D}\) is friendly.
Let \(s \in V(S_s) \setminus \{s\}\).
We have \(xy \in\cutEdges[\augment{G}{\mathcal{T}}]{Y^\mathcal{D}_s}\) if and only if \(xu \in\cutEdges[A]{Y^{\mathcal{D}_A}_s}\).
So, \(\adhesion[\mathcal{D}]{s} = \adhesion[\mathcal{D}_A]{s}\).
Additionally, the torso at \(s\) does not change between \(\mathcal{D}\) and \(\mathcal{D}_A\).
Lastly, denote with \(H_s\) and \(H'_s\) the torsos at \(s\) with respect to \(\mathcal{D}\) and \(\mathcal{D}_A\), respectively.
If \(x \in X_s\), \(H'_s = H_s - z_{top} + xu\), so the 3-center does not increase as we only introduce an edge between center vertices.
If \(x \notin X_s\), denote with \(z\) the vertex of \(H_s\) whose associated sub-tree contains \(x\).
Then, \(H'_s = H_s - z_{top} + uz\).
As \(\deg[H_s]{z_{top}} = \adhesion[\mathcal{D}]{s} = 2\), the vertex \(z_{top}\) is suppressible.
We have \(N_{H_s}(z_{top}) = \{u, z\}\).
Thus, \(H'_s\) is equal to \(H_s\) after suppressing \(z_{top}\) and their 3-centers are equal.

Let \(B\) be the component of \(G^*\) containing \(v\).
If before the reduction rule applied, we already had \((T \setminus Y^\mathcal{D'}_s) \cup \{v\}\) as a terminal set, \(B = \augment{G}{\mathcal{T}}[V(\augment{G}{\mathcal{T}}) \setminus Y^\mathcal{D}_s]\); so, a tree-cut decomposition of appropriate width and friendliness for this decomposition is \((S - S_s, \{X_{s'}\}_{s' \in V(S) - V(S_s)})\).
Now, assume that the terminal set \((T \setminus Y^\mathcal{D'}_s) \cup \{v\}\) was newly introduced.
Then, \(B = \augment{G}{\mathcal{T}} / Y^\mathcal{D}_s\) and let \(t\) be the augmented vertex of \((T \setminus Y^\mathcal{D'}_s) \cup \{v\}\).
Denote with \(p\) the parent of \(s\) in \(S\) and consider the tree-cut decomposition \(\mathcal{D}_B\) induced by \(S - S_s\), where we add a node \(s'\) with \(X_{s'} \coloneqq \{t\}\) as a child of \(p\).
Note, that \(\mathcal{D}_B\) is friendly if \(\mathcal{D}\) was friendly and that the 3-center of the torso at \(s'\) is empty.
As above, the adhesion does not increase and the remaining torsos also do not change between \(\mathcal{D}\) and \(\mathcal{D}_B\).
To obtain a tree-cut decomposition for the whole graph, set \(\mathcal{D}_A\) as a child of the root of \(\mathcal{D}_B\).
\end{proof}

There are two task remaining for this Section.
First, we need to ensure for all thin nodes \(s \in V(S) \setminus \{r\}\) that \(\abs{Y^\mathcal{D'}_s} \leq 1\).
We call nodes \(s \in V(S) \setminus \{r\}\) that are thin in \(\mathcal{D}\), but have \(\abs{Y^\mathcal{D'}_s} \geq 2\) \emph{cluttered}.
Second, we need to take care of all nodes that are bold in \(\mathcal{D}\), but thin in \(\mathcal{D}'\).
The second task can be taken care of rather quickly by applying \Cref{stmt:almost-simple-tcd-made-simple}.
As \(\mathcal{D}\) is friendly, for each \(s\in V(S)\) the number of such children is bounded by the number of bold children, or more concretely, \(w + 2 - \abs{X_s}\).
Thus, we can just treat them like bold children.

\begin{lemma}\label{stmt:no-cluttered}
Assume \(\mathcal{D}\) is friendly, has no cluttered nodes, and that \Cref{rr:sensible-terminal-sets,rr:connected-components,rr:adh-1} were applied exhaustively.
We can compute in linear time an equivalent instance \((G', \mathcal{T}, d)\) and a simple tree-cut decomposition \(\mathcal{C}\) of \(G'\) of width at most \(w+5\).
\end{lemma}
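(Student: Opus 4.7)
My plan is to build $\mathcal{C}$ by locally modifying $\mathcal{D}'$ (the restriction of $\mathcal{D}$ to $V(G)$), guided by the structure of $\mathcal{D}$. Call a non-root node $c$ \emph{dead} if $Y_c^{\mathcal{D}'}=\emptyset$, and \emph{problematic} if it is bold in $\mathcal{D}$ but thin in $\mathcal{D}'$. Using the absence of cluttered nodes, \Cref{stmt:general-tcw-rr-almost-simple}, and the fact that $G$ is connected (thanks to \Cref{rr:connected-components}), every non-root thin-in-$\mathcal{D}$ node that is not dead already satisfies $|Y_c^{\mathcal{D}'}|=1$, $\adhesion[\mathcal{D}']{c}=2$ and $\crossLink[\mathcal{D}']{c}=\emptyset$, so it is simple in $\mathcal{D}'$; hence only dead and problematic nodes obstruct simpleness.

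First, I prune from $S$ every subtree rooted at a dead node; such a subtree contains only augmented vertices of $\augment{G}{\mathcal{T}}$, so its removal does not affect $G$ or any adhesion computed against $G$. I then argue that each problematic $c$ in the pruned tree satisfies $\adhesion[\mathcal{D}']{c}=2$: the value cannot be $1$ due to exhaustive application of \Cref{rr:adh-1}, and it cannot be $0$ since, by connectivity of $G$, that would force $c$ to be the root.

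Second, I introduce a \emph{shared-pin gadget}: for every node $s$ in the pruned tree that has at least one problematic child, I add one fresh auxiliary vertex $u_s$ to $X_s^{\mathcal{C}}$; for every problematic child $c$ of $s$, I add one fresh pendant $v_c$ to $X_c^{\mathcal{C}}$ together with the edge $u_s v_c$ in $G'$. All other bags and edges of $\mathcal{C}$ and $G'$ are inherited from $\mathcal{D}'$ and $G$. Sharing $u_s$ across all problematic children of $s$ is crucial: it bounds the bag-size increase at $s$ by a constant, independent of the number of problematic children.

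Third, I verify the result. Since $\adhesion[\mathcal{C}]{c}=\adhesion[\mathcal{D}']{c}+1=3$ for every problematic $c$, each such $c$ becomes bold in $\mathcal{C}$; the thin nodes of $\mathcal{C}$ are exactly the non-dead thin-in-$\mathcal{D}$ nodes, which are simple. For every $s$ I obtain $|X_s^{\mathcal{C}}|\leq |X_s^{\mathcal{D}}|+2$, $|\boldChildren[\mathcal{C}]{s}|=|\boldChildren[\mathcal{D}]{s}|$, and $\adhesion[\mathcal{C}]{s}\leq w+1$, so friendliness of $\mathcal{D}$ yields $|X_s^{\mathcal{C}}|+|\boldChildren[\mathcal{C}]{s}|\leq w+4$, and a routine $3$-center computation bounds the width of $\mathcal{C}$ by $w+5$. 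Niceness transfers from $\mathcal{D}$ to $\mathcal{C}$ because every new vertex lies in $V(G')\setminus V(G)$ and therefore cannot violate the neighborhood condition at a simple node. Equivalence of instances is immediate: the gadget vertices form an independent set disjoint from $V(G)$ in $G'$, no terminal set meets them, and every connected subgraph of $G'$ covering a terminal set $T\subseteq V(G)$ lies entirely in $G$. The construction runs in linear time since each operation is local to a node of $S$. The main obstacle is to control bag size and number of bold children simultaneously; a naive per-child gadget would increase one of them by $\Theta(w)$ and violate the width bound, so sharing a single auxiliary vertex per parent is the step that makes the estimate $w+5$ work.
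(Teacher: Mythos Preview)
Your approach is essentially the paper's: both identify that, after pruning and the reduction rules, every thin-in-$\mathcal{D}$ node is already simple in $\mathcal{D}'$, so the only thin-in-$\mathcal{D}'$ nodes left to fix are those that were bold in $\mathcal{D}$, and both add a disconnected gadget to bump the adhesion of precisely those children. The paper packages this as an invocation of \Cref{stmt:almost-simple-tcd-made-simple} (two fresh vertices per bag, four edges per non-simple thin child), while you rebuild an equivalent gadget by hand (one shared $u_s$ per parent, one $v_c$ per problematic child, one edge each). Both give the same $\Delta_s\le 1$ computation via friendliness of $\mathcal{D}$, hence the same $w+5$ bound.

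Two small slips to fix. First, the gadget vertices are \emph{not} an independent set (you add the edges $u_sv_c$); what makes the instance equivalent is that the gadget forms components disconnected from $G$ with no terminals in them, which is what you actually need. Second, you establish $|X_s^{\mathcal{C}}|+|\boldChildren[\mathcal{C}]{s}|\le w+4$, but friendliness is stated relative to the width of $\mathcal{C}$ itself; if that width happens to be below $w+2$ your bound does not yield friendliness. The paper handles this inside \Cref{stmt:almost-simple-tcd-made-simple} by padding the root bag with isolated vertices until the width reaches the target—add the same step and your argument is complete. (A tiny overstatement: $|\boldChildren[\mathcal{C}]{s}|\le |\boldChildren[\mathcal{D}]{s}|$ rather than equality, since dead bold-in-$\mathcal{D}$ children are pruned; this only helps.)
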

\begin{proof}
To obtain \(G'\) and \(\mathcal{C}\), we apply \Cref{stmt:almost-simple-tcd-made-simple} to \(G\) and \(\mathcal{D}'\), where we remove all empty leaves.
It remains to bound the tree-cut width of \(\mathcal{C}\).
Let \(s \in V(S)\), we now need to bound \(\Delta_s = \abs{N_s} + \abs{\boldChildren[\mathcal{D}']{s}} + \abs{X_s} - w - 1\).
For this, we show that any \(c \in \thinChildren[\mathcal{D}]{s}\) is simple in \(\mathcal{D}'\).
As \(c\) is not cluttered, we have \(\abs{Y^{\mathcal{D}'}} \leq 1\) and since we remove empty leaves, we even have \(\abs{Y^{\mathcal{D}'}_s} = 1\).
By \Cref{stmt:general-tcw-rr-almost-simple}, we have that \(\adhesion{c} =2\) and \(\crossLink{c} = \emptyset\), yielding that \(c\) is simple.
Thus, \(N_s \subseteq \boldChildren[\mathcal{D}]{s}\) and, in particular, \(N_s \cup \boldChildren[\mathcal{D}']{s} \subseteq \boldChildren[\mathcal{D}]{s}\).
Combined with the fact that \(\mathcal{D}\) is friendly, we get \(\Delta_s \leq 1\) and the width of \(\mathcal{C}\) is bounded by \(w+5\).
\end{proof}

To tackle the cluttered nodes, we solve sub-instances of \(\gstp\).
The reduction rules we present now, are no reduction rules in the classical sense (i.e.,~they do not necessarily run in polynomial time), but rather recursion rules.
We later show, how to apply these rules in a way, that we only solve simple sub-instances and we mostly preserves the running time obtained in \Cref{sec:tcw-simple}.

The crux of why this problem is \(\fpt\) by the tree-cut width of the augmented graph, but \(\wonehard\) by the tree-cut width of the host-graph~\cite{GanianOR20} lies in the fact, that for a cluttered node \(s\in V(S)\), we have for all \(T \in \mathcal{T}\) that either \(T \cap Y^\mathcal{D'}_s\) or \(T \setminus Y^\mathcal{D'}_s\) is empty.
This means that no terminal set crosses \(Y^\mathcal{D'}_s\).
Therefore, we can mostly disregard how the instance looks on \(V(G) \setminus Y^\mathcal{D'}_s\) for deciding how terminal set contained in \(Y^\mathcal{D'}_s\) are solved in a solution of the whole instance.
Let \(u,x \in Y_s\) and \(v,y \in V(G) \setminus Y_s\) be such, that \(\{uv,xy\} = \cutEdges[\augment{G}{\mathcal{T}}]{Y^\mathcal{D}_s}\), and let \(\mathcal{U}\coloneqq \{T \in \mathcal{T}\mid T \subseteq Y_s\}\) be those terminal sets contained in \(Y^\mathcal{D'}_s\).
Note that \(uv, xy \in E(G)\).
First, we consider the case, where we can satisfy the requirements of \(\mathcal{U}\), while supplying an additional connection for \(vy\) to \(V(G)\setminus Y^\mathcal{D'}_s\), while solving the requirements of \(\mathcal{U}\).

\begin{reductionrule}
\label{rr:tcw-thin-supply}
Consider the instance \(\mathscr{X}_s \coloneqq (G[Y^\mathcal{D'}_s], \mathcal{U} \cup \{u,x\}, d')\) where \(d'\) is \(d\vert_{\mathcal{U}\cup \{u,x\}}\) increased by one for the argument \(\{u,x\}\).
If \(\mathscr{X}_s\) is positive, remove all \(\mathcal{U}\) from \(\mathcal{T}\) and contract \(Y^\mathcal{D'}_s\) in the original instance \(\mathscr{P}\).
\end{reductionrule}
\begin{proof}
Let \((\mathcal{F}, \pi)\) be a solution for the original instance and denote with \(\mathcal{Z} \coloneqq \inv\pi(\mathcal{T}\setminus \mathcal{U})\) all trees that are assigned to a terminal set whose vertices are not contained in \(Y^\mathcal{D'}_s\cap V(G)\).
Denote with \(h\) the vertex to which \(Y^\mathcal{D'}_s\) gets contracted.
Consider any \(Z \in \mathcal{Z}\), if \(uv \in E(Z)\), replace \(Z\) by \((Z - uv) + hv\) and if \(xy \in E(Z)\), replace \(Z\) by \((Z - xy) + hy\).
Denote with \(\mathcal{Z}'\) the set of all \(\mathcal{Z}\) after applying the transformation.
Notice that all of them are connected, pairwise edge-disjoint, and contained in the reduced instance.
Assigning \(Z \in \mathcal{Z}'\) to the same terminal set as the original subgraphs yields a solution for the reduced instance.

Now, let \((\mathcal{F}, \pi)\) be a solution for the reduced instance and \((\mathcal{H}, \rho)\) be a solution to \(\mathscr{X}_s\).
Without loss of generality, we assume that for all \(T \in \mathcal{T}\setminus \mathcal{U}\), we have that all leaves of all trees in \(\inv\pi(T)\) are contained in \(T\).
Denote with \(h\) the vertex to which \(Y^\mathcal{D'}_s\) gets contracted.
As \(h\) is not contained in any terminal set and \(\deg{h} = 2\), there is at most one \(F \in \mathcal{F}\) with \(h \in V(F)\).
Let \(P \in \inv\rho(\{u,x\})\) and set \(F^* \coloneqq ((F - h) + uv + xy) \cup P\).
Notice that \(V(F^*) \supseteq \pi(F)\), that \(F^*\) is connected, and that edge-disjoint from all \(\mathcal{H}\setminus P\) and \(\mathcal{F}\setminus F\).
A solution for the original instance can be obtained by assigning all \(F' \in \mathcal{F}\setminus F\) to \(\pi(F)\), all \(H \in \mathcal{H}\setminus P\) to \(\rho(H)\), and \(F^*\) to \(\pi(F)\).
\end{proof}

If \Cref{rr:tcw-thin-supply} is not applicable, we know that we cannot use \(uv\) and \(xy\) in a tree for terminals contained in \(V(G)\setminus Y^\mathcal{D'}_s\).
So, we check, whether the terminal sets \(\mathcal{U}\) can be solved only using edges of \(G[Y^\mathcal{D'}_s]\).

\begin{reductionrule}\label{rr:tcw-thin-independent}
Assume that \Cref{rr:tcw-thin-supply} is not applicable to \(s\).
Consider the instance \(\mathscr{Y}_s\coloneqq (G[Y^\mathcal{D'}_s], \mathcal{U}, d\vert_\mathcal{U})\).
If \(\mathscr{Y}_s\) is positive, remove \(\mathcal{U}\) from \(\mathcal{T}\) and \(Y_s^{\mathcal{D}'}\) from \(G\) in the original instance \(\mathscr{P}\).
\end{reductionrule}
\begin{proof}
Let \((\mathcal{F}, \pi)\) be a solution for the original instance and denote with \(\mathcal{Z} \coloneqq \inv\pi(\mathcal{T}\setminus \mathcal{U})\) all trees that are assigned to a terminal set whose vertices are not contained in \(Y^\mathcal{D'}_s\cap V(G)\).
Since \(\mathscr{X}_s\) is negative, for all \(Z \in \mathcal{Z}\) the set of vertices \(V(Z)\) is disjoint from \(Y^\mathcal{D'}_s\).
So, \((\mathcal{Z}, \pi\vert_{\mathcal{Z}})\) is a solution for the reduced instance.

Now, let \((\mathcal{F}, \pi)\) be a solution for the reduced instance and \((\mathcal{H}, \rho)\) be a solution to \(\mathscr{Y}_s\).
Let \(\mathcal{J} \coloneqq \mathcal{F} \cup \mathcal{H}\) be a set of edge-disjoint connected subgraphs of \(G\) and denote with \(\tau \colon \mathcal{J} \to \mathcal{T}\) the function satisfying \(\tau\vert_{\mathcal{F}} = \pi\) and \(\tau\vert_{\mathcal{H}} = \rho\).
Then, \((\mathcal{J}, \tau)\) is a solution for the original instance.
\end{proof}

Finally, we need to take care of the case, where the terminal sets \(\mathcal{U}\) cannot be solved using only edges of \(G[Y^\mathcal{D'}_s]\).

\begin{reductionrule}
\label{rr:tcw-thin-demand}
Assume both \Cref{rr:tcw-thin-supply,rr:tcw-thin-independent} are not applicable to \(s\).
Consider the instance \(\mathscr{Z}_s\coloneqq (G / (V(G) \setminus Y^\mathcal{D'}_s), \mathcal{U}, d\vert_\mathcal{U})\).
If \(\mathscr{Z}_s\) is positive, we remove \(\mathcal{U}\) from \(\mathcal{T}\), \(Y^\mathcal{D'}_s\) from \(G\), and add 1 demand to the terminal set \(\{v,y\}\) in the remaining instance.
Otherwise, output a trivial negative instance.
\end{reductionrule}
\begin{proof}
Let \((\mathcal{F}, \pi)\) be a solution for the original instance and denote with \(\mathcal{H} \coloneqq \inv\pi(\mathcal{U})\) all trees that are assigned to a terminal set whose vertices are contained in \(Y^\mathcal{D'}_s\cap V(G)\).
Denote with \(z\) the vertex of \(G / (V(G) \setminus Y^\mathcal{D'}_s)\) to which the vertices \(V(G) \setminus Y^\mathcal{D'}_s\) are contracted.
For all subgraphs \(H\in \mathcal{H}\), if \(uv \in E(H)\) replace \(H\) by \((H - uv) + uz\), and if \(xy \in E(Y)\) replace \(H\) by \((H - xy) + xz\).
Denote with \(\mathcal{H'}\) the set of all \(\mathcal{H}\) after applying the modification.
Notice that they are connected and pairwise edge-disjoint, and contained in \(G / (V(G) \setminus Y^\mathcal{D'}_s)\).
Assigning each subgraph of \(\mathcal{H}'\) to the original terminal pair, we obtain a solution for \(\mathscr{Z}_s\).
So, rejecting the instance if \(\mathscr{Z}_s\) is negative is correct.

Additionally, since \(\mathscr{Y}_s\) is negative, we know that there is a \(H \in \mathcal{H}\) with \(V(H) \setminus Y^\mathcal{D'}_s \neq \emptyset\).
In particular, \(\{uv, xy\} \subseteq E(H)\).
Let \(P \coloneqq H - Y^\mathcal{D'}_s\) be the part of \(H\) outside \(Y_s\) and let \(\mathcal{J}\coloneqq \inv\pi(\mathcal{T}\setminus \mathcal{U})\).
Note that \(P\) is connected, edge-disjoint from all \(\mathcal{J}\), contained in the reduced instance and \(V(P) \supseteq \{v,y\}\).
To obtain a solution for the reduced graph, we assign all subgraphs in \(\mathcal{J}\) to their respective terminal set and \(P\) to \(\{v,y\}\).

Now, let \((\mathcal{F}, \pi)\) be a solution for the reduced instance and \((\mathcal{H}, \rho)\) be a solution to \(\mathscr{Z}_s\).
Since \(\mathscr{Y}_s\) is negative, there is a unique \(H \in \mathcal{H}\) such that \(V(H) \setminus Y^\mathcal{D'}_s \neq \emptyset\).
Let \(z\) denote the graph to which \(V(G) \setminus Y^\mathcal{D'}_s\) are contracted in the host-graph of \(\mathscr{Z}_s\).
Then, \(\{uz, xz\}\subseteq E(H)\).
Let \(P \in \inv\pi(\{v,y\})\) and denote with \(H^* \coloneqq ((H - z) + uv + xy) \cup P\).
Notice that \(V(H^*) \supseteq \rho(H)\), that \(H^*\) is connected, and that \(H^*\) is edge-disjoint from all subgraphs in \(\mathcal{H}\setminus H\) and \(\mathcal{F}\setminus P\).
A solution for the original instance can be obtained by assigning all \(F \in \mathcal{F}\setminus P\) to \(\pi(F)\), all \(H' \in \mathcal{H}\setminus H\) to \(\rho(H')\), and \(H^*\) to \(\rho(H)\).
\end{proof}

Additionally, the tree-cut width of the augmented graphs of \(\mathscr{X}_s, \mathscr{Y}_s\), and \(\mathscr{Z}_s\) is bounded by \(\tcw{\augment{G}{\mathcal{T}}}\) and applying any of the previous three reduction rules does not increase the tree-cut width of the augmented graph.

\begin{lemma}
\label{stmt:rr-tcw-thin-decomposition-sensible}
We can obtain a tree-cut decomposition for the augmented graphs of \(\mathscr{X}_s\), \(\mathscr{Y}_s\), and \(\mathscr{Z}_s\) of at most the same width as the tree-cut decomposition of \(\augment{G}{\mathcal{T}}\) in linear time.
Additionally, we can obtain tree-cut decomposition of the augmented graphs after applying any of \Cref{rr:tcw-thin-supply,rr:tcw-thin-independent,rr:tcw-thin-demand} in linear time as well, while not increasing their width.
\end{lemma}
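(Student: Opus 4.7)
All three decompositions are built by local surgery on $\mathcal{D}$ confined to the sub-tree $S_s$ and its parent, exploiting two consequences of \Cref{stmt:general-tcw-rr-almost-simple}: $\adhesion[\mathcal{D}]{s} = 2$ and the two cut edges $uv,xy$ are non-augmented. Hence every $T \in \mathcal{T}$ either lies entirely in $Y^{\mathcal{D}'}_s$ (so $\aug{T} \in Y^{\mathcal{D}}_s$) or is disjoint from $Y^{\mathcal{D}'}_s$ (so $\aug{T} \notin Y^{\mathcal{D}}_s$); in particular $\augment{G[Y^{\mathcal{D}'}_s]}{\mathcal{U}} = \augment{G}{\mathcal{T}}[Y^{\mathcal{D}}_s]$, and the analogous identity holds for $\mathscr{Z}_s$.

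\textbf{Sub-instance decompositions.} For $\mathscr{Y}_s$ I would simply take $\mathcal{D}$ restricted to $S_s$, rerooted at $s$. Adhesions of internal nodes are unchanged; the adhesion of $s$ drops to $0$, and each torso is a sub-graph of the corresponding torso of $\mathcal{D}$ (the torso at $s$ loses $z_{top}$), so all 3-center sizes can only shrink. For $\mathscr{X}_s$ I extend this by attaching a fresh leaf $t_{ux}$ with $X_{t_{ux}} = \{\aug{\{u,x\}}\}$ as a child of $s$. The two cut edges at $t_{ux}$ are $\aug{\{u,x\}}u$ and $\aug{\{u,x\}}x$, giving adhesion~$2$ and a trivial torso; the contracted sub-tree of $t_{ux}$ becomes a peripheral vertex of degree $2$ in the torso at $s$ incident to exactly $u$ and $x$, i.e.\ the same incidence pattern as the former $z_{top}$ in $\mathcal{D}$, so the 3-center at $s$ is identical to that of the torso at $s$ in $\mathcal{D}$. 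For $\mathscr{Z}_s$ I would insert a new root $r'$ with $X_{r'} = \{z\}$ above the restriction of $\mathcal{D}$ to $S_s$; the adhesion of $s$ now counts $\{uz,xz\}$, matching $\adhesion[\mathcal{D}]{s} = 2$, while the torso at $r'$ consists of $z$ plus one degree-$2$ peripheral and has 3-center of size at most~$1$.

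\textbf{Post-reduction decompositions.} After \Cref{rr:tcw-thin-supply} I would replace $S_s$ by a single leaf node $t$ whose bag holds the contracted vertex; adhesion of $t$ equals $\adhesion[\mathcal{D}]{s} = 2$ and torsos of ancestors of $s$ are unchanged because the sub-tree they see contracts to the same external view. After \Cref{rr:tcw-thin-independent} I would delete $S_s$ outright; adhesions of ancestors can only drop and each ancestor torso loses a peripheral vertex. After \Cref{rr:tcw-thin-demand}, if $\{v,y\}$ is a brand-new terminal I would replace $S_s$ by a single leaf $t$ with $X_t = \{\aug{\{v,y\}}\}$: the new cut edges $\aug{\{v,y\}}v$ and $\aug{\{v,y\}}y$ replace $uv$ and $xy$ edge-for-edge both at $t$ and at every ancestor of $s$ that previously had them as crossing edges, so all adhesions are preserved and every ancestor torso is isomorphic to the original (the contracted image of $t$ now playing the role of $z_s$). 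If $\{v,y\} \in \mathcal{T}$ already, its augmented vertex lives outside $S_s$ and we just delete $S_s$ as in the previous rule.

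\textbf{Main obstacle and running time.} The delicate point is verifying in each case that the vertex we insert (namely $\aug{\{u,x\}}$, $z$, or $\aug{\{v,y\}}$) replicates the incidence pattern and edge multiplicities of the vertex it displaces ($z_{top}$ or $z_s$), so that the suppression sequences producing the 3-centers proceed identically and no torso grows. All surgeries touch only $S_s$ and its parent plus a single new node, and the required bookkeeping can be implemented in linear time.
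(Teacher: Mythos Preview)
Your approach is essentially the paper's: restrict $\mathcal{D}$ to $S_s$ for $\mathscr{Y}_s$, attach a single new node carrying the one extra vertex for $\mathscr{X}_s$ and $\mathscr{Z}_s$ (the paper places it as a parent of $s$ rather than a child, but the effect on torsos and adhesions is identical), and for the post-reduction graph either delete $S_s$ or replace it by a singleton leaf. The one oversight is the edge case $\{u,x\}\in\mathcal{U}$: then $\aug{\{u,x\}}$ already lies in some bag of $S_s$, so your fresh leaf $t_{ux}$ would violate the near-partition; in that case the augmented graph of $\mathscr{X}_s$ coincides with that of $\mathscr{Y}_s$ and you should simply reuse the $\mathscr{Y}_s$ decomposition, as the paper does explicitly.
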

\begin{proof}
Let \(H \coloneqq \augment{G}{\mathcal{T}} / (V(G) \setminus Y^\mathcal{D'}_s)\) be the augmented graph of the instance with all but the vertices in \(Y^\mathcal{D'}_s\) contracted.
Notice that \(H\) is the augmented graph of \(\mathscr{Z}_s\), and if \(\{u,x\} \notin \mathcal{U}\) it is also the augmented graph of \(\mathscr{X}_s\).
Denote with \(h\) the vertex to which \(V(\augment{G}{\mathcal{T}}) \setminus Y^\mathcal{D'}_s\) were contracted.
As the tree-cut decomposition for this graph we choose the tree-cut decomposition induced by \(S_s\) on \(\mathcal{D}\) where we add a node \(p\) containing only \(h\) as the parent of \(s\).
Denote this tree-cut decomposition with \(\mathcal{D}_H\).
We have \(\adhesion[\mathcal{D}_H]{s} = 2 = \adhesion[\mathcal{D}]s\), and the 3-center of the torso at \(p\) in \(\mathcal{D}_H\) is the empty graph, while the torso at \(s\) is the same with respect to \(\mathcal{D}_H\) and \(\mathcal{D}\).
Thus, the width of the \(\mathcal{D}_H\) is bounded by the width of \(\mathcal{D}\).

Let \(H' \coloneqq \augment{G}{\mathcal{T}}[Y^\mathcal{D'}_s]\) be the augmented graph of the instance with all but the vertices in \(Y^\mathcal{D'}_s\) removed.
Notice that \(H'\) is the augmented graph of \(\mathscr{Y}_s\) and if \(\{u,x\} \in \mathcal{U}\) it is also the augmented graph of \(\mathscr{X}_s\).
As tree-cut decomposition for \(H'\) we choose the tree-cut decomposition induced by \(S_s\) on \(\mathcal{D}_H\).

The augmented graph of the reduced instance is either \(I \coloneqq \augment{G}{\mathcal{T}} - Y_s\) or \(J \coloneqq\augment{G}{\mathcal{T}} / Y_s\).
We obtain a tree-cut decomposition for \(I\) by removing \(S_s\) from \(\mathcal{D}\).
Denote with \(z\) the vertex in \(J\) to which the vertices in \(Y_s\) are contracted.
For \(J\), we obtain a tree-cut decomposition by replacing \(S_s\) by a node associated with a bag containing only \(z\).
\end{proof}

Together \Cref{rr:tcw-thin-supply,rr:tcw-thin-independent,rr:tcw-thin-demand}, can be used to remove a cluttered node—or at least make it non-cluttered.
We now show how to apply these rules to solve \(\gstp\) parameterized by tree-cut width of the augmented graph.
To do so, we solve multiple sub-instances of \(\gstp\) with respect to simple tree-cut decompositions.

\begin{theorem}
Assume \(\gstp\) can be solved in time \(r(g, k)\), given a graph of size at most \(g\) and a simple tree-cut decomposition of width at most \(k\).
Then, \(\gstp\) for the instance \(\mathscr{P}\coloneqq(G, \mathcal{T}, d)\) given a tree-cut decomposition of width \(w\) for \(\augment{G}{\mathcal{T}}\) can be solved in time \(\O{\abs{\mathscr{P}} + \abs{V(G)}^7 + \abs{V(G)}^2r(3\abs{G},w + 12)} = \Ostar{r(\abs{G},w + 12)}\).
\end{theorem}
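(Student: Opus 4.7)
The plan is to reduce the problem, via bounded-depth recursion on sub-instances carved out by cluttered nodes, to instances admitting a simple tree-cut decomposition, on which the assumed algorithm $r$ takes over. First, I would convert the input tree-cut decomposition $\mathcal{D}$ of $\augment{G}{\mathcal{T}}$ into a nice one using the standard polynomial-time procedure, then into a friendly one via \Cref{stmt:tcw-blowing-up-general}, neither step increasing its width. Next I would apply \Cref{rr:sensible-terminal-sets,rr:connected-components,rr:adh-1} exhaustively with respect to the induced decomposition $\mathcal{D}'$ of $G$. By \Cref{stmt:general-tcw-rr-almost-simple} the updated $\mathcal{D}$ is still friendly of width at most $w$, every node $s$ thin in $\mathcal{D}$ satisfies $\adhesion[\mathcal{D}]{s} = 2$ and $\crossLink[\mathcal{D}']{s} = \emptyset$, and $\cutEdges[\augment{G}{\mathcal{T}}]{Y_s^{\mathcal{D}}}$ contains either only augmented or only non-augmented edges.

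After this preprocessing, the only nodes preventing $\mathcal{D}'$ from being simple (up to the constant slack of \Cref{stmt:no-cluttered}) are the cluttered ones, and the heart of the argument is to eliminate them. I would process cluttered nodes in non-increasing order of depth. For each cluttered $s$ I would build the three sub-instances $\mathscr{X}_s,\mathscr{Y}_s,\mathscr{Z}_s$ of \Cref{rr:tcw-thin-supply,rr:tcw-thin-independent,rr:tcw-thin-demand} together with their augmented-graph tree-cut decompositions of width at most $w$ guaranteed by \Cref{stmt:rr-tcw-thin-decomposition-sensible}, and solve them recursively via the very algorithm being defined. Their answers determine which reduction rule applies; if all three sub-instances are negative, \Cref{rr:tcw-thin-demand} lets me reject. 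Each reduction strictly shrinks $\abs{V(G)}$, and by \Cref{stmt:rr-tcw-thin-decomposition-sensible} the augmented tree-cut decomposition is maintained throughout without width increase, so the recursion is well-founded.

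Once no cluttered node remains, $\mathcal{D}'$ is cluttered-free and $\mathcal{D}$ is friendly of width at most $w$. Applying \Cref{stmt:no-cluttered} produces an equivalent instance together with a simple tree-cut decomposition of width at most $w+5$; tracking the constant slack from the blow-up step and from \Cref{stmt:almost-simple-tcd-made-simple} absorbs the total width increase into the $w+12$ bound of the theorem, and the factor $3$ in the graph-size argument of $r$ absorbs the auxiliary vertices accumulated along the reduction chain. A single call to $r$ then decides the residual instance.

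The main obstacle is the running-time bookkeeping, because a recursive call on $\mathscr{X}_s$ can still carry $s$ itself as a cluttered node whenever $\{u,x\} \notin \mathcal{U}$. The saving observation is that after a single recursive descent the newly introduced terminal $\{u,x\}$ already lies in the terminal family, placing us in the other case of \Cref{stmt:rr-tcw-thin-decomposition-sensible} where $s$ becomes the root of the sub-instance's augmented tree-cut decomposition and hence ceases to be cluttered. Combined with the bottom-up processing order this limits how much additional recursion a single cluttered node can spawn, so the total number of calls to $r$ across the whole computation is $\O{\abs{V(G)}^2}$, each on a graph of size at most $3\abs{G}$ with simple-tree-cut width at most $w+12$. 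The polynomial work of rebuilding nice and friendly tree-cut decompositions, maintaining the sub-instances, and invoking the preceding reduction rules contributes $\O{\abs{V(G)}^7}$, and together with the $\O{\abs{\mathscr{P}}}$ initial preprocessing this yields precisely the claimed $\Ostar{r(\abs{G},w+12)}$ bound.
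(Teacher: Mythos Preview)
Your overall architecture matches the paper's---preprocess, make friendly, eliminate cluttered nodes via \Cref{rr:tcw-thin-supply,rr:tcw-thin-independent,rr:tcw-thin-demand}, then hand the residual instance to the simple-decomposition solver through \Cref{stmt:no-cluttered}---but the recursion-control argument has a genuine gap.

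The paper does \emph{not} rely on your observation that ``after a single recursive descent $\{u,x\}$ already lies in the terminal family''. Instead, before recursing it enforces a size constraint on the chosen cluttered node: if $|Y_s^{\mathcal{D}}|=|\augment{G}{\mathcal{T}}|-1$ it merges $X_p$ into $X_s$ and solves outright (this is where the extra $+3$ in the width, and hence the $w+12$, comes from), and if $\tfrac{1}{2}|\augment{G}{\mathcal{T}}|\le |Y_s^{\mathcal{D}}|<|\augment{G}{\mathcal{T}}|-1$ it \emph{re-roots} $S$ at $s$ so that the parent $p$ becomes a cluttered node with $|Y_p^{\mathcal{D}}|<\tfrac{1}{2}|\augment{G}{\mathcal{T}}|$. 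Only then does it build $\mathscr{X}_{s'},\mathscr{Y}_{s'},\mathscr{Z}_{s'}$. This size-halving is precisely what makes the recurrence $i(n)\le\max_{2\le k<n/2} 3i(k+1)+i(n-k+1)$ solve to $O(n^2)$ and yields the $\abs{V(G)}^2 r(\cdot)$ term; without it one gets $i(n)\le 3i(n-1)+O(1)$, which is exponential.

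Your ``saving observation'' also fails on $\mathscr{Z}_s$, whose terminal family is just $\mathcal{U}$ with no new pair added. In the decomposition supplied by \Cref{stmt:rr-tcw-thin-decomposition-sensible}, $s$ sits as a thin child of a new root $p$ whose bag is the single contracted vertex $z$; thus $s$ is again cluttered, with the same $Y_s^{\mathcal{D}'}$. The new cut edges are $\{uz,xz\}$, so forming $\mathscr{Z}_s$ once more contracts only $\{z\}$ and returns an instance isomorphic to $\mathscr{Z}_s$ itself. None of the preprocessing rules fire on this instance, so your recursion neither shrinks nor terminates along this branch. The re-rooting trick and the $|Y_s|=n-1$ termination case are exactly what the paper uses to break this loop.
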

\begin{proof}
First, we exhaustively apply \Cref{rr:sensible-terminal-sets,rr:degree-negative-instance} in time \(\O{\abs{\mathscr{P}}}\).
After this \(\abs{T} \leq 2\abs{E(G)}\), implying \(\abs{V(\augment{G}{\mathcal{T}})} \leq 3\abs{G}\).

Now, we exhaustively apply \Cref{rr:connected-components,rr:adh-1}.
This can be done in time \(\O{w\abs{G}}\).
Now, we make the tree-cut decomposition nice in cubic time~\cite{GanianKS22}.
Then, we make it friendly using \Cref{stmt:tcw-blowing-up-running-time} in time \(\O{w\abs{G}}\), which might increase the tree-cut width to \(\max(4,w)\leq w + 4\).
We continue with this until none of \Cref{rr:connected-components,rr:adh-1} is applicable anymore.
Overall, this might take \(\O{\abs{E(G)}(w\abs{G} + \abs{V(G)}^3})\).
Additionally, these algorithms ensure that each leave is not empty and all empty nodes have at least two children.

If there is no cluttered node, we solve the instance in time \(r(\abs{G}, w + 9)\) using \Cref{stmt:no-cluttered}.
So, let \(s\) be the cluttered node with largest depth, resolving ties arbitrarily.
We want to apply \Cref{rr:tcw-thin-supply,rr:tcw-thin-independent,rr:tcw-thin-demand}.
To not increase the running-time too much, when applying one of these rules, we first ensure \(2 \leq \abs{Y^\mathcal{D}_s} < \frac{\abs{\augment{G}{\mathcal{T}}}}{2}\).

We know, that \(2 \leq \abs{Y^\mathcal{D}_s} \leq \abs{\augment{G}{\mathcal{T}}} - 1\).
Let \(p\) denote the parent of \(t\).
If \(\abs{Y^\mathcal{D}_s} = \abs{\augment{G}{\mathcal{T}}} - 1\), we notice that by choice of \(s\), it is the only cluttered node in \(T\).
We now merge \(X_p\) into \(X_s\) and remove—the now empty node—\(p\) from the tree-cut decomposition.
Call this tree-cut decomposition \(\mathcal{D}_s\).
Since \(\mathcal{D}\) is friendly, so is \(\mathcal{D}_s\) and the width of \(\mathcal{D}_s\) is bounded by \((w+4)+3\).
Additionally, since \(s\) was the only cluttered node in \(\mathcal{D}\), there is no cluttered node in \(\mathcal{D}_s\).
Using this, we solve the instance in running time \(r(\abs{G}, w + 12)\) using \Cref{stmt:no-cluttered}.
Note that in this case we are done now; so, the increase in width is negligible.

If \(\frac{\abs{\augment{G}{\mathcal{T}}}}{2} \leq \abs{Y_s} < \abs{\augment{G}{\mathcal{T}}} - 1\), we re-root \(S\) to \(s\), this might make \(S\) not-nice, but we do not need that at this stage anymore.
Notice that \(p\) is cluttered in the modified tree-cut decomposition and that \(\abs{Y_s} + \abs{Y_p} = \abs{\augment{G}{\mathcal{T}}}\).
Thus, in the modified tree-cut decomposition \(2 \leq \abs{Y_p} < \frac{\abs{\augment{G}{\mathcal{T}}}}{2}\).
So, we assume, without loss of generality, that we choose a cluttered node \(s'\) with \(2 \leq \abs{Y_{s'}} < \frac{\abs{\augment{G}{\mathcal{T}}}}{2}\).

We now construct the tree-cut decompositions for \(\mathscr{X}_{s'}\), \(\mathscr{Y}_{s'}\), and \(\mathscr{Z}_{s'}\) in linear time using \Cref{stmt:rr-tcw-thin-decomposition-sensible} and recursively check which of \(\mathscr{X}_{s'}\), \(\mathscr{Y}_{s'}\), and \(\mathscr{Z}_{s'}\) are positive instances.
Based on this information, we apply the appropriate rule out of \Cref{rr:tcw-thin-supply,rr:tcw-thin-independent,rr:tcw-thin-demand}, which takes at most \(\O{\abs{\augment{G}{\mathcal{T}}}} = \O{\abs{G}}\) time and recursively solve the remaining instance.

To bound the running time, we calculate how many uncluttered instances will be solved.
Let \(i(n)\) denote the maximum number of uncluttered instances solved, if the augmented graph contains \(n\) vertices.
Assume that we consider a cluttered node \(s\).
The augmented graphs of the sub-instances have at most \(\abs{Y_s} + 1 \leq \frac{n}{2}\) vertices.
After applying the reduction rules, the vertices in \(Y_s\) get contracted to a single vertex.
Thus, \(\abs{Y_s} - 1\) vertices are removed from the instance.
So, for large enough \(n\), the function \(i\) satisfies \(i(n) \leq \max_{2 \leq k < \frac{n}{2}} 3i(k + 1) + i(n - k + 1)\).
Using this, one can prove inductively that there is a \(c \in \R\) such that for all \(n \in \N\) we have \(i(n) \leq c (n - 3)^2  + c = \O{n^2}\).
Thus, we solve at most \(\O{\abs{V(G)}^2}\) uncluttered instances, each of which takes at most time \(r(\abs{G}, w + 12)\).

Now, we only need to analyze the additional cost.
In each recursion step, we do at most do at most \(\O{\abs{E(G)}(w\abs{G} + \abs{V(G)}^3)}\) additional work.
Therefore, the overall amount of additional work is \(\O{\abs{E(G)}(w\abs{G} + \abs{V(G)}^3)i(\abs{V(\augment{G}{\mathcal{T}})})} = \O{\abs{V(G)}^7}\).
\end{proof}

Kim et~al.~\cite{KimOPST18} proved that for all \(k\in \N\) in time \(2^\O{k^2\log w} \abs{V(G)}^2\) we can either compute a tree-cut decomposition of width \(2k\), or we can certify that no tree-cut decomposition of width \(k\) exists.
Combined with \Cref{stmt:fpt-simple-tcw}, we know that \(\gstp\) is FPT by the tree-cut width of the augmented graph.

\begin{corollary}
\label{stmt:gstp-fpt-tcw}
Let \((G, \mathcal{T}, d)\) be an instance of \(\gstp\) and set \(k \coloneqq \tcw{\augment{G}{\mathcal{T}}}\).
We can decide whether this instance is positive in time \(\OstarLR{2^{2^\O{k^8}}}\); so, \(\gstp\) is \(\fpt\) by the tree-cut width of the augmented graph.
\end{corollary}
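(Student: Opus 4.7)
The plan is to combine three ingredients that have already been developed in this section: the algorithm of Kim et al.\ that produces an approximate tree-cut decomposition, the reduction (the theorem immediately preceding the corollary) that turns an instance equipped with a tree-cut decomposition of the augmented graph into polynomially many sub-instances each equipped with a simple tree-cut decomposition, and \Cref{stmt:fpt-simple-tcw} which solves such sub-instances in doubly-exponential time in the width of the simple decomposition.

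Concretely, I would first apply Kim et al.'s algorithm~\cite{KimOPST18} to the augmented graph $\augment{G}{\mathcal{T}}$ with threshold $k$. In time $2^{\O{k^2 \log k}} \abs{V(\augment{G}{\mathcal{T}})}^2$ this either certifies that the tree-cut width exceeds $k$ (in which case we are in a trivial regime since we assumed $\tcw{\augment{G}{\mathcal{T}}} = k$) or returns a tree-cut decomposition $\mathcal{D}$ of $\augment{G}{\mathcal{T}}$ of width at most $2k$. Note that since the reduction rules in \Cref{sec:tcw-red-rules} ensure $\abs{V(\augment{G}{\mathcal{T}})} = \O{\abs{G}}$ after an initial pass of \Cref{rr:sensible-terminal-sets,rr:degree-negative-instance}, this preprocessing is polynomial in $\abs{G}$.

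Next, I would feed the pair $((G,\mathcal{T},d), \mathcal{D})$ into the theorem preceding this corollary. Taking $r(g, w) \coloneqq 2^{2^{\O{w^8}}}\, g$ from \Cref{stmt:fpt-simple-tcw} as the running time of the simple-decomposition subroutine and invoking the reduction with parameter $w = 2k$, each sub-instance passed to the subroutine carries a simple tree-cut decomposition of width at most $2k + 12$, so it is solved in time $2^{2^{\O{(2k+12)^8}}} \cdot \O{\abs{G}} = 2^{2^{\O{k^8}}}\, \abs{G}$. The reduction generates at most $\O{\abs{V(G)}^2}$ such sub-instances and incurs a polynomial overhead; combining with the cost of the approximation step yields an overall running time of $\Ostar{2^{2^{\O{k^8}}}}$, which is of the claimed $\fpt$ form.

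The main verification required in this plan is purely bookkeeping: confirming that the width blow-up from $k$ to $2k$ introduced by Kim et al.\ and the additive $+12$ introduced by the reduction are absorbed into $\O{k^8}$ inside the double exponent, and that the polynomial factors in $\abs{G}$ from each of the three components combine into the hidden polynomial of the $\Ostar{\cdot}$ notation. No new structural argument is needed; all the technical work has already been done in the preceding sections, so the corollary is essentially a one-line composition of its prerequisites.
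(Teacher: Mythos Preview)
Your proposal is correct and follows exactly the approach the paper intends: apply Kim et~al.'s $2$-approximation to obtain a tree-cut decomposition of $\augment{G}{\mathcal{T}}$ of width at most $2k$, plug it into the preceding reduction theorem with $r(g,w)=2^{2^{\O{w^8}}}g$ from \Cref{stmt:fpt-simple-tcw}, and absorb the resulting width $2k+12$ and the polynomial overheads into the stated bound. The paper treats this corollary as an immediate composition of these three pieces, just as you do.
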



\section{\texorpdfstring{$\stp$}{STP} is \texorpdfstring{$\fpt$}{FPT} by the Tree-Cut Width}
\label{sec:stp-by-tcw}
In this Chapter, we show that \(\stp\) is FPT by the tree-cut width of the host graph.
As \(\stp\) is a special case of \(\gstp\), where we only have a single terminal set.
This means that we can apply the results derived in the previous Chapters for \(\gstp\) to \(\stp\).
However, the previous are not strong enough to directly show that \(\stp\) is \(\fpt\) by the tree-cut width of the host-graph.
Mainly the case when the terminal set is large, but the demand is low is still missing.

Let \((G, T, d)\) be an instance of \(\stp\) and let \(\mathcal{D}=(S,\mathcal{X})\) be a friendly tree-cut decomposition of \(G\) with width \(w\).
If \(\abs{T} \leq w\), we interpret the instance as an instance of \(\gstp\).
We can find a tree-cut decomposition \(\mathcal{D}' = (S', \mathcal{X}')\) for \(G^{\{T\}}\) by adding a new root \(r'\) with the bag \(\{\aug{T}\}\) to \(S\) and making the old root a child of \(r'\).
The adhesion of \(\mathcal{D}'\) is bounded by \(2w\).
Let \(s \in V(S)\).
Any \(t\in \thinChildren[\mathcal{D}]{s}\) with \(Y_{t}^{\mathcal{D}} \cap T \neq \emptyset\) is also a thin child of \(s'\) in \(\mathcal{D}'\).
Thus, the size of the 3-center of the torso at \(s\) in \(\mathcal{D}'\) is bounded by \(1 + \abs{T} + \abs{\boldChildren{s}} + \abs{X_s} \leq 2w + 3\).
Note that the torso at \(r'\) in \(\mathcal{D}'\) consists of 2 vertices.
Thus, the width of \(\mathcal{D}'\) is bounded by \(2w+3\) and we can use \Cref{stmt:gstp-fpt-tcw} to decide whether the instance is positive in \(\fpt\)-time.

The same approach does not work for the case \(\abs{T} > w\), as in this case, the tree-cut width of the augmented graph is not necessarily bounded by a function of the tree-cut width of the host graph.
For this consider as the host graph a path of length \(n\) where we attach to each vertex of the path 3 leaves.
This graph is a tree, has tree-cut width 1, and is depicted in \Cref{fig:unlimited-tcw-stp-before}.
Let the set of terminals be the set of all leaves.
The augmented graph, which is depicted in \Cref{fig:unlimited-tcw-stp-after}, contains the graph \(S_{3,n}\) (i.e.,~the star graph with \(n\) leaves where each leaf has 3 parallel edges to the center) as an immersion.
This graph has the wall \(H_{\lfloor\sqrt{n}\rfloor}\) as an immersion~\cite{Wollan15}.
Therefore, \(H_{\lfloor\sqrt{n}\rfloor}\) has an immersion into the augmented graph.
For each \(r \in \N\), let \(G'\) be a graph for which there is an immersion from \(H_{2r^2}\) into \(G\).
Thus, \(G\) has tree-cut width at least \(r\)~\cite{Wollan15} and the augmented graph has tree-cut width at least \(\Omega(\sqrt[4]{n})\) (more careful calculations show the tree-cut width to be \(\Theta(\sqrt{n})\)), proving that the tree-cut width can grow without bound in the tree-cut width of the host graph.

\begin{remark}
There exists a family of \(\stp\) instances such that the host-graph of every instance has tree-cut width 1, but the tree-cut width of the augmented graph of the instances is not bounded.
\end{remark}

\begin{figure}
\begin{subfigure}{0.49\textwidth}
\includegraphics[width=\textwidth]{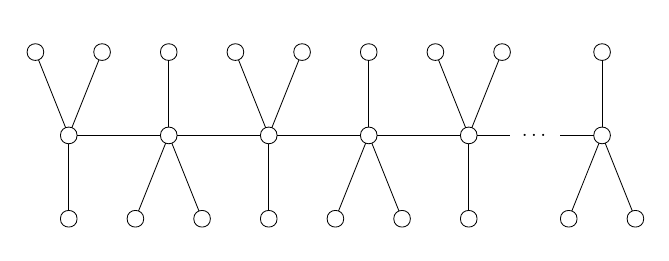}
\caption{The host graph is a path of length \(n\) with 3 leaves attached to each vertex.\newline{}}
\label{fig:unlimited-tcw-stp-before}
\end{subfigure}
\begin{subfigure}{0.49\textwidth}
\includegraphics[width=\textwidth]{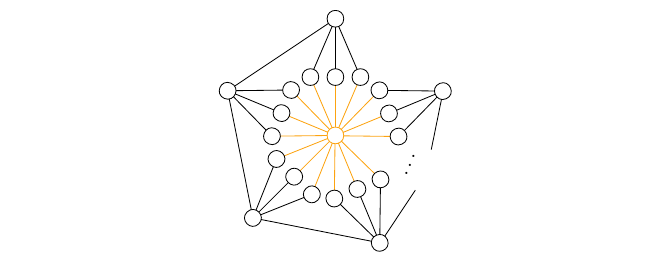}
\caption{The augmented graph with the terminal set being the set of all leaves. Augmented edges and vertices are drawn in orange.}
\label{fig:unlimited-tcw-stp-after}
\end{subfigure}
\caption{A family of host graphs with a terminal set increasing the tree-cut width of the augmented graph without bound.}
\end{figure}

In this case (i.e.,~\(\abs{T} > w\)), we observe that \(T\) is not contained in a single bag.
Let \(s,s' \in V(S)\) be such that \(X_s \cap T \neq \emptyset\), \(X_{s'} \cap T \neq \emptyset\), and \(s' \notin V(S_{s})\) that is \(s'\) is not a descendend of \(s\).
Then, \(T \in \crossLink{s}\) and applying \Cref{rr:cross-link-demand-large}, we know that if \(d(T) > w\), we get a trivial negative instance.
So, we can assume from now on that \(d(T) \leq w\).

Ganian et~al.~\cite{GanianKS22} claimed that from a tree-cut decomposition, we can find a tree decomposition of width at most \(2{w}^2 + 3w\).
Their argument relies on the claim that in a nice tree-cut decomposition, the number of bold children is bounded by \(w + 1\).
In \Cref{stmt:nice-tree-cut-unlimited-bold} we refute this claim.
However, their argument can easily be adapted to a friendly tree-cut decomposition.
Yielding a nice tree decomposition of width \(2{w}^2 + 4w\); thus, \(\tw{G} \leq 2w^2 + 4w\).
That is, in the missing case \(\tw{G} + d(T)\) is bounded by a function of \(w\).
The remainder of this Chapter is dedicated to proving the following theorem.

\begin{restatable}{theorem}{gstp-fpt-tw-D}
\label{stmt:gstp-fpt-tw-D}
Let \((G, \mathcal{T}, d)\) be an instance of the \(\gstp\) problem and set \({\Sigma_{d}} \coloneqq \sum_{T \in \mathcal{T}} d(T)\).
In time \(\abs{V(G)}2^\O{{\Sigma_{d}} \tw{G}\log\tw{G}}\), we can decide whether this instance is positive.
The \(\gstp\) problem is \(\fpt\) by the parameter \(\tw{G} + {\Sigma_{d}}\).
\end{restatable}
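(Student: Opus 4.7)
The plan is to solve \(\gstp\) by dynamic programming on a nice tree decomposition of $G$. Using the single-exponential $5$-approximation of Bodlaender et al., compute in time $2^{\O{\tw{G}}}\abs{V(G)}$ a nice tree decomposition $(T, \mathcal{X})$ of $G$ of width $w' \leq 5\tw{G} + 4$ with $\O{\tw{G}\abs{V(G)}}$ nodes. Since any solution consists of ${\Sigma_{d}}$ edge-disjoint connected subgraphs, each labelled by the terminal set it serves, fix upfront an enumeration $\pi \colon \natint{{\Sigma_{d}}} \to \mathcal{T}$ with $\abs{\inv\pi(T)} = d(T)$ for every $T \in \mathcal{T}$; a solution is then a tuple $(F_1, \dots, F_{{\Sigma_{d}}})$ of pairwise edge-disjoint connected subgraphs of $G$ with $\pi(i) \subseteq V(F_i)$ for every $i$.

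At each bag $X_t$, the DP records a tuple $(\sigma_1, \dots, \sigma_{{\Sigma_{d}}})$ of per-index \emph{signatures}. Each $\sigma_i$ consists of a flag $\mathrm{st}_i \in \{\text{unstarted}, \text{active}, \text{sealed}\}$ together with, in the active case, a function $X_t \to \natintZ{\abs{X_t}}$, where value $0$ means $v \notin V(F_i)$ and positive values label the connected components of $F_i \cap G[Y_t]$ that meet $X_t$, considered up to relabeling of the positive labels. A state is feasible if there exist edge-disjoint subgraphs $F_1, \dots, F_{{\Sigma_{d}}}$ of $G[Y_t]$ realizing it such that (i) every active $F_i$ has every connected component intersecting $X_t$, (ii) every sealed $F_i$ is a single connected subgraph containing $\pi(i)$ and disjoint from $X_t$, (iii) every terminal of $\pi(i) \cap (Y_t \setminus X_t)$ lies in $V(F_i)$, and (iv) each sealed $i$ satisfies $\pi(i) \subseteq Y_t$. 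Because the number of set partitions of a set of size at most $w'+1$ with an optional empty part is bounded by the Bell number $B(w'+2) = 2^{\O{\tw{G}\log\tw{G}}}$, the total state count per bag is $2^{\O{{\Sigma_{d}} \cdot \tw{G}\log\tw{G}}}$.

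Transitions follow the classical partition-merge scheme of Steiner-type DPs, applied independently in each of the ${\Sigma_{d}}$ index slots with common bookkeeping for edge-disjointness. At an introduce node for vertex $v$ with child $c$, one guesses the set of indices containing $v$ and, for every edge $vu$ with $u \in X_c$, the at-most-one index using it, updating the partitions by merging the blocks of $u$ and $v$; unstarted indices may transition to active. At a forget node for $v$, every active $i$ with $\sigma_i(v) \neq 0$ must either keep the component alive through a different vertex of $X_t$ or transition to sealed, the latter permissible only when this component is the only one remaining for $i$, $\pi(i) \subseteq Y_t$, and $\pi(i) \subseteq V(F_i)$; additionally, whenever $v \in \pi(i)$ one requires $v \in V(F_i)$. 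At a join node, two child tables are combined index-by-index by taking the transitive closure of the union of the two equivalence relations on $X_t$, with natural compatibility checks on $\mathrm{st}_i$. At the root (empty bag), the state is accepting iff every index is sealed (an unstarted index is allowed only when $\pi(i) = \emptyset$, which is excluded by \Cref{rr:sensible-terminal-sets}). Each transition takes time polynomial in the state count, so over $\O{\tw{G}\abs{V(G)}}$ bags the total running time is $\abs{V(G)} \cdot 2^{\O{{\Sigma_{d}} \cdot \tw{G}\log\tw{G}}}$. The main technical obstacle lies in certifying both edge-disjointness and correctness of partition-merging for all ${\Sigma_{d}}$ independent partitions simultaneously, particularly at join nodes where two independent component unifications via shared bag vertices must be composed without conflict; this is handled exactly as for the classical Steiner Tree DP, lifted index-by-index, with edge ownership resolved at each edge's unique introduce event, which makes edge-disjointness structural rather than something to verify across bags.
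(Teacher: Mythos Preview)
Your approach is essentially the paper's: fix an enumeration of the $\Sigma_d$ solution trees upfront, run a DP over a nice tree decomposition whose state records, per index, a status in $\{\text{unstarted},\text{active},\text{sealed}\}$ together with a partition of a subset of the bag, and bound the state space by $2^{\O{\Sigma_d w \log w}}$ via Bell numbers. The paper's tuple $(I^\bot,I^\times,I^\top,(\mathcal{P}_i)_i)$ is exactly your (sealed, active, unstarted, partitions).

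There is, however, a genuine gap in how you handle edges. You process each edge $vu$ at the introduce node for $v$ (when $u\in X_c$) and assert that this gives ``each edge's unique introduce event''. In the nice tree decomposition used here (leaf / introduce vertex / forget vertex / join, with no introduce-edge nodes), this is false: if both endpoints $u,v$ lie in the bag of a join node $t$, then each of $u$ and $v$ is introduced separately in \emph{both} subtrees below $t$, so the edge $uv$ is processed once in each subtree. Your join transition then merges the two child states index-by-index without any cross-check, so the same edge can be assigned to index $i$ on one side and index $j\neq i$ on the other, violating edge-disjointness. Tracking bag-internal edge ownership in the state would cost an extra $(\Sigma_d{+}1)^{\binom{w+1}{2}}$ factor, which does not fit the claimed bound.

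The paper sidesteps this by defining the partial graph as $G_s \coloneqq G[Y_s] - E(G[X_s])$, i.e., edges among current bag vertices are \emph{excluded} from all partial solutions. Such edges are processed only at forget nodes, via a distribution map $\lambda\colon E_v \to I^\times_\gamma\cup\{\blockedIndex\}$ on the edges $E_v$ between the forgotten vertex $v$ and $X_s$. Since every vertex is forgotten exactly once, every edge is processed exactly once; and at a join node the graphs $G_a$ and $G_b$ are then genuinely edge-disjoint, so the naive index-wise merge is sound. With this single modification (handle edges at forget rather than introduce, equivalently use introduce-edge nodes), your proposal becomes correct and matches the paper's running time.
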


This is a generalization of the case missing to solve \(\stp\) parameterized by tree-cut width.
Combined with the case distinction presented above, \Cref{stmt:gstp-fpt-tcw}, and the fact that there is a \(2\)-approximation for tree-cut width running in time \(2^\O{\tcw{G}^2\log\tcw{G}}\abs{V(G)}^2\)~\cite{KimOPST18}, we get that \(\stp\) is \(\fpt\) by the tree-cut width of the host graph.

\begin{corollary}
\label{stmt:tcw-stp-fpt}
Let \((G, T, d)\) be an instance of \(\stp\) and set \(k \coloneqq \tcw{G}\).
In time \(\OstarLR{2^{2^\O{k^8}}}\), we can decide whether this instance is positive.
So, \(\stp\) is \(\fpt\) by the parameter \(\tcw{G}\).
\end{corollary}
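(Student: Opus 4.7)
The plan is to combine the tree-cut width approximation of Kim et al.\ with a case distinction on the size of the terminal set \(T\), invoking either \Cref{stmt:gstp-fpt-tcw} or \Cref{stmt:gstp-fpt-tw-D} depending on which case applies. First, I would run the algorithm of Kim et~al.~\cite{KimOPST18} on \(G\) to obtain, in time \(2^{\O{k^2 \log k}} \abs{V(G)}^2\), a tree-cut decomposition \(\mathcal{D}\) of width at most \(w \coloneqq 2k\). Using \Cref{stmt:tcw-blowing-up-running-time} I would then turn \(\mathcal{D}\) into a friendly tree-cut decomposition of width at most \(\max(4,w) \leq w+4\) in polynomial time (this minor blow-up is harmless since every subsequent bound is polynomial in \(w\)).

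The first case is \(\abs{T} \leq w\). Here I would regard \((G,T,d)\) as a \(\gstp\) instance with the single terminal set \(T\) and build a tree-cut decomposition of the vertex-augmented graph \(\augment{G}{\{T\}}\) by adding a new root \(r'\) with bag \(\{\aug{T}\}\) on top of \(\mathcal{D}\), exactly as sketched in the paragraph preceding the statement: the new adhesion is bounded by \(2w\), and for every old node \(s\) the 3-center of its torso grows by at most \(\abs{T}+1 \leq w+1\) vertices, yielding width at most \(2w+3 = \O{k}\). Applying \Cref{stmt:gstp-fpt-tcw} to this decomposition then decides the instance in time \(\Ostar{2^{2^{\O{k^8}}}}\), which is already the target running time.

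The second case is \(\abs{T} > w\). Then \(T\) cannot be contained in any single bag of \(\mathcal{D}\), so some node \(s\) of the tree-cut decomposition satisfies both \(T \cap Y_s \neq \emptyset\) and \(T \setminus Y_s \neq \emptyset\); equivalently, \(T \in \crossLink{s}\). Since \(\adhesion{s} \leq w\), \Cref{rr:cross-link-demand-large} immediately certifies a negative instance whenever \(d(T) > w\); otherwise we know \(d(T) \leq w\). On the other hand, the friendliness-based conversion from a tree-cut decomposition of width \(w\) to a nice tree decomposition gives \(\tw{G} \leq 2w^2 + 4w = \O{k^2}\) (as discussed in the paragraph just before the statement). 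I then invoke \Cref{stmt:gstp-fpt-tw-D} on \((G,\{T\},d)\) with \(\Sigma_d = d(T) \leq w = \O{k}\); the running time is
\[
\abs{V(G)}\cdot 2^{\O{\Sigma_d \cdot \tw{G}\log \tw{G}}} \;=\; \abs{V(G)} \cdot 2^{\O{k \cdot k^2 \log k}} \;=\; \abs{V(G)}\cdot 2^{\O{k^3 \log k}},
\]
which is dominated by \(\Ostar{2^{2^{\O{k^8}}}}\).

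Taking the maximum of the two case running times, together with the overhead from computing the tree-cut decomposition and making it friendly, yields the claimed bound \(\Ostar{2^{2^{\O{k^8}}}}\) and the \(\fpt\) conclusion. The conceptually delicate step is the second case: the difficulty is precisely that augmenting a single terminal set can push the tree-cut width arbitrarily far up (as shown in the windmill-like example of \Cref{fig:unlimited-tcw-stp-after}), so \Cref{stmt:gstp-fpt-tcw} is unavailable; however, one must notice that the very reason for this blow-up, namely a large terminal set spanning many branches, forces \(T\) to cross some link and thus forces \(d(T)\) to be small by the edge-cut argument of \Cref{rr:cross-link-demand-large}, reducing to the treewidth-plus-demand algorithm.
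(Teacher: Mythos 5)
Your proposal is correct and follows essentially the same route as the paper: compute a friendly tree-cut decomposition via Kim et al.\ and \Cref{stmt:tcw-blowing-up-running-time}, then split on whether \(\abs{T} \leq w\) (build a small-width TCD of \(\augment{G}{\{T\}}\) and invoke \Cref{stmt:gstp-fpt-tcw}) or \(\abs{T} > w\) (deduce \(T \in \crossLink{s}\), bound \(d(T) \leq w\) through \Cref{rr:cross-link-demand-large}, bound \(\tw{G} = \O{w^2}\), and invoke \Cref{stmt:gstp-fpt-tw-D}). The only very minor imprecision is the claim that the 3-center grows ``by at most \(\abs{T}+1\) vertices''---the correct accounting is that the new 3-center is bounded by \(1 + \abs{T} + \abs{\boldChildren{s}} + \abs{X_s}\leq 2w+3\)---but this does not affect the \(\O{k}\) bound or the conclusion.
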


\subsection{\texorpdfstring{$\gstp$}{GSTP} is \texorpdfstring{$\fpt$}{FPT} by Treewidth + Sum of Demands}
\label{sec:org18b7fbe}

Let \(\mathscr{P}\coloneqq (G,\mathcal{T}, d)\) be an instance of \(\gstp\) with \({\Sigma_{d}} \coloneqq \sum_{T \in \mathcal{T}} d(T)\).
Let \(\mathcal{D} \coloneqq (S, \mathcal{X})\) be a nice tree decomposition of width \(w\).
We see, that for each \({\Sigma_{d}}\) we can compute a \(\mathrm{MSO}_2\) formula for this problem.
Using Courcelles theorem~\cite{BoriePT92}, we see that this problem is FPT by \(w +{\Sigma_{d}}\).
However, the running time using this meta theorem is horrendous.
So, we additionally provide a more or less standard dynamic program by treewidth.
In this dynamic program, we essentially save for each tree all connectivity information below the current bag.

For \(\edp\) an \(\fpt\)-algorithm parameterized by \(w + {\Sigma_{d}}\) is known~\cite{ZhouTN00}.
The algorithm runs in \(\O{\abs{V(G)}(({\Sigma_{d}} + w^2){\Sigma_{d}}^{w(w+1)/2} + {\Sigma_{d}}(w + 4)^{2(w+4){\Sigma_{d}}+3})}= \abs{V(G)}(2^\O{w^2\log {\Sigma_{d}}} + 2^\O{{\Sigma_{d}} w\log w})\).
Our algorithm generalizes this result significantly, while improving the running time to \(\abs{V(G)}2^\O{{\Sigma_{d}} w\log w}\).
We conjecture that our result is a stepping stone towards applying the Cut\&Count technique introduced by Cygan et~al.~\cite{CyganNPPRW11} to obtain an algorithm running in time \(\abs{V(G)}2^\O{{\Sigma_{d}} w}\).

\looseness=-1
Consider a solution \((\mathcal{F}, \pi)\) to \(\mathscr{P}\) and a \(s \in V(S)\).
There are three types of subgraphs in \(\mathcal{F}\).
First, there are the subgraphs \(F \in \mathcal{F}\) with \(V(F) \subseteq Y_s \setminus X_s\), that is all vertices of these subgraphs have already been forgotten at \(s\).
Second, there are the subgraphs \(F \in \mathcal{F}\) with \(V(F) \cap X_s \neq \emptyset\), that is the subgraphs cross the bag \(s\).
Finally, there are the subgraphs \(F \in \mathcal{F}\) with \(V(F) \cap Y_s = \emptyset\), that is the subgraphs that are completely contained disjoint from \(Y_s\).
In the dynamic program, we store for each \(s \in V(S)\), \(T \in \mathcal{T}\), and \(i \in \natint{d(T)}\) in which type the corresponding solution subgraphs fall and for each \(F \in \mathcal{F}\) with \(V(F) \cap X_s \neq \emptyset\), we additionally store for each \(K \in \comp{F[Y_s] - E(G[X_s])}\) the set \(K \cap X_s\).

Formally, let \(\enumerateTerminalOp \colon \natint{{\Sigma_{d}}} \to \mathcal{T}\) be such that for all \(T \in \mathcal{T}\), we have \(\abs{\invEnumerateTerminal{T}} = d(T)\).
This gives indices to all solution subgraphs such that we only need to find for each \(i \in \natint{{\Sigma_{d}}}\) a solution subgraph \(F_i\) with \(\enumerateTerminal{i} \subseteq V(F_i)\).
To define the dynamic program at a node \(s \in V(S)\), partition \(\natint{{\Sigma_{d}}}\) into three parts \(I^\bot\), \(I^\times\), and \(I ^\top\) representing the indices of the three types of solutions subgraphs with respect to \(s\).
We allow any of \(I^\bot\), \(I^\times\), and \(I^\top\) to be empty.
Finally, consider for all \(i \in I^\times\) a partitioning \(\mathcal{P}_i \coloneqq \{P_{i,j}\}_{j \in \natint{\abs{\mathcal{P}_i}}}\) of a set \(U\) with \(\emptyset \neq U \subseteq X_s\).
Intuitively, these partitions describe which vertices of \(X_s\) are already connected in the subgraph below \(s\).
The crucial change to obtain the speed-up over the state-of-the-art algorithm is to, ensure the connections are realized without edges inside \(X_s\).
This allows the sub-solutions in join nodes to be treated independently.
The entries of the dynamic program \(D(s)\) are tuples of the form \(\tau \coloneqq (I^\bot, I^\times, I^\top, (\mathcal{P}_i)_{i \in I^\times})\).
We also refer to \(I^\bot\) as \(I^\bot_\tau\) and similarly for \(I^\times\) as well as \(I^\top\) and to \(\mathcal{P}_i\) as \(\mathcal{P}_{\tau,i}\).

To formally define which entries \(\tau\) should be included in the dynamic program, we define an instance of \(\gstp\) based on the corresponding tuple.
Let \(G_s \coloneqq G[Y_s] - E(G[X_s])\) be the graph induced by all edges adjacent to a node already forgotten at \(s\).
For each \(i \in I^\times\) add a vertex \(q_i\) to \(G_s\) adjacent to all \(X_s\).
Call the obtained graph \(G_{s, \tau}\).
This is the host-graph of the corresponding instance.
To define the terminal sets, let \(i \in I^\bot\).
Then \(\enumerateTerminal{i}\) is a terminal set with demand \(\abs{\invEnumerateTerminal{\enumerateTerminal{i}} \cap I^\bot}\), that is the demand is equal to the number of indices of solution subgraphs corresponding to the terminal set \(\enumerateTerminal{i}\) that are completed at \(s\).
Additionally, let \(i \in I^\times\) and set \(Q_{s,i} \coloneqq \{q_i\} \cup (\enumerateTerminal{i} \cap Y_s)\) to be a terminal set with demand 1.
So, the solution subgraph assigned to \(Q_{s,i}\) is marked with including \(q_i\) and can use \(q_i\) to simulate that connections outside \(G_s\) are made for this solution subgraph.
Call the obtained instance \(\mathscr{D}_{s, \tau}\).

\begin{definition}
\label{def:dp-tw}
The dynamic programming table \(D(s)\) contains exactly the tuples \(\tau \coloneqq (I^\bot, I^\times, I^\top, (\mathcal{P}_i)_{i \in I^\times})\) such that
\begin{enumerate}
\item \label{item:tw-dp-started} for all \(T \in \mathcal{T}\) with \(T \cap Y_s \neq \emptyset\), we have \(\invEnumerateTerminal{T} \subseteq I^\bot\cup I^\times\),
\item \label{item:tw-dp-solution} there is a solution \((\mathcal{F}, \pi)\) to the instance \(\mathscr{D}_{s, \tau}\) such that
\begin{enumerate}
\item \label{item:tw-dp-bot-are-complete} for all \(i \in I^\bot\) and \(F \in \inv\pi(\enumerateTerminal{i})\), the set \(V(F)\) is disjoint from \(X_s\) and \(\{q_i\}_{i \in I^\times}\),
\item \label{item:tw-dp-crossing} for all \(i \in I^\times\) and \(\{F\} \coloneqq \inv\pi(Q_{s,i})\), the solution subgraph \(F\) is the unique \(F \in \mathcal{F}\) with \(q_i \in V(F)\), and \(\mathcal{P}_{i} = \{V(K) \cap X_s \mid K \in \comp{F - q_i}\}\).\qedhere
\end{enumerate}
\end{enumerate}
\end{definition}

With \Cref{item:tw-dp-started}, we ensure that all indices referring to solution subgraphs starting at or below \(s\) are actually included the index sets that refer to solution subgraphs that are already started.
Using \Cref{item:tw-dp-solution}, we ensure that the solution \((\mathcal{F}, \pi)\) to \(\mathscr{D}_{s,\tau}\) witnessing \(\tau \in D(s)\) actually fulfills some consistency requirements.
Namely, we ensure with \Cref{item:tw-dp-bot-are-complete} that the vertices \(\{q_i\}_{i \in I^\times}\), which can be used to simulate connections outside \(G_s\), are not used for solution subgraphs that are marked as complete in the tuple.
Finally, we ensure two properties using \Cref{item:tw-dp-crossing}.
First, we ensure that the \(\{q_i\}_{i \in I^\times}\) are only used by the \(i\)-th solution subgraph, which means that these are used as markers.
Secondly, we make sure that the partitionings \((\mathcal{P}_i)_{i\in I^\times}\) conform to our interpretation presented above.

We now show, that this dynamic program can be used to determine whether the instance is positive.

\begin{lemma}
\label{stmt:tw-dp-output-correct}
Let \(r\) be the root of \(S\).
The instance \(\mathscr{P}\) is positive if and only if \[(\natint{{\Sigma_{d}}}, \emptyset, \emptyset, ()) \in D(r).\qedhere\]
\end{lemma}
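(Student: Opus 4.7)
The plan is to simply unfold the definitions at the root, observing that the specific tuple $\tau^\ast \coloneqq (\natint{{\Sigma_{d}}}, \emptyset, \emptyset, ())$ reduces the auxiliary instance $\mathscr{D}_{r, \tau^\ast}$ to the original instance $\mathscr{P}$, and then verify that each of the extra consistency requirements in \Cref{def:dp-tw} becomes vacuous.

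First I would recall that in a nice tree decomposition the root $r$ satisfies $X_r = \emptyset$ and hence $Y_r = V(G)$. Consequently $G_r = G[Y_r] - E(G[X_r]) = G$. Since $I^\times_{\tau^\ast} = \emptyset$, no auxiliary vertices $q_i$ are added, so $G_{r, \tau^\ast} = G$. Next I would identify the terminal sets of $\mathscr{D}_{r, \tau^\ast}$: as $I^\times_{\tau^\ast} = \emptyset$, no terminal set of the form $Q_{r,i}$ is introduced, and for every $i \in I^\bot_{\tau^\ast} = \natint{{\Sigma_{d}}}$ the terminal set $\enumerateTerminal{i}$ is added with demand $\abs{\invEnumerateTerminal{\enumerateTerminal{i}} \cap I^\bot_{\tau^\ast}} = \abs{\invEnumerateTerminal{\enumerateTerminal{i}}} = d(\enumerateTerminal{i})$, because $\enumerateTerminalOp$ was chosen so that exactly $d(T)$ indices map to each $T \in \mathcal{T}$. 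Hence $\mathscr{D}_{r, \tau^\ast}$ is, up to collapsing duplicate copies of the same terminal set, identical to $\mathscr{P}$, and in particular has a solution if and only if $\mathscr{P}$ is positive.

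Then I would check the two additional requirements of \Cref{def:dp-tw}. \Cref{item:tw-dp-started} asks that for every $T \in \mathcal{T}$ with $T \cap Y_r \neq \emptyset$ we have $\invEnumerateTerminal{T} \subseteq I^\bot_{\tau^\ast} \cup I^\times_{\tau^\ast} = \natint{{\Sigma_{d}}}$, which holds trivially. For \Cref{item:tw-dp-solution}, take any solution $(\mathcal{F}, \pi)$ to $\mathscr{P} = \mathscr{D}_{r, \tau^\ast}$. Condition \Cref{item:tw-dp-bot-are-complete} demands that for every $F \in \inv\pi(\enumerateTerminal{i})$ with $i \in I^\bot_{\tau^\ast}$ the vertex set $V(F)$ is disjoint from $X_r$ and from $\{q_i\}_{i \in I^\times_{\tau^\ast}}$; both sets are empty, so this is vacuous. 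Condition \Cref{item:tw-dp-crossing} is vacuously satisfied because $I^\times_{\tau^\ast} = \emptyset$.

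Combining these observations, $\tau^\ast \in D(r)$ precisely when $\mathscr{D}_{r, \tau^\ast}$ admits a solution, which happens exactly when $\mathscr{P}$ is positive. There is no genuine obstacle here: the statement is essentially the base case of the dynamic program, and the burden of correctness has already been placed on the computation of $D$ at every node, which is treated in subsequent lemmas.
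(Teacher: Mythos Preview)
Your proposal is correct and follows essentially the same approach as the paper's proof: you observe that $X_r = \emptyset$ in a nice tree decomposition, whence $\mathscr{D}_{r,\tau^\ast}$ coincides with $\mathscr{P}$, and then verify that the side conditions of \Cref{def:dp-tw} are vacuous. The paper's own proof says exactly this in three sentences; your version simply unpacks each step more explicitly.
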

\begin{proof}
Let \(\tau \coloneqq(\natint{{\Sigma_{d}}}, \emptyset, \emptyset, ())\).
Notice that \(\mathcal{D}\) is nice and therefore \(X_r = \emptyset\).
Thus, that \(\mathscr{D}_{t,\tau} = \mathscr{P}\).
Additionally, the conditions laid forth in \Cref{def:dp-tw} are satisfied by every solution to \(\mathscr{D}_{s,\tau}\), proving the statement.
\end{proof}

Now, all that remains is to show, how to compute this dynamic program on a nice tree decomposition.
That is, we need to show how to compute the dynamic program for leaf, introduce, join, and forget nodes given that the dynamic program of all children is already computed.

\paragraph*{Leaf node} Let \(s \in V(S)\) be a leaf node.
Since \(\mathcal{D}\) is nice, \(X_s = \emptyset\).
Observe that \(Y_s = \emptyset\).
Thus, \(I^\bot_\tau = \emptyset\).
We can verify that for all \(S \subseteq \natint{{\Sigma_{d}}}\), we have \((\emptyset, S, \natint{{\Sigma_{d}}} \setminus S,(\emptyset)_{i \in S}) \in D(s)\); so, \(D(s) = \bigcup_{S \subseteq \natint{{\Sigma_{d}}}}\{(\emptyset, S, \natint{{\Sigma_{d}}}\setminus S,(\emptyset)_{i \in S})\}\), showing that these kinds of nodes can be computed efficiently.

\paragraph*{Introduce node} Let \(s \in V(S)\) be an introduce node, that is it has exactly one child \(c\) and there is a \(v \in V(G)\setminus X_c\) with \(X_s = X_c \cup\{v\}\).
Let \(\tau \in D(c)\).
Note that \(G_c + v = G_s\).
Consider how a solution \((\mathcal{F}, \pi)\) to \(\mathscr{D}_{c, \tau}\) can be extended using this new vertex \(v\).
As \((S, \mathcal{X})\) is a tree decomposition, we know that \(N(v)\) is disjoint from all already forgotten nodes \(Y_s \setminus X_s\).
So, no solution subgraph associated with an index in \(I^\bot_\tau\) can be extended to include \(v\) without changing the state regarding \(X_c\).
Any \(F \in \inv\pi(\{Q_{s,i}\}_{i \in I^\times_\tau})\) can be extended to include \(v\) by adding the edge \(vq_i\) to \(F\).
For an \(i \in I^\top_\tau\), we can also introduce a new solution subgraph \(G[vq_i]\) only containing the edges \(vq_i\).
To formally capture this, let \(S \subseteq I^\times_\tau\cup I^\top_\tau\).
For all \(i \in I^\times \cap S\), set \(\mathcal{P}_i' \coloneqq \mathcal{P}_{\tau,i} \cup \{\{v\}\}\), for all \(i \in S \setminus I^\times\), set \(\mathcal{P}_i' \coloneqq \{\{v\}\}\) for all \(i \in I^\times \setminus S\), set \(\mathcal{P}_i' \coloneqq \mathcal{P}_{\tau,i}\).
Now, we define the function \(\extend{\tau}{S}\coloneqq (I^\bot_\tau, I^\times_\tau\cup S, I^\top_\tau \setminus S, (\mathcal{P}_i')_{i \in I^\times \cup S})\).
Intuitively, \(\extend{\tau}{S}\) is the state obtained from \(\tau\), when all solution subgraphs associated with indices in \(S\) are extended to include \(v\).

\begin{lemma}
\label{stmt:tw-dp-introduce-correct}
Let \(L \coloneqq\{i \in \natint{{\Sigma_{d}}}\mid v \in \enumerateTerminal{i}\}\).
Then, \[ D(s) = \bigcup_{\substack{\tau \in D(c),\\L\subseteq S \subseteq I^\times_\tau\cup I^\top_\tau}} \{\extend{\tau}{S} \}.\]
\end{lemma}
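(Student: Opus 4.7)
The plan is to prove the two inclusions separately. First, I will establish the structural fact that the introduced vertex $v$ interacts with the rest of $G_s$ only via the marker vertices $\{q_i\}_{i \in I^\times_{\tau'}}$. Since $(S,\mathcal{X})$ is a tree decomposition and $v \in X_s \setminus X_c$, every neighbour of $v$ in $G$ lies in $X_s$; hence all edges of $G$ incident to $v$ are removed when forming $G_s = G[Y_s] - E(G[X_s])$. Therefore, in $G_{s,\tau'}$, the only edges incident to $v$ are the $vq_i$ with $i \in I^\times_{\tau'}$. Combined with \Cref{item:tw-dp-crossing} of \Cref{def:dp-tw} (which ensures each $F_i' \coloneqq \inv{\pi'}(Q_{s,i})$ contains $q_i$ as its unique $q_j$-vertex), this means that whenever $v \in V(F_i')$, the only edge of $F_i'$ touching $v$ is $vq_i$, so $\{v\}$ is a component of $F_i' - q_i$ and hence $\{v\} \in \mathcal{P}_{\tau',i}$.

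For the inclusion $D(s) \supseteq \bigcup \{\extend{\tau}{S}\}$, take $\tau \in D(c)$ with a witnessing solution $(\mathcal{F},\pi)$ for $\mathscr{D}_{c,\tau}$ and $L \subseteq S \subseteq I^\times_\tau \cup I^\top_\tau$. I will construct a solution $(\mathcal{F}',\pi')$ for $\mathscr{D}_{s,\extend{\tau}{S}}$ as follows: for every $i \in S \cap I^\times_\tau$, replace $F_i$ by $F_i + vq_i$; for every $i \in S \setminus I^\times_\tau \subseteq I^\top_\tau$, add the new subgraph $G[\{vq_i\}]$ assigned to $Q_{s,i}$ (the vertex $q_i$ is available since $i \notin I^\times_\tau$ means no earlier subgraph used $q_i$; it is newly present in $G_{s,\extend{\tau}{S}}$). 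All other subgraphs stay untouched. Edge-disjointness is preserved because the only new edges are $\{vq_i\}_{i\in S}$ and each $q_i$ is used at most once. I then verify \Cref{def:dp-tw}: \Cref{item:tw-dp-started} follows because the only terminal-sets newly intersecting $Y_s$ are those with $v \in \enumerateTerminal{i}$, i.e.\ $i \in L \subseteq S \subseteq I^\times_{\extend{\tau}{S}}$; \Cref{item:tw-dp-bot-are-complete} holds because $I^\bot$ is unchanged and the new edges avoid $X_s \setminus X_c = \{v\}$-incident subgraphs in $I^\bot$; \Cref{item:tw-dp-crossing} holds because adding $vq_i$ to $F_i$ (or creating $\{vq_i\}$) attaches a singleton component $\{v\}$ to the partition $\mathcal{P}_{\tau,i}$, matching exactly the definition of $\extend{\tau}{S}$.

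For the reverse inclusion $D(s) \subseteq \bigcup \{\extend{\tau}{S}\}$, take $\tau' \in D(s)$ with witness $(\mathcal{F}',\pi')$. Set $S \coloneqq \{ i \in I^\times_{\tau'} \mid \{v\} \in \mathcal{P}_{\tau',i}\}$; by the initial structural remark this is exactly the set of $i$ for which $v \in V(F_i')$. We have $L \subseteq S$: if $i \in L$ then $v \in \enumerateTerminal{i} \cap Y_s$, so \Cref{item:tw-dp-started} forces $i \in I^\bot_{\tau'} \cup I^\times_{\tau'}$, and \Cref{item:tw-dp-bot-are-complete} then rules out $I^\bot_{\tau'}$ (its subgraphs avoid $X_s$). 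Now define $\tau$ by $I^\bot_\tau \coloneqq I^\bot_{\tau'}$; for every $i \in I^\times_{\tau'}$, let $\mathcal{P}_{\tau,i} \coloneqq \mathcal{P}_{\tau',i} \setminus \{\{v\}\}$; then set $I^\times_\tau \coloneqq \{i \in I^\times_{\tau'} : \mathcal{P}_{\tau,i} \neq \emptyset \text{ or } i \notin S\}$ and $I^\top_\tau \coloneqq [\Sigma_d] \setminus (I^\bot_\tau \cup I^\times_\tau)$. The witness for $\tau \in D(c)$ is obtained from $(\mathcal{F}',\pi')$ by deleting $v$ and every edge $vq_i$, and by dropping any subgraph that degenerates to $\{q_i\}$ with $q_i$ now unused (those $i$ are moved to $I^\top_\tau$). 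A straightforward check, using again that $v$'s only edges in $G_{s,\tau'}$ are $\{vq_j\}$, shows that removing these edges preserves connectivity of every $F_i' - q_j$ off the component $\{v\}$, so the induced solution is valid for $\mathscr{D}_{c,\tau}$ and witnesses $\tau \in D(c)$. By construction $\extend{\tau}{S} = \tau'$ and $L \subseteq S \subseteq I^\times_\tau \cup I^\top_\tau$.

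The main subtlety will be the forward direction bookkeeping: one must verify that $\tau$ as constructed actually lies in $D(c)$, in particular that \Cref{item:tw-dp-started} holds at $c$. This uses the fact that for any terminal set $T$ with $T \cap Y_c \neq \emptyset$ we have $T \cap Y_s \neq \emptyset$ (since $Y_c \subseteq Y_s$), so \Cref{item:tw-dp-started} for $\tau'$ gives $\invEnumerateTerminal{T} \subseteq I^\bot_{\tau'} \cup I^\times_{\tau'}$, and our definition moves an index from $I^\times_{\tau'}$ to $I^\top_\tau$ only when $i \in S$ and $\mathcal{P}_{\tau,i} = \emptyset$, i.e.\ when $v$ was the only $X_s$-vertex of $F_i'$, which forces $\enumerateTerminal{i} \cap Y_c = \emptyset$ by \Cref{item:tw-dp-crossing}. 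This is the only delicate point; everything else is direct verification.
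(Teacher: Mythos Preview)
Your proof sketch is correct and follows essentially the same approach as the paper's proof: both directions rely on the key structural observation that $v$ is isolated in $G_s$ (so its only edges in $G_{s,\tau'}$ are the $vq_i$), and both constructions---adding/removing the edges $vq_i$ and tracking the resulting singleton partition $\{v\}$---match the paper's. You also correctly identify the one delicate verification, namely that an index $i$ with $\mathcal{P}_{\tau',i} = \{\{v\}\}$ must have $\enumerateTerminal{i} \cap Y_c = \emptyset$; the paper spells this out via an explicit $vu$-path argument through $q_i$ forcing a second $X_s$-vertex in $F_i'$, which is worth writing out in full rather than just citing \Cref{item:tw-dp-crossing}.
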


\begin{proof}
Let \(\tau \in D(s)\) and let \((\mathcal{F},\pi)\) be a solution to \(\mathscr{D}_{s, \tau}\) such that the additional restrictions of \Cref{item:tw-dp-bot-are-complete} in \Cref{def:dp-tw} are satisfied.
We now show, that there is are \(\gamma \in D(c)\) and \(L \subseteq S \subseteq I^\top_{\gamma} \cup I^\times_{\gamma}\) with \(\extend{\gamma}{S} = \tau\).
For this, we first define \(\gamma\).
Set \(I^\bot_\gamma \coloneqq I^\bot_\tau\).
For all \(i \in I^\times_\tau\), we set \(\mathcal{P}_i' \coloneqq \mathcal{P}_i \setminus \{\{v\}\}\), that is we remove from \(\mathcal{P}_i\) the partition only containing \(v\).
We set \(I^\times_\gamma \coloneqq \{i \in I^\times_\tau \mid \mathcal{P}_i' \neq \emptyset\}\) to be the indices that use vertices of \(X_s\) in \(\tau\) apart from \(v\), and we set for all \(i \in I^\times_\gamma\) that \(\mathcal{P}_{\gamma,i} \coloneqq \mathcal{P}_i'\).
Finally, set \(I^\top_\gamma\) to be the remaining indices \(\natint{{\Sigma_{d}}} \setminus (I^\bot_\gamma \cup I^\times_\gamma)\).
Additionally, set \(S \coloneqq \{i \in I^\times_\tau \mid v \in \bigcup \mathcal{P}_{\tau,i}\}\) to be the set of indices using \(v\) in the solution.
One can verify that \(\extend{\gamma}{S} = \tau\).

It remains to show that \(L \subseteq S\) and that \(\gamma \in D(c)\).
We first show \(L \subseteq S\).
Let \(i \in L\), then \(i \in I^\bot_\tau \cup I^\times_\tau\).
Assume \(i \in I^\bot_\tau\).
Then, the demand of \(\enumerateTerminal{i}\) is at least 1 in \(\mathscr{D}_{s,\tau}\); so, let \(F \in \inv\pi(\enumerateTerminal{i})\).
As \(v \in \enumerateTerminal{i} \subseteq V(F)\), this violates \Cref{item:tw-dp-bot-are-complete} of \Cref{def:dp-tw}.
Therefore, \(i \in I^\times_\tau\).
Let \(\{F\} \coloneqq \inv\pi(Q_{s,i})\).
As \(v \in \enumerateTerminal{i} \subseteq V(F)\), we have, by \Cref{item:tw-dp-crossing} of \Cref{def:dp-tw}, that \(i \in S\).

To show \(\gamma \in D(c)\), we verify that \(\gamma\) satisfies \Cref{item:tw-dp-started} of \Cref{def:dp-tw}.
Let \(T \in \mathcal{T}\) with \(\emptyset \neq T \cap Y_c \subseteq T \cap Y_s\).
By \Cref{item:tw-dp-started} of \Cref{def:dp-tw}, we have \(\invEnumerateTerminal{T} \subseteq I^\bot_\tau \cup I^\times_\tau\).
We know \(I^\bot_\tau = I^\bot_\gamma\) and \(\{i \in I^\times_\tau \mid \mathcal{P}_i' \neq \emptyset\}= I^\times_\gamma\).
So, it suffices to show for all \(i \in \invEnumerateTerminal{T}\) with \(i \in I^\times_\tau\) that \(\mathcal{P}_i'\neq \emptyset\).
Assume the contrary, that is there is a \(i \in \invEnumerateTerminal{T}\) with \(\mathcal{P}_i' = \emptyset\).
Then, \(\mathcal{P}_i = \{\{v\}\}\).
Let \(u \in T \cap Y_c\) and consider a \(vu\)-path \(Q\) in \(F\).
As \(N_G(v)\) is disjoint from \(Y_s \setminus X_s\) and since there are no edges among vertices in \(X_s\) in \(G_s\), there is an \(x \in X_s\setminus \{v\}\) with \(Q = vq_ix \dots u\).
Thus, \(x \in V(F) \cap X_s\).
By \Cref{item:tw-dp-crossing}, we have \(x \in \bigcup \mathcal{P}_i\) and violating \(\mathcal{P}_i = \{\{v\}\}\); so, \(\gamma\) satisfies \Cref{item:tw-dp-started} of \Cref{def:dp-tw}.

Finally, we provide a solution to \(\mathscr{D}_{c, \gamma}\) satisfying the additional requirements of \Cref{item:tw-dp-solution} in \Cref{def:dp-tw}.
For this, let \(i \in I^\bot_\gamma\).
As \(I^\bot_\gamma = I^\bot_\tau\), there is a \(F \in \inv\pi(\enumerateTerminal{i})\).
By \Cref{item:tw-dp-bot-are-complete} of \Cref{def:dp-tw}, \(F\) is fully contained in \(G[Y_s \setminus X_s] = G[Y_c \setminus X_c]\).
So, we assign all such \(F\) to \(\enumerateTerminal{i}\) in our solution to \(\mathscr{D}_{c,\gamma}\) and since \(I^\bot_\gamma = I^\bot_\tau\), this satisfies the demand of \(\enumerateTerminal{i}\) while satisfying \Cref{item:tw-dp-bot-are-complete} of \Cref{def:dp-tw}.
Now, let \(i \in I^\times_\gamma \subseteq I^\times_\tau\) and let \(\{F_i\} \coloneqq \inv\pi(Q_{s,i})\).
The graph \(F^*_i\coloneqq F - v\) is contained in \(G_{c, \rho}\) and we see \(Q_{c,i} = q_i \cup (\enumerateTerminal{i} \cap Y_c) = (q_i \cup (\enumerateTerminal{i} \cap Y_s)) \setminus \{v\} = Q_{s,i} \setminus \{v\} \subseteq V(F^*_i)\).
Consequently, we assign \(F^*_i\) to \(Q_{c,i}\) in \(\mathscr{D}_{c, \gamma}\), satisfying its demands.
We can verify that this solution satisfies \Cref{item:tw-dp-bot-are-complete} and that for all \(i \in I^\times_\gamma\), the solution subgraph \(F^*_i\) is the unique solution subgraph containing \(q_i\).
Finally, consider \(K \in \comp{F_i - q_i}\).
Assume \(v \in V(K)\).
As \(N_{G_s}(v) = \{q_i\}\), we have \(V(K) = \{v\}\).
So, if \(v \notin V(K)\), there is a \(K' \in \comp{F^*_i - q_i}\) with \(K = K'\) and vice versa, concluding the proof that \(\gamma \in D(c)\).

Now, let \(\gamma \in D(c)\) and \(L \subseteq S \subseteq I^\times_\gamma \cup I^\top_\gamma\) be given.
We now show that \(\tau \coloneqq\extend{\gamma}{S} \in D(s)\) by verifying \Cref{item:tw-dp-started,item:tw-dp-solution} of \Cref{def:dp-tw} one after the other.
Let \(T \in \mathcal{T}\) with \(T \cap Y_s \neq \emptyset\).
If \(T \cap Y_c \neq \emptyset\), by \Cref{item:tw-dp-started} of \Cref{def:dp-tw}, we have \(\invEnumerateTerminal{T} \subseteq I^\bot_\gamma \cup I^\times_\gamma\subseteq I^\bot_\tau \cup I^\times_\tau\).
Otherwise, \(v \in T\).
Thus, \(\invEnumerateTerminal{T}\subseteq L \subseteq S \subseteq I^\times_\tau\), showing that \Cref{item:tw-dp-started} of \Cref{def:dp-tw} is satisfied.

To construct a solution to \(\mathscr{D}_{s,\tau}\) satisfying the additional requirements of \Cref{item:tw-dp-solution} in \Cref{def:dp-tw}, let \((\mathcal{F},\pi)\) be a solution to \(\mathscr{D}_{c, \gamma}\) satisfying these additional requirements.
First, let \(i \in I^\bot_\tau = I^\bot_\gamma\) and consider \(F \in \inv\pi(\enumerateTerminal{i})\).
By \Cref{item:tw-dp-bot-are-complete}, we have \(\enumerateTerminal{i}\subseteq V(F) \subseteq Y_c \setminus X_c = Y_s \setminus X_s\); so, we assign each such \(F\) to \(\enumerateTerminal{i}\) satisfying its demand in \(\mathscr{D}_{s,\tau}\).
Now, let \(i \in I^\times_\gamma \subseteq I^\times_\tau\) and let \(\{F\} \coloneq \inv\pi(Q_{c,i})\).
If \(i \in S\), we consider the subgraph \(F^*_i \coloneqq F + q_iv\) of \(G_{s, \tau}\).
As \(q_i \in V(F)\), we know that \(F^*_i\) is connected.
Furthermore, \(Q_{s,i} = \{q_i\} \cup (\enumerateTerminal{i}\cap Y_s) \subseteq \{q_i, v\}\cup(\enumerateTerminal{i}\cap Y_c) = Q_{c,i} \cup \{v\} \subseteq V(F^*_i)\).
So, we assign \(F^*_i\) to \(Q_{s,i}\) in our solution, satisfying its demand.
Now, let \(i \in I^\times_\tau \setminus I^\times_\gamma\).
By \Cref{item:tw-dp-started}, we know that \(\enumerateTerminal{i} \cap Y_c = \emptyset\).
Thus, \(Q_{s,i} = \{q_i\}\cup(\enumerateTerminal{i} \cap Y_s) \subseteq \{q_i, v\}\).
We assign the subgraph \(F^*_i \coloneqq G_{s,\tau}[q_iv]\), that is the subgraph induced by the edge \(q_iv\), to the terminal set \(Q_{s,i}\), completing the description of the solution to \(\mathscr{D}_{s,\tau}\).
We can verify that this solution additionally satisfies \Cref{item:tw-dp-solution} of \Cref{def:dp-tw}.
\end{proof}

This gives us an easy way to compute the dynamic program for introduce nodes.

\begin{corollary}
\label{stmt:tw-dp-introduce-running-time}
The dynamic program for an introduce node with child \(c\) can be computed in time \(\O{1 + ({\Sigma_{d}} + w)2^{\Sigma_{d}}\abs{D(c)}}\) given that \(D(c)\) is provided.
\end{corollary}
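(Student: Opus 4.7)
The plan is to read off an algorithm directly from the closed-form description of $D(s)$ given by Lemma~\ref{stmt:tw-dp-introduce-correct}. First, I would precompute the index set $L = \{i \in \natint{{\Sigma_d}} : v \in \enumerateTerminal{i}\}$ in $O({\Sigma_d})$ time; this happens once and is absorbed into the additive $O(1)$ slack (and into $|D(c)| \geq 1$ whenever $D(s)$ is nonempty). Then, for each $\tau \in D(c)$, I would enumerate every $S$ with $L \subseteq S \subseteq I^\times_\tau \cup I^\top_\tau$ by a bitmask over $\natint{{\Sigma_d}}$ that iterates exactly over supersets of $L$ inside $I^\times_\tau \cup I^\top_\tau$; there are at most $2^{{\Sigma_d}}$ such sets. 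For each pair $(\tau, S)$ I would emit the tuple $\extend{\tau}{S}$ and insert it into $D(s)$.

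The key accounting is that $\extend{\tau}{S}$ differs from $\tau$ in a very local way: the index sets $I^\bot, I^\times, I^\top$ change only by moving the elements of $S \setminus I^\times_\tau$ from $I^\top_\tau$ into $I^\times$, and the only partition changes are either appending the singleton block $\{v\}$ to $\mathcal{P}_{\tau,i}$ (when $i \in S \cap I^\times_\tau$) or setting $\mathcal{P}_i = \{\{v\}\}$ (when $i \in S \setminus I^\times_\tau$). Updating the index-set information takes $O({\Sigma_d})$ time per subset, and writing out the partition data takes $O({\Sigma_d} + w)$ using a representation in which the partitions of $\extend{\tau}{S}$ share storage with those of $\tau$ and only record the diff (one new block $\{v\}$ per affected index, plus the at most $w+1$ positions of $X_s$ on the boundary). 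Multiplying through gives the claimed bound $O(1 + ({\Sigma_d} + w) 2^{{\Sigma_d}} |D(c)|)$, and correctness is immediate from Lemma~\ref{stmt:tw-dp-introduce-correct}.

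There is really no mathematical obstacle here, since the combinatorial content is entirely absorbed by the preceding lemma. The only subtlety is the implementation choice of how to store the partitions so that emitting each $\extend{\tau}{S}$ costs $O({\Sigma_d} + w)$ rather than the naive $O({\Sigma_d} \cdot w)$ that a fresh deep copy would incur; this is handled by incremental/pointer-based representation as above. If deduplication of the union is required at this stage, it can be folded in at an additional $O({\Sigma_d} + w)$ factor using a canonical encoding of tuples plus a hash table, which does not change the stated bound.
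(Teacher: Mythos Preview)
Your proposal is correct and is precisely the intended reading of the corollary: the paper states it without proof, as an immediate algorithmic consequence of Lemma~\ref{stmt:tw-dp-introduce-correct}, and your enumeration over pairs $(\tau,S)$ together with the per-pair $O({\Sigma_d}+w)$ cost for writing out $\extend{\tau}{S}$ is exactly that. Your remark about diff-based storage for the partitions is more implementation detail than the paper provides, but it is consistent with the stated bound.
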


\paragraph*{Join node} Let \(s \in V(S)\) be a join node with children \(a, b\) such that \(X_s = X_a = X_b\).
Notice that \(Y_a \setminus X_s\) and \(Y_b \setminus X_s\) are disjoint, which means that \(G_a\) and \(G_b\) are edge-disjoint, since vertices in \(X_s\) are not adjacent.
Let \(\alpha \in D(a)\) and \(\beta \in D(b)\).
Consider solutions \((\mathcal{F}_a,\pi_a)\) and \((\mathcal{F}_b, \pi_b)\) to \(\mathscr{D}_{a,\alpha}\) and \(\mathscr{D}_{b, \beta}\), respectively.
Then, every edge in \((\bigcup \mathcal{F}_a) \cap (\bigcup \mathcal{F}_b)\) is adjacent to a vertex in \(\{q_i\}_{i \in I^\times_\alpha \cup I^\times_\beta}\).
Meaning that these solutions can be easily combined.
Assume that \(I^\times_\alpha = I^\times_\beta\), and let \(i \in I^\times_\alpha\) and set \(\mathcal{E}_i \coloneqq \mathcal{P}_{\alpha,i} \cup \mathcal{P}_{\beta, i}\) to be the sets of vertices in \(X_s\) that are connected inside \(G_a\) or \(G_b\) in the solution subgraphs associated with index \(i\).
Consider the hypergraph \(H_i \coloneqq (\bigcup \mathcal{E}_i,\mathcal{E}_i)\) and set \(\mathcal{K}_i \coloneqq \{V(K) \mid K \in \comp{H_i}\}\).
Let \(I^\bot \coloneqq I^\bot_\alpha \cup I^\bot_\beta\) be all indices of subgraphs that are associated with a complete solution subgraph in either \(\alpha\) or \(\beta\).
We now define the function \(\merge\alpha\beta \coloneqq (I^\bot, I^\times_\alpha, I^\top_\alpha\setminus I^\bot_\beta, (\mathcal{K}_i)_{i \in I^\times_\alpha})\).
Intuitively, after merging \(\alpha\) and \(\beta\), all indices which were completed in either \(\alpha\) or \(\beta\) are completed after the merge and all solution subgraphs that are associated with an index crossing \(X_s\) can use connections made in either \(G_a\) and \(G_b\).

\begin{lemma}
\label{stmt:tw-dp-join-correct}
We have \[
D(s) = \bigcup_{\substack{\alpha \in D(a), \beta \in D(b)\!\colon\\I^\times_\alpha = I^\times_\beta}} \{\merge{\alpha}{\beta}\}
\]
\end{lemma}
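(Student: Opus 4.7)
The proof is a routine correctness argument for a join node in a nice tree-decomposition dynamic program and splits into the two containment directions; the only nontrivial bookkeeping is for the partitions $(\mathcal{P}_i)$. Throughout, let $\tau := \merge{\alpha}{\beta}$ on the RHS, and recall that $X_s = X_a = X_b$ and that $Y_a \setminus X_s$ and $Y_b \setminus X_s$ are disjoint, so $G_a$ and $G_b$ are edge-disjoint and share only the independent set $X_s$.

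For the forward direction (LHS $\subseteq$ RHS), take $\tau \in D(s)$ witnessed by a solution $(\mathcal{F},\pi)$ to $\mathscr{D}_{s,\tau}$. For each $i \in I^\times_\tau$, let $F_i := \pi^{-1}(Q_{s,i})$, and for $F_i$ define its ``$a$-part'' $F_{a,i}$ as the subgraph of $G_{a,\alpha}$ induced by $E(F_i) \cap E(G_a)$ together with the vertex $q_i$ and every edge $q_i x$ with $x \in V(F_i) \cap X_s$ adjacent in $F_i$ to some vertex of $Y_a \setminus X_s$; define $F_{b,i}$ symmetrically. The tree-decomposition property forces every edge incident to $Y_a \setminus X_s$ to lie in $G_a$, so each component of $F_i[E(G_a)]$ that meets $Y_a \setminus X_s$ hits $X_s$, and attaching $q_i$ then makes $F_{a,i}$ connected (and analogously for $F_{b,i}$). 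For $i \in I^\bot_\tau$ and $F \in \pi^{-1}(\enumerateTerminal{i})$, the subgraph $F$ avoids $X_s$ and $\{q_j\}$ by Item~2(a), hence lies entirely in $G_a - X_s$ or entirely in $G_b - X_s$. Defining $I^\bot_\alpha$ and $I^\bot_\beta$ accordingly, setting $I^\times_\alpha = I^\times_\beta = I^\times_\tau$, $I^\top_\alpha = \natint{{\Sigma_{d}}} \setminus (I^\bot_\alpha \cup I^\times_\alpha)$ and similarly for $\beta$, and defining $\mathcal{P}_{\alpha,i}$ and $\mathcal{P}_{\beta,i}$ from the components of $F_{a,i} - q_i$ and $F_{b,i} - q_i$ restricted to $X_s$, one checks that the induced solutions witness $\alpha \in D(a)$ and $\beta \in D(b)$; Item~1 of \Cref{def:dp-tw} for $\alpha,\beta$ follows from Item~1 for $\tau$ since $Y_a, Y_b \subseteq Y_s$.

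For the backward direction (RHS $\subseteq$ LHS), take $\alpha \in D(a)$ and $\beta \in D(b)$ with $I^\times_\alpha = I^\times_\beta$, witnessed by solutions $(\mathcal{F}_a,\pi_a)$ and $(\mathcal{F}_b,\pi_b)$. Build a candidate solution to $\mathscr{D}_{s,\tau}$ by: keeping unchanged every subgraph assigned to some $\enumerateTerminal{i}$ with $i \in I^\bot_\alpha \cup I^\bot_\beta$ (these live in $G_a - X_s$ or $G_b - X_s$ respectively, hence in $G_s$); and for each $i \in I^\times_\alpha$, gluing $\pi_a^{-1}(Q_{a,i})$ and $\pi_b^{-1}(Q_{b,i})$ along the common vertex $q_i$ (the two $q_i$'s are identified in $G_{s,\tau}$) to form a single connected subgraph $F_i$ containing $Q_{s,i}$. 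Edge-disjointness is preserved because $G_a$ and $G_b$ are edge-disjoint and each $q_i$ is used only by the $i$-th subgraph on either side. Item~1 of \Cref{def:dp-tw} for $\tau$ follows from its validity for $\alpha$ and $\beta$ since every $T \in \mathcal{T}$ with $T \cap Y_s \neq \emptyset$ has $T \cap Y_a \neq \emptyset$ or $T \cap Y_b \neq \emptyset$.

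The one point that requires genuine care in both directions is the agreement of the $(\mathcal{P}_i)$ component: it must be shown that for each $i \in I^\times$, the partition of $V(F_i) \cap X_s$ into components of $F_i - q_i$ is exactly the partition $\mathcal{K}_i$ given by the connected components of the hypergraph $(\bigcup \mathcal{E}_i, \mathcal{E}_i)$ with $\mathcal{E}_i := \mathcal{P}_{\alpha,i} \cup \mathcal{P}_{\beta,i}$. This is the main (but still routine) obstacle and is handled by a two-way connectivity argument: any $u$--$v$ path in $F_i - q_i$ traverses $X_s$ only at its endpoints of maximal subpaths that lie inside $G_a$ or inside $G_b$ (since $G[X_s]$ contributes no edges to $G_s$), giving a sequence of hyperedges from $\mathcal{E}_i$ linking $u$ to $v$; conversely, each hyperedge $P \in \mathcal{E}_i$ corresponds to a component of $F_{a,i} - q_i$ or $F_{b,i} - q_i$ whose vertices of $X_s$ are mutually connected in $F_i - q_i$ via edges of $G_a$ or $G_b$ respectively. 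This equivalence gives $\merge{\alpha}{\beta} = \tau$ in the forward direction and verifies Item~2(b) of \Cref{def:dp-tw} for the constructed solution in the backward direction, completing the proof.
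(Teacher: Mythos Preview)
Your plan matches the paper's proof in overall structure and in the key connectivity argument for the partitions, but there is a genuine gap in the forward direction: your construction of $F_{a,i}$ is too restrictive. You add the edge $q_i x$ only for those $x \in V(F_i)\cap X_s$ that are adjacent in $F_i$ to some vertex of $Y_a\setminus X_s$. Consider $x \in \enumerateTerminal{i}\cap X_s$ whose only neighbour in $F_i$ is $q_i$ (this can happen, since $Q_{s,i}\ni x$ forces $x\in V(F_i)$ but nothing forces $x$ to touch $Y_s\setminus X_s$), or an $x$ adjacent in $F_i$ only to $q_i$ and to $Y_b\setminus X_s$. In the first case $x$ is absent from both $F_{a,i}$ and $F_{b,i}$; in the second case $x\notin V(F_{a,i})$. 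Either way $x\notin V(F_{a,i})$, so $F_{a,i}$ fails to contain $Q_{a,i}=\{q_i\}\cup(\enumerateTerminal{i}\cap Y_a)$ and is not a valid solution subgraph for $\mathscr{D}_{a,\alpha}$; moreover, in the first case $x$ is missing from $\bigcup\mathcal{P}_{\alpha,i}\cup\bigcup\mathcal{P}_{\beta,i}$, so $\merge{\alpha}{\beta}\neq\tau$.

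The fix is exactly what the paper does: take $F_{a,i}\coloneqq F_i-(Y_b\setminus X_s)+\{q_ix\mid x\in V(F_i)\cap X_s\}$, i.e.\ remove the $b$-side vertices and add the edge $q_ix$ for \emph{every} $x\in V(F_i)\cap X_s$. This keeps all such $x$ in $V(F_{a,i})$ (possibly as singleton components of $F_{a,i}-q_i$), guarantees $Q_{a,i}\subseteq V(F_{a,i})$, and ensures $\bigcup\mathcal{P}_{\alpha,i}=\bigcup\mathcal{P}_{\beta,i}=V(F_i)\cap X_s$. With this correction your hypergraph argument for the partition agreement goes through verbatim, and the rest of the plan is fine.
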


\begin{proof}
Let \(\tau \in D(s)\) and let \((\mathcal{F},\pi)\) be a solution to \(\mathscr{D}_{s, \tau}\) satisfying the additional requirements of \Cref{item:tw-dp-solution} in \Cref{def:dp-tw}.
We now show, that there is are \(\alpha \in D(a)\) and \(\beta \in D(b)\) with \(I^\times_\alpha = I^\times_\beta\) such that \(\merge{\alpha}{\beta} = \tau\).
For this, we first define \(\alpha\) and \(\beta\).
Set \(I^\times_\alpha \coloneqq I^\times_\tau\) and \(I^\times_\beta \coloneqq I^\times_\tau\).
Additionally, set \(I^\bot_\alpha \coloneqq \{i \in I^\bot_\tau \mid \enumerateTerminal{i} \subseteq Y_a\}\) to be the indices of solution subgraphs that are fully contained in \(G_a\).
Analogously set \(I^\bot_\beta\) and choose \(I^\top_\alpha\) and \(I^\top_\beta\) accordingly.
Consider \(i \in I^\times_\alpha\) and let \(\{F\} \coloneqq \inv\pi(Q_{s,i})\).
Set \(\mathcal{P}_{\alpha,i} \coloneqq \{V(K) \cap X_a \mid K \in \comp{F - q_i - (Y_b \setminus X_s)}\}\) to be the partitions of \(X_a\) connected inside \(G_b\) and set \(\mathcal{P}_{\beta, i}\) analogously.

We now show that \(\merge\alpha\beta = \tau\).
First, we show \(I^\bot_\alpha \cup I^\bot_\beta = I^\bot_\tau\).
As \(I^\bot_\alpha\) and \(I^\bot_\beta\) are subsets of \(I^\bot_\tau\), is suffices to show \(I^\bot_\alpha \cup I^\bot_\beta \supseteq I^\bot_\tau\).
Assume there is a \(i \in I^\bot_\tau \setminus (I^\bot_\alpha\cup I^\bot_\beta)\).
As \(\enumerateTerminal{i}\) is a terminal set in \(\mathscr{D}_{s,\tau}\) with positive demand and this instance is positive, we have \(\enumerateTerminal{i} \subseteq Y_s\).
Since \(i \notin I^\bot_\alpha\), there is a \(u \in \enumerateTerminal{i} \setminus Y_a = \enumerateTerminal{i} \cap (Y_b \setminus X_s)\).
Similarly, there is a \(v \in \enumerateTerminal{i}\cap(Y_a \setminus X_s)\).
Let \(F \in \inv\pi(\enumerateTerminal{i})\) and consider a \(uv\) path \(P\) in \(F\).
As we are considering a tree decomposition, we have \(N_G(Y_a) \subseteq X_s\) and \(N_G(Y_b) \subseteq X_s\).
Thus, the path \(P\) contains a vertex of \(X_s\) violating \Cref{item:tw-dp-bot-are-complete} of \Cref{def:dp-tw}.

To show \(\merge\alpha\beta = \tau\), it remains to show that for all \(i \in I^\times_\tau\), we have \(\mathcal{P}_{\tau, i} = \mathcal{P}_{\merge\alpha\beta, i}\).
First, we note that \(\bigcup \mathcal{P}_{\tau, i}=\bigcup \mathcal{P}_{\alpha, i}=\bigcup \mathcal{P}_{\beta, i} = \mathcal{P}_{\merge\alpha\beta, i}\).
Let \(\{F\} \coloneqq \inv\pi(Q_{s,i})\), \(P \in \mathcal{P}_{\tau, i}\), and \(u,v \in \mathcal{P}_{\tau,i}\).
There is a simple \(uv\)-path \(R\) in \(F - q_i\).
Let \(x_1, x_2, \dots, x_\ell\) be the sequence of vertices in \(X_s\) on \(R\).
For all \(i \in \natint{\ell - 1}\), the sub-path in \(R\) starting at \(x_i\) and ending at \(x_{i+1}\) only contains inner vertices in either \(Y_a \setminus X_s\) or \(Y_b \setminus X_s\).
Thus, there is a \(P \in \mathcal{P}_{\alpha, i} \cup \mathcal{P}_{\beta, i} = \mathcal{E}_i\) with \(x_i,x_{i+1} \in P\).
Therefore, in the hypergraph \(H_i\) there is a \(uv\)-path and there is a \(P' \in \mathcal{P}_{\merge\alpha\beta, i}\) with \(u,v \in P'\).
Applying this logic in reverse, we get that for all \(P \in \mathcal{P}_{\merge\alpha\beta, i}\) and \(u,v \in P\), there is a \(P' \in \mathcal{P}_{\tau, i}\) with \(u,v \in P'\).
This shows that \(\mathcal{P}_{\tau, i} = \mathcal{P}_{\merge\alpha\beta, i}\).

We now show that \(\alpha \in D(a)\).
The proof that \(\beta \in D(b)\) can be done analogously.
We now show that \Cref{item:tw-dp-started} of \Cref{def:dp-tw} is satisfied for \(\alpha\).
Let \(T \in \mathcal{T}\) with \(\emptyset \neq T \cap Y_a \subseteq T\cap Y_s\).
Let \(i \in \invEnumerateTerminal{T}\).
By \Cref{item:tw-dp-started} of \Cref{def:dp-tw}, we have \(i \in I^\bot_\tau \cup I^\times_\tau\).
If \(i \in I^\bot_\tau\), we know that \(i\in I^\bot_\alpha\cup I^\bot_\beta\).
If \(i \in I^\bot_\alpha\), we are done, so assume \(i \in I^\bot_\beta\).
Then, \(\enumerateTerminal{i} \subseteq Y_b\).
As \(T \cap Y_a \neq \emptyset\), we have \(T \cap X_s \neq \emptyset\).
This is however not possible by \Cref{item:tw-dp-bot-are-complete} of \Cref{def:dp-tw} and since \(I^\times_\tau = I^\times_\alpha\), we have \(\invEnumerateTerminal{T} \subseteq I^\bot_\alpha \cup I^\times_\alpha\).
Hence, \Cref{item:tw-dp-started} of \Cref{def:dp-tw} is satisfied.

Now, we provide a solution to \(\mathscr{D}_{a, \alpha}\) satisfying the additional requirements of \Cref{item:tw-dp-solution} in \Cref{def:dp-tw}.
Let \(i \in I^\bot_\alpha \subseteq I^\bot_\tau\) and consider \(F \in \inv\pi(\enumerateTerminal{i})\).
As \(\enumerateTerminal{i} \subseteq Y_a\), since \(V(F)\) is disjoint from \(X_s\) by \Cref{item:tw-dp-bot-are-complete} of \Cref{def:dp-tw}, and since \(N_{G_s}(Y_a) \subseteq X_s\), we have \(V(F) \subseteq Y_a \setminus X_s\).
So, we assign \(F\) to \(\enumerateTerminal{i}\) in our solution to \(\mathscr{D}_{a, \alpha}\) satisfying the demand of this terminal set as well as \Cref{item:tw-dp-bot-are-complete} of \Cref{def:dp-tw}.
Now, let \(i \in I^\times_\alpha = I^\times_\tau\) and consider \(\{F\} \coloneqq \inv\pi(Q_{s,i})\).
Set \(F^*_i\) equal to \(F - (Y_b \setminus X_s) + \{q_ix\}_{x \in V(F) \cap X_s}\), that is, we remove all vertices not contained in \(Y_a \cup \{q_i\}\) and add all edges between \(q_i\) and vertices of the bag used in \(F\).
To see that \(F^*_i\) is connected, note that all \(X_s\cap V(F^*_i)\) are adjacent to \(q_i\).
For any \(v \in V(F^*_i) \setminus (X_s \cup \{q_i\})\) there is a \(vq_i\)-path \(P\) in \(F\).
As \(N(q_i) \subseteq X_s\), there is an \(x \in X_s\) that appears first in \(P\).
As \(v \in Y_a\) and \(N_{G_{a,\alpha}}(Y_a \setminus X_s) \subseteq X_s\), there is no vertex contained in \(Y_b \setminus X_s\) between \(v\) and \(x\) on \(P\).
Thus, there is a \(vx\)-path in \(F^*_i\) and \(v\) is in the same connected component as \(x\) and by extension \(q_i\), proving that \(F^*_i\) has exactly one connected component.
Furthermore, \(Q_{a, i} = Q_{s,i} \setminus (Y_b \setminus X_s) \subseteq V(F^*_i)\).
Consequently, we assign \(F^*_i\) to \(Q_{a,i}\) in our solution to \(\mathscr{D}_{a,\alpha}\), satisfying the demand of \(Q_{a,\alpha}\) and concluding our solution to \(\mathscr{D}_{a,\alpha}\).

That this solution satisfies \Cref{item:tw-dp-bot-are-complete} of \Cref{def:dp-tw} follows directly from the fact that \((\mathcal{F}, \pi)\) satisfies this condition.
Also, for each \(i \in I^\times_\alpha\), it is clear that the solution subgraph \(F^*_i\) is the unique solution subgraph containing \(q_i\).
Additionally, we have \(F^*_i - q_i = F - q_i - (Y_b \setminus X_s)\) which shows that \Cref{item:tw-dp-crossing} of \Cref{def:dp-tw} is satisfied, showing that \(\alpha \in D(a)\).

Now, let \(\alpha \in D(a)\) and \(\beta \in D(b)\) with \(I^\times_\alpha = I^\times_\beta\) be given.
We now show that \(\tau \coloneqq\merge{\alpha}{\beta} \in D(s)\).
First, we show that \Cref{item:tw-dp-bot-are-complete} of \Cref{def:dp-tw} is satisfied.
Let \(T \in \mathcal{T}\) with \(T \cap Y_s \neq \emptyset\) and consider \(i\in\invEnumerateTerminal{T}\).
If \(T \cap Y_a \neq \emptyset\), we have \(i \in I^\bot_\alpha \cup I^\times_\alpha \subseteq I^\bot_\tau \cup I^\times_\tau\) since \(\alpha \in D(a)\).
Otherwise, \(T \cap Y_a = \emptyset\).
In this case, \(T \cap Y_b \neq \emptyset\) and \(i \in I^\bot_\beta \cup I^\times_\beta \subseteq I^\bot_\tau \cup I^\times_\tau\), showing that \Cref{item:tw-dp-bot-are-complete} of \Cref{def:dp-tw} is satisfied.

Next, we construct a solution to \(\mathscr{D}_{s,\tau}\) satisfying the additional requirements of \Cref{item:tw-dp-solution} in \Cref{def:dp-tw}.
For this, let \((\mathcal{F}_a, \pi_a)\) and \((\mathcal{F}_b, \pi_b)\) be solutions to \(\mathscr{D}_{a,\alpha}\) and \(\mathscr{D}_{b,\beta}\) satisfying the additional requirements of \Cref{item:tw-dp-solution} in \Cref{def:dp-tw}, respectively.
Let \(i \in I^\bot_\tau = I^\bot_\alpha \cup I^\bot_\beta\).
Assume without loss of generality that \(i \in I^\bot_\alpha\) and let \(F \in \inv\pi_a(\enumerateTerminal{i})\).
As \(F\) is completely contained in \(G_a[Y_a \setminus X_s]\), it is also completely contained in \(G_s[Y_s\setminus X_s]\) and we assign all such \(F\) to \(\enumerateTerminal{i}\) in \(\mathscr{D}_{s, \tau}\).
To see that this already satisfies the demand of \(\enumerateTerminal{i}\), note that the demand of \(\enumerateTerminal{i}\) in \(\mathscr{D}_{s,\tau}\) is \(\abs{\invEnumerateTerminal{i} \cap I^\bot_{\tau}} \leq \abs{\invEnumerateTerminal{i} \cap I^\bot_{\alpha}} + \abs{\invEnumerateTerminal{i} \cap I^\bot_{\beta}}\) and we assign \(\abs{\invEnumerateTerminal{i} \cap I^\bot_{\alpha}}\) solution subgraphs.
As \(\enumerateTerminal{i} \subseteq V(F) \subseteq Y_a \setminus X_a\), we have \(\enumerateTerminal{i} \cap Y_b \setminus X_b = \emptyset\).
Thus, \(\invEnumerateTerminal{i} \cap I^\bot_\beta = \emptyset\) and the demand of \(\enumerateTerminal{i}\) is satisfied in our solution.
Apply the analogous argument if \(i\in I^\bot_\beta\) to see that for all \(i \in I^\bot_\tau\) the demand of \(\enumerateTerminal{i}\) is satisfied.
Note that this construction directly satisfies \Cref{item:tw-dp-bot-are-complete} of \Cref{def:dp-tw} for our solution.

Let \(i \in I^\times_\tau = I^\times_\alpha = I^\times_\beta\), \(\{F_{a,i}\} \coloneqq \inv\pi_a(Q_{a,i})\), and \(\{F_{b,i}\} \coloneqq \inv\pi_b(Q_{b,i})\).
Consider \(F^*_i \coloneqq F_a \cup F_b\).
As both \(F_a\) and \(F_b\) are connected and contain \(q_i\), \(F^*_i\) is connected as well.
Additionally, \(Q_{s,i} = \{q_i\} \cup (\enumerateTerminal{i} \cap Y_s) = \{q_i\} \cup (\enumerateTerminal{i} \cap Y_a)\cup(\enumerateTerminal{i} \cap Y_b) = Q_{a,i} \cup Q_{b,i} \subseteq V(F^*_i)\).
So, we assign \(F^*_i\) to \(Q_{s,i}\) in our solution to \(\mathscr{D}_{s,\tau}\), satisfying the demand of \(Q_{s,i}\), and concluding the description of our solution to \(\mathscr{D}_{s,\tau}\).

It remains to show that \Cref{item:tw-dp-crossing} of \Cref{def:dp-tw} is satisfied.
For this, let \(i \in I^\times_\tau\).
For each \(P \in \mathcal{P}_{\alpha,i}\) the vertices in \(P\) are in a connected component of \(F_a\) and for each \(P \in \mathcal{P}_{\beta,i}\) the vertices in \(P\) are in a connected component of \(F_b\).
Thus, if there is a path in \(H_i\) between two vertices in \(X_s\), there is a path in \(F^*_i\) between these vertices.
Now, let \(K \in \comp{F^*_i - q_i}\), \(u,v \in V(K)\cap X_s\), and \(P\) be a \(uv\)-path in \(K\).
Consider the sequence of \(x_1, x_2,\dots, x_\ell\) of vertices in \(X_s\) on \(P\).
For each \(i \in \natint{\ell - 1}\), there is a component in either \(F^*_i - q_i - (Y_b \setminus X_s) = F_{a, i} - q_i\) of \(F^*_i - q_i - (Y_a \setminus X_s) = F_{b, i} - q_i\) containing both \(x_i\) and \(x_{i+1}\).
That is, there is a \(R \in \mathcal{E}_i\) with \(x_i, x_{i+1} \in R\).
Therefore, \(x_i\) and \(x_{i+1}\) are in the same component of \(H_i\); in particular, \(u\) and \(v\) are in the same component of \(H_i\), showing that \Cref{item:tw-dp-crossing} of \Cref{def:dp-tw} is satisfied and that \(\tau \in D(s)\).
\end{proof}

This gives us an easy and efficient way to compute the dynamic program for join nodes.

\begin{corollary}
\label{stmt:tw-dp-join-running-time}
The dynamic program for a join node with children \(a\) and \(b\) can be computed in time \(\O{1 + ({\Sigma_{d}} + w)\abs{D(a)}\abs{D(b)}}\) given that \(D(a)\) and \(D(b)\) are provided.
\end{corollary}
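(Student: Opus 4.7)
The plan is to read Lemma~\ref{stmt:tw-dp-join-correct} as a literal algorithmic recipe: that lemma identifies $D(s)$ with $\{\merge{\alpha}{\beta} \mid \alpha\in D(a),\ \beta\in D(b),\ I^\times_\alpha=I^\times_\beta\}$, so correctness of the natural enumerate-and-merge algorithm is immediate. All that remains is to execute the enumeration and each individual merge within the claimed time budget.

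For the enumeration I would first bucket the entries of $D(b)$ by the value of $I^\times_\beta$ via a hash on subsets of $\natint{\Sigma_d}$, so that each $\alpha\in D(a)$ can retrieve its unique compatible bucket in expected $\O{1}$ time. The number of pairs actually processed is therefore at most $\abs{D(a)}\abs{D(b)}$, and a final hash-based deduplication on the output tuples turns the multiset of produced merges into the set $D(s)$.

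To compute a single $\merge{\alpha}{\beta}$ I would form the new index sets $I^\bot_\alpha\cup I^\bot_\beta$ and $I^\top_\alpha\setminus I^\bot_\beta$ in one pass over $\natint{\Sigma_d}$, and then for each $i\in I^\times_\alpha$ produce $\mathcal{K}_i$ as the connected-component partition of the hypergraph on $X_s$ (of size at most $w+1$) with hyperedge set $\mathcal{P}_{\alpha,i}\cup\mathcal{P}_{\beta,i}$, using a Union--Find over $X_s$ reset per $i$. Since $\abs{X_s}\le w+1$, the Union--Find operations at each non-trivial $i$ cost $\O{w}$ and the equivalence classes restricted to $\bigcup(\mathcal{P}_{\alpha,i}\cup\mathcal{P}_{\beta,i})$ can be read off in the same time.

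The step I expect to demand the most care is the tightness of the per-pair cost: $\O{\Sigma_d+w}$ rather than the naive $\O{\Sigma_d\cdot w}$. The factor of $w$ should be absorbed by representing each $\mathcal{P}_{\alpha,i}$ compactly as a colouring of $X_s$ together with an $\alpha$-level list of the non-trivial indices, so that trivial indices in the $i$-loop are emitted in $\O{1}$ time each while the non-trivial Union--Find work across all $i$ of a single pair stays bounded by $\O{w}$, charged against the $\O{w}$ room inside $X_s$. Combining this with the $\O{\Sigma_d}$ index-set pass and multiplying by the at most $\abs{D(a)}\abs{D(b)}$ compatible pairs, together with the one-off $\O{\abs{D(a)}+\abs{D(b)}}$ bucketing cost absorbed in the additive constant, yields the claimed $\O{1+(\Sigma_d+w)\abs{D(a)}\abs{D(b)}}$ bound.
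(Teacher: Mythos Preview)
Your enumerate-and-merge algorithm is exactly what the paper has in mind; the corollary is stated without proof, and Lemma~\ref{stmt:tw-dp-join-correct} is indeed the recipe. Correctness and the outer $\abs{D(a)}\abs{D(b)}$ factor are fine.

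The per-pair cost analysis, however, has a gap. Your claim that the ``non-trivial Union--Find work across all $i$ of a single pair stays bounded by $\O{w}$, charged against the $\O{w}$ room inside $X_s$'' does not hold: different indices $i\in I^\times_\alpha$ may each carry partitions $\mathcal{P}_{\alpha,i},\mathcal{P}_{\beta,i}$ covering all of $X_s$, so the same $w{+}1$ cells of ``room'' are reused independently by every $i$. Nothing in the definition of the tuples prevents $\abs{I^\times_\alpha}=\Sigma_d$ with every $\mathcal{P}_{\alpha,i}$ a two-block partition of the full bag; then the total Union--Find work is $\Theta(\Sigma_d\, w)$ per pair, not $\O{w}$. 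The charging argument confuses a per-index bound with a global one.

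One clean fix is to hash-cons the at most $(w{+}2)^{w+1}$ partitions of subsets of $X_s$ and precompute, once, a lookup table sending a pair of partitions to the connected-component partition of their union; each per-index merge is then $\O{1}$ and the per-pair cost drops to $\O{\Sigma_d}$, with the $2^{\O{w\log w}}$ precomputation absorbed into the final bound. Alternatively, just accept $\O{\Sigma_d\, w}$ per pair: this still yields the $2^{\O{\Sigma_d w\log w}}$ bound of Theorem~\ref{stmt:gstp-fpt-tw-D}, which is all the corollary is ever used for.
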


\paragraph*{Forget node} Let \(s \in V(S)\) be a forget node, that is it has a child \(c\) and there is a \(v \in X_c\) with \(X_s = X_c\setminus \{v\}\).
Let \(E_v \coloneqq \{vx\}_{x \in N(v) \cap X_s}\) to be the set of edges incident to \(v\) and a vertex in \(X_s\).
Then, we have that \(G_s = G_c + E_v\).
Let \(\gamma \in D(c)\).
Consider a solution \((\mathcal{F}, \pi)\) to \(\mathscr{D}_{c, \gamma}\) satisfying the additional requirements of \Cref{item:tw-dp-solution} in \Cref{def:dp-tw}.
We need to distribute the additional edges among the solution subgraphs which cross \(X_c\).
Consider a function \(\lambda \colon E_v \to I^\times_\gamma \cup \{\blockedIndex\}\) where \(\blockedIndex\) is a symbol indicating that the edge is not used and for all \(i \in I^\times_\gamma\), we add the edges \(\inv\lambda(i)\) to \(\inv\pi(Q_{c,i})\).

To see how this effects the solution state, let \(i \in I^\times_\tau\) and \(\mathcal{E}_i\coloneqq \inv\lambda(i)\cup \mathcal{P}_i\).
Consider the hypergraph \(H_i \coloneqq (\{v\} \cup \mathcal{E}_i, \mathcal{E}_i)\) this is, intuitively speaking, the hypergraph on the vertices of \(X_c\) with the edges incident to \(v\) designated for this solution subgraph combined with the edges that correspond to connections already completed in \(G_c\).
Now, we set \(\mathcal{P}_i'\coloneqq \{V(K) \cap X_s \mid K \in \comp{H_i}\}\) to be the vertex sets of the components of \(H_i\) with \(v\) removed.
Set \(C_{\gamma,\lambda} \coloneqq \{i \in I^\times_\gamma \mid \mathcal{P}_i' = \{\emptyset\}\}\) to be the indices that cross \(X_c\) but not \(X_s\) under this edge distribution.
We now define the function \(\distribute\gamma\lambda \coloneqq (I^\bot_\gamma \cup C_{\gamma, \lambda}, I^\times_\gamma \setminus C_{\gamma,\lambda}, I^\top_\gamma, (\mathcal{P}_i')_{i \in I^\times_\gamma \setminus C_{\gamma,\lambda}})\).
This function describes the state of the dynamic program after forgetting \(v\) and distributing the newly available edges among the solution subgraphs according to \(\lambda\).
When distributing edges, we must watch out not to disconnect solution subgraphs that are not yet finished.

\begin{lemma}
\label{stmt:tw-dp-forget-correct}
We have \[
D(s) = \bigcup_{\substack{\gamma \in D(c),\\\lambda \colon E_v \to I^\times_\gamma \cup \{\blockedIndex\}\!\colon\\\forall i \in C_{\gamma,\lambda}\colon \enumerateTerminal{i}\subseteq Y_c}} \{\distribute{\gamma}{\lambda}\}.
\]
\end{lemma}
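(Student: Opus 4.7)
The plan is to establish both set inclusions, in the same style as the introduce and join cases already proved. The key structural fact to exploit is that $G_s$ differs from $G_c$ only by the edges $E_v$, so the function $\lambda$ records which of these new edges is consumed by which crossing subgraph; the condition $\enumerateTerminal{i} \subseteq Y_c$ for indices in $C_{\gamma,\lambda}$ is what allows such an index to transition from $I^\times_\gamma$ to $I^\bot_\tau$ while remaining consistent with Condition~(a) of \Cref{def:dp-tw}.

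For the $\supseteq$ inclusion, I would start with $\gamma \in D(c)$, a witnessing solution $(\mathcal{F},\pi)$ to $\mathscr{D}_{c,\gamma}$, and a valid $\lambda$. For each $i \in I^\times_\gamma$, setting $F_i \coloneqq \inv\pi(Q_{c,i})$, I would form $F_i'$ by keeping the edges $q_i x$ for $x \in V(F_i) \cap X_s$, deleting any occurrence of $q_i v$, and adding the real edges $\inv\lambda(i) \subseteq E_v$. If $i \in C_{\gamma,\lambda}$ (so that $\mathcal{P}_i' = \{\emptyset\}$), the vertex $q_i$ is redundant and I would drop it, reassigning $F_i'$ to the terminal set $\enumerateTerminal{i}$; the hypothesis $\enumerateTerminal{i}\subseteq Y_c$ ensures that the entire terminal set is still covered by $V(F_i')$. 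Otherwise $q_i$ is retained and $F_i'$ is assigned to $Q_{s,i}$. Subgraphs corresponding to $i \in I^\bot_\gamma$ are carried across unchanged. The verification that Condition~(b) of \Cref{def:dp-tw} holds for the new solution reduces to showing that the components of $F_i' - q_i$ on $X_s$ are exactly the blocks of $\mathcal{P}_i'$, which is a direct consequence of the hypergraph definition $H_i = (\{v\} \cup \mathcal{E}_i, \mathcal{E}_i)$.

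For the $\subseteq$ inclusion, I would take $\tau \in D(s)$ together with a witness $(\mathcal{F},\pi)$ and first fix, for each $i \in I^\bot_\tau \cup I^\times_\tau$, a distinct subgraph $F_i$ representing that index (unique for $I^\times_\tau$ by Condition~(b), and chosen arbitrarily among the $|\invEnumerateTerminal{\enumerateTerminal{i}} \cap I^\bot_\tau|$ subgraphs assigned to $\enumerateTerminal{i}$ for $i \in I^\bot_\tau$). I would then set $C \coloneqq \{i \in I^\bot_\tau \mid v \in V(F_i)\}$, declare $I^\bot_\gamma = I^\bot_\tau \setminus C$, $I^\times_\gamma = I^\times_\tau \cup C$, $I^\top_\gamma = I^\top_\tau$, and define $\lambda(vx)$ to be the unique index $i$ such that $vx \in E(F_i)$, or $\blockedIndex$ if the edge is unused; edge-disjointness makes this well defined, and $v \in V(F_i)$ for $i \in C$ ensures the image of $\lambda$ lies in $I^\times_\gamma \cup \{\blockedIndex\}$. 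For each $i \in I^\times_\gamma$, I would reconstruct the $c$-view by removing the edges of $E_v \cap E(F_i)$ and, if $i \in C$, reintroducing $q_i$ with an edge to $v$; the partition $\mathcal{P}_{\gamma,i}$ is read off as the intersection with $X_c$ of the components of this reconstructed subgraph minus $q_i$. A routine check then gives $\distribute{\gamma}{\lambda} = \tau$ and $\gamma \in D(c)$.

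The hard part will be the connectivity bookkeeping tying $\mathcal{P}_{\gamma,i}$ to $\mathcal{P}_i'$ through the hypergraph $H_i$. In the forward direction I would need to argue that inserting $\inv\lambda(i)$ while removing the $q_i v$ edge preserves the connection pattern on $X_s$: any two blocks of $\mathcal{P}_{\gamma,i}$ joined via $v$ in $H_i$ become joined through an $E_v$-edge in $F_i'$, and conversely any two $X_s$-vertices that end up in the same component of $F_i' - q_i$ lie in the same component of $H_i$. The backward direction runs symmetrically, replacing each $E_v$-edge incident to $v$ by the two-edge path through $q_i$ in $G_{c,\gamma}$. Once this correspondence is set up cleanly, Condition~(a) for the $\supseteq$ direction follows because $\mathcal{P}_i' = \{\emptyset\}$ combined with $\enumerateTerminal{i} \subseteq Y_c$ forces $V(F_i') \subseteq Y_s \setminus X_s$, and Condition~(a) for the $\subseteq$ direction is inherited from the original witness plus the observation that any $i \in C$ must have $V(F_i) \cap X_s = \emptyset$ by the condition already satisfied by $\tau$.
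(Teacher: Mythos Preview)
Your overall strategy matches the paper's: in both directions you translate between witnesses for $\mathscr{D}_{c,\gamma}$ and $\mathscr{D}_{s,\tau}$ by recording how the new edges $E_v$ are distributed (via $\lambda$) and by swapping the artificial $q_i$-edges between the two host graphs. The paper carries out exactly this plan, so conceptually you are on the same track.

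There is, however, a genuine connectivity gap in your $\supseteq$ construction. You build $F_i'$ from $F_i = \inv\pi(Q_{c,i})$ by deleting $q_iv$, keeping only the edges $q_ix$ with $x \in V(F_i)\cap X_s$, and adding $\inv\lambda(i)$. Take the case $\mathcal{P}_{\gamma,i}=\{\{v\}\}$, so that $V(F_i)\cap X_c=\{v\}$ and the sole $q_i$-edge in $F_i$ is $q_iv$, together with $\inv\lambda(i)=\{vx\}$ for some $x\in X_s$. Here $\mathcal{P}_i'=\{\{x\}\}\neq\{\emptyset\}$, so $i\notin C_{\gamma,\lambda}$ and you must assign $F_i'$ to $Q_{s,i}$. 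But in your $F_i'$ the vertex $q_i$ is isolated: you removed $q_iv$, there is no $q_i$-edge to keep (since $V(F_i)\cap X_s=\emptyset$), and the new edge $vx$ does not touch $q_i$. Thus $F_i'$ is disconnected and is not a valid solution subgraph. The paper handles this by adding $q_ix$ for \emph{every} $x\in X_s$; the cleanest fix is to add $q_ix$ for each $x\in X_s$ that appears as an endpoint of some edge in $\inv\lambda(i)$ (equivalently, for each $x$ in $V(H_i)\cap X_s$). A symmetric issue hits your $\subseteq$ direction: for $i\in I^\times_\tau$ with $v\in V(F_i)$, removing $E_v\cap E(F_i)$ can disconnect the component of $v$ from $q_i$, so you must also add the edge $q_iv$ there (the paper adds $q_ix$ for all $x\in X_c$). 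Once these $q_i$-edge adjustments are made, the rest of your outline---in particular the hypergraph bookkeeping linking $\mathcal{P}_{\gamma,i}$ to $\mathcal{P}_i'$---goes through exactly as you describe and as the paper proves.
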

\begin{proof}
Let \(\tau \in D(s)\) and let \((\mathcal{F},\pi)\) be a solution to \(\mathscr{D}_{s, \tau}\) satisfying the additional requirements of \Cref{item:tw-dp-solution} in \Cref{def:dp-tw}.
We now show that there is are \(\gamma \in D(c)\) and \(\lambda \colon E_v \to I^\times_\gamma \cup \{\blockedIndex\}\) such that \(\distribute{\gamma}{\lambda} = \tau\) and that for all \(i \in C_{\gamma, \lambda}\), we have \(\enumerateTerminal{i} \subseteq Y_c\).
For this, we first define \(\gamma\) and \(\lambda\).

Denote with \(\mathcal{F}_v \coloneqq \{F \in \mathcal{F}\mid \{v\} = V(F)\cap X_c\}\) the set of solution subgraphs that contain \(v\) but none of \(X_s\).
For all \(F \in \mathcal{F}_v\), the demand of \(\pi(F)\) is exactly \(\abs{\invEnumerateTerminal{\pi(F)}\cap I^\bot_\tau}\).
So, we are able to choose a function \(\sigma \colon \mathcal{F}_v \to I^\bot_\tau\) such that for all \(F \in \mathcal{F}_v\), we have \(\pi(F) = \enumerateTerminal{\sigma(F)}\).
We now set \(I^\bot_\gamma \coloneqq I^\bot_\tau \setminus \image{\sigma}\), \(I^\times_\gamma\coloneqq I^\times_\tau \cup \image{\sigma}\), and \(I^\top_\gamma \coloneqq I^\top_\tau\).
Let \(i \in I^\times_\tau\).
We set \(\mathcal{P}_{\gamma,i} \coloneqq \{V(K) \cap X_c \mid K \in \comp{\inv\pi(Q_{s,i}) - E_v - q_i}\}\) to be the connected components of the solution subgraph assigned to \(Q_{s,i}\) with \(q_i\) and all edges adjacent to \(v\) removed.
For \(i \in \image{\sigma}\), we set \(\mathcal{P}_{\gamma, i} \coloneqq \{\{v\}\}\) to be the partitioning of \(X_c\) only containing the partition \(\{v\}\).
To define \(\alpha\), consider any \(e \in E_v\).
If there is a \(F \in \mathcal{F}_v\) with \(e \in E(F)\), we set \(\alpha(e) \coloneqq \sigma(e)\).
If there is a \(F \in \mathcal{F} \setminus \mathcal{F}_v\) with \(e \in E(F)\), let \(i \in I^\times_\tau\) be the unique \(i\) with \(q_i \in V(F)\).
We set \(\alpha(e) \coloneqq i\).
Otherwise, we set \(\alpha(e) \coloneqq \blockedIndex\).

To show \(\distribute{\gamma}{\lambda} = \tau\), we first note that \(C_{\gamma,\lambda} = \image{\sigma}\).
With this knowledge, we see that \(I^\bot_{\distribute{\gamma}{\lambda}} = I^\bot_\tau\), \(I^\times_{\distribute{\gamma}{\lambda}} = I^\times_\tau\), and \(I^\top_{\distribute{\gamma}{\lambda}} = I^\top_\tau\).
Now, let \(i \in I^\times_\tau\).
By definition, we know that \(\bigcup \mathcal{P}_{\distribute{\gamma}{\lambda}, i}= V(\inv\pi(Q_{s,i})) \cap X_s\); by \Cref{item:tw-dp-crossing} of \Cref{def:dp-tw}, we know \(V(\inv\pi(Q_{s,i})) \cap X_s = \bigcup \mathcal{P}_{\tau, i}\), yielding \(\bigcup \mathcal{P}_{\distribute{\gamma}{\lambda}, i} = \bigcup \mathcal{P}_{\tau, i}\).
It remains to show for \(x,y \in \bigcup \mathcal{P}_{\tau, i}\) that there is a \(P \in \mathcal{P}_{\tau,i}\) with \(x,y \in P\) if and only if there is a \(P' \in \mathcal{P}_{\distribute{\gamma}\lambda,i}\) with \(x,y \in P'\).
Assume there is a \(P \in \mathcal{P}_{\tau, i}\) with \(x,y \in P\).
By \Cref{item:tw-dp-crossing} of \Cref{def:dp-tw}, this is exactly the case if there is a \(K \in \comp{\inv\pi(Q_{s,i}) - q_i}\) with \(x,y \in V(K) \cap X_s\).
So, there is a path \(xy\)-path \(P\) in \(\inv\pi(Q_{s,i})\) not containing the vertex \(q_i\).
Consider the vertices \(q_1, q_2,\dots, q_\ell\) of \(P\) in \(X_c\) in order.
For all \(k \in \natint{\ell - 1}\), if \(q_k\) and \(q_{k+1}\) are directly connected by an edge, then, as \(X_s\) is an independent set in \(G_s\), either \(q_k = v\) or \(q_{k+1} = v\).
By definition of \(\alpha\), we have \(\alpha(q_kq_{k+1}) = i\); meaning \(\{q_k, q_{k+1}\} \in \mathcal{E}_i\).
Otherwise, \(q_k\) and \(q_{k+1}\) are connected by a path in \(\inv\pi(Q_{s,i}) - q_i - E_v\); so, there is a \(K \in \comp{\inv\pi(Q_{s,i}) - q_i - E_v}\) with \(q_k, q_{k+1} \in V(K) \cap X_c\).
As \(V(K)\cap X_c \in \mathcal{E}_i\), in either case \(q_k\) and \(q_{k+1}\) are adjacent by a hyperedge in \(\mathcal{E}_i\).
Thus, \(x\) and \(y\) are in the same connected component of \(H_i\) and there is a \(P' \in \mathcal{P}_{\distribute{\gamma}{\alpha},i}\) with \(x,y \in P'\).
Applying these arguments in reverse shows the claimed equivalence and, by extension, \(\distribute{\gamma}{\alpha} = \tau\).

Now, let \(i \in C_{\gamma, \lambda} = \image{\sigma}\).
If \(i \in I^\times_\tau\), let \(\{F\} \coloneqq \inv\pi(Q_{s,i})\).
We have that \(v \in V(F)\) and that \(q_i \in V(F)\).
Additionally, \(V(F)\) contains no vertex in \(X_s\).
As \(N_{G_{s,\tau}}(q_i) = X_s\), there is no \(vq_i\)-path contained in \(F\) and so \(F\) is disconnected, which is a contradiction.
Thus, \(i \in I^\bot_\tau\) and \(\enumerateTerminal{i} \subseteq V(G_s) = Y_s = Y_c\).

It remains to show that \(\gamma \in D(c)\).
We now show \Cref{item:tw-dp-started} and \Cref{item:tw-dp-solution} of \Cref{def:dp-tw} separately for \(\gamma\).
First, consider any \(T \in \mathcal{T}\) with \(T \cap Y_c \neq \emptyset\).
We know that \(Y_c = Y_s\); so, by \Cref{item:tw-dp-started} of \Cref{def:dp-tw}, we know that \(\invEnumerateTerminal{T} \subseteq I^\bot_{\tau} \cup I^\times_\tau\).
Noting that \(I^\bot_\gamma \cup I^\times_\gamma = I^\bot_\tau \cup I^\times_\tau\), we get \(\invEnumerateTerminal{T} \subseteq I^\bot_{\gamma} \cup I^\times_\gamma\). Hence \Cref{item:tw-dp-started} of \Cref{def:dp-tw} is satisfied for \(\gamma\).

To show \Cref{item:tw-dp-solution} of \Cref{def:dp-tw} for \(\gamma\), we construct a solution to \(\mathscr{D}_{c, \gamma}\) satisfying \Cref{item:tw-dp-bot-are-complete,item:tw-dp-crossing} of \Cref{def:dp-tw}.
Let \(i \in I^\bot_\gamma\).
Consider \(F \in \inv\pi(\enumerateTerminal{i}) \setminus \mathcal{F}_v\).
We notice that \(F\) is completely contained in \(G_c\).
We assign all such \(F\) to \(\enumerateTerminal{i}\).
This amounts to \(\abs{\inv\pi(\enumerateTerminal{i})} - \abs{\mathcal{F}_v \cap \inv\pi(F)} = \abs{\inv\pi(\enumerateTerminal{i})} - \abs{\inv\sigma(\enumerateTerminal{i})} = d_{\mathscr{D}_{s, \tau}}(\enumerateTerminal{i}) - \abs{\inv\sigma(\enumerateTerminal{i})}\) solution subgraphs that get assigned to \(\enumerateTerminal{i}\).
The demand of \(\enumerateTerminal{i}\) is \(\abs{\invEnumerateTerminal{\enumerateTerminal{i}} \cap I^\bot_\gamma} = \abs{\invEnumerateTerminal{\enumerateTerminal{i}} \cap (I^\bot_\tau \setminus \image{\sigma})}\).
As \(\image{\sigma} \subseteq I^\bot_\tau\), we have \(d_{\mathscr{D}_{c,\gamma}}(\enumerateTerminal{i}) = \abs{\invEnumerateTerminal{\enumerateTerminal{i}} \cap I^\bot_\tau} - \abs{\invEnumerateTerminal{\enumerateTerminal{i}} \cap \image{\sigma}} = d_{\mathscr{D}_{s, \tau}}(\enumerateTerminal{i}) - \abs{\invEnumerateTerminal{\enumerateTerminal{i}} \cap \image{\sigma}}\).
Additionally, \(\inv\sigma(\enumerateTerminal{i}) \subseteq \invEnumerateTerminal{\enumerateTerminal{i}}\) implies that \(\inv\sigma(\enumerateTerminal{i}) = \invEnumerateTerminal{\enumerateTerminal{i}} \cap \image{\sigma}\).
Thus, the demand of \(\enumerateTerminal{i}\) is satisfied by assigning all such \(F\).

Now, let \(i \in I^\times_\gamma\).
If \(i \in \image{\sigma}\).
Let \(\{F\} \coloneqq \inv\sigma(i)\) and set \(F^*_i \coloneqq F + vq_i\), which is connected as \(v \in V(F)\).
By choice of \(\sigma\), we have \(\pi(F) = \enumerateTerminal{i}\).
Thus, \(Q_{c,i} = \{q_i\} \cup (\enumerateTerminal{i} \cap Y_c) \subseteq \{q_i\} \cup \pi(F) \subseteq V(F^*_i)\).
Consequently, we assign \(F^*_i\) to \(Q_{c,i}\) in our solution.
If \(i \notin \image{\sigma}\), we have \(i \in I^\times_\tau\).
Let \(\{F\} \coloneqq \inv\pi(Q_{s,i})\).
We set \(F^*_i\coloneqq F - E_v + \{xq_i \mid x \in X_c\}\).
If \(F - E_v\) is disconnected, for every \(K \in \comp{F - E_v}\) there is an \(x \in V(K)\) that incident to an edge \(e \in E_v\).
As \(E_v\) only contains edges inside \(X_c\), we have \(x \in X_c\).
Thus, \(V(K) \cap X_c \neq \emptyset\) and in \(F^*_i\), every connected component contains \(q_i\) showing that \(F^*_i\) is connected.
By \Cref{item:tw-dp-solution} of \Cref{def:dp-tw}, this \(F^*_i\) is edge-disjoint from all \(\{F^*_j\}_{j \in I^\times_\gamma}\) and all other already assigned solution subgraphs.
Additionally, notice that \(Q_{c,i} \subseteq Q_{s,i} \subseteq V(F) \subseteq V(F^*_i)\); thus, we assign \(F^*_i\) to \(Q_{c,i}\) in our solution to \(\mathscr{D}_{c,\gamma}\) completing the description of our solution description.

We finally show that \Cref{item:tw-dp-bot-are-complete,item:tw-dp-crossing} of \Cref{def:dp-tw} are satisfied.
That \Cref{item:tw-dp-bot-are-complete} of \Cref{def:dp-tw} is satisfied directly follows from the fact that \((\mathcal{F}, \pi)\) satisfies this condition with respect to \(\tau\) and \(s\).
Now, let \(i \in I^\times_\gamma\).
It follows by construction and by the fact that \(\tau\) satisfies \Cref{item:tw-dp-crossing} of \Cref{def:dp-tw}, that \(F^*_i\) is the unique solution subgraph containing the vertex \(q_i\).
It remains to show that \(\mathcal{P}_{\gamma,i} = \{V(K) \cap X_c \mid K \in \comp{F^*_i - q_i}\}\).
If \(i \in \image{\sigma}\), let \(\{F\} \coloneqq \inv\sigma(i)\).
As \(F \in \mathcal{F}_v\), we have \(V(F) \cap X_c = \{v\}\).
So, \(\{V(K) \cap X_c \mid K \in \comp{F^*_i - q_i}\} = \{\{v\}\} = \mathcal{P}_{\gamma,i}\).
If \(i \notin \image{\sigma}\), we have that \(i \in I^\times_\tau\).
By definition, we have \(\mathcal{P}_{\gamma, i} = \{V(K) \cap X_c \mid K \in \comp{F^*_i - E_v - q_i}\}\).
Since, \(F^*_i = F - E_v + \{xq_i\mid x \in X_c\}\) there is no edge of \(E_v\) contained in \(F^*_i\); meaning that \(F^*_i - E_v -q_i = F^*_i - q_i\), which shows the desired equality and that \(\gamma \in D(c)\).

Now, let \(\gamma \in D(c)\) and \(\alpha \colon E_v \to I^\times_\gamma \cup \{\blockedIndex\}\) be given, such that for all \(i \in C_{\gamma, \lambda}\) we have that \(\enumerateTerminal{i} \subseteq Y_c\).
To show \(\tau \coloneqq\distribute{\gamma}{\alpha} \in D(s)\), we show \Cref{item:tw-dp-started,item:tw-dp-solution} of \Cref{def:dp-tw} separately.
First, let \(T \in \mathcal{T}\) with \(T \cap Y_s \neq \emptyset\) be given.
By \Cref{item:tw-dp-started} of \Cref{def:dp-tw} applied to \(\gamma\), we know that \(\invEnumerateTerminal{T} \subseteq I^\bot_\gamma \cup I^\times_\gamma\).
As \(I^\bot_\gamma \cup I^\times_\gamma = I^\bot_\tau \cup I^\times_\tau\), this shows \(\invEnumerateTerminal{T} \subseteq I^\bot_\tau \cup I^\times_\tau\).

Now, we provide a solution to \(\mathscr{D}_{s,\tau}\) satisfying the additional requirements of \Cref{item:tw-dp-solution} in \Cref{def:dp-tw}.
For this, let \((\mathcal{F}, \pi)\) be a solution to \(\mathscr{D}_{c, \gamma}\) satisfying the additional requirements of \Cref{item:tw-dp-solution} in \Cref{def:dp-tw}.
First, let \(i \in I^\bot_\gamma\) and consider \(F \in \inv\pi(\enumerateTerminal{i})\).
As \(F\) is completely contained in \(G_c\), it is also completely contained in \(G_s\) and we assign \(F\) to \(\enumerateTerminal{i}\) in our solution for \(\mathscr{D}_{s,\tau}\).
Now, let \(i \in C_{\gamma,\lambda}\) and set \(\{F\} \coloneqq \inv\pi(Q_{c,i})\).
By \Cref{item:tw-dp-crossing} of \Cref{def:dp-tw} and choice of \(C_{\gamma, \lambda}\), we know that \(V(F)\) is disjoint from \(X_s\).
Thus, \(F - q_i\) is contained in \(G_s\).
Since \(\enumerateTerminal{i} \subseteq Y_c\), we have \(\enumerateTerminal{i} = \enumerateTerminal{i} \cap Y_c = Q_{c,i} \setminus \{q_i\} \subseteq V(F - q_i)\).
Consequently, we assign \(F - q_i\) to \(\enumerateTerminal{i}\).

Now, let \(i \in I^\bot_\tau\).
We assign \(d_{\mathscr{D}_{c, \gamma}}(\enumerateTerminal{i}) + \abs{\invEnumerateTerminal{\enumerateTerminal{i}} \cap C_{\gamma, \lambda}}\) solution subgraphs to \(\enumerateTerminal{i}\) in our solution to \(\mathscr{D}_{s,\tau}\).
As \(d_{\mathscr{D}_{c, \gamma}}(\enumerateTerminal{i}) = \abs{\invEnumerateTerminal{\enumerateTerminal{i}} \cap I^\bot_\gamma}\) and since \(I^\bot_\gamma\) is disjoint from \(C_{\gamma, \lambda}\), we assign \(\abs{\invEnumerateTerminal{\enumerateTerminal{i}} \cap (I^\bot_\gamma \cup C_{\gamma,\lambda})} = \abs{\invEnumerateTerminal{\enumerateTerminal{i}} \cap I^\bot_\tau} = d_{\mathscr{D}_{s, \tau}}(\enumerateTerminal{i})\) solution subgraphs to \(\enumerateTerminal{i}\) satisfying its demand.

Finally, consider \(i \in I^\times_\tau = I^\times_\gamma \setminus C_{\gamma,\lambda}\).
Let \(\{F\} \coloneqq \inv\pi(Q_{c,i})\) and set \(F^*_i \coloneqq F + \inv\lambda(i) + \{xq_i\mid x \in X_s\}\).
Since \(i \notin C_{\gamma, \lambda}\), there is an \(x \in X_s \cap V(F)\) and so \(F^*_i\) is connected.
By \Cref{item:tw-dp-crossing} of \Cref{def:dp-tw}, we know that all \(\{F^*_j\}_{j \in I^\times_\tau}\) are edge-disjoint from each other and all previously assigned solution subgraphs.
Additionally, \(Q_{c,i} = Q_{s,i} \subseteq V(F) \subseteq V(F^*_i)\); so, we assign \(F^*_i\) to the terminal set \(Q_{s,i}\) concluding the description of our solution to \(\mathscr{D}_{s,\tau}\).
By construction this solution already satisfies \Cref{item:tw-dp-bot-are-complete} of \Cref{def:dp-tw}.

We also see that \(F^*_i\) is the unique solution subgraph containing \(q_i\).
It remains to show that \(\{V(K) \cap X_s \mid K \in \comp{F^*_i - q_i}\}= \mathcal{P}_{\distribute{\gamma}{\alpha}, i}\).
First, we see that \(\bigcup_{K \in \comp{F^*_i - q_i}} V(K) \cap X_s = V(F^*_i) \cap X_s\) and that \(\bigcup \mathcal{P}_{\distribute{\gamma}{\alpha},i} = \{V(K) \cap X_s \mid K \in \comp{H_i}\} = X_s \cap (\{v\} \cup \{x \mid vx \in \inv\lambda(i)\} \cup \bigcup \mathcal{P}_{\gamma,i})\).
The vertices \(V(F^*_i) \cap X_s\) are exactly the endpoints of all \(\inv\lambda(i)\) which are not \(v\) and the vertices \(V(F) \cap X_s\); which, by \Cref{item:tw-dp-crossing} of \Cref{def:dp-tw}, are exactly \(\bigcup\mathcal{P}_{\gamma,i}\).
Therefore, \(V(F^*_i) \cap X_s = \bigcup \mathcal{P}_{\distribute{\gamma}{\alpha},i}\).
We now show that for all \(x,y \in V(F^*_i) \cap X_s\) there is a \(K \in \comp{F^*_i - q_i}\) with \(x,y \in V(K)\) if and only if there is a \(P \in \mathcal{P}_{\distribute{\gamma}\alpha, i}\) with \(x,y \in P\), completing this proof.
Consider \(x,y \in V(F^*_i) \cap X_s\) such that there is a \(K \in \comp{F^*_i - q_i}\) with \(x,y \in V(K)\).
This is exactly the case when there is a simple \(xy\)-path \(P\) in \(F^*_i\) not containing \(q_i\).
Denote with \(x_1, x_2, \dots, x_\ell\) the vertices of \(X_c\) on \(P\) in order.
For all \(k \in \natint{\ell - 1}\), either \(x_k\) and \(x_{k+1}\) are directly connected by an edge in \(\inv\lambda(i)\), or there is a path connecting \(x_k\) and \(x_{k+1}\) in \(F\).
Therefore, \(x_k\) and \(x_{k+1}\) are adjacent in \(H_i\), meaning that \(x,y\) are in the same connected component of \(H_i\).
Applying, this reasoning in reverse shows that \(\{V(K) \cap X_s \mid K \in \comp{F^*_i - q_i}\}= \mathcal{P}_{\distribute{\gamma}{\alpha}, i}\).
\end{proof}

The last lemma allows us to easily and efficiently compute \(D(s)\) based on \(D(c)\).

\begin{corollary}
\label{stmt:tw-dp-forget-running-time}
The dynamic program for a forget node with child \(c\) can be computed in time \(\O{1 + ({\Sigma_{d}} + w)({\Sigma_{d}} + 1)^w\abs{D(c)}}\) given that \(D(c)\) is provided.
\end{corollary}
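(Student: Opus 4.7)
The plan is to realize the set characterization of \Cref{stmt:tw-dp-forget-correct} algorithmically: enumerate every pair $(\gamma, \lambda)$ appearing on its right-hand side, compute $\distribute{\gamma}{\lambda}$, verify the side condition $\enumerateTerminal{i} \subseteq Y_c$ for every $i \in C_{\gamma, \lambda}$, and insert the surviving entries into $D(s)$. The outer loop ranges over $\gamma \in D(c)$; the inner loop ranges over the functions $\lambda \colon E_v \to I^\times_\gamma \cup \{\blockedIndex\}$. Since $|E_v| \leq |X_s| \leq w$ and the codomain has size at most $\Sigma_d + 1$, the inner loop performs at most $(\Sigma_d + 1)^w$ iterations, yielding $|D(c)|(\Sigma_d + 1)^w$ tuples in total.

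For each tuple $(\gamma, \lambda)$, assembling $\distribute{\gamma}{\lambda}$ reduces by its definition to computing, for every $i \in I^\times_\gamma$, the connected components (restricted to $X_s$) of the hypergraph $H_i$ on vertex set $\{v\} \cup \bigcup \mathcal{E}_i$ with hyperedges $\mathcal{E}_i = \inv{\lambda}(i) \cup \mathcal{P}_{\gamma, i}$. Any index $i$ that $\lambda$ does not touch and whose partition support omits $v$ satisfies $\mathcal{P}_i' = \mathcal{P}_{\gamma, i}$ and can be inherited from $\gamma$ in constant time via a pointer. The remaining ``touched'' indices --- those in $\image{\lambda}$ or whose partitions contain $v$ --- number at most $w + 1$, and each calls for a fresh union-find computation over the at most $w + 1$ relevant vertices of $X_c$, at total cost $O(w)$. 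Bookkeeping the remaining $O(\Sigma_d)$ indices and dispatching the up to $w$ preimages of $\lambda$ contribute a further $O(\Sigma_d + w)$, so the per-tuple work is $O(\Sigma_d + w)$.

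The side-condition check is trivialized by precomputing, once per forget node, the boolean table telling for each $i \in \natint{\Sigma_d}$ whether $\enumerateTerminal{i} \subseteq Y_c$; this table only needs the single-vertex update of adding $v$ to $Y_c$ before processing $s$. Multiplying the per-tuple cost by the number of tuples gives the stated bound $\O{1 + (\Sigma_d + w)(\Sigma_d + 1)^w |D(c)|}$. The main subtlety is the pointer sharing between $\gamma$ and each $\distribute{\gamma}{\lambda}$ for untouched indices; without it, the raw output size of $\Theta(\Sigma_d w)$ per tuple would already outstrip the target bound. Once that representation is in place, the straightforward enumeration above delivers the claimed running time.
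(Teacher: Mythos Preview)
Your approach is exactly what the paper intends (the paper states the corollary without proof, treating it as an immediate consequence of \Cref{stmt:tw-dp-forget-correct}), and your enumeration strategy and pointer-sharing observation are both correct and to the point.

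There is one inaccuracy in your cost accounting: the claim that the ``touched'' indices --- those in $\image{\lambda}$ or whose partition contains $v$ --- number at most $w+1$ is false. Every $i \in I^\times_\gamma$ could have $v \in \bigcup \mathcal{P}_{\gamma,i}$, so there may be up to $\Sigma_d$ such indices. The fix is to separate the two cases. For an index $i \notin \image{\lambda}$ whose partition contains $v$, no union-find is needed: $\mathcal{E}_i = \mathcal{P}_{\gamma,i}$ is already a partition, so $\mathcal{P}'_i$ is obtained by deleting $v$ from its unique part, an $O(1)$ modification once that part is located (precompute it once per $\gamma$). Only the at most $|E_v| \leq w$ indices in $\image{\lambda}$ require actual merging, and since every edge of $E_v$ is incident to $v$, the merges for index $i$ collapse exactly those parts of $\mathcal{P}_{\gamma,i}$ that meet an endpoint of $\inv\lambda(i)$; the total number of union operations across all such $i$ is $\sum_i |\inv\lambda(i)| = |E_v| \leq w$. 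With this correction the per-tuple cost is indeed $O(\Sigma_d + w)$ and your stated bound follows.
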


Based on \Cref{stmt:tw-dp-introduce-running-time,stmt:tw-dp-join-running-time,stmt:tw-dp-forget-running-time} we are able to compute the dynamic program for the whole tree decomposition.
To be able to easily compute leaf nodes, we modify the given tree decomposition by extending leaves that are not empty to actually introduce one of its vertices.
We do this until all leaf nodes are empty.
Finally, we extend the root node to forget its nodes until the root node is empty.
Using \Cref{stmt:tw-dp-output-correct}, this allows to decide whether the instance is positive.
We note that for all \(s \in V(S)\), we have \(\abs{D(s)} \leq 2^\O{{\Sigma_{d}} w\log w}\).
This allows us to bound the running time.

\begin{corollary}
\label{stmt:tw-dp-running-time}
Given a nice tree decomposition of width \(w\), we can compute whether the instance is positive in time \(\abs{V(S)}2^\O{{\Sigma_{d}} w \log w}\).
\end{corollary}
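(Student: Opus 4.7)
The plan is to run the dynamic program bottom-up on the given nice tree decomposition and then read off the answer at the root via Lemma~\ref{stmt:tw-dp-output-correct}. First, I would normalize the decomposition so that every leaf and the root have an empty bag; this can be done by prepending at most $w+1$ introduce nodes above each leaf and $w+1$ forget nodes below the root, which increases $\abs{V(S)}$ only by an $O(w)$ factor and leaves the width unchanged. Then I would traverse $V(S)$ in post-order and compute $D(s)$ using the case that matches the type of $s$: the closed-form expression from the leaf analysis; Lemma~\ref{stmt:tw-dp-introduce-correct} together with Corollary~\ref{stmt:tw-dp-introduce-running-time} for an introduce node; Lemma~\ref{stmt:tw-dp-forget-correct} with Corollary~\ref{stmt:tw-dp-forget-running-time} for a forget node; and Lemma~\ref{stmt:tw-dp-join-correct} with Corollary~\ref{stmt:tw-dp-join-running-time} for a join node. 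Finally, by Lemma~\ref{stmt:tw-dp-output-correct}, the instance is positive iff $(\natint{{\Sigma_{d}}},\emptyset,\emptyset,())\in D(r)$ at the root $r$.

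The only remaining ingredient is a uniform bound $\abs{D(s)} \leq 2^{\O{{\Sigma_{d}} w\log w}}$. A tuple $\tau = (I^\bot, I^\times, I^\top, (\mathcal{P}_i)_{i\in I^\times}) \in D(s)$ is specified by (i) a partition of $\natint{{\Sigma_{d}}}$ into three parts, contributing a factor of at most $3^{{\Sigma_{d}}}$, and (ii) for each $i\in I^\times$ a partition $\mathcal{P}_i$ of a subset of $X_s$, where $\abs{X_s} \leq w+1$. The number of partitions of any subset of $X_s$ is bounded by $2^{w+1} B_{w+1} \leq (w+1)^{w+2} = 2^{\O{w\log w}}$, where $B_k$ denotes the $k$-th Bell number. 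Since $\abs{I^\times}\leq {\Sigma_{d}}$, we get $\abs{D(s)} \leq 3^{{\Sigma_{d}}}\cdot 2^{\O{{\Sigma_{d}} w\log w}} = 2^{\O{{\Sigma_{d}} w\log w}}$.

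Plugging this size bound into Corollaries~\ref{stmt:tw-dp-introduce-running-time}, \ref{stmt:tw-dp-join-running-time}, and \ref{stmt:tw-dp-forget-running-time} shows that the per-node work is dominated by the forget case, which costs $\O{({\Sigma_{d}} + w)({\Sigma_{d}}+1)^w \abs{D(c)}} = 2^{\O{{\Sigma_{d}} w\log w + w\log {\Sigma_{d}}}}$; using $w\log {\Sigma_{d}} = \O{{\Sigma_{d}} w\log w}$ (which holds for $1 \leq {\Sigma_{d}} < w$ because then $\log{\Sigma_{d}} \leq \log w$, and for ${\Sigma_{d}} \geq w$ trivially), this simplifies to $2^{\O{{\Sigma_{d}} w\log w}}$. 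Summing over all $\O{\abs{V(S)}}$ nodes gives a total running time of $\abs{V(S)} 2^{\O{{\Sigma_{d}} w\log w}}$, as claimed.

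The execution is essentially bookkeeping once the four lemmas are in hand; the only delicate point is the arithmetic showing that the $(\Sigma_d+1)^w$ blow-up from the forget node is absorbed into the $2^{\O{{\Sigma_{d}} w\log w}}$ bound in both regimes ${\Sigma_{d}} < w$ and ${\Sigma_{d}} \geq w$.
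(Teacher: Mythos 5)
Your proof is correct and follows essentially the same route as the paper: normalize so leaves and root are empty, traverse bottom-up invoking the per-node corollaries, read off the answer at the root via Lemma~\ref{stmt:tw-dp-output-correct}, and bound $\abs{D(s)}\leq 2^{\O{{\Sigma_{d}} w\log w}}$. The paper states the table-size bound and the running-time absorption without detail; you simply spell out that arithmetic.
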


For each \(k \in \N\), we can either find in time \(\abs{V(G)}2^\O{k}\) a tree decomposition of width at most \(5k + 4\), or conclude that \(\tw{G} > k\)~\cite{BodlaenderDDFLP16}.
Additionally, any tree decomposition \((S', \mathcal{X}')\) of width \(w\) can be made nice in time \(\O{w^2\max(\abs{V(G)}, \abs{V(S'})}\) while using at most \(w\abs{V(G)}\) nodes~\cite{Kloks94}.
Combined with \Cref{stmt:tw-dp-running-time}, this concludes the proof of \Cref{stmt:gstp-fpt-tw-D}.


\section{Conclusion and Outlook}
\label{sec:conclusion}
In this paper, we provide the first fixed-parameter tractable algorithm for \textsc{Steiner Tree Packing} (\(\stp\)) parameterized by a structural parameter.
Concretely, we show that \(\stp\) is \(\fpt\) when parameterized by fracture number as well as tree-cut width.
This significantly extends the number of instances for which we know an exact polynomial time algorithm.
Previously known polynomial time algorithms are typically based on heuristics or approximations.
In case of the result that \(\stp\) is \(\fpt\) by \(\abs{T} + d\), we do not even know a concrete algorithm, but only that one exists~\cite{RobertsonS95b}.

To achieve this goal, we generalize the notion of the augmented graph from \textsc{Edge-Disjoint Paths} (\(\edp\)) to \textsc{Generalized Steiner Tree Packing} (\(\gstp\)) and \(\stp\).
This is the first result utilizing this tool on a problem where the terminals are arbitrary sets and not pairs of vertices.
The notion of augmentation has been used extensively for \(\edp\), but was originally introduced for \textsc{Multicut}~\cite{GottlobL07}.
Despite the fact that many parameterized complexity results for the generalized version of this problem (\textsc{Steiner Multicut}) are known~\cite{BringmannHML15}, the augmented graph has not yet been considered in this setting.
We think that augmentation will also prove to be a valuable tool for \textsc{Steiner Multicut} and other similar problems in future research.

Further, we extend all known \(\fpt\) algorithms for \(\edp\) parameterized by a structural parameter to \(\gstp\).
In addition, we provide a novel \(\fpt\) algorithm for \(\gstp\) parameterized by the tree-cut width of the augmented graph.
This settles whether \(\gstp\) is \(\fpt\) or \(\wonehard\) parameterized by all eight commonly used structural parameters described in \Cref{sec:prelims} with respect to the augmented graph as well as the host graph.
As all these results coincide between \(\edp\) and \(\gstp\), this also completes such a complexity classification for \(\edp\), where previously the result that \(\edp\) is \(\fpt\) by the tree-cut width of the augmented graph was not known.

For \(\stp\) the established results are almost as complete.
We prove for six of these eight parameters that \(\stp\) is \(\fpt\).
It is known, that \(\stp\) is \(\wonehard\) parameterized by treewidth, even if \(\abs{T} = 3\)~\cite{BodlaenderMOPL23,AazamiCJ12}.
So, the only question remaining here is whether \(\stp\) is \(\fpt\) parameterized by feedback vertex set number.
As \(\edp\) is \(\wonehard\) parameterized by the feedback vertex set number of the augmented graph~\cite{GanianOR21}, the approach employed in this paper—generalizing results from \(\edp\) to \(\gstp\) and applying them to \(\stp\)—is not suited to decide this questions.
Also, the techniques used by Bodlaender et~al.~\cite{BodlaenderMOPL23} to obtain the \(\wonehard\)ness result for \textsc{Integer 2-Commodity Flow} parameterized by treewidth, which generalizes to \(\stp\) with \(\abs{T} = 3\)~\cite{AazamiCJ12}, do not easily apply with respect to the feedback vertex set number.
We leave answering this question to future research.

Additionally, we improve upon the known running times for \(\edp\) parameterized by the fracture number of the augmented graph and the sum of treewidth and number of terminal pairs.
For the former algorithm, we do not see a clear path to improving the running time to sub-doubly-exponential and we would not be surprised, if this result is conditionally optimal up to improvements in the polynomial.
For the later algorithm, we think that our algorithm is a stepping stone towards further improvement.
In fact, we conjecture that an algorithm building on our idea running in time \(\Ostar{2^{\O{\tw{G}\sum_{T\in \mathcal{T}} d(T)}}}\) can be obtained using the Cut\&Count technique introduced by Cygan et~al.~\cite{CyganNPPRW11}.

\bibliography{papers}
\newpage
\appendix
\section{Finding a Fracture Modulator is in FPT}
\label{sec:find-frac-mod-in-fpt}
In this Chapter, we first examine why the known algorithm to find a fracture modulator in FPT-time in the size of the modulator is inaccurate.
Then, we present a fix to this algorithm.
This algorithm first appeared in the context of the incidence graph of ILP-instances~\cite{DvorakEGKO21} and was later republished for general graphs~\cite{GanianOR21}.
The algorithm is a given a graph \(G\) and an integer \(k\) and should now, decide whether is a fracture modulator in \(G\) of size \(k\).
It is based on two main ideas, which together yield a branching algorithm with a bounded search tree.

\begin{claim}[name=\cite{DvorakEGKO21,GanianOR21}]
\label{claim:fracture-modulator-observations}
\leavevmode
\begin{enumerate}
\item For every \(U \subseteq V(G)\) with \(\abs{U} = k+1\), if \(G[U]\) is connected then at least one vertex of \(U\) is contained in any fracture modulator of size \(k\).
\item Assume \(G\) is disconnected and let \(S\) be a fracture modulator of \(G\). Then, for every \(C \in \comp{G}\), the set \(S \cap V(C)\) is a fracture modulator for \(C\).
\end{enumerate}
\end{claim}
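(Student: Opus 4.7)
The plan is to prove the two parts separately, since they rely on different observations.

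For part (1), I would argue by contrapositive. Suppose $S$ is a fracture modulator of $G$ with $\abs{S}=k$, and suppose $U\subseteq V(G)$ with $\abs{U}=k+1$ and $G[U]$ connected is disjoint from $S$. Then $U\subseteq V(G)\setminus V(S)$, and since $G[U]$ is connected and $U$ avoids $S$, the whole set $U$ lies in a single connected component $K$ of $G-S$. Hence $\abs{V(K)}\geq \abs{U}=k+1 > k = \abs{S}$, contradicting the definition of a fracture modulator. So every such $U$ must meet $S$. This step is essentially a one-line pigeonhole argument and should be easy.

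For part (2), my first attempt would be the obvious one: let $C\in\comp{G}$ and set $S_C\coloneqq S\cap V(C)$. Because $C$ is a whole connected component of $G$, no edge of $G$ leaves $V(C)$, so the components of $C-S_C$ coincide exactly with those components of $G-S$ whose vertex sets are contained in $V(C)$. Thus every component $K$ of $C-S_C$ satisfies $\abs{V(K)}\leq\abs{S}$. The catch is that the definition of a fracture modulator requires $\abs{V(K)}\leq\abs{S_C}$, and in general $\abs{S_C}$ can be strictly smaller than $\abs{S}$, so the bound we obtained is too weak. This is precisely the inaccuracy alluded to in the text above the claim: part (2) as stated is not literally correct, which can be seen on a graph with two disjoint cliques of size three and $S$ consisting of one vertex per clique.

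The main obstacle is therefore to repair part (2) so that the branching algorithm still works. I would proceed in one of two ways. The first is to weaken the conclusion: assert only that $S\cap V(C)$ is a \emph{size-$k$ fracture modulator-like set} in $C$, meaning every component of $C-(S\cap V(C))$ has at most $k=\abs{S}$ vertices. This weaker statement follows from the observation above and is what the branching algorithm actually uses, because at the top level the relevant parameter budget is $k$. The second is to strengthen the hypothesis: assume $S$ is a minimum fracture modulator, and show by a swapping argument that then $\abs{S\cap V(C)}$ equals the fracture-modulator number of $C$, so the natural decomposition into components is valid. For the purpose of a corrected algorithm I would adopt the first route, since it matches how the branching recursion descends from $(G,k)$ to subproblems of the form $(C,k)$ (rather than $(C,k_C)$) on each component of the current graph.
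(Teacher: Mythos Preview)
Your analysis is correct and matches the paper's treatment: the paper does not prove this claim but explicitly refutes part~(2) via a counterexample (two disjoint copies of $P_5$ rather than your two disjoint triangles---both work), and then repairs the algorithm precisely along your first proposed route by introducing the \textsc{$(k,d)$-Fracture Deletion} problem, which decouples the component-size bound $k$ from the deletion budget $d$ (Lemma~\ref{stmt:k-d-fracture-deletion-helper}). Your proof of part~(1) is correct and is the same argument the paper gives for the generalized version.
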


To see the inaccuracy, consider the graph \(P_5\).
We can verify that no set of size \(1\) is a fracture modulator in \(P_5\).
Now, let \(H\) be a graph that consists of two disconnected \(P_5\) graphs and denote with \(S^*\) the set of the two central vertices of each of the \(P_5\).
In \(H - S^*\) every connected component has at most two vertices.
Thus, \(S^*\) is a fracture modulator for \(H\), which uses exactly one vertex of each of the disconnected \(P_5\) graphs.
This refutes the claim by Dvorak et~al.~\cite{DvorakEGKO21}.

To fix this problem, we solve a slightly more general problem.
We call it \textsc{\((k,d)\)-Fracture Deletion}.
In it we are given a graph \(G\) and two natural number \(k, d \in \N\).
The task is now to decide, whether there is a set \(S \subseteq V(G)\) with \(\abs{S} = d\) such that every connected component of \(G - S\) has at most \(k\) vertices.
We call \(S\) a \emph{\(k\)-fracture deletion set} of size \(d\).
We see that checking whether a fracture modulator of size \(k' \in \N\) exists is equivalent to deciding \textsc{\((k', k')\)-Fracture Deletion}.
We can adopt \Cref{claim:fracture-modulator-observations} to this modified definition.

\begin{lemma}
\label{stmt:k-d-fracture-deletion-helper}
Let \(k,d \in \N\), \(G\) be a graph, and \(S\) be a \(k\)-fracture deletion with \(\abs{S} = d\).
\looseness=-1
Then,
\begin{enumerate}
\item for all \(U \subseteq V(G)\) with \(k < \abs{U}\) such that \(G[U]\) is connected, we have \(U \cap S \neq \emptyset\)
\item for all \(C \in \comp{G}\), the set \(S \cap V(C)\) is a \(k\)-fracture deletion set of \(C\).
\end{enumerate}
\end{lemma}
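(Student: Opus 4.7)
The plan is that both parts follow directly from unpacking the definition of a $k$-fracture deletion set, and I expect no real obstacle since the lemma is essentially a sanity check that the generalized definition behaves as one would want (mirroring \Cref{claim:fracture-modulator-observations}).

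For part 1, I would argue by contradiction. Suppose $U \cap S = \emptyset$, so $U \subseteq V(G) \setminus S$. Since $G[U]$ is connected and is a subgraph of $G - S$, all vertices of $U$ must lie in a single connected component $K \in \comp{G - S}$, i.e.\ $U \subseteq V(K)$. But $S$ is a $k$-fracture deletion set, so $|V(K)| \le k$, yielding $|U| \le k$, which contradicts the hypothesis $k < |U|$.

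For part 2, let $C \in \comp{G}$ and write $S_C := S \cap V(C)$. The key observation is that, as graphs, $C - S_C = (G - S)[V(C) \setminus S]$, because deleting vertices of $S$ outside $V(C)$ has no effect on $C$. Hence every connected component of $C - S_C$ is a connected component of $G - S$ that happens to be contained in $V(C)$, and in particular has at most $k$ vertices by the assumption on $S$. Thus $S_C$ is a $k$-fracture deletion set of $C$ (of size $|S_C|$), as required.

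Nothing here is delicate; the only point worth emphasising in the write-up is the identification $C - S_C = (G - S)[V(C) \setminus S]$, which relies on the fact that $C$ is a union of connected components of $G$ and hence has no edges to $V(G) \setminus V(C)$. With that observation both parts are one-line arguments.
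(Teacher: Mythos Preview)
Your proof is correct and matches the paper's argument essentially line for line: both parts proceed by the same contradiction (for part 1) and the same identification of components of $C - (S \cap V(C))$ with components of $G - S$ (for part 2). Your write-up is slightly more explicit about why $C - S_C = (G - S)[V(C) \setminus S]$, but the underlying idea is identical.
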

\begin{proof}
\leavevmode
\begin{enumerate}
\item Let \(U \subseteq V(G)\) with \(k < \abs{U}\) be given such that \(G[U]\) is connected and that \(U \cap S = \emptyset\).
Then, \(U\) is contained in a single connected component \(C\) of \(G - S\).
As \(U \subseteq V(C)\), we have \(k < \abs{V(C)}\) which violates that \(S\) is a \(k\)-fracture deletion set.
So, such an \(U\) can not exist.
\item Now, let \(C \in \comp{G}\).
For all \(D \in \comp{C - (S \cap V(C))}\), we also have \(D \in \comp{G - S}\).
Thus, \(V(D) \leq k\) and \(S \cap V(C)\) is a \(k\)-fracture deletion set of \(C\). \qedhere
\end{enumerate}
\end{proof}

Based on this, we can give a simple branching algorithm with a bounded search tree.
We distinguish three cases.
First, if \(\abs{V(G)} < d\), we report that no \(k\)-fracture deletion set of size \(d\) exists.
Second, if \(G\) is connected; find any \(U \subseteq V(G)\) with \(\abs{U} = k+1\) and \(G[U]\) connected.
We can find \(U\) using a breadth- or depth-first-search.
Now, we branch on every \(u \in U\), whether to include it in the solution.
We recursively decide whether \(G - u\) is a positive instance of \textsc{\((k, d-1)\)-Fracture Deletion} and output that \(G\) is a positive instance of \textsc{\((k, d)\)-Fracture Deletion} if and only if at least one subinstance is positive.
Finally, assume \(G\) is disconnected.
Let \(\mathcal{C} \coloneqq \{C \in \comp{G}\mid k < \abs{V(C)} \}\) be the components of \(G\) that need to be broken up.
If \(\abs{\mathcal{C}}=0\), we report that \(G\) is a positive instance of \textsc{\((k, d)\)-Fracture Deletion}.
If \(\abs{\mathcal{C}}=1\), let \(\{C\} \coloneqq \mathcal{C}\) and we report that \(G\) is a positive instance of \textsc{\((k, d)\)-Fracture Deletion} if and only if \(C\) is a positive instance of \textsc{\((k, d)\)-Fracture Deletion}.
If \(\abs{\mathcal{C}} > d\), we report that there is no \(k\)-fracture deletion set of size \(d\).
Otherwise, for each \(C \in \mathcal{C}\), we search recursively for the smallest \(d_C \in \natint{d - 1}\) such that a \(k\)-fracture deletion set of size \(d_C\) exists in \(C\).
If there is no such \(d_C\), report that there is no \(k\)-fracture deletion set in \(G\).
Finally, we output that \(G\) is a positive instance of \textsc{\((k, d)\)-Fracture Deletion} if and only if \(\sum_{C \in \mathcal{C}} d_C \leq d\).

\begin{theorem}
\label{stmt:k-d-fracture-deletion-algo}
Let \(k,d \in \N^+\), and \(G\) be a graph.
The presented algorithm correctly determines whether \(G\) is a positive instance of \textsc{\((k, d)\)-Fracture Deletion} and can be implemented in time \(\O{\max(k+1, 2d - 1)^d\abs{G}}\).
The algorithm can be modified to report a \(k\)-fracture deletion set of size \(d\) in the same time, if one exists.
\end{theorem}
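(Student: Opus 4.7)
The plan is to prove correctness by strong induction on the pair $(|V(G)|, d)$ and then bound the running time by a separate induction on $d$. Correctness rests on \Cref{stmt:k-d-fracture-deletion-helper}: in the connected case, part (1) guarantees that any $k$-fracture deletion set $S$ intersects the found $(k+1)$-vertex connected set $U$, making the branching rule ``pick some $u \in U$ in $S$'' exhaustive, since $S$ witnesses $(k,d)$ for $G$ iff $S \setminus \{u\}$ witnesses $(k,d-1)$ for $G - u$. In the disconnected case, part (2) lets us analyze each ``bad'' component $C \in \mathcal{C}$ independently, searching for the minimum feasible $d_C$ and then checking $\sum_{C \in \mathcal{C}} d_C \le d$; components of size $\le k$ contribute nothing, and $|\mathcal{C}| > d$ is infeasible because each bad component consumes at least one deletion.

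For the running time I would prove $T(n, d) \le c \cdot B^d \cdot n$ with $B = \max(k+1, 2d-1)$ and $n = |G|$, by induction on $d$. The connected branch contributes $(k+1) T(n-1, d-1) + O(n)$, which is bounded by $c B^d n$ since $k+1 \le B$ and by the inductive hypothesis. The disconnected sub-case $2 \le |\mathcal{C}| \le d$ performs at most $d - 1$ recursive calls per component $C \in \mathcal{C}$ (linearly searching for the minimum $d_C \in [d-1]$), each with parameter at most $d - 1$ and inductively costing $c B^{d-1} |V(C)|$; summing over $C$ and using $\sum_C |V(C)| \le n$ gives $(d-1) c B^{d-1} n \le c B^d n$ since $d - 1 \le 2d - 1 \le B$. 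The edge case $|\mathcal{C}| = 1$ recurses on a strictly smaller graph with the same $d$, and its cumulative overhead over repeated applications is absorbed into a factor of $O(n)$.

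The main obstacle will be to reconcile the exact-size specification $|S| = d$ with the minimum-$d_C$ subroutine logic, especially when a component $C$ has $|V(C)|$ comparable to $d$. Monotonicity (any $k$-fracture deletion set can be padded by extra vertices as long as the ambient graph has enough) together with the case-1 check $|V(G)| \ge d$ resolves this cleanly at the top level, but one must verify that recursive instances remain consistent: all recursive calls strictly decrease either $d$ or $|V(G)|$, so the induction is well-founded, and the minimum-search loop correctly outputs ``no valid $d_C$'' exactly when no deletion set of size $\le d - 1$ exists for that component. Reporting a witness set in the same running time follows by augmenting each ``yes''-returning call with a partial witness (the chosen $u$ in the connected branch, or the union of per-component witnesses in the disconnected branch), padded from unused vertices to size exactly $d$ at the root.
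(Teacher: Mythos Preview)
Your proposal is correct, and the correctness argument matches the paper's closely. The one notable divergence is in the running-time analysis of the disconnected case with $2 \le |\mathcal{C}| \le d$. The paper specifies that the linear searches for the $d_C$ are interleaved with an early-abort once the running lower bound $\sum d_C$ exceeds $d$; it then uses a global budget argument (each negative call raises the lower bound by one, each component contributes at most one positive call) to conclude there are at most $2d-1$ recursive calls in total at this node, giving the branching factor $\max(k+1,2d-1)$. You instead bound the work per component by $(d-1)\,cB^{d-1}\abs{C}$ and sum over components using $\sum_{C \in \mathcal{C}} \abs{C} \le \abs{G}$, which yields $(d-1)\,cB^{d-1}\abs{G} \le cB^d\abs{G}$ since $d-1 \le 2d-1 \le B$. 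Your argument is arguably cleaner because it does not need the abort mechanism at all to meet the stated bound; the paper's argument, on the other hand, shows that the abort actually keeps the \emph{number} of recursive calls at each node small, which is a slightly stronger structural statement. One small point: your treatment of $|\mathcal{C}|=1$ is fine but the phrase ``repeated applications'' is misleading---the single component $C$ is connected by definition, so this case chains at most once before landing in the connected branch.
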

\begin{proof}
We first focus on the correctness in case of termination, and after that bound the running time which proves termination.

Using \Cref{stmt:k-d-fracture-deletion-helper}, we see that the first two cases are correct.
Now, we focus on the third case.
When \(\abs{\mathcal{C}} = 0\), any subset of size \(d\) of \(V(G)\) is a \(k\)-fracture deletion set.
As \(\abs{V(G)} \geq d\), such a set exists and returning that the instance is positive in this case is correct.
If \(\abs{\mathcal{C}} = 1\), let \(\{C\} \coloneqq \mathcal{C}\).
Consider any \(k\)-fracture deletion set \(S\) of size \(d\) in \(G\).
We know that \(S \cap V(C)\) is a \(k\)-fracture deletion set in \(C\) and as \(\abs{S \cap V(C)} \leq d \leq \abs{\mathcal{C}}\), we know that we can extend \(S \cap V(C)\) to a \(k\)-fracture deletion set of size \(d\) in \(C\).
Additionally, any \(k\)-fracture deletion set of size \(d\) in \(C\) is a \(k\)-fracture deletion set of size \(d\) in \(G\).
So, the algorithm is correct in this case as well.
Regarding the case \(\abs{\mathcal{C}} > d\); we know from \Cref{stmt:k-d-fracture-deletion-helper}, that for each \(k\)-fracture deletion set \(S\) and \(C \in \mathcal{C}\), we have \(S \cap V(C) \neq \emptyset\).
Thus, any \(k\)-fracture deletion set has size at least \(\abs{\mathcal{C}}\); so, returning that the instance is negative when \(\abs{\mathcal{C}} > d\) is correct.

Now, for each \(C \in \mathcal{C}\) denote with \(S_C\) a \(k\)-fracture deletion set of size \(d_C\) and set \(S \coloneqq \bigcup_{C \in \mathcal{C}} S_C\).
We can verify that \(S\) is a \(k\)-fracture deletion set in \(G\) with \(\abs{S} = \sum_{C \in \mathcal{C}} d_C \leq d\).
To obtain a \(k\)-fracture deletion set of size \(d\), we add arbitrary vertices of \(G\) to \(S\) until \(\abs{S} = d\).
This again works as \(\abs{V(G)} \geq d\).
Assume that there is a \(k\)-fracture deletion set \(S\) of size \(d\).
By \Cref{stmt:k-d-fracture-deletion-helper}, we have \(\abs{\mathcal{C}} \leq d\).
For each \(C \in \mathcal{C}\), we have \(d_C \leq \abs{S \cap V(C)} = \abs{S} - \abs{S \setminus V(C)} \leq d - 1\).
Additionally, \(\sum_{C \in \mathcal{C}} d_C \leq \sum_{C \in \mathcal{C}} \abs{S\cap V(C)} \leq \abs{S} = d\) and the algorithm correctly reports that there is no \(k\)-fracture deletion set of size \(d\) if \(\sum_{C \in \mathcal{C}} d_C > d\).

To bound the running time, we bound the number of nodes in the recursion tree.
The first case does not make any recursive calls, while the second case makes at most \(k + 1\) recursive calls each decreasing \(d\) by one.
Now, consider the third case.
If \(\abs{\mathcal{C}} = 1\), we make one recursive call that does not decrease \(d\).
However, the instance \(C\), on which we recurse, is connected.
Thus, we make at most \(k+1\) recursive calls directly on \(C\), which all decrease \(d\) by at least one.
If \(\abs{\mathcal{C}} \geq 2\), we make recursive calls which all decrease \(d\) by at least one.
To bound, how many such recursive calls we make, we first need to specify how the search for \(d_C\) is carried out.
We choose to select each \(C \in \mathcal{C}\) after another and linearly search from 1 to \(d_C\).
Additionally, we abort when it is clear that the sum of \(d_C\) will exceed \(d\).
Each call to a negative instance increases the minimum sum by \(1\).
As this sum starts at \(\abs{\mathcal{C}}\), we do at most \(d - \abs{\mathcal{C}} + 1\) calls to instances which turn out to be negative.
Additionally, since we stop the search for each \(C \in \mathcal{C}\) once we recursively call a positive instance, we do at most \(d\) such calls.
Therefore, we do at most \(2d - \abs{\mathcal{C}} + 1 \leq 2d - 1\) such recursive calls.
This means, the branching factor is bounded by \(\max(k + 1, 2d - 1)\) and the branching depth is bounded by \(d\), which combined with the fact that at each recursion step we do at most a linear amount of additional work, yields the desired result.
\end{proof}

We can combine \Cref{stmt:k-d-fracture-deletion-algo} with the fact that for each \(k \in \N\) a fracture modulator of size \(k\) is a \(k\)-fracture deletion set of size \(k\) to obtain an \(\fpt\)-algorithm for finding minimum fracture modulator.

\begin{corollary}
\label{stmt:fracture-modulator-algo}
Let \(G\) be a non-empty graph and \(k\coloneqq\fracNum{G}\) be the fracture number of \(G\).
There is an algorithm that finds a fracture modulator of size \(k\) for \(G\) in time \(\O{(2k - 1)^k\abs{G}}\) if one exists.
\end{corollary}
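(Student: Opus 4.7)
The plan is to reduce the task of finding a minimum fracture modulator to iterated invocations of the \textsc{$(k,d)$-Fracture Deletion} algorithm of \Cref{stmt:k-d-fracture-deletion-algo}, exploiting a direct definitional equivalence: $S \subseteq V(G)$ is a fracture modulator exactly when every connected component of $G - S$ has at most $\abs{S}$ vertices, i.e., when $S$ is an $\abs{S}$-fracture deletion set of size $\abs{S}$. Consequently, $\fracNum{G}$ equals the smallest $d \in \N^+$ for which $G$ admits a $d$-fracture deletion set of size $d$, and any such set is itself a fracture modulator of the desired size.

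First, I would iterate $d = 1, 2, \ldots$ and, for each value, invoke the reporting variant of the algorithm of \Cref{stmt:k-d-fracture-deletion-algo} on input $(G, d, d)$; the procedure returns the first witnessing set discovered. Since $\fracNum{G} = k$, all calls for $d < k$ report negative instances (by minimality of $k$), the call for $d = k$ reports a positive instance and produces a set $S$ with $\abs{S} = k$, and by the equivalence above $S$ is a fracture modulator of $G$. Correctness then follows directly from \Cref{stmt:k-d-fracture-deletion-algo}.

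For the running time, the call at level $d$ costs $\O{\max(d+1, 2d-1)^d \abs{G}}$, which for $d \geq 2$ simplifies to $\O{(2d-1)^d \abs{G}}$. Summing over $d = 1, \ldots, k$, the final term dominates by a wide margin: the ratio between consecutive terms, $(2d+1)^{d+1}/(2d-1)^d \approx (2d+1)\cdot e$, grows without bound in $d$, so an elementary geometric-series argument (or a one-line induction bounding all earlier terms by, say, half the last term) yields a total of $\O{(2k-1)^k \abs{G}}$. The small-$k$ cases fit inside this bound up to constants, since the algorithm still runs in linear time there. The only step that requires any care is the geometric-sum bound; everything else is a thin packaging of \Cref{stmt:k-d-fracture-deletion-algo} around the definitional observation that fracture modulators are precisely the fixed points of the parameter in the $(d,d)$-Fracture Deletion problem.
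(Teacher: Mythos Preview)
Your proposal is correct and follows essentially the same approach as the paper: the paper derives the corollary directly from \Cref{stmt:k-d-fracture-deletion-algo} via the observation that a fracture modulator of size $k$ is precisely a $k$-fracture deletion set of size $k$. You make explicit the linear search over $d$ and the geometric-sum running-time bound that the paper leaves implicit, but the underlying argument is identical.
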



\end{document}